\numberwithin{equation}{section}
\newcommand{\Hmin}{H_{\rm min}}
\newcommand{\Hhill}{H^{\mathsf{HILL}}}
\newcommand{\Hhillr}{H^{\mathsf{rHILL}}}
\newcommand{\Hmetric}{H^{\mathsf{metric}}}
\newcommand{\Hmetricr}{H^{\mathsf{metric\mbox{-}rlx}}}
\newcommand{\Hguess}{H^{\mathsf{guess}}}
\newcommand{\Pguess}{P^{\mathrm{guess}}}
\newcommand{\Dhillone}{D^{\mathsf{HILL\mbox{-}1}}}
\newcommand{\Dhilltwo}{D^{\mathsf{HILL\mbox{-}2}}}
\newcommand{\Dmetricone}{D^{\mathsf{metric\mbox{-}1}}}
\newcommand{\Dmetrictwo}{D^{\mathsf{metric\mbox{-}2}}}
\newcommand{\Dpseudo}{D^{\mathsf{pseudo}}}
\newcommand{\KL}[2]{D\left(#1\middle\|#2\right)}
\newcommand{\QCktMaxSat}{\textprob{QCkt-Max-Sat}}
\newcommand{\QCktTom}{\textprob{QCkt-Tomography}}
\newcommand{\QCktValue}{\textprob{QCkt-Value}}
\newcommand{\GapAmp}{\textprob{Gap-Amplification}}
\newcommand{\SC}{\mathsf{SC}}
\newcommand{\tK}{\tilde{K}}
\newcommand{\tX}{\tilde{X}}
\newcommand{\tG}{\tilde{G}}
\newcommand{\tlda}{\tilde{\lda}}
\newcommand{\Ext}{\mathsf{Ext}}
\newcommand{\Prg}{\mathsf{Prg}}
\newcommand{\secp}{\kappa} 
\newcommand{\net}[1]{\cN\left(#1\right)}
\newcommand{\bpovm}[1]{\Gamma\left(#1\right)}
\newcommand{\repeattheorem}[1]{%
  \begingroup
  \renewcommand{\thetheorem}{\ref{#1}}%
  \expandafter\expandafter\expandafter\theorem
  \csname reptheorem@#1\endcsname
  \endtheorem
  \endgroup
}
\xdef\csname reptheorem@#1\endcsname{%
    \unexpanded\expandafter{\BODY}%
  }%
\unskip\label{#1}\endtheorem
\newcommand{\repeatproposition}[1]{%
  \begingroup
  \renewcommand{\theproposition}{\ref{#1}}%
  \expandafter\expandafter\expandafter\proposition
  \csname repproposition@#1\endcsname
  \endproposition
  \endgroup
}
\xdef\csname repproposition@#1\endcsname{%
    \unexpanded\expandafter{\BODY}%
  }%
\unskip\label{#1}\endproposition
\newcommand{\repeatlemma}[1]{%
  \begingroup
  \renewcommand{\thelemma}{\ref{#1}}%
  \expandafter\expandafter\expandafter\lemma
  \csname replemma@#1\endcsname
  \endlemma
  \endgroup
}
\xdef\csname replemma@#1\endcsname{%
    \unexpanded\expandafter{\BODY}%
  }%
\unskip\label{#1}\endlemma
\author[1]{
  \small Yi-Hsiu Chen\thanks{\texttt{yihsiuchen@g.harvard.edu}.
  Supported by NSF grant CCF-1420938 and work done in part while visiting the Institute of Information Science, Academia Sinica, Taiwan.}}
\author[2]{
  \small Kai-Min Chung\thanks{\texttt{kmchung@iis.sinica.edu.tw}.
  Supported by 2016 Academia Sinica Career Development Award under Grant no. 23-17 and the Ministry of Science and Technology, Taiwan under Grant no. MOST 103-2221-E-001-022-MY3.
This work was done in part while KMC was visiting the Simons Institute for the Theory of Computing, supported in part by the Simons Foundation and by the DIMACS/Simons Collaboration in Cryptography through NSF grant CNS-1523467.}}
\author[2]{
  \small Ching-Yi Lai\thanks{\texttt{cylai0616@iis.sinica.edu.tw}.}}
\author[3]{Salil P.~Vadhan\thanks{\texttt{salil-vadhan@harvard.edu}.
  Work done in part while visiting the Shing-Tung Yau Center and the Department of Applied Mathematics at National Chiao-Tung University, Taiwan. Supported by NSF grant CCF-1420938 and a Simons Investigator Award.}}
\author[4]{
  \small Xiaodi Wu\thanks{\texttt{xwu@cs.umd.edu}.}}
\affil[1]{Harvard John A. Paulson School Of Engineering And Applied Sciences, Harvard University, USA}
\affil[2]{Institute of Information Science, Academia Sinica, Taipei, Taiwan}
\affil[3]{Computer Science and Applied Mathematics, Harvard University, USA}
\affil[4]{Department of Computer Science, Institute for Advanced Computer Studies, and Joint Center for Quantum Information and Computer Science, University of Maryland, USA}
\title{Computational Notions of Quantum Min-Entropy}
\begin{document}

\hypersetup{pageanchor=false}
\pagenumbering{gobble}
\maketitle

\begin{abstract}
We initiate the study of computational entropy in the {\em quantum} setting. We investigate to what extent the classical notions of computational entropy generalize to the quantum setting, and whether quantum analogues of classical theorems hold. Our main results are as follows. (1) The classical Leakage Chain Rule for pseudoentropy can be extended to the case that the leakage information is quantum (while the source remains classical). Specifically, if the source has pseudoentropy at least $k$, then it has pseudoentropy at least $k-\ell$ conditioned on an $\ell$-qubit leakage. (2) As an application of the Leakage Chain Rule, we construct the first quantum leakage-resilient stream-cipher in the bounded-quantum-storage model, assuming the existence of a quantum-secure pseudorandom generator. (3) We show that the general form of the classical Dense Model Theorem (interpreted as the equivalence between two definitions of pseudo-relative-min-entropy) does {\em not} extend to quantum states. Along the way, we develop quantum analogues of some classical techniques (\eg~the Leakage Simulation Lemma, which is proven by a Non-uniform Min-Max Theorem or Boosting). On the other hand, we also identify some classical techniques (\eg~Gap Amplification) that do not work in the quantum setting. Moreover, we introduce a variety of notions that combine quantum information and quantum complexity, and this raises several directions for future work.
\end{abstract}


\clearpage
\tableofcontents

\clearpage
\hypersetup{pageanchor=false}
\pagenumbering{arabic}
\setcounter{page}{1}

\section{Introduction}\label{sec:intro}

Computational notions of entropy have many applications in cryptography and complexity theory.
These notions measure how much (min-)entropy a source $X$ has from the eyes of a computationally bounded party who may hold certain ``leakage information'' $B$ that is correlated with $X$.
They have several applications in cryptography, such as leakage-resilient cryptography~\cite{DziembowskiP08}, memory delegation~\cite{ChungKLR11}, deterministic encryption~\cite{FOR12}, zero-knowledge~\cite{ChungLP15},
pseudorandom generators~\cite{HastadILL99} and other cryptographic primitives~\cite{HRVW09}, and also have close connections to important results in complexity theory, such as Impagliazzo's hardcore lemma~\cite{Impagliazzo95}, and in additive number theory, such as the Dense Model Theorem~\cite{green2008primes,TZ08,ReingoldTTV08}.

In this work, we initiate the study of computational entropy in the quantum setting, where $X$ and $B$ may become quantum states and the computationally bounded observer is modeled as a small quantum circuit.
We find that some classical phenomena have (nontrivial) extensions to the quantum setting, but for others, the quantum setting behaves quite differently and we can even prove that the natural analogues of classical theorems are false.
As an application of some of our results, we construct a quantum leakage-resilient stream-cipher in the bounded-quantum-storage model, assuming the existence of a quantum-secure pseudorandom generator.
We expect that computational notions of quantum entropy will find other natural applications in quantum cryptography.
Moreover, by blending quantum information theory and quantum complexity theory, our study may provide new insights and perspectives in both of these areas.


In the rest of the introduction, we give a high-level overview of our work, highlight some of our interesting findings.
{sec:quantum-indistinguishability}.


\subsection{Brief Review of Quantum Information and Computation}\label{subsec:review-qi-qc}

Recall that a {\em pure state} in an $n$-qubit quantum system is a unit vector $\ket{\psi}\in\complex^{2^n}$.   The standard (``computational'') basis is denoted by $\{\ket{x} : x\in \zo^n\}$ and represents the set of classical bit strings $x\in \zo^n$.
Until they are {\em measured} (observed), quantum systems evolve via unitary operations ($2^n\times 2^n$ complex matrices $U$ such that $UU^\dag=I$, where $U^\dag$ is the conjugate transpose).
A projective {\em binary measurement} on the quantum system is given by a linear subspace $A \subseteq \complex^{2^n}$.
If the system is in state $\ket{\psi}\in\complex^{2^n}$, then the result of the measurement is determined by the decomposition $\ket{\psi} = \ket{\psi}_A + \ket{\psi}_{A^\perp}$, where $\ket{\psi}_A$ is the orthogonal projection of $\ket{\psi}$ to $A$.  With probability $\|\ket{\psi}_A\|_2^2$, the measurement returns 1 and the system collapses to state $\ket{\psi}_A/\|\ket{\psi}_A\|_2$, and with probability
$\|\ket{\psi}_{A^\perp}\|_2^2$, the measurement returns 0 and the system collapses to state $\ket{\psi}_{A^\perp}/\|\ket{\psi}_{A^\perp}\|_2$.
We abuse notation and write $A(\ket{\psi})$ to denote the $\zo$ random variable that is the outcome of the measurement.
There is a more general form of binary measurement (described by a ``projective operator value measurement'' (POVM)), but we only need a projective binary measurement to discuss most concepts in the introduction, and defer the definition of POVM to where we need it.

A {\em mixed state} $\rho$ of a quantum system can be specified by a probability distribution $\{p_i\}$ over pure states $\{\ket{\psi_i}\}$.  If we evolve $\rho$ by applying a unitary transformation $U$,
it will be in the mixed state given by distribution $\{p_i\}$ over the pure states $\{U\ket{\psi_i}\}$.
If instead we perform a measurement $A$ on such a mixed state $\rho$, then, by definition, $\Pr[A(\rho)=1] = \sum_i p_i\cdot \Pr[A(\ket{\psi_i})=1] = \sum_i p_i\cdot \|\ket{\psi_i}_A\|_2^2$.
The representation of a mixed state as a probability distribution over pure states is not unique, in that two such representations can yield exactly the same behavior under all sequences of unitary transformations and measurements.\footnote{A unique representation of a mixed state is given by its {\em density matrix} $\sum_i p_i \ketbra{\psi_i}$, which is a $2^n\times 2^n$ positive semidefinite matrix of trace one, and thus we use the density matrix formalism in the technical sections of the paper.}
For example, the {\em maximally mixed state} $\sigma_{\mix}$ is defined as the uniform distribution over the standard classical basis $\{ \ket{x} : x\in \zo^n \}$, but using any orthonormal basis of $\complex^{2^n}$ yields an equivalent mixed state (and thus all of them are regarded as the same mixed state $\sigma_{\mix}$).

Recall that the {\em min-entropy} of a classical random variable $X$ is given by
\[\Hmin(X) =\min_x \log(1/\Pr[X=x]) = \frac{1}{\log\left(\max_x \Pr[A_x(X)=1]\right)},\]
where $A_x$ is the indicator function for $x$.
When we have a mixed quantum state $\rho_X$ instead of a classical random variable $X$, we generalize from indicator functions to one-dimensional binary measurements~\cite{Renner05}.
That is, if $\rho_X$ is a mixed quantum state, then:
\[\Hmin(X)_\rho = \frac{1}{\log\left(\max_{\ket{\psi}} \pr{A_{\ket{\psi}}(\rho)=1}\right)},\]
where $A_{\ket{\psi}}$ is the binary measurement given by the one-dimensional subspace spanned by $\ket{\psi}$.  This generalizes the classical definition.
If $\rho$ is given by a distribution $\{p_x\}$ over the classical basis $\{\ket{x}\}$, then the maximum value of $\Pr[A_{\ket{\psi}}(\rho)=1] = \sum_x p_x |\braket{\psi}{x}|^2$ is $\max_x p_x$, obtained by taking $\ket{\psi} = \ket{y}$ for $y=\mathrm{argmax}_x p_x$.
On the other hand, if $\rho$ is a pure state, with all of its probability on a single unit vector $\ket{\phi}$, then the maximum probability is 1 (yielding {\em zero} min-entropy), obtained by taking $\psi = \phi$.

Informally, a {\em quantum circuit} computes on a quantum state (which may be a classical input $\ket{x}$ for $x\in \zo^n$) by applying a sequence of {\em local gates}, which are unitary transformations and measurements that apply to only a constant number of qubits in the state.
Quantum circuits are also allowed extra {\em ancilla} qubits (in addition to the $n$ input qubits).
We usually require those ancilla bits are initialized to be $\ket{0^n}$.
The {\em size} of a quantum circuit is the number of gates.

\subsection{Quantum Computational Notions}\label{subsec:comp-notion}

\paragraph{Quantum Indistinguishability.}
In many applications of cryptography and complexity theory, we only require the security against adversaries with restricted power.
One of the most common restrictions is considering only polynomial time bounded circuits/algorithms.
Certainly, ``polynomial time'' is meaningful only when we consider an ensemble of inputs and circuits.

In the classical world, there are two different computational models that are widely studied.
First, in the {\em nonuniform computation} model, circuits can depend on the input size, while in the {\em uniform computation} model, the same algorithm is used for inputs of any size, or equivalently, there is an algorithm that can generate the ensemble of circuits.
Once the universal gate set is fixed, we can define the size of a circuit.
Then both models can be extended to the quantum setting naturally by replacing circuits with quantum circuits.
In this article, we mostly focus on the nonuniform settings, as adversaries have more power in this model.
Consider two quantum state ensembles $\{\rho_{n}\}$ and $\{\sigma_{n}\}$ where $n$ bounds the number of qubits in $\rho_n$ and $\sigma_n$ and serves as the security parameter.
We say $\{\rho_{n}\}$ and $\{\sigma_{n}\}$ are quantum-indistinguishable if for every $\poly(n)$-size nonuniform quantum algorithm $\{A_n\}$, we have $|\Pr[A_n(\rho_n) = 1] - \Pr[A_n(\sigma_n) = 1]| \leq \negl(n)$.
Sometimes, we consider the asymptotic setting implicitly by omitting the index $n$.

Classically, an equivalent way to define a nonuniform circuit ensemble is giving a uniform algorithm (\eg~a Turing machine) advice strings which only depend on input lengths.
In the quantum setting, this formation of uniform algorithms with advice matches the above definition of nonuniform quantum circuits if we restrict the advice strings to be classical.
But one can consider an even more general computational model by giving the circuits quantum advice.
A simple way to incorporate quantum advice is to allow the quantum circuits have some of ancilla bits initialized to be the quantum advice.
In this model, the quantum analogue of the classical complexity class $\Ppoly$ is $\BQPqpoly$, which was defined by Nishimura and Yamakami \cite{NY04}.
An intriguing and well known question is whether quantum advice provides more power in computation?
\ie~does $\BQPqpoly = \BQPpoly$ (and whether $\QMA = \QCMA$).
One can also define the indistinguishability with quantum advice.
Some of our results hold in this model as well.
For the sake of simplicity, in the rest of the introduction, we only consider classical advice when it is not specified explicitly.

\paragraph{Pseudorandom States.}

In the classical setting, a distribution $X_n$ over $\zo^n$ is {\em pseudorandom} if $X$ is computationally indistinguishable from the uniform distribution $U_n$.
Namely, for every probabilistic $\poly(\secp)$-size circuit $D$, the distinguishing advantage of $D$ is negligible, \ie~$|\Pr[D(X)=1] - \Pr[D(U_n) = 1] | \leq \negl(\secp)$, where $\secp$ is a security parameter.
That $X$ is pseudorandom means $X$ has full $n$ bits of computational entropy.
Pseudorandomness is a fundamental notion pervasive in cryptography.

In the quantum setting, the classical uniform distribution is represented as the maximally mixed state $\sigma_{\mix}$, as defined earlier.
Thus, the notion of pseudorandomness generalizes naturally: a mixed quantum state $\rho$ is {\em pseudorandom} if it is computationally indistinguishable from $\sigma_{\mix}$ to quantum distinguishers.
That is, for every  $\poly(\secp)$-size quantum circuit $D$, $|\Pr[D(\rho)=1] - \Pr[D(\sigma_{\mix}) = 1] | \leq \negl(\secp)$.
We note that pseudorandomness of {\em classical} distributions against quantum distinguishers has been studied extensively in the context of post-quantum cryptography (\eg~\cite{Song14} and references therein).
The difference here is that we consider pseudorandomness for quantum states.

An interesting observation is that there exist {\em pure} states that are pseudorandom, due to Bremner, Mora and Winter~\cite{BMW08}, and Gross, Flammia and Eisert~\cite{GFE09}.
In~\cite{BMW08}, the existence of a pseudorandom pure state was viewed as a negative result, showing that random pure states are not useful for efficient quantum computation, since they can be replaced by uniform classical bits.
However, from the perspective of pseudorandomness and computational entropy, it is a {\em positive} result, asserting the existence of a pseudorandom state that has {\em zero} entropy (as pure states have zero entropy).
This is a sharp contrast from the classical setting, where a classical distribution needs min-entropy at least $\omega(\log n)$ to be pseudorandom.
Indeed, the existence of pseudorandom pure states reveals a sharp contrast between the quantum setting and the classical setting: it says that a quantum state with {\em zero} entropy (\ie~a pure state) can be pseudorandom, whereas a classical distribution needs min-entropy at least $\omega(\log n)$ to be pseudorandom.
We remark that this separation relies on our choice to consider quantum distinguisher with only classical advice.
(If we allow quantum advice, it is an interesting open problem to determine the existence of a pseudorandom state of entropy $o(n)$)

As discussed in Section~\ref{subsec:dmt} below, we use the existence of pseudorandom pure states to show the separation of two classically equivalent notions of ``computational relative entropy'' in the quantum world.

\paragraph{Computational Quantum (Min-)Entropy}
We next investigate computational notions of entropy in the quantum setting.
One of the most natural ways to define the computational min-entropy is that we say a state has {\em computational min-entropy at least $k$} if it is indistinguishable (by $poly$-size quantum circuits) from a state with entropy at least $k$.
If $k$ equals the number of qubits of the state, then this is simply the definition of pseudorandomness described above, as the maximally mixed state is the unique state of (min-)entropy $k$.

In the classical settings, this definition was first proposed by H\aa stad, Impagliazzo, Levin and Luby~\cite{HastadILL99}, constructing a pseudorandom generator from any one-way function.
There are a  number of other ways to define computational min-entropy with many different applications and interesting connections to other fields.
In Section~\ref{sec:comp-entropy}, we give a systematic overview of classical definitions and known relations among them and discuss quantum analogues of these definitions and our findings.

In the rest of introduction, we focus on the HILL-style computational entropy definitions, which are the most widely used notions in the classical setting.
Under this notion, we prove the quantum analogues of the ``Leakage Chain Rule for HILL pseudoentropy'' which has cryptographic applications.
There remain many interesting open problems about understanding and relating quantum analogues of other definitions of pseudoentropy.

\subsection{Leakage Chain Rule}

\paragraph{Conditional Min-Entropy.}
Many applications require measures of {\em conditional} entropy $H(X|B)$, where $B$ often
represents some {\em leakage} about $X$.
A popular and useful measure of conditional min-entropy in the classical setting is the notion of {\em average min-entropy} by \cite{DodisORS08}, which has a nice operational meaning in terms of the guessing probability:
Let $(X, B)$ be a joint distribution over $\zo^{n+\ell}$.
The \emph{guessing probability} of $X$ conditioned on $B$ is defined as the maximum probability that an algorithm can guess $X$ correctly given $B$.
That is, $\Pguess(X|B) \eqdef \max_A \Pr[ A(B) = X ]$,
where the maximum is taken over \emph{all} (even computationally unbounded) algorithms $A$.
Then the {\em conditional min-entropy} (as known as {\em average min-entropy}) of $X$ given $B$ is defined as $\Hmin(X|B) = -\log (\Pguess(X|B))$.

The definition of conditional min-entropy $H(X|B)_{\rho}$ for bipartite quantum states $\rho_{XB}$ was given by \cite{Renner05}, which generalizes the aforementioned definition of average min-entropy as well as our earlier definition for (non-conditional) min-entropy of quantum states (in Section~\ref{subsec:review-qi-qc}).
A natural way to generalize the guessing game is, given $B$, the guesser apply a POVM indexed by a vector $\ket{\psi}$, then do the binary measurement on $X$ part given by the one-dimensional subspace spanned by $\ket{\psi}$.
Then the guessing probability is the probability of getting $1$.
For the special case of classical $X$ and quantum $B$, which is called a classical-quantum-state ({\em cq-state}), K\"{o}nig, Renner and Schaffner proved the generalized guessing game described above captures the conditional min-entropy definition~\cite{KonigRS09}.
When two parts are quantum (a qq-state), the guessing probability may give higher entropy then Renner's definition. (Instead, an operational interpretation of Renner's definition is as the maximum achievable singlet fraction~\cite{KonigRS09}.)
In fact, when $\rho_{XB}$ is entangled, the conditional min-entropy can be negative, which is impossible to be captured by a guessing probability.

The cq-state case is particularly useful in quantum cryptography, such as quantum key distribution (QKD)~\cite{benn-84,Renner05,VV13}, device-independent cryptography~\cite{VV12, MS, CSW:QIP}, and  quantum-proof randomness extractors~\cite{DePVR12}.
Also it has a more natural operational interpretation.
Thus, we focus on conditional min-entropy for cq-states in this paper, and leave the study of conditional min-entropy for qq-states and computational analogues for future work.

\paragraph{Conditional Pseudoentropy.}
Classically, for a joint distribution $(X,B)$, we say that $X$ conditioned on $B$ has {\em conditional relaxed HILL pseudo(min-)entropy} at least $k$ if there exists a distribution $(X',B')$ that is computationally indistinguishable from $(X,B)$ with $\Hmin(X'|B') \geq k$.
(This definition is called {\em relaxed} HILL pseudoentropy because we do not require that $B'$ is identically distributed to $B$.
For short, we will write rHILL to indicate that we are working with the relaxed definition.)

In the quantum setting, let $\rho_{XB}\in\cX\ot\cB$ be a bipartite cq-state with $n+\ell$ qubits.
We say that $X$ conditioned on $B$ has {\em conditional quantum rHILL pseudo(min-)entropy} at least $k$ (informally written as $\Hhillr(X|B)_\rho \geq k$) if there exists a quantum state $\sigma_{XB}$ such that (i) $\Hmin(X|B)_{\sigma} \geq k$ and (ii) $\rho_{XB}$ and $\sigma_{XB}$ are computationally indistinguishable by all $\poly(\secp)$-size quantum distinguishers, where again $\secp$ is the security parameter.

\paragraph{Leakage Chain Rule for quantum HILL pseudoentropy.}   The classical Leakage Chain Rule for rHILL pseudoentropy, first proved by~\cite{DziembowskiP08,ReingoldTTV08}
and improved by~\cite{FullerR12,Skorski13}, states that for a joint distribution $(X,Z,B)$ where $B$ consists of $\ell = O(\log \secp)$ bits,
\[\Hhillr(X|Z) \geq k \Rightarrow \Hhillr(X|Z, B) \geq k - \ell.\]
(Note that under standard cryptographic assumptions, the analogous statement for (standard) HILL pseudoentropy is false~\cite{KrennPW13}.)
The leakage chain rule is an important property for pseudoentropy and has a number of applications in cryptography, such as leakage-resilient cryptography~\cite{DziembowskiP08}, memory delegation~\cite{ChungKLR11}, and deterministic encryption~\cite{fuller2015unified}.

In this paper, we prove that this Leakage Chain Rule can be generalized to handle quantum leakage $B$ when both the source $X$ and the prior leakage $Z$ remain classical.

\begin{theorem} [Quantum Leakage Chain Rule; informal] \label{thm:intro:chain-rule}
Let $\rho_{XZB}$ be a ccq-state, where $X$ and $Z$ are classical and $B$ consists of $\ell$ qubits, for $\ell=O(\log \secp)$, where $\secp$ is the security parameter. Then
$$\Hhillr(X|Z)_{\rho} \geq k \Rightarrow \Hhillr(X|Z, B)_{\rho} \geq k - \ell.$$
\end{theorem}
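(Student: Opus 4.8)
The plan is to reduce the theorem to a \emph{quantum Leakage Simulation Lemma}, following the structure of the classical proofs of~\cite{DziembowskiP08,ReingoldTTV08}. Informally, this lemma should assert that because $B$ consists of only $\ell = O(\log\secp)$ qubits, the quantum leakage can be ``faked'': there is an efficient quantum circuit (a \emph{simulator}) $h$ that takes the classical values $(x,z)$ as input and outputs an $\ell$-qubit state $h(x,z)$, such that the true ccq-state $\rho_{XZB}$ is computationally indistinguishable from the simulated state $\rho_{XZ\,h(XZ)} \eqdef \sum_{x,z} p(x,z)\,\ketbra{x}\ot\ketbra{z}\ot h(x,z)$ by all $\poly(\secp)$-size quantum distinguishers. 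The crucial quantitative point is that the simulator has size $\poly(s, 2^\ell, 1/\epsilon)$ for distinguisher size $s$ and error $\epsilon$; since $2^\ell = \poly(\secp)$, the simulator stays polynomial-size, which is exactly why the hypothesis $\ell = O(\log\secp)$ is needed.

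Granting this lemma, I would argue as follows. First, the hypothesis $\Hhillr(X|Z)_\rho \geq k$ yields a witness $\sigma_{XZ}$---which, since the source is classical, may be taken to be a cc-state---with $\Hmin(X|Z)_\sigma \geq k$ that is computationally indistinguishable from $\rho_{XZ}$. Next, run the \emph{same} simulator $h$ on the classical registers of $\sigma$ to define a ccq-state $\sigma_{XZB'} = h(\sigma_{XZ})$ with $B' = h(X,Z)$; I claim this is the desired witness. Indistinguishability follows by a hybrid argument: because $h$ is an efficient quantum operation, post-processing preserves indistinguishability (any distinguisher $D$ for the outputs gives the $\poly(\secp)$-size distinguisher $D\circ h$ for the inputs), so $\sigma_{XZB'} = h(\sigma_{XZ}) \approx_c h(\rho_{XZ}) = \rho_{XZ\,h(XZ)}$, and the Leakage Simulation Lemma gives $\rho_{XZ\,h(XZ)} \approx_c \rho_{XZB}$; chaining yields $\sigma_{XZB'} \approx_c \rho_{XZB}$. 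For the entropy bound, since $B'$ is an $\ell$-qubit register, the (information-theoretic) chain rule for conditional min-entropy of cq-states shows that conditioning on $\ell$ qubits drops the conditional min-entropy by at most $\ell$:
\[
\Hmin(X \mid Z, B')_\sigma \;\geq\; \Hmin(X\mid Z)_\sigma - \ell \;\geq\; k - \ell.
\]
Together with the indistinguishability above, $\sigma_{XZB'}$ witnesses $\Hhillr(X\mid Z,B)_\rho \geq k-\ell$, completing the proof.

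The main obstacle is establishing the quantum Leakage Simulation Lemma itself; the surrounding reduction is essentially bookkeeping. Classically this lemma is proven either via a non-uniform min-max theorem or via boosting, and I expect the quantum proof to follow the boosting route using \emph{matrix} multiplicative weights. Two difficulties stand out. First, quantum distinguishers are POVMs and the object being simulated is a state, so both range over continuous convex sets; one must either invoke a min-max theorem over these sets (handling compactness and approximation by nets over POVMs and density matrices) or run a matrix-MWU process that, over $\poly(\ell, 1/\epsilon)$ rounds, repeatedly locates a distinguishing measurement and updates a Gibbs-type weight state $h(x,z)$ for each classical value $(x,z)$. Second, one must verify that the resulting simulator is genuinely implementable by a $\poly(\secp)$-size quantum circuit, which requires efficiently preparing the $2^\ell$-dimensional Gibbs states produced by the update rule---feasible precisely because $2^\ell = \poly(\secp)$. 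Controlling both the round complexity and the cost of this state-preparation step is where the quantitative heart of the argument lies.
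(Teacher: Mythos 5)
Your proposal is correct and follows essentially the same route as the paper: the paper also reduces the chain rule to a Quantum Leakage Simulation Lemma (its Theorem~\ref{thm:leak-sim}), runs the resulting simulator on the classical witness $(Y,Z')$ for $\Hhillr(X|Z)_\rho \geq k$ (its Lemma~\ref{lemma:gw-sim}, proved by exactly your post-processing-plus-transitivity hybrid), and then applies the information-theoretic chain rule for separable (ccq) states with loss $\ell$ (its Theorem~\ref{thm:quantum-chain-rule-sep}). Even your sketch of how to prove the simulation lemma itself—matrix multiplicative weights with Gibbs-state preparation, or alternatively a min-max argument with nets over states and POVMs—matches the paper's two proofs (Section~\ref{subsec:boosting-proof} and Appendix~\ref{app:min-max-proof}).
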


Theorem~\ref{thm:intro:chain-rule} is proved by a quantum generalization of the \emph{Leakage Simulation Lemma}~\cite{TrevisanTV09,JP14,ChungLP15,VZ13} to its quantum analogue (which implies Theorem~\ref{thm:intro:chain-rule} immediately).
There are two types of classical proofs for the Leakage Simulation Lemma: one based on Boosting and one based on the Min-Max Theorem.
We develop quantum analogues of both proofs.
We will demonstrate the boosting proof in Section~\ref{subsec:boosting-proof}, and leave the min-max proof in Appendix~\ref{app:min-max-proof} as it is more involved.
Both proofs also rely on efficient algorithms for quantum tasks such as {\em POVM tomography} and {\em quantum circuit synthesis} to construct efficient reductions.
This leads to a variant of POVM tomography problem that merits further study.

An interesting open question is to prove the leakage chain rule when the source $X$ and/or the prior leakage $Z$ are quantum.
In particular, handling a prior quantum leakage $Z$ seems important for applications to leakage-resilient cryptography with quantum leakage.
This is not likely to be a direct generalization of Theorem~\ref{thm:intro:chain-rule} as even the information theoretic leakage chain rule loses $2\ell$~\cite{WTHR11} rather than $\ell$ bits of entropy.
We leave an investigation of computational analogues of this Leakage Chain Rule to future work.
In Section~\ref{ssec:intro-barrier}, we discuss a general barrier to further generalizing our proof to handling quantum $X$ and $Z$ as well as to generalizing many other proofs of classical theorems.

\subsection{Main Techniques}

We first review a common technique for proving some classical theorems about indistinguishability and computational entropy (\eg~Impagliazzo's Hardcore Lemma~\cite{Impagliazzo95}, Regularity Lemma~\cite{TrevisanTV09} and Dense Model Theorem~\cite{ReingoldTTV08}).
In those proofs, there is usually a step of ``switching quantifiers'', and that is a place where the min-max theorem comes in.
For instance, one wants to prove that $X$ not having high HILL entropy against $s$-size circuits implies $X$ does not having high metric entropy~\cite{BarakSW03}.
From the assumption, we know for all distributions $Y$ with high min-entropy, there exists an $s$-size circuit to distinguish $X$ and $Y$.
We view it as a game between Player~1 who provides a distribution, and Player~2 who provides a distinguisher, so Player~2 always has a certain payoff by the assumption.
Then by von Neumann's Min-Max Theorem, Player~2 has a universal ``mixed strategy,'' which is a distribution over distinguishers that achieves high payoff for every distribution from Player~1's strategy.
It turns out that this statement almost gives us the desired conclusion, except we only got a {\em distribution} over $s$-size distinguishers instead of a small size distinguisher.
One common remedy is a ``sampling''.
If we sample polynomially many distinguishers from the distribution over distinguishers, then by a Chernoff bound, with high probability, the average of the sampled distinguishers performs well for any fixed $x$ in Player~1's strategy space.
Then by a union bound, the samples work well for all $x$.
Therefore, we can obtain a universal distinguisher with complexity roughly $s$ times the (polynomial) number of samples we need.

Suppose we would like to generalize the above statement to the quantum settings, in which a strategy of Player~1 is a quantum state rather than a classical distribution.
An immediate obstacle is that we cannot union bound over all possible pure states.
A naive approach is to union bound over an ``$\eps$-net'' of all possible pure states.
In this way, one can still obtain a nonuniform min-max theorem.
However, since the net is doubly exponentially large (c.f.~Proposition~\ref{prop:enet-vec}), the complexity of the universal circuit is too large for some applications.
Therefore, another view of the min-max theorem that generalizes to the quantum setting better is desired.

Freund and Schapire \cite{FS99} showed how to use the multiplicative weight update (MWU) method to prove von Neumann's Min-Max Theorem.
Moreover, the proof provides a constructive way to approximate the universal strategy.
Inspired by Barak Hardt and Kale~\cite{BHK09}, Vadhan and Zheng~\cite{VZ13} adopted the framework to show a {\em uniform min-max theorem}.
Contrary to the above sampling method, which can only show the {\em existence} of the low complexity universal strategy, the uniform version provides an efficient way to find a universal strategy.
As applications, they proved some aforementioned theorems (\eg~Impagliazzo's Hardcore Lemma and Dense Model Theorem) with uniform security using the uniform min-max theorem.
From this perspective of MWU, a (mixed) strategy of Player~1 is viewed as a weight vector.
During the MWU procedure, we maintain the weight vectors over the support of $X$.
In each round, we update the weight vector according to some loss function.
In an extension to the quantum setting, the Player~1's strategy is a quantum state, which can be represented as a density matrix.
Luckily, there is a generalization of MWU, called the matrix multiplicative update (MMWU) method~\cite{Kale07}.
In an MMWU procedure, instead of maintaining a weight vector, we keep updating a weight matrix, which is semidefinite positive.
The weight matrix can be seen as a quantum state in Player~1's strategy space as long as it is Hermitian.
In this work, we prove the quantum min-max theorem via MMWU and then are able to apply it to connect different pseudoentropy notions and the Quantum Leakage Chain Rule.

In $\cite{VZ13}$, the purpose of using the MWU approach was to obtain a constructive version of the min-max theorem.
Since we use a similar argument in the quantum setting, in fact, we can also have a uniform version of the quantum min-max theorem.
However, we have not found any further applications of the constructive version, so we still only state it as the quantum nonuniform min-max theorem for simplicity.

\subsection{Application to Quantum Leakage-Resilient Stream-Ciphers}

In this section, we demonstrate an application of computational quantum entropy to leakage-resilient cryptography, where we seek to construct cryptographic protocols that maintain the security even if the side information about the honest parties' secrets leaks to an adversary.
Specifically, we construct a leakage-resilient stream-cipher that is secure against \emph{quantum leakage}.

Classical leakage-resilient stream-ciphers were investigated in the seminal work of Dziembowski and Pietrzak~\cite{DziembowskiP08}, where they consider the security of a stream-cipher $\SC$ in the ``only computation leaks'' model~\cite{MicaliR04} with continual leakage.
Specifically, let $S_i$ denote the secret state of $\SC$.
At each round $i$ when the stream cipher evaluates $(S_{i+1}, X_{i+1}) = \SC(S_i)$, an adversary can adaptively choose any leakage function $f_i$ and learn the output of $f_i$ applied to the part of $S_i$ involved in the computation of $\SC(S_i)$.
They assume that the leakage functions are efficient and of bounded output length $\ell = O(\log \secp)$,\footnote{Note that both assumptions are necessary.
Without the efficiency assumption, the leakage function can invert the secret state and leak on the initial secret $S_0$ bit by bit.
Without the length bound, the adversary can learn the entire new secret state.} and proved the security property: the output of the $i$-th round remains pseudorandom given the output and leakage of the first $i-1$ rounds.
Note that even though the length of each leakage is bounded, in total the adversary can collect a long leakage accumulated over many rounds.

Dziembowski and Pietrzak~\cite{DziembowskiP08} gave the first construction of a leakage-resilient stream-cipher based on randomness extractors and pseudorandom generators (PRGs), and proved the security using the classical Leakage Chain Rule for HILL pseudoentropy.
Pietrzak~\cite{Pie09} gave a simpler construction based on any weak pseudorandom functions (weak PRFs), and Jetchev and Pietrzak~\cite{JP14} gave an improved analysis of~\cite{Pie09} using the classical Leakage Simulation Lemma.

Now we consider the case that the leakage is quantum (while the stream-cipher remains classical).
Namely, the output of the leakage functions is a bounded-length quantum state.
It is conceivable that such an attack may exist in the future with the emergence of quantum computers.
We also view this as a natural theoretical question that investigates problem information through a cryptographic lens.
We show that the construction of Dziembowski and Pietrzak~\cite{DziembowskiP08} remains secure against quantum leakage in the bounded-quantum-storage model~\cite{DamgardFSS05,KonigT08,WehnerW08,Unruh11}, where the adversary has a limited quantum memory (but no restriction on its classical memory.
The model is investigated in the literature as a way to bypass impossibility results~\cite{DamgardFSS05,WehnerW08,Unruh11} or to prove security~\cite{KonigT08}.


\begin{theorem}[Quantum Leakage-Resilient Stream-Cipher; informal]
  Assuming the existence of quantum-secure pseudorandom generators against quantum distinguisher with quantum advice, there exists quantum leakage-resilient stream-cipher secure against bounded-quantum-storage adversaries with $O(\log \secp)$ quantum memory and $\poly(\secp)$ circuit size, where $\secp$ is the security parameter.
\end{theorem}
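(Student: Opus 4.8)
The plan is to adapt the classical security proof of the Dziembowski--Pietrzak stream-cipher~\cite{DziembowskiP08} to the quantum-leakage setting, replacing the classical Leakage Chain Rule by our Quantum Leakage Chain Rule (Theorem~\ref{thm:intro:chain-rule}). Recall that their $\SC$ keeps a secret state split into two halves and alternates computation between them: each round it reads the active half, applies a strong extractor $\Ext$ with a fresh seed taken from the inactive half to obtain a near-uniform key, and then applies the PRG $\Prg$ to stretch this key into the round's output block together with refreshed material for both halves. Because the round-$i$ leakage function acts only on the active half (the ``only computation leaks'' assumption), the inactive half---and in particular the extractor seed---is untouched by the current round's leakage. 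I would instantiate $\Ext$ with an extractor that is secure against quantum side information~\cite{DePVR12} and $\Prg$ with the assumed quantum-secure generator.

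I would then argue by induction on the round, maintaining the invariant that the active key $K_i$ (of length $n$) is pseudorandom, i.e.\ has near-full conditional rHILL pseudoentropy $\Hhillr(K_i \mid Z_{i-1})_{\rho} \approx n$ given the adversary's \emph{classical} view $Z_{i-1}$ (all earlier output blocks and classical side information). The inductive step has three moves. First, I fold the adversary's carried-over quantum memory together with the fresh round-$i$ leakage into a single register $B$ of at most $\ell = O(\log\secp)$ qubits and invoke Theorem~\ref{thm:intro:chain-rule}, obtaining $\Hhillr(K_i \mid Z_{i-1}, B)_{\rho} \ge n - \ell$; thus the leakage costs only $O(\log\secp)$ bits of pseudoentropy. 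Second, since $K_i$ is now indistinguishable, even given the quantum register $B$, from a state of genuine conditional min-entropy $n-\ell$, applying the quantum-proof $\Ext$ with the fresh seed yields a block that is pseudorandom given the adversary's entire view. Third, feeding this near-uniform key into $\Prg$ produces an output block that is pseudorandom and refreshes the keys, re-establishing full pseudoentropy for the half that becomes active two rounds later and closing the induction. A standard hybrid argument over the rounds then upgrades this to pseudorandomness of the whole output stream given the accumulated quantum leakage.

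The role of the bounded-quantum-storage assumption is the crux of the adaptation, and it is what makes the first move above legitimate. Theorem~\ref{thm:intro:chain-rule} applies only when both the source $K_i$ and the \emph{prior} conditioning $Z$ are classical and only the current register $B$ is quantum. Under continual leakage the adversary would in principle accumulate quantum leakage from all earlier rounds, so the conditioning at round $i$ would contain a growing quantum part---exactly the regime our chain rule does not cover. Capping the adversary's quantum memory at $O(\log\secp)$ qubits removes this difficulty: at every round its entire persistent quantum state fits in an $\ell$-qubit register, so I can treat it as the single leakage register $B$ and keep all output blocks and classical leakage in the classical register $Z$. The PRG is assumed secure against quantum distinguishers holding quantum advice because the distinguishers built by the chain-rule and extractor reductions are themselves quantum circuits that may carry quantum advice.

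I expect the main obstacle to be reconciling the \emph{persistence} of the adversary's quantum memory across rounds with the chain rule's requirement that the prior conditioning be classical. Bounded storage caps the size of this memory, but one must still justify re-folding it into a fresh $B$-register at every round and show that the hybrids compose---in particular, that quantum memory carried over from earlier rounds does not secretly encode pseudoentropy-reducing information about the currently active key beyond the logarithmic budget counted by $B$. Here the alternating/refresh structure should do the work: once the PRG refreshes a half, leakage gathered on its previous incarnation becomes computationally independent of the new value, so only the most recent round of leakage on the relevant half is ``live'' when the chain rule is invoked. Making this locality precise---bounding how stale quantum leakage propagates through the refresh while staying inside the classical-$Z$ regime of Theorem~\ref{thm:intro:chain-rule}---is the technically delicate part I would have to carry out.
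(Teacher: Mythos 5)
Your proposal follows the Dziembowski--Pietrzak chain-rule route, but that is precisely the route the paper explains it could \emph{not} carry over to quantum leakage. The paper's actual proof (Theorem~\ref{thm:resilient}) never invokes the computational Quantum Leakage Chain Rule: it uses the Quantum Leakage Simulation Lemma (Theorem~\ref{thm:leak-sim}) to replace the adversary's $\ell$-qubit memory, round by round, with a state $g^{(i)}(V^{(i-1)},X^{(i)},K^{(i+1)})$ that is efficiently computable from \emph{classical} values; then, inside hybrids where the relevant key and seed have already been replaced by truly uniform strings, it only needs the \emph{information-theoretic} chain rule for separable states (Theorem~\ref{thm:quantum-chain-rule-sep}) together with Claim~\ref{claim:f}.

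The step of yours that breaks is the second move (``applying the quantum-proof $\Ext$ with the fresh seed yields a block that is pseudorandom given the adversary's entire view''). The chain rule only hands you \emph{some} witness state $\sigma$ that is indistinguishable from the real state and satisfies $\Hmin(K_i \mid Z_{i-1}, B)_\sigma \ge n-\ell$; the quantum-proof extractor guarantee additionally requires that, in the state it is applied to, the seed be uniform and independent of both the source and the side information. In the stream cipher the seed $X^{(i-1)}$ is not fresh---it is part of the adversary's view and is correlated with the leakage---and the rHILL witness gives you no control over this structure. Classically one repairs this by conditioning on the (classical) view, under which key and seed become independent; with a quantum register $B$ there is no ``value'' to condition on, and this is exactly the obstruction the paper names and resolves by establishing independence inside the simulation-lemma hybrids. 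Relatedly, your reduction to PRG security must itself produce the adversary's quantum memory, which contains $f^{(i)}(K^{(i-1)})$---a quantum state depending on the very key whose extraction output is the PRG seed---so it cannot be supplied as challenge-independent quantum advice; the simulation lemma is what lets the reduction manufacture that memory from classical values it actually holds. Finally, as written your induction threads the chain rule's $s=\poly(s')$ size loss through every round, so over $q=\poly(\secp)$ rounds the required PRG security compounds to an iterated polynomial $\poly^{(q)}(s)$, whereas the paper's hybrid structure keeps each transition a single, non-nested reduction. Your bounded-storage intuition (fold the entire memory into one $\ell$-qubit register) is the right one, but it must be paired with the Leakage Simulation Lemma rather than the Leakage Chain Rule to close these gaps.
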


Note that both bounds on the leakage and quantum storage are logarithmic in the security of the underlying primitives. If the PRG has exponential security, then the leakage and adversary's quantum storage can be linear in the size of the secret state.

When proving the quantum security of classical cryptographic constructions (\eg~construction of PRGs), it often suffices to assume quantum security of the underlying primitives (\eg~one-way functions or lattice assumptions), since typically the security reductions are ``nice'' and can be directly carried through in the quantum setting. (See the work of Song~\cite{Song14} for a nice framework formalizing this observation.)
However, this is not the case here due to the presence of quantum information.\footnote{There are several other challenging cases such as when the reduction needs to rewind the adversary~\cite{Watrous09,Unruh12}, or when the setting involves oracles~\cite{BonehDFLSZ11,Zhandry12}.}
Several issues arise when we generalize the classical proofs to handle quantum leakage.
We provide detailed discussion on the issues we encounter in Section~\ref{sec:app-resilient}, and here
 we only explain where the bounded-quantum-storage limitation comes from.

The classical proof of the security iteratively argues that the secret state $S_i$ of the stream-cipher remains pseudorandom to the adversary.
When the adversary leaks up to $O(\log \secp)$ bits on $S_i$, the Leakage Chain Rule for HILL pseudoentropy says that $S_i$ retains high HILL pseudoentropy given the adversary's view.
Thus, if the stream-cipher applies an extractor and a PRG, $S_i$ can be replaced with a new state $S_{i+1}$ that is pseudorandom (while also generating some pseudorandom output bits for the stream-cipher).
The same argument works for a single round of leakage in the quantum setting using our Leakage Lemma for Quantum HILL Pseudoentropy (Theorem~\ref{thm:intro:chain-rule}.
But over many rounds, the adversary can accumulate long quantum auxiliary information $Z$, and we do not know how to prove the leakage lemma in such a case. (The lemma does hold if $Z$ is classical.)
Assuming bounded quantum storage prevents this accumulation (but still allows the accumulation of classical leakage information). Our proof formally proceeds by using our Quantum Leakage Simulation Lemma to efficiently simulate the adversary's quantum state at each round.
We leave it as an interesting open question to identify versions of Leakage Chain Rule or Leakage Simulation Lemma that are sufficient to prove the security of quantum leakage-resilient stream-cipher against a general adversary.

\subsection{The Dense Model Theorem and Computational Relative Min-Entropy}\label{subsec:dmt}
First, we review the two possible definitions of computational relative entropy in the classical case.
Let $X$ and $Y$ be two distributions.
The {\em relative min-entropy} from $X$ to $Y$ is defined as $\max_{x\in\Supp(X)}\log(\pr{X = x}/\pr{Y = x})$.
That is, $D(X\| Y) \leq \lda$ iff $X$ is $2^{-\lda}$-dense in $Y$. \ie~$\pr{Y = x} \geq 2^{-\lda}\pr{X = x}$ for all $x$.
And if $X$ is distributed over $n$-bit strings, then $D(X\|U_n) \leq \lda$ iff $\Hmin(X) \geq n-\lda$.
We say the {\em HILL-1 relative min-entropy} from $X$ to $Y$ is at most $\lambda$ (informally written as $\Dhillone(X\|Y) \leq \lambda$) if there exists a distribution $X'$ computationally indistinguishable from $X$ such that $D_{\infty}(X'\|Y) \leq \lambda$.
We say that {\em HILL-2 relative min-entropy} from $X$ to $Y$ is at most $\lambda$ (informally written as $\Dhilltwo(X\|Y) \leq \lambda$) if there exists a distribution $Y'$ computationally indistinguishable from $Y$ such that $D_{\infty}(X\|Y') \leq \lambda$.
These definitions naturally generalize to the quantum setting by replacing the distributions with quantum states and taking the computational indistinguishability to be against polynomial-sized quantum distinguishers (with classical advice).

With the above definitions, the Dense Model Theorem of \cite{ReingoldTTV08} can be equivalently stated as $\Dhilltwo(X\|Y) \leq \lambda \Rightarrow \Dhillone(X\|Y) \leq \lambda$ for $\lambda = O(\log \secp)$, where $\secp$ is the security parameter.
We show that the proof of the Dense Model Theorem can be adapted to show the converse: $\Dhillone(X\|Y) \leq \lambda \Rightarrow \Dhilltwo(X\|Y) \leq \lambda$ for $\lambda = O(\log \secp)$ (see Lemma~\ref{lemma:reverse-dmt}).
Therefore, the two notions of computational relative min-entropy are equivalent in the classical setting. In contrast, in the quantum setting, we show a separation of the two notions of HILL quantum relative min-entropy.

\begin{theorem}[Separation of HILL quantum relative min-entropy]\label{thm:intro-separation}
  For a security parameter $\secp$ and every $n = \omega(\log \secp)$, there exist two $n$-qubit mixed quantum states $\rho, \sigma\in\cH$  such that
  \[\Dhilltwo(\rho\|\sigma) \leq 1 \mbox{ but } \Dhillone(\rho\|\sigma) = \infty.\]
\end{theorem}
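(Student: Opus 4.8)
The plan is to exploit the existence of a pseudorandom \emph{pure} state by taking $\sigma$ itself to be pure, so that its support is one-dimensional; this is what drives the two notions apart. First I would record how the two definitions simplify in this case. Recall that for quantum states $D_{\infty}(\rho\|\tau)\leq\lambda$ means $\rho\leq 2^{\lambda}\tau$ in the Loewner (PSD) order, and that $D_{\infty}(\rho\|\tau)<\infty$ iff $\Supp(\rho)\subseteq\Supp(\tau)$. Now let $\sigma=\ketbra{\psi}$ be pure. For HILL-1: if some $\rho'$ indistinguishable from $\rho$ satisfies $\rho'\leq 2^{\lambda}\sigma$ for a finite $\lambda$, then since $\sigma$ has rank one every $\ket{v}\perp\ket{\psi}$ has $\bra{v}\rho'\ket{v}=0$, so $\rho'$ is supported on $\operatorname{span}(\ket{\psi})$ and, being a density operator, equals $\ketbra{\psi}=\sigma$; hence $\Dhillone(\rho\|\sigma)<\infty$ holds iff $\rho$ is computationally indistinguishable from $\sigma$. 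For HILL-2: to get $\Dhilltwo(\rho\|\sigma)\leq 1$ it suffices to exhibit a single $\sigma'$ indistinguishable from $\sigma$ with $\rho\leq 2\sigma'$.

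Then I would instantiate the construction. Let $\ket{\psi}$ be a pseudorandom $n$-qubit pure state (so $\ketbra{\psi}$ is indistinguishable from $\sigma_{\mix}$), which exists for $n=\omega(\log\secp)$ by \cite{BMW08,GFE09}, and set $\sigma=\ketbra{\psi}$. Let $\Pi_0=\sum_{x:\,x_1=0}\ketbra{x}$ be the projector onto the subspace where the first qubit is $\ket{0}$, and set $\rho=2^{-(n-1)}\Pi_0$, the maximally mixed state on that half-dimensional subspace. For HILL-2, take $\sigma'=\sigma_{\mix}$: it is indistinguishable from $\sigma$ by pseudorandomness, and $\rho\leq 2^{-(n-1)}I=2\sigma_{\mix}$ since $\Pi_0\leq I$, so $D_{\infty}(\rho\|\sigma')\leq 1$ and hence $\Dhilltwo(\rho\|\sigma)\leq 1$. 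For HILL-1, by the reduction above it suffices to show $\rho$ is \emph{distinguishable} from $\sigma$; since $\sigma$ is indistinguishable from $\sigma_{\mix}$, by the triangle inequality for computational indistinguishability this follows once $\rho$ is distinguishable from $\sigma_{\mix}$. But measuring the first qubit in the computational basis outputs $0$ with probability $\operatorname{tr}(\Pi_0\rho)=1$ on $\rho$ and probability $\operatorname{tr}(\Pi_0\sigma_{\mix})=1/2$ on $\sigma_{\mix}$, an advantage of $1/2$ achieved by a constant-size circuit. Hence $\rho$ is not indistinguishable from $\sigma$, so $\Dhillone(\rho\|\sigma)=\infty$.

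The point worth emphasizing --- and the only real obstacle --- is the asymmetry this construction engineers, rather than any calculation. Pseudorandomness lets $\sigma$ masquerade as the full-support state $\sigma_{\mix}$, which is all that the HILL-2 (Dense-Model) direction can see, so $\rho$ merely needs enough min-entropy (here $n-1$, giving the bound $1$) to be dense in that surrogate. HILL-1, by contrast, is sensitive to the \emph{actual} rank-one support of $\sigma$: any indistinguishable modification of $\rho$ that were dense in $\sigma$ would have to collapse literally onto the line $\operatorname{span}(\ket{\psi})$, i.e.\ become $\sigma$ itself, which the first-qubit measurement rules out. The step to verify with care is precisely this support-collapse (that a finite $D_{\infty}$ against a pure target forces equality), together with the routing through $\sigma_{\mix}$ via transitivity; the factor $2$ --- the ``$1$'' in the statement --- is exactly the one qubit of support removed from $\rho$, and the hypothesis $n=\omega(\log\secp)$ enters only to guarantee the pseudorandom pure state.
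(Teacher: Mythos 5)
Your proposal is correct and follows essentially the same route as the paper: $\sigma$ a pseudorandom pure state, $\rho$ the maximally mixed state on the half-dimensional subspace fixed by the first bit, $\sigma'=\sigma_{\mix}$ as the HILL-2 witness, and for HILL-1 the support-collapse fact that finite $D_{\infty}(\cdot\|\sigma)$ against a rank-one $\sigma$ forces equality with $\sigma$, refuted by the first-bit measurement (the paper states this distinguishing step directly, and in its full technical version, Theorem~\ref{thm:qdmt}, generalizes your support-collapse observation as Claim~\ref{claim:pure-dense}). Your explicit routing through $\sigma_{\mix}$ via the triangle inequality is exactly the justification implicit in the paper's sketch.
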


The counterexample is based on the existence of pseudorandom pure states \cite{BMW08,GFE09} (see Section~\ref{subsec:comp-notion}).
Specifically, we take $\sigma$ to be a pseudorandom pure state, and $\rho$ be a classical distribution with the first bit equal to $1$, and the remaining $n-1$ bits being uniform.
First, $\Dhilltwo(\rho\|\sigma) \leq 1$ since $D_{\infty}(\rho \| \sigma_{\mix}) = 1$ and $\sigma$ is pseudorandom.
On the other hand, the fact that $\sigma$ is pure implies $D_{\infty}(\tau\|\sigma) = \infty$ for any $\tau \neq \sigma$. Also, $\rho$ and $\sigma$ can be distinguished by simply measuring and outputting the first bit. Therefore, $\Dhillone(\rho\|\sigma) = \infty$.

\subsection{Barrier Result: Impossibility of Quantum Gap Amplification} \label{ssec:intro-barrier}

Herein, we discuss the general barrier to extending proofs of many classical proofs to the quantum setting.
Let us consider the following quantum gap amplification problem: Let $p > q > \eps \in (0,1)$ be parameters, and $A$ a quantum algorithm with $n$ qubits input and binary output.
The task here is to perform a reduction that transforms $A$ into an ``amplified'' algorithm $R$ such that for every \emph{pure state} input $\ket{\psi}$, if $\Pr[A(\ket{\psi}) = 1] \geq p$, then $\Pr[R(\ket{\psi}) = 1] \geq p'$ for $p'>p$; if $\Pr[A(\ket{\psi}) = 1] \leq q$, then $\Pr[R(\ket{\psi}) = 1] \leq q'$ for $q'<q$.
A typical setting is $p=2/3$, $q=1/3$, $p'=1-\eps$, and $q'=\eps$.
We show that the task is impossible to achieve even if the input is a single qubit.

\begin{theorem}[Impossibility of Quantum Gap Amplification; informal]
  For every $p'>p > q >q' \in (0,1)$, there exists a quantum algorithm $A$ with $1$ qubit input and binary output such that there does not exist any quantum algorithm $R$ that solves the gap amplification problem defined above with respect to $p$, $q$, $p'$, $q'$, and $A$.
\end{theorem}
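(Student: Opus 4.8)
The plan is to reduce the statement to an elementary fact about $2\times 2$ POVM elements, exploiting that on a \emph{single} qubit every quantum algorithm acts \emph{affinely} on the input and therefore cannot manufacture the threshold-like (nonlinear) behavior that amplification demands. First I would record the standard fact that, for a single-qubit input $\ket{\psi}$, any quantum algorithm $R$ — even one using ancilla qubits, unitaries, and intermediate measurements, and even one constructed from $A$ as a subroutine — satisfies $\Pr[R(\ket{\psi})=1]=\bra{\psi}M_R\ket{\psi}$ for a single \emph{effective} POVM element $M_R$ (i.e.\ a $2\times 2$ matrix with $0\preceq M_R\preceq I$), obtained by composing the channel $R$ applies with its final measurement and tracing out the ancillas. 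The crucial point is that $R$ receives only a \emph{single copy} of $\ket{\psi}$ and cannot clone it, so its acceptance probability is a fixed quadratic form in $\ket{\psi}$ (equivalently, an affine function of the Bloch vector of $\ketbra{\psi}$).

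With this in hand I would exhibit the hard instance $A$ explicitly: let $A$ be the computational-basis measurement that outputs $1$ with probability $|\braket{0}{\psi}|^2$, i.e.\ $M_A=\ketbra{0}$. Its accepted region $\{\,|\braket{0}{\psi}|^2\ge p\,\}$ and rejected region $\{\,|\braket{0}{\psi}|^2\le q\,\}$ correspond to two opposing spherical caps on the Bloch sphere, and I would pick witnesses on their boundaries lying in the real $\{\ket{0},\ket{1}\}$-plane. Concretely, set the good states $\ket{a_\pm}=\sqrt{p}\,\ket{0}\pm\sqrt{1-p}\,\ket{1}$ (each with $|\braket{0}{a_\pm}|^2=p\ge p$) and the bad states $\ket{b_\pm}=\sqrt{q}\,\ket{0}\pm\sqrt{1-q}\,\ket{1}$ (each with $|\braket{0}{b_\pm}|^2=q\le q$). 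The $\pm$ pairs are reflections of one another across the $z$-axis, which is exactly what lets me kill the off-diagonal part of $M_R$.

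Now suppose for contradiction that some $R$ amplifies, and let $M_{00},M_{11}$ denote the diagonal entries of its effective POVM element $M_R$ in the computational basis; since $0\preceq M_R\preceq I$ we have $M_{00},M_{11}\in[0,1]$. Because $\ket{a_+}$ and $\ket{a_-}$ differ by the reflection $\ket{1}\mapsto-\ket{1}$, summing $\Pr[R(\ket{a_+})=1]$ and $\Pr[R(\ket{a_-})=1]$ cancels the off-diagonal contribution of $M_R$ and leaves $2\bigl(p\,M_{00}+(1-p)\,M_{11}\bigr)$; the same holds for the bad pair. Since each good state requires $\Pr[R]\ge p'$ and each bad state requires $\Pr[R]\le q'$, this yields
\[ p\,M_{00}+(1-p)\,M_{11}\ \ge\ p' \qquad\text{and}\qquad q\,M_{00}+(1-q)\,M_{11}\ \le\ q'. \]
Subtracting gives $(p-q)\,(M_{00}-M_{11})\ge p'-q'$. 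As $M_{00}\le 1$ and $M_{11}\ge 0$, we have $M_{00}-M_{11}\le 1$, so the left-hand side is at most $p-q$; hence $p-q\ge p'-q'$. But $p'>p$ and $q'<q$ force $p'-q'>p-q$, a contradiction. This rules out every $R$ and completes the proof.

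The step I expect to be the real crux is the first one: pinning down that an arbitrary $R$ — including one that calls $A$ or performs adaptive measurements — collapses to a single quadratic form $\bra{\psi}M_R\ket{\psi}$. This is precisely where the single-copy / no-cloning restriction does all the work; everything after it is the two-inequality calculation above. I would be careful to stress that the argument uses only that $R$ acts on \emph{one} copy of the input: were $R$ allowed several copies of $\ket{\psi}$, the reduction to a fixed POVM element would break, and indeed amplification would become possible by repeating the measurement and taking a majority vote.
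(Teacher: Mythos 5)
Your proof is correct, and its skeleton matches the paper's: the paper's formal proof (Theorem~\ref{thm:barrier}) also takes $A$ to be the computational-basis measurement on one qubit and reduces an arbitrary $R$ to a single effective $2\times 2$ POVM element with $0\le M_R\le \I$ (the BPOVM formalism in the preliminaries justifies exactly the collapse you flag as the crux). Where you genuinely diverge is in how the contradiction is extracted. The paper uses only \emph{two} witness states, $(\cos\alpha)\ket{0}+(\sin\alpha)\ket{1}$ and $(\cos\beta)\ket{0}+(\sin\beta)\ket{1}$ with $\sin^2\alpha=p$, $\sin^2\beta=q$, and carries the off-diagonal entry of $M_R$ through a cotangent computation, ending in the contradiction $a>2b/\cot\alpha$ versus $a<2b/(\cot\beta+\cot\alpha)<2b/\cot\alpha$. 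You instead take the reflected pairs $\ket{a_\pm},\ket{b_\pm}$, so that averaging over each pair cancels the off-diagonal entries outright, leaving the affine system $pM_{00}+(1-p)M_{11}\ge p'$ and $qM_{00}+(1-q)M_{11}\le q'$, hence $(p-q)(M_{00}-M_{11})\ge p'-q'$, which is impossible because $M_{00}-M_{11}\le 1$ while $p'-q'>p-q$. Your symmetrization is essentially the idea of the paper's introductory sketch (decomposing the maximally mixed state both over $\{\ket{0},\ket{1}\}$ and over $\{\ket{+},\ket{-}\}$), generalized to arbitrary $p,q,p',q'$: it buys a trigonometry-free argument with a transparent moral --- acceptance probability is affine on diagonal mixtures, so the output gap of any single-copy algorithm can never exceed the input gap $p-q$. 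The paper's two-state computation buys a slightly finer statement: amplification is impossible even on a promise containing just two pure states, and even in the boundary case where one merely asks for acceptance strictly above $p$ and strictly below $q$ (your argument adapts to that case as well, since it would force $M_{00}-M_{11}>1$).
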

This impossibility can be viewed as a strengthening of the no-cloning theorem~\cite{WZ82}, which states that an unknown quantum pure state cannot be cloned perfectly.
Indeed, if the input state could be cloned perfectly, the gap amplification problem can be solved by ordinary repetitions and ruling by the majority or a threshold.

For a complexity theory interpretation, we can consider a class of promise problems, where the instances are pure quantum states, and a promise problem is in the class if there is a (possibly resource-bounded) quantum algorithm $A$ that distinguishes YES instances from NO instances with error probability bounded by, say, 1/3.
Consequently, such a class is not robust to the choice of error probability; different choices yield different classes.
This is in contrast to complexity classes such as $\mathrm{BPP}$, $\mathrm{BQP}$, and $\mathrm{QMA}$, where error reduction is possible. The key difference is that, in these classes the input is classical, and hence can be cloned.
In the case of $\mathrm{QMA}$, there is a quantum witness, but the witness for the amplified algorithm can be allowed to consist of many copies of the original witness.
(There is also a non-trivial way to do error reduction for $\mathrm{QMA}$ with single quantum witness~\cite{QMA05}. It circumvents the impossibility since the amplification is only defined with respect to the correct quantum witness.)

To see the impossibility, we sketch a simple argument for the case $p = 0.4$, $q = 0.3$, and $\eps = 0.1$.
Consider a quantum algorithm $A$ that takes a single qubit input $\rho$, measures it in the computational basis, and outputs the outcome, and the following four pure states: $\ket{0}$, $\ket{1}$, $\ket{+}=(\ket{0}+\ket{1})/\sqrt{2}$, and $\ket{-}=(\ket{0}-\ket{1})/\sqrt{2}$.
$A$ outputs 1 with probability 1 on $\ket{1}$, $0$ on $\ket{0}$, and $0.5$ on $\ket{+}$ or $\ket{-}$.
Thus, by definition, $R$ has to accept $\ket{0}$ with probability at most $0.1$, and accept $\ket{1},\ket{+}, \ket{-}$ with probability at least $0.9$.
Now consider the behavior of $R$ on the maximally mixed state $\sigma_{\mix}$, which can be equivalently described as the uniform distribution over $\ket{0}$ and $\ket{1}$, or as the uniform distribution over $\ket{+}$ and $\ket{-}$.
This means that the acceptance probability of $R$ on $\sigma_{\mix}$ is the average of its acceptance probabilities on $\ket{0}$ and $\ket{1}$, but also the average of its acceptance probabilities on $\ket{+}$ and $\ket{-}$.
However these two averages are at most $.55$ and at least $.9$, respectively.

Now, the barrier follows by observing that gap amplification is a common ingredient in many proofs of classical theorem based on reductions.
When a reduction takes unknown quantum states as input, the impossibility of quantum gap amplification implies that the classical reduction proof that uses gap amplification cannot be directly generalized to the quantum setting.
For example, in proving the Leakage Chain Rule for rHILL pseudoentropy, we need to construct a reduction $R$ showing that $X$ conditioned on $Z$ has less than $k$ bits of rHILL pseudoentropy; here $R$ takes both $X$ and $Z$ as input.
Thus, it is difficult to generalize classical proofs of the Leakage Chain Rule (all of which utilize gap amplification) when $X$ or $Z$ is a quantum state. Nevertheless, we emphasize that this does not imply that quantum analogues of classical theorems are false, but that different techniques must be introduced.

\section{Preliminaries}\label{sec:preliminary}

\subsection{Quantum Information}

\begin{trivlist}
\item \textbf{Quantum states.}
  We begin with some notation.
  Suppose $\cX$ is a complex vector space with inner product. A vector in $\cX$  is denoted by $\ket{v}$ and
  its  conjugate transpose  is denoted by $\bra{\psi}=\ket{\psi}^{\dag}$.
  The inner product and outer product of two vectors $\ket{v}, \ket{w}\in\cX$ are denoted by $\braket{v}{w}$ and $\ket{v}\bra{w}$, respectively.
  The norm of $\ket{v}$ is defined by
  \[ \norm{\ket{v}}_2 =\sqrt{\braket{v}{v}}. \]
  The set of all unit vectors in $\cX$ is denoted by $\ball{\cX}$.
  Let $\lin{\cX}$ denote the set of all linear operators on $\cX$.
  Let $\herm{\cX}$ denote the set of all Hermitian operators on space $\cX$, \ie~$\herm{\cX}=\{X \in \lin{\cX}: X^\dag=X\}$, where $X^{\dag}$ is the conjugate transpose of $X$.
  The Hilbert-Schmidt inner product on $\lin{\cX}$ is defined by
  \[\textstyle \ip{X}{Y}=\tr{X^{\dag}Y},  \forall X,Y \in \lin{\cX}.\]

  A quantum state space $\cX$ of $m$ qubits is the complex Euclidean vector space $\cX=\complex^{2^m}$.
  An $m$-qubit quantum state is represented by a \emph{density operator} $\rho$ in $\herm{\cX}$, which is a positive semidefinite Hermitian operator on $\cX$ with trace one.
  When $\rho$ is of rank one, it refers to a \emph{pure} quantum state, which can also be represented by a unit vector $\ket{\psi}$ in $\ball{\cX}$.
  In that case, the density operator $\rho$ can be written as $\ketbra{\psi}$.
  Otherwise, the density operator $\rho$ refers to a \emph{mixed} quantum state.
  Thus in an appropriate basis (anyone that diagonalizes $\rho$) we can think of $\rho$ as a classical distribution  on the basis elements.
  In general, the expression is not unique. 
  The set of all quantum density operators on $\cX$ is denoted by
  \[\density{\cX}:=\{ \rho \in \herm{\cX}:  \rho\geq 0, \tr{\rho}=1\}=\mbox{ConvexHull}\{\ketbra{\psi}:\ket{\psi}\in \ball{\cX}\},\]
  where the notation $\rho\geq 0$ means that $\rho$ is positive semidefinite. Likewise, $\sigma\geq \rho$ means that $\sigma-\rho$ is positive semidefinite.
  Let $\I_\cX$ denote the identity operator on $\cX$ (or $\I_d$ when the dimension of $\cX$ is known to be $d$).
  Then the \emph{maximally mixed state} in $\cX$ is $\sigma_{\mix} = \frac{1}{\dim(\cX)}\I_\cX$,
  where $\dim(\cX)$ is the dimension of $\cX$.

  For a quantum system $X$, its state space is denoted by $\cX$ and similarly for others.
  The state space of the composite system of two quantum systems  $X$ and $Y$ is their tensor product space $\cX\ot\cY$, and similarly for multiple systems.
  For a multi-partite state, \eg~$\rho_{XYZ} \in \density{\cX \ot \cY \ot \cZ}$, its reduced state on some subsystem is represented by the same state with the corresponding subscript.
  For example, the reduced (marginal) state on system $\cX$ of $\rho_{XYZ}$ is $\rho_X=\tr[\cY\cZ]{\rho_{XYZ}}$, where $\tr[\cY\cZ]{\cdot}$ denotes the \emph{partial trace} operation over the system $Y \ot Z$.
  That is, $\tr[\cY\cZ]{\ket{x_1}\bra{x_2}\ot \ket{y_1}\bra{y_2}\ot\ket{z_1}\bra{z_2}}=\ket{x_1}\bra{x_2}\tr{\ket{y_1}\bra{y_2}\ot\ket{z_1}\bra{z_2}}$, where $\ket{x_i}, \ket{y_i},\ket{z_i}$ for $i=1, 2$ are vectors in $\ball{\cX},\ball{\cY},\ball{\cZ}$, respectively.
  It can be verified that if $\rho_{XYZ}\in\density{\cX\otimes \cY\otimes \cZ}$, then $\rho_X\in\density{\cX}$.
  When all subscript letters are omitted, the notation represents the original state (\eg~$\rho=\rho_{XYZ}$).
  Any bipartite state $\rho_{XY}$ is called a \emph{product} state if and only if $\rho_{XY}=\rho_{X} \ot \rho_{Y}$.

  If a system $X$ is a discrete random variable $X$ with distribution $p_x=\Pr[X=x]$, it can be represented by a density operator $\rho$ over a state space $\cX$ so that $\rho= \sum_x p_x \ketbra{x}$ where $\{\ket{x}\}$ is an orthonormal basis of $\cX$.
  When restricted to the basis $\ket{x}$, we will say that the system $X$ is classical.
  A \emph{classical-quantum-}state, or \emph{cq-state} $\rho \in \density{\cX \ot \cY}$ indicates that subsystem $X$ is classical and subsystem $Y$ is quantum.
  We use lower case letters to denote specific values assigned to the classical part of a state.
  Then a cq-state can be represented by $\rho_{XY}=\sum_x p_x \ketbra{x} \ot \rho_Y(x)$, where $p_x=\pr{X=x}$ and $\rho_Y(x)\in\density{\cY}$.
  It is also easy to see that the marginal state $\rho_{Y}$ is $\sum_x p_x \rho_Y(x)$.

\item \textbf{Quantum measurements.}
  Let $\Sigma$ be a finite nonempty set of {\em measurement outcomes}.
  A {\em positive-operator valued measure (POVM)} on the state space $\cX$ with outcomes in $\Sigma$ is a collection of positive semidefinite operators $\{\Pi_a : a \in \Sigma\}$ such that $\sum_{a\in \Sigma} \Pi_a=\I_\cX$.
  When this POVM is applied to a quantum state $\rho\in \density{\cA}$, the probability of obtaining outcome $a \in \Sigma$ is  $\ip{\rho}{\Pi_a}$.
  If outcome $a$ is observed, the quantum state $\rho$ will collapse to the state $\sqrt{\Pi_a}\rho\sqrt{\Pi_a} / \ip{\rho}{\Pi_a}$, where $\sqrt{A}= \sum_i \sqrt{\lambda_i}\ket{a_i}\bra{a_i}$ if $A$ is a normal operator that admits a spectral decomposition $A=\sum_i {\lambda_i}\ket{a_i}\bra{a_i}$.

\item \textbf{Matrix Norms.}
  The \emph{trace norm} of $X\in \lin{\cX}$ is defined as
  \[\textstyle\trdist{X}=\tr{\sqrt{X^{\dag}X}}.\]
  One important measure on the distance between two quantum states $\rho, \sigma\in \density{\cX}$ is the \emph{trace distance} $T(\rho, \sigma)$, defined by
  \begin{align}\label{eq:trace dist}
    T(\rho,\sigma) =\frac{1}{2}\trdist{\rho-\sigma}.
  \end{align}
  The following inequality says that the trace distance of two quantum states is an upper bound on the difference of their probabilities of obtaining the same measurement outcome~\cite{NC00}:
  \begin{align}
    T(\rho,\sigma)=\max_{0\leq\Pi\leq \I} \tr{\Pi (\rho-\sigma)}. \label{eq:trace-distance-bound}
  \end{align}

  The \emph{operator norm} of $X \in\lin{\cX}$ is
  \begin{align*}
    \opnorm{X} &= \sup\left\{\norm{X\ket{v}}_2 : \ket{v} \in \cX \mbox{ with } \norm{\ket{v}}_2=1\right\}.
  \end{align*}
  When $X$ is Hermitian, the operator norm of $X$ coincides with the following quantity.
  \begin{align}
    \opnorm{X} &= \sup_{\rho \in \density{\cX}} |\ip{X}{\rho}| \label{eqn:opnorm}\\
    &= \sup_{\ket{\psi} \in \ball{\cX}} |\bra{\psi}X\ket{\psi}|\notag\\
    &=\lambda_{\max}(X),\notag
  \end{align}
  where $\lambda_{\max}(X)$ is the largest eigenvalue of $X$.
  Once we fix an orthonormal basis $\{\ket{(i)}\}$ of $\cX$, we can consider the \emph{max norm} of any $X \in \lin{\cX}$, defined as
  $\norm{X}_{\max}= \max_{i,j} |X_{ij}|$ where $X_{ij}= \bra{(i)}X\ket{j}$.
  We can connect $\norm{X}_{\max}$ to $\opnorm{X}$ by the following inequality.
  \begin{align} \label{eqn:max_norm}
    \opnorm{X} \leq \dim(\cX) \cdot \norm{X}_{\max}.
  \end{align}
  For readers who are interested in the operational interpretation of these norms, we refer them to, for example,~\cite{HJ86}.

\item \textbf{Quantum Circuits.}
  The evolution of a closed quantum system $X$ is described by a unitary operator $U\in\lin{\cX}$, \ie~the operator $U$ satisfying $UU^\dag=U^\dag U=\I_\cX$.
  The quantum system then evolves from state $\rho\in\density{\cX}$ to $U\rho U^{\dag}\in\density{\cX}$.
  If $\rho=\ketbra{\psi}$, then $U\rho U^{\dag}=U\ketbra{\psi} U^{\dag}$ with $U\ket{\psi}\in \ball{\cX}$.
  Herein we consider a multiple system, where each subsystem is a two-dimensional quantum system $\bbC^2$ with an ordered computational basis $\{\ket{0},\ket{1}\}$.
  A quantum state in $\density{\bbC^2}$ is called a \emph{qubit} (quantum bit), as opposed to a classical bit $0$ or $1$.
  Thus an $m$-qubit state space is $\bbC^{2^m}$ with a computational basis $\{\ket{(i)}: i=0,\dots, 2^{m}-1\}$.
  Simple unitary operators that act non-trivially on a constant number of qubits are called \emph{elementary quantum gates}.
  A set of elementary quantum gates is called \emph{universal} if any unitary operator can be approximated arbitrarily close by a composition of gates from this set.
  Let us fix one universal gate set for the remaining of this paper.

  Let $\cW=\cX \ot \cA=\mathbb{C}^{2^m}$ denote the work space of a quantum circuit $C$, which is an $m$-qubit space that consists of both an $\ell$-qubit input space $\cX=\mathbb{C}^{2^\ell}$, taking some quantum/classical input $\rho\in\density{\cX}$, and some $m-\ell$ ancilla qubits that are initialized to some state $\tau\in\density{\cA}$.
  Depends on the model, $\tau$ could be $\ket{0_{\cA}}$ or any quantum state.
  A quantum circuit $C$ is a sequence of elementary quantum gates from the universal gate set, followed by some measurements. (In general, measurements can be deferred to the end of quantum circuits \cite{NC00})
  That is, $C$ applies a unitary $U_C = U_1U_2\cdots U_s$ where $U_i$ denotes the $i$th gate and $s$ is the number of elementary quantum gates, and the performs some measurements.
  We say the size of the quantum circuit $C$ is $s$.
  The number of quantum circuits of size $s$ is $s^{O(s)}$.

\item \textbf{Quantum Distinguisher.}
  In cryptography, we usually have a circuit with binary output as a distinguisher to determine how close two random variables are respect to the circuit.
  Here we define the quantum analogue.
  A \emph{quantum distinguisher} is a quantum circuit with binary measurement outcome 0 or 1.
  Without loss of generosity, we assume that after applying a unitary $U_C$.
  That is, we measure $\rho'= U_C\left(\rho\ot \tau\right) U_C^\dag$ according to the POVM $\{\Pi_0, \Pi_1\}$, where $\Pi_i= \ketbra{(i)}\ot \I_{2^{m-1}}$.
  Thus
  \begin{align*}
    \Pr[C \text{ outputs $i$ on input $\rho$}]
    =&\ip{\rho'}{\Pi_i}\\
    =& \ip{U_C\left(\rho\ot \tau\right) U_C^\dag}{\ketbra{(i)}\ot \I_{2^m-1}}\\
    =&\ip{\rho\ot \tau }{U_C^\dag\left(\ketbra{(i)}\ot \I_{2^{m-1}}\right)U_C}\\
    =& \ip{\rho }{\Pi_i'},
  \end{align*}
  where
  \begin{equation} \label{eqn:circuit-povm}
    \Pi_i = \tr[\cA]{\I_\cX \ot \tau \left (U^\dagger_C (\ketbra{(i)} \ot \I_{-1}) U_C \right) \I_\cX \ot \tau }, \forall i \in \zo,
  \end{equation}

  Consequently, this quantum circuit is equivalent to perform a POVM $\{\Pi_0', \Pi_1'\}$ on the input space $\cX$ as above.
  For our purpose, a \emph{quantum distinguishers} will be considered as a binary POVM on the input space $\cX$.
  In this case, since $\Pi_0 + \Pi_1 = \I$, the POVM can be fully determined by $\Pi_1$.
  For convenience, we use the term {\em BPOVM} to represent the positive-operator $\Pi_1$ in a POVM.
  Especially, we can describe a quantum distinguisher by a BPOVM and vice versa.

  One can easily generalize the binary output to larger domains.
  In that case, any quantum circuit can still be effectively deemed as a general POVM with a large outcome set.
  When the input state is a product state of a classical input $x$ and a quantum input $\rho$, we abuse the notation as follows
  \[C(x, \rho)= C(\ketbra{x}\ot\rho).\]

  We also consider more general quantum circuits that output general quantum states.
  These circuits can be deemed as mappings from $\density{\cX}$ to $\density{\cY}$, where $\cX$ is the input space and $\cY$ is the output space.
  (In general, these mappings are called super-operators from $\lin{\cX}$ to $\lin{\cY}$.)
  Similar to quantum distinguishers, a general quantum circuit $C$ applies a unitary $U_C$ on the space $\cW=\cX \ot \cA$ consisting of an input and ancillas, and perform some measurements on $\cB$.
  Then it outputs a state in space $\cY$ where $\cW=\cY \ot \cB$ is the decomposition of the space after applying $U_C$.
  We abuse the notation again for convenience as
  $$\rho\mapsto C(\rho)\in \density{\cY}$$
  for input $\rho \in \density{\cX}$. 

\end{trivlist}

\subsection{Information-Theoretic Notions}


We will first define the relative min-entropy (a.k.a max-divergence) of two quantum states, which will be used to define the conditional quantum min-entropy \cite{RW05}.
Relative min-entropy can be seen as a distance between two quantum states.
The distance measures, in logarithm, of how much more likely an event happens for one state than the other.

\begin{definition}[Quantum relative min-entropy]\label{defn:relative-min-entropy-psd}
  Let $\rho$ and $\sigma$ be two density operators  on a space $\cH$. The relative min-entropy between two quantum states $\rho$ and $\sigma$ is defined as
  \begin{align*}
    D_{\infty}(\rho\|\sigma) \eqdef \inf\{\lda\in\mathbb{R}: \rho \leq 2^{\lda}\sigma\}.
  \end{align*}
\end{definition}

Equivalently, quantum relative min-entropy can be defined in an operational way using binary measurement.
\begin{proposition}\label{prop:relative-min-operational}
  Let $\rho$ and $\sigma$ be density operators on a state space $\cH$. Then
  \begin{align*}
    D_{\infty}(\rho\|\sigma) = \inf\{\lambda\in\mathbb{R}: \forall A:\density{\cH}\to\bin,  \ex{A(\rho)} \leq 2^{\lambda}\ex{A(\sigma)} \}.
  \end{align*}
\end{proposition}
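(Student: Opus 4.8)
The plan is to show that the two sets of $\lambda$ over which the infima are taken are in fact \emph{identical}, so that the infima coincide automatically. Writing a binary measurement $A$ as a POVM $\{\I-\Pi,\Pi\}$ whose outcome-$1$ operator is the BPOVM $\Pi$ with $0\le\Pi\le\I$, the probability rule for POVMs gives $\ex{A(\rho)}=\ip{\rho}{\Pi}$ and likewise $\ex{A(\sigma)}=\ip{\sigma}{\Pi}$. Hence the right-hand condition on $\lambda$ reads: for every $\Pi$ with $0\le\Pi\le\I$, $\ip{\rho}{\Pi}\le 2^{\lambda}\ip{\sigma}{\Pi}$, equivalently $\ip{2^{\lambda}\sigma-\rho}{\Pi}\ge 0$. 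So the whole proposition reduces to the claim that $2^{\lambda}\sigma-\rho\ge 0$ if and only if $\ip{2^{\lambda}\sigma-\rho}{\Pi}\ge 0$ for all such $\Pi$, which I would prove by a two-sided inclusion.

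For the forward direction I would invoke the standard fact that a positive semidefinite operator pairs nonnegatively with every positive semidefinite operator: if $2^{\lambda}\sigma-\rho\ge 0$, then for any $\Pi\ge 0$, diagonalizing $\Pi=\sum_i\mu_i\ketbra{m_i}$ with $\mu_i\ge 0$ gives $\ip{2^{\lambda}\sigma-\rho}{\Pi}=\sum_i\mu_i\bra{m_i}(2^{\lambda}\sigma-\rho)\ket{m_i}\ge 0$; in particular this holds for all $0\le\Pi\le\I$. For the converse, the key observation is that restricting the quantifier to BPOVMs $0\le\Pi\le\I$ loses nothing, because the rank-one projectors $\ketbra{\psi}$ with $\ket{\psi}\in\ball{\cH}$ already satisfy $0\le\ketbra{\psi}\le\I$. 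Thus, assuming $\ip{2^{\lambda}\sigma-\rho}{\Pi}\ge 0$ for all $0\le\Pi\le\I$ and specializing to $\Pi=\ketbra{\psi}$ yields $\bra{\psi}(2^{\lambda}\sigma-\rho)\ket{\psi}\ge 0$ for every unit vector $\ket{\psi}$, which is exactly the definition of $2^{\lambda}\sigma-\rho\ge 0$, i.e.\ $\rho\le 2^{\lambda}\sigma$ as in Definition~\ref{defn:relative-min-entropy-psd}.

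This establishes that the two conditions on $\lambda$ cut out the same subset of $\mathbb{R}$, so the infima are equal. Two routine remarks finish the argument: the set $\{\lambda:\rho\le 2^{\lambda}\sigma\}$ is upward closed (since $\sigma\ge 0$ forces $2^{\lambda'}\sigma\ge 2^{\lambda}\sigma$ whenever $\lambda'\ge\lambda$), so each side is an interval of the form $[\lambda_0,\infty)$; and when the support of $\rho$ is not contained in the support of $\sigma$, both sets are empty and both infima equal $+\infty$ by convention. I do not expect a genuine obstacle here: the only point requiring care is the converse direction, where one must notice that testing against all $\Pi$ with $0\le\Pi\le\I$ is no stronger than testing against pure-state projectors, so that the operational ``binary measurement'' quantifier collapses exactly onto the positive-semidefiniteness condition defining $D_{\infty}$.
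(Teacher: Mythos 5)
Your proof is correct and takes essentially the same route as the paper's: both reduce the proposition to the equivalence $2^{\lambda}\sigma-\rho\ge 0$ if and only if $\ip{2^{\lambda}\sigma-\rho}{\Pi}\ge 0$ for every BPOVM $\Pi$, proving the forward direction from the nonnegativity of the Hilbert--Schmidt pairing of positive semidefinite operators and the converse by specializing to rank-one projectors $\ketbra{\psi}$, which recovers $\bra{\psi}(2^{\lambda}\sigma-\rho)\ket{\psi}\ge 0$ for all unit vectors. Your closing remarks on the upward closedness of the set of admissible $\lambda$ and the $+\infty$ convention are harmless additions that the paper leaves implicit.
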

\begin{proof}
  It suffices to show that for $\gamma>0$, $\gamma\sigma \geq \rho$ if and only if $ \gamma \ex{A(\sigma)}\geq \ex{A(\rho)}$ for every quantum circuit $A$.
  Let $\Pi$ be the BPOVM of $A$, then $\ex{A(\gamma \sigma -\rho)}= \ip{\Pi}{\gamma \sigma-\rho}$.

  If $\gamma\sigma-\rho \geq 0$, then $\ip{\Pi}{\gamma \sigma-\rho}\geq 0$, since $\ip{A}{B}\geq 0$ for $A,B\geq 0$.
  On the other hand, suppose $\ip{\Pi}{\gamma \sigma-\rho}\geq 0$ for any BPOVM $\Pi$.
  Then particularly, take $\Pi = \ketbra{\psi}$, $\ip{\ketbra{\psi}}{\gamma \sigma-\rho} = \bra{\psi}\gamma \sigma-\rho\ket{\psi}\geq 0$ for any $\ket{\psi}\in\ball{\cH}$.
  This concludes the proof.
\end{proof}

The definition of quantum relative-min entropy agrees with the definition of classical max divergence definition when the two quantum states are equivalent to classical random variables.
\begin{proposition}\label{prop:relative-min-match}
  If $\rho$ and $\sigma$ are mixed quantum states corresponding to two classical random variables $X$ and $Y$ respectively. Then $$D_{\infty}(\rho\|\sigma) = \log \max_{x\in\Supp(Y)}(\Pr[X = x]/\Pr[Y = x]).$$
\end{proposition}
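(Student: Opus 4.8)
The plan is to prove the identity directly from Definition~\ref{defn:relative-min-entropy-psd}, exploiting the fact that classical random variables correspond to density operators that are simultaneously diagonal. Fix the computational basis $\{\ket{x}\}$ and write $\rho = \sum_x p_x \ketbra{x}$ and $\sigma = \sum_x q_x \ketbra{x}$, where $p_x = \Pr[X=x]$ and $q_x = \Pr[Y=x]$. Since $D_{\infty}(\rho\|\sigma) = \inf\{\lambda : \rho \leq 2^{\lambda}\sigma\}$, the entire problem reduces to determining for which $\lambda$ the operator inequality $\rho \leq 2^{\lambda}\sigma$ holds.

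The key step is to observe that $2^{\lambda}\sigma - \rho = \sum_x (2^{\lambda} q_x - p_x)\ketbra{x}$ is again diagonal in $\{\ket{x}\}$, and a Hermitian operator diagonal in an orthonormal basis has exactly those coefficients as its eigenvalues. Hence $2^{\lambda}\sigma - \rho \geq 0$ if and only if $2^{\lambda} q_x - p_x \geq 0$ for every $x$; that is, $\rho \leq 2^{\lambda}\sigma$ holds precisely when $2^{\lambda} q_x \geq p_x$ for all $x$. (If one prefers to argue through Proposition~\ref{prop:relative-min-operational}, the same conclusion follows by testing with the computational-basis indicator measurements $A_x$, for which $\ex{A_x(\rho)} = p_x$ and $\ex{A_x(\sigma)} = q_x$, together with the observation that for diagonal states the worst-case measurement can be taken to be a single indicator.)

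It then remains to solve the infimum. For each $x \in \Supp(Y)$ we have $q_x > 0$, so the constraint becomes $2^{\lambda} \geq p_x/q_x$, and imposing this for all such $x$ simultaneously is equivalent to $2^{\lambda} \geq \max_{x\in\Supp(Y)}(p_x/q_x)$; taking the logarithm of the least admissible $\lambda$ yields the claimed value $\log\max_{x\in\Supp(Y)}(\Pr[X=x]/\Pr[Y=x])$. The one place requiring care — and the only nontrivial point — is the treatment of the zero entries, which is where I expect the main (though modest) obstacle to lie: for $x \notin \Supp(Y)$ the constraint reads $p_x \leq 0$, forcing $p_x = 0$, so a finite $\lambda$ exists exactly when $\Supp(X) \subseteq \Supp(Y)$. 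In that case the maximum over $\Supp(Y)$ agrees with the classical maximum over $\Supp(X)$, since the extra indices in $\Supp(Y)\setminus\Supp(X)$ contribute ratio $0$; and if $\Supp(X)\not\subseteq\Supp(Y)$, then both sides equal $+\infty$ under the standard convention $p_x/0 = \infty$ for $p_x > 0$. Confirming that the infimum is attained and that these conventions line up is the crux; the operator-theoretic content is immediate once diagonality is invoked.
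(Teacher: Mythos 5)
Your proof is correct and takes essentially the same route as the paper's: both write $\rho$ and $\sigma$ as diagonal in the common classical basis, reduce the operator inequality $\rho \leq 2^{\lambda}\sigma$ to the entrywise conditions $p_x \leq 2^{\lambda}q_x$ (since a diagonal Hermitian operator is positive semidefinite iff its diagonal entries are nonnegative), and then solve the resulting infimum. Your treatment of the zero-probability entries is in fact more careful than the paper's one-line proof; the only caveat is that when $\Supp(X)\not\subseteq\Supp(Y)$ the right-hand side as literally written (a maximum over $\Supp(Y)$) remains finite while $D_{\infty}(\rho\|\sigma)=\infty$, so your claim that "both sides equal $+\infty$" implicitly reads the maximum over $\Supp(X)$ with the convention $p_x/0=\infty$ --- which matches the paper's own classical definition of relative min-entropy and is evidently what the proposition intends.
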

\begin{proof}
  By the assumption, we can write $\rho = \sum_{x}p_i\ketbra{x}$ and $\sigma = \sum_{x}q_i\ketbra{x}$ for $\Pr[X = x] = p_x, Pr[Y = x] = q_x$.
  Then
  \[D_{\infty}(\rho\|\sigma) = \inf_\lambda\{\forall i\;\;p_i \leq 2^{\lda} q_i\} = \log \max_i {p_i/q_i} = \log \max_x(\Pr[X = x]/\Pr[Y = x]).\]
\end{proof}

\begin{definition}[Conditional quantum min-entropy]\label{def:min-entropy2}
  Let $\rho = \rho_{XB} \in\density{\cX\ot\cB}$ be a density operator describing a bipartite quantum system $(X, B)$.
  The min-entropy of system $X$ conditioned on system $B$ is defined as
  \begin{align*}
    \Hmin(X|B)_{\rho} \eqdef \log|\cX| -\inf_{\sigma_{B}\in\density{\cB}}
    \left\{
      D_{\infty}\left(\rho_{XB}\middle\|\frac{1}{|\cX|}\I_{X}\otimes \sigma_B\right)
    \right\}.
  \end{align*}

\end{definition}
\begin{proposition}\label{prop:min-match}
  If $X$ and $B$ are classical random variables, then
  \[\Hmin(X|B)_{\rho} = \frac{1}{\log\sum_{b}\max_{x}p_{xb}} = \Hmin(X|B).\]
\end{proposition}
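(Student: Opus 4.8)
The plan is to evaluate the infimum in Definition~\ref{def:min-entropy2} directly, exploiting the fact that when $X$ and $B$ are both classical, $\rho_{XB} = \sum_{x,b} p_{xb}\,\ketbra{x}\otimes\ketbra{b}$ is diagonal in the computational product basis $\{\ket{x}\otimes\ket{b}\}$, with $p_{xb}=\Pr[X=x,B=b]$. The first step is to reduce the optimization over arbitrary $\sigma_B\in\density{\cB}$ to optimization over \emph{classical} $\sigma_B$, i.e.\ those diagonal in $\{\ket{b}\}$. For this I would let $\mathcal{P}_B$ denote the dephasing (pinching) channel on $\cB$ that zeroes out the off-diagonal entries in the $\{\ket{b}\}$ basis. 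Given any $\sigma_B$ and any $\lambda$ with $\rho_{XB}\le \frac{2^\lambda}{|\cX|}\,\I_X\otimes\sigma_B$ (the condition defining $D_\infty$ in Definition~\ref{defn:relative-min-entropy-psd}), I would apply the completely positive, trace-preserving map $\I_X\otimes\mathcal{P}_B$ to both sides. Since completely positive maps preserve the positive-semidefinite order, the operator inequality survives; and since $\rho_{XB}$ is already classical it is fixed by the map, so $\rho_{XB}\le \frac{2^\lambda}{|\cX|}\,\I_X\otimes\mathcal{P}_B(\sigma_B)$ with $\mathcal{P}_B(\sigma_B)$ again a density operator. Hence every $\lambda$ admissible for $\sigma_B$ is admissible for the classical state $\mathcal{P}_B(\sigma_B)$, so restricting to classical $\sigma_B=\sum_b q_b\ketbra{b}$ does not change the infimum.

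With both operators diagonal in $\{\ket{x}\otimes\ket{b}\}$, the condition $\rho_{XB}\le \frac{2^\lambda}{|\cX|}\I_X\otimes\sigma_B$ becomes the coordinatewise inequality $p_{xb}\le \frac{2^\lambda}{|\cX|}q_b$ for all $x,b$. Taking the smallest admissible $\lambda$ then gives $D_\infty\bigl(\rho_{XB}\,\|\,\frac{1}{|\cX|}\I_X\otimes\sigma_B\bigr)=\log\bigl(|\cX|\cdot\max_{x,b} p_{xb}/q_b\bigr)$, interpreting $p_{xb}/q_b$ as $+\infty$ when $q_b=0<p_{xb}$. Substituting into Definition~\ref{def:min-entropy2}, the two $\log|\cX|$ terms cancel and we are left with $\Hmin(X|B)_\rho=-\log\inf_q \max_{x,b} p_{xb}/q_b$, where the infimum is over probability vectors $q=(q_b)$.

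The final step is the scalar optimization. Writing $m_b\eqdef\max_x p_{xb}$, I need $\inf_q \max_b m_b/q_b=\sum_b m_b$. The lower bound follows because $q_b\ge m_b/\bigl(\max_{b'} m_{b'}/q_{b'}\bigr)$ for each $b$, so summing and using $\sum_b q_b=1$ yields $\max_{b'} m_{b'}/q_{b'}\ge \sum_b m_b$; the matching upper bound is achieved by the ``equalizing'' choice $q_b=m_b/\sum_{b'}m_{b'}$. This gives $\Hmin(X|B)_\rho=-\log\sum_b\max_x p_{xb}$, which is exactly the classical average min-entropy $\Hmin(X|B)=-\log\Pguess(X|B)$: the optimal guessing strategy outputs $\arg\max_x p_{xb}$ on input $b$, so $\Pguess(X|B)=\sum_b\max_x p_{xb}$, matching the stated value.

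I expect the only genuinely nontrivial step to be the reduction to classical $\sigma_B$; everything else is a diagonal computation plus an elementary min-max for the weights $q_b$. The dephasing argument is where the structure of the problem is really used, and the one place to be careful is that the channel must be applied to the operator \emph{inequality} (relying on preservation of the positive-semidefinite order by completely positive maps) rather than to the two states separately.
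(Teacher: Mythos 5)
Your proof is correct, and at its core it follows the same route as the paper's: restrict the infimum in Definition~\ref{def:min-entropy2} to classical $\sigma_B$, turn $D_{\infty}$ into a coordinatewise condition $p_{xb}\leq \frac{2^{\lambda}}{|\cX|}q_b$, and then solve the scalar optimization over the weights $q_b$ to get $-\log\sum_b \max_x p_{xb}$. The difference is that your write-up is strictly more careful at the two points the paper glosses over. First, the paper's proof jumps directly from the infimum over all density operators $\sigma_B\in\density{\cB}$ to an infimum over classical distributions $Q$ on $\cB$ with no justification; your pinching argument --- applying $\I_{\cX}\otimes\mathcal{P}_B$ to the operator inequality, using that a positive linear map preserves the semidefinite order and that the diagonal state $\rho_{XB}$ is fixed by the dephasing --- is precisely the missing lemma showing this restriction is lossless, and you are right that it is the only genuinely nontrivial step. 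Second, for the optimization $\inf_q \max_b (\max_x p_{xb})/q_b$ the paper simply asserts that the minimum occurs when the ratios are equalized, citing a Lagrange multiplier; your two-sided elementary argument (summing $q_b\geq m_b/M$ against $\sum_b q_b=1$ for the lower bound, and the explicit choice $q_b=m_b/\sum_{b'}m_{b'}$ for the upper bound) proves the same value rigorously and also handles the degenerate cases $m_b=0$ or $q_b=0$ that the Lagrangian heuristic ignores. So your proposal buys a complete proof of a statement the paper establishes only modulo these two tacit steps, at the cost of a slightly longer argument.
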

\begin{proof}
  Since $X, B$ are classical random variables, we abuse the notation $\cX, \cB$ to be the finite spaces that $X$ and $B$ are distributed over, respectively.
  Let $U_{\cX}$ be the uniform distribution over the set $\cX$.
  Then by Definition~\ref{def:min-entropy2},
  \begin{align*}
    \Hmin(X|B)_{\rho} &= \log|\cX| - \inf_{Q:\mbox{ dist.~over }\cB}\{D_{\infty}(X, B\|U_{\cX}, Q)\}\\
    &= \log|\cX| - \log\inf_{Q:\mbox{ dist.~over }\cB}\left\{\max_{x, b}\frac{p_{xb}}{\Pr[(U_{\cX}, Q) = (x, b)]}\right\}\\
    &= - \log\inf_{\sum_b q_b = 1}\left\{\max_b \frac{\max_x p_{xb}}{q_b}\right\},
  \end{align*}
  where $q_b = \Pr[Q = b]$.
  The minimum happens when $(\max_x p_{xb})/q_b$ are equal for all $b\in\cB$ (Can be seen by Lagrange multiplier).
  Then we get
  \[\Hmin(X|B)_{\rho} = \inf_{\sum_b q_b = 1}\left\{\max_b \frac{\max_x p_{xb}}{q_b}\right\} = -\log\left(\sum_{b}\max_x p_{xb}\right),\]
  which is exactly the definition of average min-entropy in the classical case \cite{DodisORS08}.
\end{proof}
While min-entropy in Definition~\ref{def:min-entropy2} describes the average case that one can guess about a random source,
we may also define the worst-case min-entropy as in the classical case~\cite{RW05}.
\begin{definition}[Worst-case conditional quantum min-entropy]\label{def:worst-min}
  Let $\rho = \rho_{XB} \in\density{\cX\ot\cB}$ be a density operator describing a bipartite quantum system $(X, B)$.
  The \emph{worst-case} min-entropy of system $X$ conditioned on system $B$ is defined as
  \begin{align*}
    \Hmin^{\mathrm{wst}}(X|B)_{\rho} \eqdef \log|\cX| -
      D_{\infty}\left(\rho_{XB}\middle\|\frac{1}{|\cX|}\I_{X}\otimes \rho_B\right),
  \end{align*}
  where $\rho_B=\tr[\cX]{\rho_{XB}}$.
\end{definition}
If $X$ and $B$ are classical random variables, Definition~\ref{def:worst-min} agrees with the classical definition of the worst-case min-entropy $\Hmin^{\mathrm{wst}}(X|B) = -\log\max_{x, b}\Pr[X = x|B = b]$.
Also it is clear that the two entropies are equal when system $B$ is trivial: $ \Hmin^{\mathrm{wst}}(X)_{\rho}= \Hmin(X)_{\rho}$.
The following chain rule provides a connection between these two entropies.
\begin{proposition}\label{prop:chain-worst-min}
  Let $\rho = \rho_{XB} \in\density{\cX\ot\cB}$ be a density operator describing a bipartite quantum system $(X, B)$.
  Suppose $B$ is an $\ell$-qubit system.
  Then
  \[ \Hmin(X|B)_{\rho}\geq    \Hmin^{\mathrm{wst}}(X,B)_{\rho}- \ell. \]
\end{proposition}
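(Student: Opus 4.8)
The plan is to reduce both sides of the claimed inequality to expressions involving $D_{\infty}$ and then compare them directly; the only real content is the observation that the maximally mixed state on $B$ is an admissible choice inside the infimum defining the conditional min-entropy, and that evaluating at this single point is exactly where the loss of $\ell$ bits comes from.

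First I would unfold Definition~\ref{def:min-entropy2} to write
\[
  \Hmin(X|B)_{\rho} = \log|\cX| - \inf_{\sigma_B \in \density{\cB}} D_{\infty}\left(\rho_{XB}\,\middle\|\,\frac{1}{|\cX|}\I_X \otimes \sigma_B\right).
\]
The key step is then to upper bound the infimum by evaluating it at the single choice $\sigma_B = \frac{1}{|\cB|}\I_B$, the maximally mixed state on $B$, which is a valid element of $\density{\cB}$ and hence lies in the feasible set:
\[
  \inf_{\sigma_B \in \density{\cB}} D_{\infty}\left(\rho_{XB}\,\middle\|\,\frac{1}{|\cX|}\I_X \otimes \sigma_B\right) \leq D_{\infty}\left(\rho_{XB}\,\middle\|\,\frac{1}{|\cX|}\I_X \otimes \frac{1}{|\cB|}\I_B\right).
\]
Next I would note that $\frac{1}{|\cX|}\I_X \otimes \frac{1}{|\cB|}\I_B = \frac{1}{|\cX||\cB|}\I_{XB}$ is precisely the maximally mixed state on the joint system $XB$. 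Since the infimum enters with a minus sign, substituting this bound flips the inequality and gives
\[
  \Hmin(X|B)_{\rho} \geq \log|\cX| - D_{\infty}\left(\rho_{XB}\,\middle\|\,\frac{1}{|\cX||\cB|}\I_{XB}\right).
\]

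To finish, I would invoke the remark following Definition~\ref{def:worst-min} that the worst-case and average conditional min-entropies coincide when the conditioning system is trivial, so that treating $(X,B)$ as a single unconditioned system yields
\[
  \Hmin^{\mathrm{wst}}(X,B)_{\rho} = \log(|\cX||\cB|) - D_{\infty}\left(\rho_{XB}\,\middle\|\,\frac{1}{|\cX||\cB|}\I_{XB}\right).
\]
Because $B$ consists of $\ell$ qubits we have $|\cB| = 2^\ell$, so $\log(|\cX||\cB|) - \ell = \log|\cX|$, and subtracting $\ell$ from the last display reproduces exactly the right-hand side of the previous inequality, establishing $\Hmin(X|B)_{\rho} \geq \Hmin^{\mathrm{wst}}(X,B)_{\rho} - \ell$. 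There is no genuine analytic obstacle: the whole argument is definitional bookkeeping, and the single inequality used (namely, infimum $\leq$ value at one feasible point) is where all the entropy loss is incurred. The only points requiring care are confirming that $\frac{1}{|\cB|}\I_B$ is admissible and tracking dimensions so that $\log(|\cX||\cB|) - \ell$ collapses to $\log|\cX|$. This is the exact quantum analogue of the classical estimate $\sum_b \max_x p_{xb} \leq 2^\ell \max_{x,b} p_{xb}$, obtained by replacing the optimal reference state $\sigma_B$ with the uniform one.
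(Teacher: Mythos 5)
Your proof is correct and is essentially the paper's own argument: both hinge on observing that the maximally mixed state $\frac{1}{2^\ell}\I_B$ is a feasible choice in the infimum defining $\Hmin(X|B)_\rho$, and that rescaling $\frac{1}{|\cX|}\I_X\ot\I_B = 2^\ell\cdot\frac{1}{|\cX|}\I_X\ot\frac{\I_B}{2^\ell}$ is exactly where the $\ell$ bits are lost. The paper phrases this as an operator inequality $\rho_{XB}\leq 2^{\lambda}\frac{\I_X}{|\cX|}\ot\I_B$ with $\lambda=\log|\cX|-\Hmin^{\mathrm{wst}}(X,B)_\rho$, while you phrase it through $D_\infty$ values, but these are the same computation.
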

\begin{proof}
  Let $\lambda=\log|\cX| - \Hmin^{\mathrm{wst}}(X,B)_\rho$. That is,
  \[\rho_{XB}\leq 2^{\lambda}\frac{\I_X}{|\cX|}\ot\I_B = 2^{\lambda+\ell}\frac{\I_X}{|\cX|}\ot\frac{\I_B}{2^\ell}. \]
  Thus by Definition~\ref{def:min-entropy2},    $\Hmin(X|B)_{\rho}\geq    \Hmin^{\mathrm{wst}}(X,B)_{\rho}- \ell.$

\end{proof}

Another way to define min entropy is to use guessing probability.
Here we only consider the case that $\rho_{XB}$ is a cq-state: $\rho_{XB} = \sum_{x}p_x\ketbra{x}\otimes \rho^x_B$ for $\rho^x_B\in \density{\cB}$.
The probability of guessing $X$ correctly given $B$ by a given quantum circuit $C$ is 
\begin{align*}
  \Pguess_C (X|B)_{\rho} = \sum_{x} \Pr[X = x]\ip{\Pi_x}{\rho^{x}_B},  \label{eq:pguess_rho}
\end{align*}
where $\{\Pi_x\}$ is the effective POVM for $C$, demonstrating the guessing strategy.
Accordingly,  the probability of guessing $X$ correctly given $B$ is defined as
\begin{equation}
\Pguess (X|B)_{\rho} = \max_{C} \Pguess_C (X|B)_{\rho},
\end{equation}
where the maximization is taken over arbitrary quantum circuits $C$ of unbounded size.
As in the purely classical case~\cite{DodisORS08}, the guessing probability captures the conditional min-entropy of $X$ given~$B$:
\begin{lemma}[\cite{KonigRS09}] \label{lemma:guess-def-equivalent}
Suppose $\rho_{XB}$ is a cq-state on the space $\cX\ot\cB$. Then
$$\Hmin(X|B)_{\rho} = -\log(\Pguess(X|B)).$$
\end{lemma}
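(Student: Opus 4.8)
The plan is to recognize that both sides of the claimed identity are optimal values of a pair of semidefinite programs (SDPs) that are Lagrange duals of each other, and then to invoke strong duality. Write $d=|\cX|$ and expand the cq-state in the classical basis as $\rho_{XB}=\sum_x p_x\ketbra{x}\ot\rho_B^x$, so that $\rho_{XB}$ is block-diagonal with blocks $p_x\rho_B^x$. First I would massage Definition~\ref{def:min-entropy2} into a clean optimization. Using $D_\infty(\rho\|\sigma)=\inf\{\lambda:\rho\le 2^\lambda\sigma\}$ together with the monotonicity of $t\mapsto 2^t$,
\[
2^{-\Hmin(X|B)_\rho}=\frac{1}{d}\,\inf_{\sigma_B\in\density{\cB}}\inf\{\mu>0:\rho_{XB}\le \frac{\mu}{d}\,\I_X\ot\sigma_B\}.
\]
Substituting $Y=\frac{\mu}{d}\sigma_B$, which ranges over all nonzero positive operators as $\sigma_B$ ranges over density operators and $\mu$ over positive reals (with $\mu=d\cdot\tr{Y}$), this collapses to the single SDP
\[
2^{-\Hmin(X|B)_\rho}=\inf\{\tr{Y}:\ Y\ge 0,\ \I_X\ot Y\ge\rho_{XB}\}.
\]
Since $\rho_{XB}$ is block-diagonal, the operator inequality $\I_X\ot Y\ge\rho_{XB}$ is equivalent to the family of per-outcome constraints $Y\ge p_x\rho_B^x$ for every $x$.

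Next I would write the guessing probability as the companion SDP. Because unbounded circuits realize exactly the POVMs on $\cB$ with outcomes in $\cX$ (Naimark/Stinespring), we have
\[
\Pguess(X|B)_\rho=\max\Big\{\sum_x p_x\,\ip{\Pi_x}{\rho_B^x}:\ \Pi_x\ge 0,\ {\textstyle\sum_x\Pi_x=\I_B}\Big\}.
\]
Forming the Lagrangian with a Hermitian multiplier $Y$ for the equality $\sum_x\Pi_x=\I_B$ gives $\tr{Y}+\sum_x\ip{\Pi_x}{p_x\rho_B^x-Y}$; boundedness of the supremum over $\Pi_x\ge 0$ forces $Y\ge p_x\rho_B^x$ for all $x$, and the residual objective is $\tr{Y}$. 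This is precisely the SDP computed above for $2^{-\Hmin(X|B)_\rho}$, so the two programs form a primal--dual pair.

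It then remains to show the optimal values coincide. The easy (``weak duality'') direction $\Pguess\le 2^{-\Hmin}$ I would prove directly: for any POVM $\{\Pi_x\}$ and any feasible $Y$, the block-diagonal operator $M=\sum_x\ketbra{x}\ot\Pi_x\ge 0$ satisfies
\[
\sum_x p_x\,\ip{\Pi_x}{\rho_B^x}=\ip{M}{\rho_{XB}}\le\ip{M}{\I_X\ot Y}=\sum_x\ip{\Pi_x}{Y}=\ip{\I_B}{Y}=\tr{Y},
\]
where the inequality uses $M\ge 0$, $\I_X\ot Y-\rho_{XB}\ge 0$, and the fact (from Proposition~\ref{prop:relative-min-operational}) that the Hilbert--Schmidt product of two positive operators is nonnegative. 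The reverse inequality is the crux and is where strong duality enters: I would verify Slater's condition for the guessing program, which is strictly feasible via the interior point $\Pi_x=\frac{1}{d}\,\I_B$ (positive definite and summing to $\I_B$). Strict feasibility together with boundedness ($\Pguess\le 1$) yields strong duality with the dual optimum attained, hence $\Pguess(X|B)_\rho=2^{-\Hmin(X|B)_\rho}$, and taking $-\log$ of both sides completes the proof. The main obstacle is exactly this last step---justifying strong duality rather than mere weak duality---together with the bookkeeping that identifies the block-diagonal constraint $\I_X\ot Y\ge\rho_{XB}$ with the per-outcome constraints $Y\ge p_x\rho_B^x$; everything else is routine rewriting.
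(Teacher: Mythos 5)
Your proof is correct. One thing to be aware of: the paper does not actually prove this lemma---it is imported verbatim from the cited reference of K\"onig, Renner and Schaffner, so there is no in-paper argument to compare against. Your route (rewriting Definition~\ref{def:min-entropy2} as the covering SDP $2^{-\Hmin(X|B)_\rho}=\inf\{\tr{Y}: Y\ge 0,\ \I_X\ot Y\ge\rho_{XB}\}$, identifying the guessing probability as its Lagrangian dual over POVMs, proving weak duality by the block-diagonal positivity argument, and closing the gap with Slater's condition at the strictly feasible point $\Pi_x=\I_B/|\cX|$) is essentially the same SDP-duality proof given in that reference, and all the steps check out: the substitution $Y=\tfrac{\mu}{d}\sigma_B$ is a genuine bijection onto nonzero positive operators, the equivalence of $\I_X\ot Y\ge\rho_{XB}$ with the per-outcome constraints $Y\ge p_x\rho_B^x$ is exactly right for cq-states, and strict primal feasibility plus boundedness does yield strong duality with dual attainment. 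The only cosmetic caveat is the identification of ``unbounded circuits'' with POVMs: with a fixed universal gate set a finite circuit only approximates a target POVM, so strictly speaking one should add that the objective $\sum_x p_x\ip{\Pi_x}{\rho_B^x}$ is continuous in $\{\Pi_x\}$ and the POVM set is compact, hence the supremum over circuits equals the maximum over POVMs; this does not affect the substance of the argument.
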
 

\subsection{Leakage Chain Rules}

One of our main result is the leakage chain rule for computational quantum min-entropy.
The information-theoretic version of the Leakage Chain Rule is a necessary step in our proof.

\begin{theorem}[{\cite[Lemma~13]{WTHR11}} Leakage chain rule for quantum min-entropy]\label{thm:quantum-chain-rule}
  Let $\rho = \rho_{XZB}$ be a state on the space $\cX\ot\cZ\ot\cB$.
  Let $d= \min\{\dim(\cX\ot \cZ), \dim(\cB)\}$ and $\ell=\log \dim(\cB)$.
  Then
  \[\Hmin(X|ZB)_{\rho} \geq \Hmin(X|Z)_\rho - 2\ell.\]
\end{theorem}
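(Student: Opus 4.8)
The plan is to unfold Definition~\ref{def:min-entropy2} on both sides and reduce the statement to a single operator inequality, whose proof rests on a dimension-dependent bound comparing a bipartite state to its marginal. Write $k=\Hmin(X|Z)_\rho$ and $\lambda=\log|\cX|-k$. Since $\density{\cZ}$ is compact, the infimum defining $\Hmin(X|Z)_\rho$ is attained by some $\sigma_Z\in\density{\cZ}$, and by Definition~\ref{defn:relative-min-entropy-psd} this means
\[\rho_{XZ}\;\leq\;2^{\lambda}\,\tfrac{1}{|\cX|}\I_X\ot\sigma_Z.\]
To lower bound $\Hmin(X|ZB)_\rho$ it then suffices, again by Definition~\ref{def:min-entropy2}, to exhibit a single $\sigma_{ZB}\in\density{\cZ\ot\cB}$ with $D_{\infty}\!\left(\rho_{XZB}\,\middle\|\,\tfrac{1}{|\cX|}\I_X\ot\sigma_{ZB}\right)\leq\lambda+2\ell$.

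The key lemma I would isolate is a comparison between a bipartite state and its reduced state: for every $\tau_{AB}\in\density{\cA\ot\cB}$,
\[\tau_{AB}\;\leq\;\min\{\dim(\cA),\dim(\cB)\}\cdot\bigl(\tau_A\ot\I_B\bigr).\]
I would prove this by the standard sandwiching trick. Because $\mathrm{supp}(\tau_{AB})\subseteq\mathrm{supp}(\tau_A)\ot\cB$, conjugating by $\tau_A^{-1/2}\ot\I_B$ (the inverse taken on the support) is harmless and defines $\omega_{AB}\eqdef(\tau_A^{-1/2}\ot\I_B)\,\tau_{AB}\,(\tau_A^{-1/2}\ot\I_B)$ with $\tr[\cB]{\omega_{AB}}$ equal to the projector onto $\mathrm{supp}(\tau_A)$, so $\opnorm{\tr[\cB]{\omega_{AB}}}\leq 1$. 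It then suffices to show $\opnorm{\omega_{AB}}\leq\min\{\dim\cA,\dim\cB\}$ and to undo the conjugation. For the norm bound, let $\mu=\opnorm{\omega_{AB}}$ with $\ket{v}$ a unit eigenvector attaining $\mu$; since $\omega_{AB}\geq\mu\ketbra{v}$, monotonicity of the partial trace gives $\tr[\cB]{\omega_{AB}}\geq\mu\,\xi_A$, where $\xi_A=\tr[\cB]{\ketbra{v}}$ is the reduced state of a pure state and hence has rank at most $\min\{\dim\cA,\dim\cB\}$ and largest eigenvalue at least $1/\mathrm{rank}(\xi_A)$. Taking operator norms yields $1\geq\opnorm{\tr[\cB]{\omega_{AB}}}\geq\mu/\min\{\dim\cA,\dim\cB\}$, as desired.

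Applying this lemma with $\cA=\cX\ot\cZ$ and then the hypothesis,
\[\rho_{XZB}\;\leq\;d\cdot(\rho_{XZ}\ot\I_B)\;\leq\;d\cdot 2^{\lambda}\,\tfrac{1}{|\cX|}\I_X\ot\sigma_Z\ot\I_B,\]
where $d=\min\{\dim(\cX\ot\cZ),\dim(\cB)\}$. Now I would absorb the unnormalized identity on $\cB$ by writing $\sigma_Z\ot\I_B=2^{\ell}\,(\sigma_Z\ot\tfrac{1}{2^\ell}\I_B)$ and setting $\sigma_{ZB}\eqdef\sigma_Z\ot\tfrac{1}{2^\ell}\I_B\in\density{\cZ\ot\cB}$, which is a genuine density operator. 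This gives $\rho_{XZB}\leq 2^{\lambda+\ell}d\cdot\tfrac{1}{|\cX|}\I_X\ot\sigma_{ZB}$, hence $D_{\infty}\!\left(\rho_{XZB}\,\middle\|\,\tfrac{1}{|\cX|}\I_X\ot\sigma_{ZB}\right)\leq\lambda+\ell+\log d\leq\lambda+2\ell$ since $d\leq\dim(\cB)=2^{\ell}$. Plugging into Definition~\ref{def:min-entropy2} yields $\Hmin(X|ZB)_\rho\geq\log|\cX|-(\lambda+2\ell)=k-2\ell$.

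The main obstacle — and the step that pins down exactly where the factor $2\ell$ comes from — is the operator lemma together with the accounting of the two separate losses: a factor $d\leq 2^{\ell}$ from comparing the joint state to its marginal tensored with $\I_B$, and a second factor $2^{\ell}$ from renormalizing $\I_B$ to the state $\tfrac{1}{2^\ell}\I_B$. This is precisely what makes the quantum bound lose $2\ell$ rather than the $\ell$ of the classical chain rule, and is the reason (as noted after Theorem~\ref{thm:intro:chain-rule}) that a prior quantum leakage $Z$ cannot be folded into $B$ for free.
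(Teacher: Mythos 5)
Your proof is correct, but there is nothing in the paper to compare it against: Theorem~\ref{thm:quantum-chain-rule} is imported from \cite[Lemma~13]{WTHR11} and never proved in this paper, so you have in effect supplied the missing self-contained argument. The closest in-paper relative is the proof of Proposition~\ref{prop:chain-worst-min}, whose renormalization step $\sigma_Z \ot \I_B = 2^{\ell}\left(\sigma_Z \ot \frac{1}{2^{\ell}}\I_B\right)$ you reuse; everything beyond that is carried by your operator lemma $\tau_{AB} \leq \min\{\dim(\cA),\dim(\cB)\}\cdot(\tau_A \ot \I_B)$, and that lemma is sound: conjugation by $\tau_A^{-1/2}\ot\I_B$ is legitimate because $\mathrm{supp}(\tau_{AB})\subseteq\mathrm{supp}(\tau_A)\ot\cB$, positivity of the partial trace gives $\Pi_{\mathrm{supp}(\tau_A)} = \tr[\cB]{\omega_{AB}} \geq \mu\,\xi_A$, and the Schmidt-rank bound on the pure-state marginal $\xi_A$ forces $\mu \leq \min\{\dim(\cA),\dim(\cB)\}$. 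Two remarks. First, the compactness step is fine (attainment holds because $\density{\cZ}$ is compact and the positive-semidefinite cone is closed), but it is dispensable: an $\epsilon$-optimal $\sigma_Z$ followed by $\epsilon \to 0$ at the end works just as well. Second, your argument actually establishes the sharper bound $\Hmin(X|ZB)_{\rho} \geq \Hmin(X|Z)_{\rho} - \ell - \log d$, which would explain why the theorem statement bothers to define $d$ even though $d$ never appears in its displayed inequality; the stated bound is just the weakening $\log d \leq \ell$. This refined form also unifies the two chain rules in the paper: for a separable state $\rho_{XZB}=\sum_k p_k \rho_{XZ}^k \ot \rho_B^k$ one has $\rho_{XZB} \leq \rho_{XZ}\ot\I_B$ outright (since each $\rho_B^k \leq \I_B$), so the factor $d$ disappears and the same skeleton yields the loss of only $\ell$ claimed in Theorem~\ref{thm:quantum-chain-rule-sep}.
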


We remark that Theorem~\ref{thm:quantum-chain-rule} is tight.
In the case that the dimensions of $\cX$ and $\cB$ are both $2^{\ell}$, $X$ and $B$ are maximally entangled, it is easy to verify that $\Hmin(X|B)_{\rho} = -\ell$ and $\Hmin(X) = \ell$.
Also, the factor $2$ is crucial for the application in proving the lower bound of superdense coding.
For more detail, see~\cite{chainrulesuperdense}.

On the other hand, when $B$ is separable from $(X, Z)$, the loss of the entropy is at most $\ell$ instead of $2\ell$.
Specifically, when the leakage $B$ is classical, there is no entanglement between $Z$ and $B$.

\begin{theorem}[Chain rule for quantum min-entropy of separable states]\label{thm:quantum-chain-rule-sep}
  Let $\rho = \rho_{XZB}$ be a separable state on the space $(\cX\ot\cZ)\ot\cB$.
  Namely, $\rho_{XZB}=\sum_{k} p_{k}\rho_{XZ}^k  \ot \rho_B^k$.
  Suppose $B$ is an $\ell$-qubit system.
  Then
  \[\Hmin(X|ZB)_{\rho} \geq \Hmin(X|Z)_\rho - \ell.\]
\end{theorem}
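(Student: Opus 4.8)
The plan is to exhibit an explicit witness $\sigma_{ZB}$ certifying that $X$ has conditional min-entropy at least $\Hmin(X|Z)_\rho-\ell$ given $ZB$. Unwinding Definition~\ref{def:min-entropy2}, and writing $\lambda=\log|\cX|-\Hmin(X|Z)_\rho$, it suffices to produce some $\sigma_{ZB}\in\density{\cZ\ot\cB}$ with $\rho_{XZB}\leq 2^{\lambda+\ell}\cdot\frac{1}{|\cX|}\I_X\ot\sigma_{ZB}$: this gives $D_{\infty}(\rho_{XZB}\|\frac{1}{|\cX|}\I_X\ot\sigma_{ZB})\leq\lambda+\ell$, and hence $\Hmin(X|ZB)_\rho\geq\log|\cX|-(\lambda+\ell)=\Hmin(X|Z)_\rho-\ell$.

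First I would extract a witness from the hypothesis. By Definition~\ref{def:min-entropy2}, for every $\eps>0$ there is a density operator $\sigma_Z\in\density{\cZ}$ with $\rho_{XZ}\leq 2^{\lambda+\eps}\cdot\frac{1}{|\cX|}\I_X\ot\sigma_Z$ (the $\eps$ lets me avoid worrying about whether the defining infimum over $\sigma_Z$ is attained; I will send $\eps\to 0$ at the end).

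The crux is a single operator inequality, and this is the only place separability is used: I claim $\rho_{XZB}\leq\rho_{XZ}\ot\I_B$. Writing $\rho_{XZB}=\sum_k p_k\,\rho_{XZ}^k\ot\rho_B^k$, each $\rho_B^k$ is a density operator so $\I_B-\rho_B^k\geq 0$; since $\rho_{XZ}^k\geq 0$, the tensor product of positive semidefinite operators gives $\rho_{XZ}^k\ot(\I_B-\rho_B^k)\geq 0$. Summing against the nonnegative weights $p_k$ yields $\rho_{XZ}\ot\I_B-\rho_{XZB}\geq 0$, as claimed. This step is exactly what fails for a general entangled $\rho_{XZB}$ (where $\rho_B^k\leq\I_B$ is not available term-by-term), and it is the reason the loss here is $\ell$ rather than the $2\ell$ of Theorem~\ref{thm:quantum-chain-rule}.

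Finally I would assemble the pieces. Chaining the two inequalities and using $\dim(\cB)=2^\ell$,
\[\rho_{XZB}\leq\rho_{XZ}\ot\I_B\leq 2^{\lambda+\eps}\cdot\tfrac{1}{|\cX|}\I_X\ot\sigma_Z\ot\I_B=2^{\lambda+\eps+\ell}\cdot\tfrac{1}{|\cX|}\I_X\ot(\sigma_Z\ot\tfrac{\I_B}{2^\ell}).\]
Taking $\sigma_{ZB}=\sigma_Z\ot\frac{\I_B}{2^\ell}$, which is a valid density operator as a product of two states, gives $D_{\infty}(\rho_{XZB}\|\frac{1}{|\cX|}\I_X\ot\sigma_{ZB})\leq\lambda+\ell+\eps$, hence $\Hmin(X|ZB)_\rho\geq\Hmin(X|Z)_\rho-\ell-\eps$; letting $\eps\to 0$ finishes the argument. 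I do not expect a genuine obstacle: once the monotonicity bound $\rho_{XZB}\leq\rho_{XZ}\ot\I_B$ is in hand the rest is bookkeeping, the only mild subtlety being attainment of the infimum, which the $\eps$-limit dispatches. The conceptual content is simply that replacing the $B$-register witness by the maximally mixed state $\I_B/2^\ell$ costs exactly $\ell$ bits, and separability is precisely what lets us discard the $B$-correlations via $\rho_B^k\leq\I_B$.
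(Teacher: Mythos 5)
Your proof is correct, and every step checks out: unwinding Definition~\ref{def:min-entropy2} so that it suffices to exhibit a single witness $\sigma_{ZB}$; the operator inequality $\rho_{XZB}\le\rho_{XZ}\ot\I_B$, which follows term-by-term from $\rho_{XZ}^k\ot(\I_B-\rho_B^k)\ge 0$ and is indeed the only place separability is used; tensoring the hypothesis inequality $\rho_{XZ}\le 2^{\lambda+\eps}\tfrac{\I_X}{|\cX|}\ot\sigma_Z$ with $\I_B$; and the renormalization $\I_B=2^\ell\cdot\tfrac{\I_B}{2^\ell}$, which is exactly where the $\ell$ bits are paid. The $\eps$-slack for the infimum over $\sigma_Z$ is careful bookkeeping (the infimum is in fact attained by compactness of $\density{\cZ}$, but your limiting argument makes that irrelevant).

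One point of comparison worth noting: the paper states Theorem~\ref{thm:quantum-chain-rule-sep} without giving any proof, so there is no in-paper argument to measure yours against. The closest the paper comes is its proof of Proposition~\ref{prop:chain-worst-min}, which uses the same $2^\ell$ normalization trick $\I_B = 2^\ell\cdot\tfrac{\I_B}{2^\ell}$ but has no conditioning system to handle and hence needs no separability. Your proof supplies precisely the ingredient that proposition lacks, namely the monotonicity bound $\rho_{XZB}\le\rho_{XZ}\ot\I_B$, and your remark about why this bound fails for entangled states (for a maximally entangled $\rho_{XB}$ one has $\ketbra{\Phi}\not\le\rho_X\ot\I_B$) correctly explains the gap between the loss of $\ell$ here and the loss of $2\ell$ in Theorem~\ref{thm:quantum-chain-rule}.
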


\subsection{Matrix Multiplicative Weights Update Method}

Matrix Multiplicative Weights Update (MMWU) Method is a generalization of multiplicative weights algorithm.
It is a fundamental technique in learning theory and provides a way to approximate a unknown function efficiently in an iterative way.
Numerous applications on faster algorithms for SDPs~\cite{AHS12, Kale07} and classical simulations of quantum complexity classes~\cite{JainW09, JainJUW09, Wu10, GW13}.
In the matrix generalization, a weight vector is generalized to a weight matrix.
We will use the following abstraction of the MMWU method.

\begin{theorem}[Matrix multiplicative weights update method~{\cite[Thm.~10]{Kale07}}]
  \label{thm:mmwu}

  Fix $\eta\in(0,1/2)$ and let $L^{(1)},\dots,L^{(T)}$ be arbitrary $d\times d$ ``loss'' matrices with $-c_1\cdot\I_d\leq L^{(t)}\leq c_2\cdot\I_d~\forall t\in[T]$.
  Define $d\times d$ weight matrices $W^{(1)}, \dots, W^{(T)}$ and density operators $\rho^{(1)}, \dots, \rho^{(T)}\in\density{\bbC^d}$ by the following update algorithm.
  \begin{enumerate}
    \item Initialization: $W^{(1)} = \I_d$
    \item For $t = 1, \dots, T$,
    \begin{enumerate}
      \item Normalization: $\rho^{(t)}=W^{(t)}/\tr{W^{(t)}}$
      \item Update: $W^{(t+1)} = W^{(t)}\cdot \exp\left(-\eta L^{(t)}\right) $
    \end{enumerate}
  \end{enumerate}

  Then for all density operators $\sigma\in\density{\bbC^d}$, we have
  \[
    \frac{1}{T}\sum_{t=1}^T \ip{\rho^{(t)}}{L^{(t)}}
    \leq
    \frac{1}{T}\sum_{t=1}^T \ip{\sigma}{L^{(t)}} + (c_1+c_2)\left(\eta + \frac{\ln d}{\eta T}\right).
  \]
  Particularly, let $c_1$ and $c_2$ be constants, and $\eta = \sqrt{\ln d / T}$, we have
  \[
    \frac{1}{T}\sum_{t=1}^T \ip{\rho^{(t)}}{L^{(t)}}
    \leq
    \frac{1}{T}\sum_{t=1}^T \ip{\sigma}{L^{(t)}} + O\left(\sqrt{\frac{\log d}{T}}\right).
  \]
\end{theorem}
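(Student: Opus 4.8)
The plan is to prove the bound by the standard potential-function analysis of matrix multiplicative weights, using $\Phi^{(t)} = \tr{W^{(t)}}$ as the potential and the Golden--Thompson inequality to cope with the non-commutativity of the loss matrices. First I would observe that although the update is written as the product $W^{(t+1)} = W^{(t)}\exp(-\eta L^{(t)})$, the object one actually analyzes is $W^{(t+1)} = \exp\bigl(-\eta\sum_{s=1}^{t}L^{(s)}\bigr)$, which is manifestly positive semidefinite, so each normalized $\rho^{(t)} = W^{(t)}/\tr{W^{(t)}}$ is a genuine density operator in $\density{\bbC^d}$. Note $\Phi^{(1)} = \tr{\I_d} = d$. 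The whole argument is then to sandwich $\Phi^{(T+1)}$ between a telescoped per-step \emph{upper} bound (in terms of $\ip{\rho^{(t)}}{L^{(t)}}$) and a comparator-dependent \emph{lower} bound (in terms of $\ip{\sigma}{L^{(t)}}$), take logarithms, and rearrange.

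For the upper bound I would control the one-step growth of $\Phi$. Golden--Thompson, $\tr{e^{A+B}} \leq \tr{e^A e^B}$, applied with $A = -\eta\sum_{s<t}L^{(s)}$ and $B = -\eta L^{(t)}$, gives $\Phi^{(t+1)} \leq \tr{W^{(t)}\exp(-\eta L^{(t)})}$. Since $x \mapsto e^{-\eta x}$ is convex, on $[-c_1,c_2]$ it lies below its chord, which lifts eigenvalue-wise (using $-c_1\I_d \leq L^{(t)} \leq c_2\I_d$) to the operator inequality $\exp(-\eta L^{(t)}) \leq \bigl[(c_2\I_d - L^{(t)})e^{\eta c_1} + (L^{(t)} + c_1\I_d)e^{-\eta c_2}\bigr]/(c_1+c_2)$. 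Pairing this against $W^{(t)} \geq 0$, dividing by $\Phi^{(t)}$, and using $1 + z \leq e^z$ yields a multiplicative one-step decrease governed by $\ip{\rho^{(t)}}{L^{(t)}}$; telescoping over $t=1,\dots,T$ and expanding the chord to second order in $\eta$ gives $\ln\Phi^{(T+1)} \leq \ln d - \eta\sum_{t}\ip{\rho^{(t)}}{L^{(t)}} + (c_1+c_2)\cdot O(\eta^2 T)$.

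For the lower bound I would invoke the Gibbs variational principle: for any $\sigma \in \density{\bbC^d}$ and Hermitian $M$ one has $\ip{\sigma}{M} \leq \ln\tr{\exp(M)}$ (equivalently, $\ln\tr\exp$ is the convex conjugate of the negative von Neumann entropy, and the entropy term is nonnegative). Taking $M = -\eta\sum_t L^{(t)}$ gives $\ln\Phi^{(T+1)} = \ln\tr{\exp(-\eta\sum_t L^{(t)})} \geq -\eta\sum_t\ip{\sigma}{L^{(t)}}$. Chaining the two bounds, cancelling the shared $\ln d$ and dividing through by $\eta T$, produces $\frac{1}{T}\sum_t\ip{\rho^{(t)}}{L^{(t)}} \leq \frac{1}{T}\sum_t\ip{\sigma}{L^{(t)}} + (c_1+c_2)\bigl(\eta + \frac{\ln d}{\eta T}\bigr)$, and the special case follows by optimizing $\eta = \sqrt{\ln d / T}$ (balancing the two error terms).

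The main obstacle is precisely the \emph{non-commutativity} of the $L^{(t)}$: the naive telescoping that works for scalar multiplicative weights fails because $\exp(-\eta\sum_s L^{(s)}) \neq \prod_s\exp(-\eta L^{(s)})$ in general, and it is the Golden--Thompson inequality that rescues the per-step trace estimate. A secondary point requiring care is extracting the clean width factor $(c_1+c_2)$ rather than a cruder $\max(c_1,c_2)^2$; this comes from using the two-sided chord over $[-c_1,c_2]$ (or, equivalently, first shifting $L^{(t)} \mapsto L^{(t)} + c_1\I_d$, which leaves every $\rho^{(t)}$ and the regret unchanged) instead of a one-sided Taylor estimate. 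The remaining manipulations --- the eigenvalue-wise lifting, the $1+z \leq e^z$ step, and the $\eta$-optimization --- are routine.
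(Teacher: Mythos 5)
Your argument is correct in its essentials, and it is worth noting at the outset that the paper never actually proves Theorem~\ref{thm:mmwu}: the statement is imported from \cite{Kale07}, and what you have reconstructed is, in effect, Kale's own trace-potential proof. The chain you describe is sound: reading the update as $W^{(t+1)}=\exp\bigl(-\eta\sum_{s\le t}L^{(s)}\bigr)$ (necessary, as you say, for $W^{(t)}$ to be Hermitian positive semidefinite at all --- the literal product of non-commuting exponentials is not), Golden--Thompson to get $\tr{W^{(t+1)}}\le\tr{W^{(t)}\exp(-\eta L^{(t)})}$, the operator chord inequality on $[-c_1,c_2]$, monotonicity of the trace pairing against $W^{(t)}\geq 0$, telescoping with $1+z\le e^z$, and the Gibbs variational principle $\ip{\sigma}{M}\le\ln\tr{e^{M}}$ for the comparator. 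This is genuinely different from the only MMWU regret proof the paper contains, namely the appendix proof of Lemma~\ref{lemma:mmwu-kl-proof} (following \cite{TRW05,WK12}): there the potential is the KL divergence $\KL{a}{a^{(t)}}$ to the comparator rather than $\tr{W^{(t)}}$; the comparator enters through nonnegativity of KL and $\KL{a}{\I_d/d}\le\log d$ rather than through Gibbs; and the extra ingredient is the Pythagorean inequality for KL projections (Lemma~\ref{lemma:kl-proj}), which is what lets the argument survive the projection step in Procedure~\ref{pro:mmwu-minmax}. Your route yields Theorem~\ref{thm:mmwu} (no projection) directly and handles two-sided widths cleanly; the paper's route is the one needed once iterates must be projected back into a convex set of states, which is exactly the situation in the quantum non-uniform min-max theorem.

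One quantitative point deserves flagging, because your last sentence claims more than your steps deliver. Expanding the chord to second order gives a per-step error of order $\eta^{2}(c_1^{2}+c_2^{2})$, not $\eta^{2}(c_1+c_2)$; chaining your two bounds then yields per-round regret at most $O\bigl(\eta\,(c_1+c_2)^{2}\bigr)+\frac{\ln d}{\eta T}$, not the literal $(c_1+c_2)\bigl(\eta+\frac{\ln d}{\eta T}\bigr)$ displayed in the theorem. This is not a fixable slip in your bookkeeping: since replacing $L^{(t)}$ by $\lambda L^{(t)}$ runs the same algorithm with rate $\lambda\eta$ while scaling the regret by $\lambda$, the width must enter the $\eta$-term quadratically, and for $\eta(c_1+c_2)\gg 1$ the displayed form is simply false (alternating diagonal losses in dimension $d=2$ already force per-round regret $\Omega(c_1+c_2)$, exceeding $(c_1+c_2)\eta$ for small $\eta$). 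So the defect lies in the statement as transcribed, not in your proof; in the only regime the paper ever invokes --- constant $c_1,c_2$ and the ``particularly'' clause with $\eta=\sqrt{\ln d/T}$ --- your argument does establish the $O\bigl(\sqrt{\log d/T}\bigr)$ bound, with the discrepancy absorbed into the implicit constant. Note also that your second-order expansions implicitly need $\eta\max(c_1,c_2)=O(1)$, which is again automatic in that regime.
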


\subsection{Quantum Tomography}

In a quantum tomography problem, one wants to learn the behavior or even a description of a quantum circuit or quantum state.
In our applications, we specifically want to solve the following problem.
We show more details and the proofs of related lemmas in Appendix~\ref{app:tomography}.

\begin{definition}[$\QCktTom$ Problem]\label{def:qckt-tomography}
  The $\QCktTom(s, d, \eps, \gamma)$ problem is a computational problem defined as follows:
  \begin{itemize}
    \item Input: a description of a quantum circuit $C:\density{\bbC^{d}}\to\zo$ of size-$s$, and an error parameter $0< \eps <1$.
    \item Task: let $\Pi$ be the corresponding BPOVM of $C$.
      Output an explicit description (as matrices) of BPOVM $\tilde{\Pi}$ such that $\opnorm{\Pi-\tilde{\Pi}}\leq \eps$ with probability $1-\gamma$.
  \end{itemize}
\end{definition}

\begin{lemma}\label{lemma:qkct-tomography}
  There exists a (uniform) quantum algorithm that solves the $\QCktTom(s, d, \eps, \gamma)$ Problem in time
  $\poly(s, d, 1/\eps, \log(1/\gamma))$.
\end{lemma}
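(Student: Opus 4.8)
The plan is to exploit the fact that we are given an explicit description of the size-$s$ circuit $C$, so we may run $C$ as a subroutine on any input state of our choosing and thereby estimate, entry by entry, the matrix of its BPOVM $\Pi$ in a fixed computational basis $\{\ket{j}\}_{j=0}^{d-1}$ (with $d=2^m$). Recall that for any pure input $\ket{\psi}$ we have $\Pr[C(\ketbra{\psi})=1]=\ip{\ketbra{\psi}}{\Pi}=\bra{\psi}\Pi\ket{\psi}$, which we can estimate to additive accuracy $\delta$ with failure probability $\gamma'$ using $O(\log(1/\gamma')/\delta^2)$ independent runs of $C$, by a Chernoff bound. The diagonal entries are read off directly as $\Pi_{jj}=\Pr[C(\ketbra{j})=1]$. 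For the (complex) off-diagonal entries, write $\mathrm{i}=\sqrt{-1}$; since $\Pi$ is Hermitian,
\[
  \bra{+_{jk}}\Pi\ket{+_{jk}}=\tfrac12(\Pi_{jj}+\Pi_{kk})+\Re(\Pi_{jk}),\qquad
  \bra{\phi_{jk}}\Pi\ket{\phi_{jk}}=\tfrac12(\Pi_{jj}+\Pi_{kk})-\Im(\Pi_{jk}),
\]
where $\ket{+_{jk}}=(\ket{j}+\ket{k})/\sqrt2$ and $\ket{\phi_{jk}}=(\ket{j}+\mathrm{i}\ket{k})/\sqrt2$. Hence, after estimating the two relevant diagonal terms, a single probability estimate for each of these two superposition inputs recovers $\Re(\Pi_{jk})$ and $\Im(\Pi_{jk})$ to accuracy $O(\delta)$.

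Each of the states $\ket{j}$, $\ket{+_{jk}}$, and $\ket{\phi_{jk}}$ is supported on at most two basis vectors and can be prepared from $\ket{0^m}$ by a $\poly(\log d)$-size state-preparation circuit, which we prepend to $C$; thus each run of the experiment costs time $\poly(s,\log d)$. There are $O(d^2)$ entries, each needing $O(1)$ probability estimates, so setting the per-estimate failure probability to $\gamma'=\gamma/O(d^2)$ and taking a union bound guarantees that all estimates are simultaneously $\delta$-accurate except with probability $\gamma$; this costs $O(d^2\cdot\log(d/\gamma)/\delta^2)$ total runs of $C$. Assembling the estimates into a Hermitian matrix $\hat\Pi$ (setting $\hat\Pi_{kj}=\overline{\hat\Pi_{jk}}$) yields $\norm{\Pi-\hat\Pi}_{\max}=O(\delta)$, whence $\opnorm{\Pi-\hat\Pi}\le d\cdot\norm{\Pi-\hat\Pi}_{\max}=O(d\delta)$ by \eqref{eqn:max_norm}. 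Choosing $\delta=\Theta(\eps/d)$ makes this at most $\eps$, and the resulting running time is $\poly(s,d,1/\eps,\log(1/\gamma))$, as claimed.

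Finally, to output an honest BPOVM rather than a merely nearby Hermitian matrix, I would project $\hat\Pi$ onto the convex set $\{M:0\le M\le\I_d\}$ by diagonalizing $\hat\Pi$ and clipping each eigenvalue into $[0,1]$, producing $\tilde\Pi$; this is the metric projection onto a convex set in the Frobenius norm and is therefore nonexpansive. Since the true $\Pi$ already lies in the set, $\opnorm{\Pi-\tilde\Pi}\le\norm{\Pi-\tilde\Pi}_F\le\norm{\Pi-\hat\Pi}_F\le d\cdot\norm{\Pi-\hat\Pi}_{\max}=O(d\delta)$, so the projection step costs only a constant factor in the target accuracy and we rechoose $\delta$ accordingly. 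I expect the main point requiring care to be precisely this dimension-dependent conversion: controlling the \emph{operator}-norm error forces per-entry precision $\Theta(\eps/d)$ and hence a $\poly(d)$ blow-up in the number of circuit runs, which is acceptable here but must be tracked, together with verifying that the final clipping step preserves both validity as a BPOVM and closeness in operator norm. The estimation of the complex off-diagonal entries via the two superposition bases, though the clever ingredient, is otherwise routine.
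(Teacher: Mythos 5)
Your proposal is correct and follows essentially the same route as the paper's proof: the same tomographic test states $\ket{j}$, $(\ket{j}+\ket{k})/\sqrt{2}$, $(\ket{j}+i\ket{k})/\sqrt{2}$, the same Chernoff-based probability estimation with a union bound over the $O(d^2)$ entries, and the same conversion $\opnorm{\cdot}\leq d\cdot\norm{\cdot}_{\max}$ forcing per-entry precision $\Theta(\eps/d)$. The only (immaterial) difference is the final rounding to a valid BPOVM: you assemble the Hermitian estimate directly and clip its eigenvalues to $[0,1]$ (a nonexpansive Frobenius projection), whereas the paper recovers $\tilde{\Pi}$ by solving an SDP feasibility problem over $\{M : 0\leq M\leq \I_d\}$ consistent with the measured statistics; both steps give the same guarantee at the same cost.
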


\section{Quantum Indistinguishability / Pseudorandomness}\label{sec:quantum-indistinguishability}


\subsection{Quantum Indistinguishability}
Computational indistinguishability is a fundamental concept of computational complexity and pseudorandomness.
It provides a relaxed way to describe the similarity of two random objects.
Informally, computational indistinguishability only requires that two random objects cannot be distinguished by efficient algorithms/circuits.
Two objects may be indistinguishable by a bounded algorithm even if their entropies difference is significant.

We use $(s, \eps)$-indistinguishability to describe two objects cannot be distinguished with advantage more than $\eps$ by all $s$-size circuits with ancilla bits initialized to $\ket{0}$s.
Contrary, we use $(s, \eps)^*$-indistinguishability when the ancilla bits can be initialized as any quantum states.
Formally, we have the following definitions.

\begin{definition}
  Quantum states $\rho$ and $\sigma$ on $\cH$ are $(s, \eps)$-quantum-indistinguishable if for all size-$s$ quantum distinguishers $D$ with ancilla bits initialized to $\ket{0}$s,
  \[|\ex{D(\rho)} - \ex{D(\sigma)}| \leq \eps.\]
  Moreover, we say that $\rho$ is an \emph{$(s, \eps)$-quantum-pseudorandom} state if $\rho$ is $(s, \eps)$quantum-indistinguishable from the maximally mixed state on $\cH$.
\end{definition}

\begin{definition}
  Quantum states $\rho$ and $\sigma$ on $\cH$ are $(s, \eps)^*$-quantum-indistinguishable if for all size-$s$ quantum distinguishers $D$ with arbitrary ancilla bits $\tau$,
  \[|\ex{D(\rho)} - \ex{D(\sigma)}| \leq \eps.\]
  Moreover, we say that $\rho$ is an \emph{$(s, \eps)^*$-quantum-pseudorandom} state if $\rho$ is $(s, \eps)^*$quantum-indistinguishable from the maximally mixed state on $\cH$.
\end{definition}

If we consider the indistinguishability in asymptotic settings,
in the case that the algorithm with classical advice, equivalently, it can be described as an ensemble of nonuniform quantum circuits $\{C_n\}_{n\in\bbN}$ where their ancilla bits are initialized to be all $\ket{0}$s.
On the other hand, the algorithm with quantum advice can be described as an ensemble of quantum circuits $\{C_n\}_{n\in\bbN}$ where their ancilla bits are initialized to be $\{\tau_n\}_{n\in\bbN}$.
Note that $\tau_n$ can only depend on $n$ and could be a state that cannot be generated efficiently.


\begin{definition}
  Let $s:\bbN\to\bbN$ and $\eps:\bbN\to\bbR$ be two functions.
  Let $\{\rho_n\}_{n\in\bbN}$ and $\{\sigma_n\}_{n\in\bbN}$ be two quantum state ensembles where $\rho_n, \sigma_n\in\density{\bbC^{2^n}}$.
  We say $\{\rho_n\}_{n\in\bbN}$ and $\{\sigma_n\}_{n\in\bbN}$ are $(s, \eps)$-quantum-indistinguishable (resp., $(s, \eps)^*$-quantum-indistinguishable), if for every non-uniform ensemble of quantum circuit with classical (resp. quantum) advice $\{C_n\}_{n\in\bbN}$, and the size of $C_n$ is at most $s(n)$, we have for all $n\in\bbN$,
  \[ |\pr{C_n(\rho_n) = 1} - \pr{C_n(\sigma_n) = 1}| \leq \eps(n). \]
\end{definition}

Now lets consider the uniform setting.
A \emph{quantum polynomial time algorithm} is a uniform ensemble of quantum circuit $\{C_n\}_{n\in\bbN}$, which can be generated by a Turing machine in $\poly(n)$ time.
Without loss of generosity, we can still assume the ancilla bits are initialized to all $\ket{0}$s.
If two ensembles are indistinguishable by all polynomial-time algorithms, they are computationally indistinguishable. Formally,

\begin{definition}
  Let $\{\rho_n\}_{n\in\bbN}$ and $\{\sigma_n\}_{n\in\bbN}$ be two quantum state ensembles where $\rho_n, \sigma_n\in\density{\bbC^{2^n}}$.
  We say $\{\rho_n\}_{n\in\bbN}$ and $\{\sigma_n\}_{n\in\bbN}$ are quantum-indistinguishable, if for every quantum polynomial time algorithm $\{C_n\}_{n\in\bbN}$ and every polynomial function $p(\cdot)$, for all but finitely many $n$, we have
  \[ |\pr{C_n(\rho_n) = 1} - \pr{C_n(\sigma_n) = 1}| \leq \frac{1}{p(n)}. \]
\end{definition}

\subsection{Pseudorandom Pure State}

Classically, there exists a random variable ensemble $\{X_n\in\zo^n\}_{n\in\bbN}$ that is far from the uniform distribution $U_n$ but are computationally indistinguishable from $U_n$ by any algorithm runs in $\poly(n)$ time.
Note that the min-entropy of such $X_n$ strings is at least $\omega(\log n)$.
In the quantum setting, we have a pseudorandom state as an analogue.
We will state the results in non-uniform settings.
Also, we fix the security parameter for simplicity.
The following result of Bremner, Mora, and Winter~\cite{BMW08} (stated in our language) says that a pseudorandom object can be a pure state, which has \emph{zero} entropy.

\begin{theorem}[\cite{BMW08}]
  For all $s\in\mathbb{N}$ and $\eps > 0$, there exists an $(s, \eps)$-quantum-pseudorandom pure state $\rho = \ketbra{\psi}\in \density{\bbC^{2^m}}$ with $m = O(\log(s/\eps))$ that is $(s, \eps)$-pseudorandom.
  Furthermore, such $\rho$ can be obtained by a uniformly random pure state $\ketbra{\psi}\in \density{\bbC^{2^m}}$ with all but $2^{-\Omega(2^m)}$ probability.
\end{theorem}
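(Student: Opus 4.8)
The plan is to show that a \emph{Haar-random} pure state $\ket{\psi}\in\ball{\bbC^{2^m}}$ is $(s,\eps)$-quantum-indistinguishable from $\sigma_{\mix}$ with overwhelming probability, combining concentration of measure with the bound $s^{O(s)}$ on the number of size-$s$ circuits. Throughout write $d=2^m$, and recall from the preliminaries that any size-$s$ quantum distinguisher $D$ is captured by a BPOVM $\Pi$ with $0\leq\Pi\leq\I$, so that $\Pr[D(\ketbra{\psi})=1]=\bra{\psi}\Pi\ket{\psi}$ and $\Pr[D(\sigma_{\mix})=1]=\ip{\Pi}{\sigma_{\mix}}=\tr{\Pi}/d$.

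First, I would fix a single distinguisher $D$ with BPOVM $\Pi$ and set $f(\psi)=\bra{\psi}\Pi\ket{\psi}$. The key observation is that the distinguishing probability on $\sigma_{\mix}$ is exactly the Haar average of $f$: since $\mathbb{E}_{\ket{\psi}}[\ketbra{\psi}]=\I/d=\sigma_{\mix}$, we have
\[
  \mathbb{E}_{\ket{\psi}}[f(\psi)]=\tr{\Pi\cdot\tfrac{1}{d}\I}=\tr{\Pi}/d=\Pr[D(\sigma_{\mix})=1].
\]
Thus it suffices to show that $f$ concentrates tightly around its mean. To that end I would check that $f$ is Lipschitz with a constant independent of $d$: writing $f(\psi)-f(\phi)=(\bra{\psi}-\bra{\phi})\Pi\ket{\psi}+\bra{\phi}\Pi(\ket{\psi}-\ket{\phi})$ and using $\opnorm{\Pi}\leq 1$ yields $|f(\psi)-f(\phi)|\leq 2\,\norm{\ket{\psi}-\ket{\phi}}_2$. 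Applying Lévy's Lemma (concentration of measure for Lipschitz functions on the sphere $\ball{\bbC^{d}}$, whose real dimension is $2d-1$) then gives, for any $\delta>0$,
\[
  \Pr_{\ket{\psi}}\bigl[\,|f(\psi)-\mathbb{E}f|\geq\delta\,\bigr]\leq 2\exp\left(-\Omega(d\,\delta^2)\right).
\]
Taking $\delta=\eps$, a Haar-random pure state fools the fixed distinguisher $D$ with advantage $<\eps$ except with probability $2\exp(-\Omega(d\eps^2))$.

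The final step is a union bound over all distinguishers. Since there are only $s^{O(s)}$ quantum circuits of size $s$ (as recorded in the preliminaries), the probability that a random $\ket{\psi}$ fails against \emph{some} size-$s$ distinguisher is at most
\[
  s^{O(s)}\cdot 2\exp\left(-\Omega(d\eps^2)\right)=\exp\left(O(s\log s)-\Omega(2^m\eps^2)\right).
\]
Choosing $m=O(\log(s/\eps))$ with a sufficiently large hidden constant makes $2^m\eps^2$ dominate $s\log s$, so this probability is less than $1$ — establishing existence — and is in fact exponentially small in $2^m$, matching the stated $2^{-\Omega(2^m)}$ failure bound for the ``furthermore'' claim.

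The conceptual ingredients (the mean computation, Lévy's Lemma, and the union bound) are routine once assembled, so I expect the only genuine work to be bookkeeping: verifying the dimension-independent Lipschitz constant so that the concentration rate scales like $\exp(-\Omega(2^m\eps^2))$, and then pinning down the constants so that the two competing quantities — the circuit count $s^{O(s)}$ and the concentration rate — balance while keeping $m=O(\log(s/\eps))$.
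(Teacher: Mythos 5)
Your proposal is correct, and it is essentially the route of the cited work \cite{BMW08} itself: a Haar-random pure state, the dimension-free Lipschitz bound for $\psi\mapsto\bra{\psi}\Pi\ket{\psi}$, L\'evy-type concentration at rate $\exp(-\Omega(2^m\eps^2))$, and a union bound over the $s^{O(s)}$ size-$s$ circuits. Note that the paper does not actually prove this statement (it only cites it); its nearest proof, Theorem~\ref{thm:pure-indist}, keeps the same skeleton --- fix a BPOVM $\Pi$, observe that the mean of the acceptance probability equals the acceptance probability on $\sigma_{\mix}$, concentrate, union bound, take $m=O(\log(s/\eps))$ --- but replaces the Haar measure by the finite ensemble of states with coefficients $\pm 2^{-m/2}$, and replaces L\'evy's lemma by Talagrand's inequality for convex Lipschitz functions on the hypercube. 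The trade-off is this: your argument is the one that actually yields the ``furthermore'' clause, i.e.\ that a \emph{uniformly random} pure state succeeds with probability $1-2^{-\Omega(2^m)}$, which the paper's discrete ensemble cannot address; conversely, the paper's version buys a more elementary, finite sample space (states of a very simple explicit form), at the cost of proving only existence. Two bookkeeping remarks, neither a gap: L\'evy's lemma is usually stated for concentration about the median, so either invoke a mean version or absorb the $O(2^{-m/2})$ median-to-mean discrepancy into the constants; and in both your bound and the paper's, the failure probability is really $\exp(-\Omega(2^m\eps^2))$, so the advertised $2^{-\Omega(2^m)}$ hides an $\eps$-dependent constant --- a convention your write-up shares with the paper, and which is harmless since $2^m$ is chosen polynomially large in $s/\eps$.
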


We emphasize that the existence of a pseudorandom pure state only holds when we consider distinguishers without quantum advice.
Otherwise, for a pure state $\rho$, one can hardwire the same state as an advice, then it can be distinguished from a maximally mixed state by using Swap Test.

Here we provide a simpler sampling method to show the existence of pseudorandom pure state, whose coefficients are all $2^{-m/2}$ or $2^{m/2}$.

\begin{theorem}\label{thm:pure-indist}
  For all $s\in\mathbb{N}$ and $\eps > 0$, there exists $m = O(\log(s/\eps))$ such that, if we uniformly sample $(\alpha_1, \dots \alpha_{2^m})$ from $\{-2^{-m/2}, -2^{-m/2}\}^{2^m}$ and let $\rho = \sum_{i = 1}^{2^m}\alpha_i \ket{(i)}$, then with all but $2^{-\Omega(2^m)}$ probability, $\rho$ is an $(s, \eps)$-quantum-pseudorandom pure state.
\end{theorem}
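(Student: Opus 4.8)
The plan is to fix a single size-$s$ distinguisher, prove that the random pure state fools it except with probability doubly-exponentially small in $m$, and then take a union bound over all circuits. Write $\alpha_i = 2^{-m/2}\xi_i$ where the $\xi_i \in \{\pm 1\}$ are independent uniform signs, so that $\ket{\psi} = 2^{-m/2}\sum_{i} \xi_i \ket{(i)}$ automatically has unit norm and $\rho = \ketbra{\psi}$ is a pure state for every outcome of the signs. By the discussion of quantum distinguishers in Section~\ref{sec:preliminary}, a size-$s$ distinguisher (with ancillas in $\ket{0}$s) is equivalent to a BPOVM $\Pi$ with $0 \le \Pi \le \I$ on $\bbC^{2^m}$, and its advantage in distinguishing $\rho$ from $\sigma_{\mix} = 2^{-m}\I$ is $|\ip{\rho}{\Pi} - \ip{\sigma_{\mix}}{\Pi}| = |\bra{\psi}\Pi\ket{\psi} - 2^{-m}\tr{\Pi}|$. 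Expanding $\bra{\psi}\Pi\ket{\psi} = 2^{-m}\sum_{i,j}\xi_i\xi_j\Pi_{ij}$ and taking the expectation over the signs, the off-diagonal terms vanish while the diagonal contributes $2^{-m}\tr{\Pi}$; hence the expected advantage is exactly $0$, and the advantage equals the deviation of the degree-two Rademacher chaos $2^{-m}\sum_{i\ne j}\xi_i\xi_j\Pi_{ij}$ from its mean.

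The heart of the argument is an exponential tail bound for this quadratic form. Since $\Pi$ is Hermitian the chaos is real (it equals $2^{-m}\xi^{\top}(\mathrm{Re}\,\Pi)\xi$ up to the constant diagonal), so the Hanson--Wright inequality for Rademacher variables applies with the two governing quantities $\opnorm{\Pi}\le 1$ and $\tr{\Pi^2}\le\tr{\Pi}\le 2^m$ (here $\Pi^2\le\Pi$ because $0\le\Pi\le\I$). Taking deviation $t = \eps\cdot 2^m$ and using $t^2/\tr{\Pi^2}\ge \eps^2 2^m$ together with $t/\opnorm{\Pi}\ge \eps 2^m$, this gives $\Pr[\,|\bra{\psi}\Pi\ket{\psi}-2^{-m}\tr{\Pi}|>\eps\,]\le 2\exp(-c\,\eps^2 2^m)$ for an absolute constant $c>0$ and every $\eps\in(0,1)$.

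Finally I would union bound over all size-$s$ circuits. Each circuit determines a fixed BPOVM via Equation~\eqref{eqn:circuit-povm}, and there are at most $s^{O(s)}$ distinct size-$s$ circuits, so the probability that \emph{some} size-$s$ distinguisher beats advantage $\eps$ is at most $s^{O(s)}\cdot 2\exp(-c\,\eps^2 2^m)$. Choosing $m = O(\log(s/\eps))$ with a large enough constant makes $c\,\eps^2 2^m$ dominate $O(s\log s)$ (the logarithm of the number of circuits), which is consistent with $2^m = \poly(s/\eps)$ and drives the whole bound down to $2^{-\Omega(2^m)}$ (the $\eps$-dependence being absorbed into the constant in the exponent). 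The main obstacle is precisely the concentration step: one needs the sub-exponential tail of a degree-two Rademacher chaos with the correct Frobenius- and operator-norm dependence, since only a bound that is exponentially small in the \emph{dimension} $2^m$ can survive the $s^{O(s)}$-size union bound; a second-moment (Chebyshev) estimate would be far too weak.
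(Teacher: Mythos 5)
Your proposal is correct, and it follows the paper's overall skeleton exactly (fix one size-$s$ distinguisher, identify it with a BPOVM $\Pi$, show the expected acceptance probability over the random signs equals that on $\sigma_{\mix}$, prove exponential concentration, then union bound over the $s^{O(s)}$ circuits) --- but the key concentration lemma is genuinely different. The paper views $f(\alpha)=\alpha\Pi\alpha^{\dagger}$ as a convex, $2$-Lipschitz function on the hypercube $\{\pm 2^{-m/2}\}^{2^m}$ and invokes Talagrand's concentration inequality, which requires verifying both convexity and the Lipschitz constant; you instead isolate the mean-zero off-diagonal Rademacher chaos $2^{-m}\sum_{i\neq j}\xi_i\xi_j\Pi_{ij}$ and invoke Hanson--Wright, which requires only the norm bounds $\opnorm{\Pi}\leq 1$ and $\norm{\Pi}_F^2=\tr{\Pi^2}\leq\tr{\Pi}\leq 2^m$ (the latter using $\Pi^2\leq\Pi$ for $0\leq\Pi\leq\I$). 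Both routes yield the same tail $2\exp\bigl(-c\,\eps^2 2^m\bigr)$, which is exactly what the union bound over $2^{O(s\log s)}$ circuits needs, and your final parameter setting (and the absorption of the $\eps^2$ factor into the $\Omega(\cdot)$) matches the paper's. The Hanson--Wright route is arguably more natural for a quadratic form since the Frobenius/operator norm inputs are immediate, whereas Talagrand is the more generic tool; the one small point you should make explicit is that passing to the real symmetric matrix $\mathrm{Re}\,\Pi$ does not increase either norm (both follow from the triangle inequality, since $\bar{\Pi}=\Pi^{\top}$ has the same Frobenius and operator norms as $\Pi$), so the real Hanson--Wright statement indeed applies.
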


\begin{proof}
  Let $A:\density{\bbC^{2^m}}\to\bin$ be some fixed quantum distinguisher corresponding to a BPOVM $\Pi$.
  Then
    \[\pr{A(\rho_\mix^{(m)}) = 1} = \frac{1}{2^m}\ip{\Pi}{\I_{2^m}}=\frac{1}{2^m} \sum_{(i)}  \bra{(i)}\Pi\ket{(i)},\]
  where the probability is over the measurement taken by $A$.
  For a fixed $\rho = \ketbra{\psi}$ with $\ket{\psi} = \sum_{(i)}\alpha_i \ket{(i)}$, we have
  \begin{align*}
    \Pr[A(\rho) = 1] = &\ip{\Pi_1}{\ketbra{\psi}}
    = \sum_{i, j\in[2^m]}\alpha_i^*\alpha_j \bra{(i)}\Pi_1\ket{j} = \sum_{(i)}|\alpha_i|^2 \bra{(i)}\Pi_1\ket{(i)}+\sum_{i\neq j}\alpha_i^*\alpha_j \bra{(i)}\Pi_1\ket{j},
  \end{align*}
  Taking expectation over $\alpha = (\alpha_1, \dots, \alpha_{2^m})$, we have
  \begin{align*}
    \ex[\alpha]{\Pr[A(\rho) = 1]}
    =& \ex[\alpha]{\sum_i|\alpha_i|^2 \bra{(i)}\Pi_1\ket{(i)}} + \ex[\alpha]{\sum_{i\neq j}\alpha_i^*\alpha_j \bra{(i)}\Pi_1\ket{j}} \\
    =& \frac{1}{2^m}\sum_{(i)}\bra{(i)}\Pi_1\ket{(i)} = \pr{A(\rho_\mix^{(m)}) = 1}.
  \end{align*}
  For a fixed distinguisher $A$, define the function on a hypercube $f:\{-2^{-m/2}, -2^{-m/2}\}^{2^m}\to {[0, 1]}$ as
  \[f(\alpha_1, \cdots, \alpha_{2^m}) = \pr{A(\rho) = 1} = \alpha\Pi\alpha^{\dagger}.\]
  Now we are going to show the concentration using Talagrand's inequality.
  To that end, we first find the \emph{Lipschitz constant} $\eta$ of the function $f$.
  \begin{align*}
    |f(\alpha)-f(\beta)|
    =   & | \alpha\Pi\alpha^{\dagger} - \beta\Pi\beta^{\dagger}|\\
    \leq& | \alpha\Pi\alpha^{\dagger} - \alpha\Pi\beta^{\dagger}
           +\alpha\Pi\beta^{\dagger} - \beta\Pi\beta^{\dagger}|\\
    \leq& \norm{\alpha\Pi}_2 \cdot \norm{\alpha-\beta}_2
    +     \norm{\alpha^{\dagger}-\beta^{\dagger}}_2 \cdot \norm{\Pi\beta^{\dagger}}_2\\
    \leq & 2 \opnorm{\Pi} \norm{\alpha-\beta}_2\\
    \leq & 2 \norm{\alpha-\beta}_2,
  \end{align*}
  where the second inequality follows from the Cauchy-Schwartz inequality of the Hilbert-Schmidt inner product of two operators.
  Therefore the Lipschitz constant $\eta$ of the function $f$ is at most $2$.
  Also, $f$ is a convex function.
  \begin{align*}
    f((\alpha+\beta)/2) &\leq f((\alpha+\beta)/2) + f((\alpha-\beta)/2)
    = \frac{1}{2} \left(
      \alpha\Pi\alpha^{\dagger}
      + \beta\Pi\beta^{\dagger}
    \right)
    = \frac{1}{2}(f(\alpha) + f(\beta))
  \end{align*}
  Now we are ready to apply the Talagrand's concentration inequality:
  Let $\alpha_1, \cdots, \alpha_M$ be independent random variables with $|X_i| \leq K$ and $f:\bbR^n\to{\bbR}$ be a $\eta$-Lipschitz convex function, then there exists a constant $C$ such that for all $t$
  \[\pr[\alpha]{|f(\alpha) - \ex{f(\alpha)}|\geq Kt}\leq 2^{-Ct^2/\eta^2}.\]
  Take $M = 2^m$, $K = 2^{-m/2}$, $t = 2^{m/2}\eps$ and $\eta = 2$, we have
  \[\pr[\alpha]{\left|\Pr[A(\rho) = 1] - \pr{A(\sigma_{\mix}^{(m)}) = 1}\right| \geq \eps} = \pr[\alpha]{|f(\alpha) - \ex{f(\alpha)}|\geq \eps}\leq 2^{-\Omega(\eps^2\cdot2^m)}.\]
  There are only $s^{O(s)} = 2^{O(s\log s)}$ many different quantum circuits of size $s$.
  We can choose some $m = O(\log(s/\eps))$ such then by union bound, there exists a quantum state $\rho$ such that for every quantum circuit $A$ of size $s$, we have
  \[\left|\pr{A(\rho) = 1} - \pr{A(\sigma_{\mix}^{(m)}) = 1}\right| \geq \eps.\]
\end{proof}

An interesting follow-up question is that whether we can {\em explicitly} generate pseudorandom pure states, say as the output of a small quantum circuit (with no measurements) on input $\ket{0^n}$ --- which we could think of as a ``seedless'' pseudorandom generator.
If the generator is of polynomial size, then its output cannot be pseudorandom against all polynomial-sized distinguishers, because (measurement-free) quantum computation is reversible.
But if we allow the generator circuit to be larger than the distinguishers then it is conceivable to have a pseudorandom pure state as output.
As aforementioned, in \cite{BHH12, BHH16}, they use probabilistic method to show the existence of a generator circuit of size $n^{11k+9}$ that can fool all $n^{k}$-size quantum distinguishers.
It would be interesting to construct such generators explicitly under plausible (quantum) complexity assumptions.

\section{Quantum Computational Entropy}\label{sec:comp-entropy}
A uniform distribution can be seen as a special case that it has a full amount of (min-) entropy.
Similar to the definition of pseudorandomness, one can naturally generalize the concept of entropy in information theory to {\em pseudoentropies} in computational settings.
The pseudoentropy notions are useful in cryptography and constructions of pseudorandom generators because it suffices to handle adversaries with adversaries with limited computational resources.

As one can expect, there are different ways to define computational (min-) entropies.
In this section, we will explore some of the possible relaxations of min-entropy and show some connections.
First, we introduce the quantum analogue of non-uniform min-max theorem~\cite{VZ13}, which is an elementary tool of proving the relationship between entropies and the Leakage Chain Rule of a computational min-entropy in Section~\ref{sec:leakage-chain-rule}.

\subsection{Quantum Non-Uniform Min-Max Theorem}\label{subsec:min-max}
We begin with von Neumann's Min-Max Theorem for the zero sum game with two players.
Let the strategy spaces of Player~1 and Player~2 be $\cA$ and $\cB$, respectively, and the payoff function be $g:\cA\times\cB\to{[-1, 1]}$.
The theorem says that if for every mixed strategy $A\in\Conv(\cA)$, Player~2 can respond $b\in\cB$ so that the expected payoff $\ex[a\from A]{g(a, b)} \geq p$, then Player~2 has an universal mixed strategy $B\in\Conv(\cB)$ that guarantees the same payoff regardless of the strategy of Player~1.
Namely, for all $a\in\cA, \ex[b\from B]{g(a, b)} \geq p$.
In many applications in cryptography and complexity theory, (\eg~\cite{Impagliazzo95,ReingoldTTV08,DziembowskiP08,GentryW11,VZ12}), people consider the strategy space $\cA$ to be a set of distributions over $\zo^n$.
Moreover, those applications require not only the existence of a universal mixed strategy, but also a strategy with low complexity (measured in the number of pure strategies of Player~2).

In this section, we generalize the classical non-uniform Min-max theorem in~\cite{VZ12} to the quantum setting.
Specifically, the game we consider has the following structure.
The strategy space $\cA$ is in $\density{\bbC^d}$.
The payoff function $g:\Conv(\cA)\times\cB\to{[0, 1]}$ is restricted to the form $g(a, b) = \ip{a}{f(b)}$ where $f$ is a function maps $\cB$ to a $d$-dimension matrix $M$ with $0 \leq M \leq \I_d$.
Note that if we restrict both player 1's strategy and $M$ to be diagonal matrices, then it replicates the above classical definition with $d = 2^n$.

\begin{theorem}[Quantum Non-uniform Min-Max Theorem]\label{thm:min-max}
  Consider the above quantum zero-sum game.
  Suppose that for every mixed strategies $a\in\Conv(\cA)$ of Player 1, there exists a pure strategy $b \in \cB$ such that $g(a, b) \geq p$.
  Then for every $\eps \in (0, 1/2)$, there exists some mixed strategy $\hat{B}$ of Player 2 such that for every strategy $a\in\cA$ of Player 1, $\ex[b\from\hat{B}]{g(a, b)} \geq p-\eps$.
  Moreover, $\hat{B}$ is the uniform distribution over a multi-set $S$ consists of at most $O\left(\log d/\eps^2\right)$ strategies in $\cB$.
\end{theorem}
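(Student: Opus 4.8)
The plan is to prove the Quantum Non-uniform Min-Max Theorem via the Matrix Multiplicative Weights Update method (Theorem~\ref{thm:mmwu}), mimicking the structure of the classical constructive min-max argument of Vadhan--Zheng~\cite{VZ13} but with weight \emph{matrices} in place of weight vectors. The idea is to run the MMWU procedure for $T$ rounds where Player~1 plays adaptively according to the weight matrices $\rho^{(t)}$, and Player~2 responds to each $\rho^{(t)}$ with a pure strategy that certifies payoff at least $p$. Collecting Player~2's responses will yield the desired small-support mixed strategy $\hat{B}$.

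First I would set up the MMWU instance with dimension $d$ and let Player~1's strategy at round $t$ be the density operator $\rho^{(t)}\in\density{\bbC^d}$ produced by the update algorithm. Since $\rho^{(t)}\in\Conv(\cA)$, the hypothesis guarantees a pure strategy $b^{(t)}\in\cB$ with $g(\rho^{(t)}, b^{(t)}) = \ip{\rho^{(t)}}{f(b^{(t)})} \geq p$. The key step is to choose the loss matrices as $L^{(t)} = f(b^{(t)})$; since $0\leq f(b^{(t)})\leq \I_d$ by the assumption on the payoff form, we have $0\leq L^{(t)}\leq \I_d$, so the boundedness hypothesis of Theorem~\ref{thm:mmwu} holds with $c_1 = 0$, $c_2 = 1$. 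Here the MMWU loss guarantee runs in the convenient direction: Player~1, by playing the weight matrices, keeps its average loss $\frac{1}{T}\sum_t \ip{\rho^{(t)}}{L^{(t)}}$ close (within the regret term) to $\frac{1}{T}\sum_t \ip{\sigma}{L^{(t)}}$ for \emph{every} comparator density operator $\sigma$.

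Next I would extract the universal strategy. Define $\hat{B}$ to be the uniform distribution over the multiset $S = \{b^{(1)},\dots,b^{(T)}\}$. For any fixed Player~1 strategy $a\in\cA$ (equivalently any $\sigma = a\in\density{\bbC^d}$), the MMWU conclusion gives
\[
  \frac{1}{T}\sum_{t=1}^T \ip{\rho^{(t)}}{L^{(t)}}
  \leq \frac{1}{T}\sum_{t=1}^T \ip{a}{L^{(t)}} + (c_1+c_2)\left(\eta + \frac{\ln d}{\eta T}\right).
\]
The left-hand side is at least $p$ because each term satisfies $\ip{\rho^{(t)}}{L^{(t)}} = g(\rho^{(t)}, b^{(t)})\geq p$, while the right-hand side rearranges to $\ex[b\from\hat{B}]{g(a,b)} + (\eta + \ln d/(\eta T))$ (using $c_1+c_2 = 1$ and $\ip{a}{L^{(t)}} = g(a,b^{(t)})$). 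Thus $\ex[b\from\hat{B}]{g(a,b)}\geq p - (\eta + \ln d/(\eta T))$. Optimizing by taking $\eta = \sqrt{\ln d / T}$ makes the error term $O(\sqrt{\log d / T})$, and choosing $T = O(\log d / \eps^2)$ drives it below $\eps$, giving $\ex[b\from\hat{B}]{g(a,b)}\geq p-\eps$ with $|S|\leq T = O(\log d/\eps^2)$ as claimed.

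I expect the main subtlety — more conceptual than technical — to be verifying that the roles of the two players map correctly onto the MMWU framework, namely that it is Player~1's adaptive play (the weight matrices) that we feed into MMWU while Player~2's best responses define the losses, so that the regret bound produces a universal strategy for Player~2 rather than Player~1. One must also confirm that each $\rho^{(t)}$ genuinely lies in $\Conv(\cA)$ so the hypothesis applies at every round; this holds because $\Conv(\cA) = \density{\bbC^d}$ in the setting considered, and the weight matrix normalization $\rho^{(t)} = W^{(t)}/\tr{W^{(t)}}$ always yields a density operator. Everything else is a direct instantiation of Theorem~\ref{thm:mmwu} together with the linear form of the payoff function.
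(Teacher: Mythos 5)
You correctly identified the crucial subtlety, but you resolved it the wrong way, and this is a genuine gap. You assert that each iterate $\rho^{(t)}$ lies in $\Conv(\cA)$ "because $\Conv(\cA) = \density{\bbC^d}$ in the setting considered." That is not the setting of the theorem: the paper only assumes $\cA \subseteq \density{\bbC^d}$, so $\Conv(\cA)$ may be a \emph{strict} subset of the density operators. (Indeed, in the paper's main application, Theorem~\ref{thm:quantum-metric-hill}, $\cA$ is the set of bipartite states with conditional min-entropy at least $k$, whose convex hull is far from all of $\density{\bbC^d}$.) The hypothesis of the theorem only promises a response $b$ with $g(a,b)\geq p$ for $a \in \Conv(\cA)$; it gives no control whatsoever on $\max_{b} g(\rho, b)$ for $\rho \notin \Conv(\cA)$, and such a $b$ may simply not exist. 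Since the MMWU iterates start at $\I_d/d$ (which need not lie in $\Conv(\cA)$) and evolve by $W^{(t+1)} = W^{(t)}\cdot\exp\left(-\eta L^{(t)}\right)$, nothing keeps them inside $\Conv(\cA)$. Consequently the lower bound $\ip{\rho^{(t)}}{L^{(t)}} = g(\rho^{(t)}, b^{(t)}) \geq p$ --- the inequality on which your entire argument rests --- cannot be justified, and a direct black-box application of Theorem~\ref{thm:mmwu} does not go through.

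This is exactly why the paper's proof modifies the MMWU procedure rather than invoking Theorem~\ref{thm:mmwu} as stated. After each update and normalization, the paper replaces the weight matrix by its KL projection onto $\Conv(\cA)$ (Definition~\ref{def:KL-projection}), which guarantees that every iterate $a^{(t)}$ lies in $\Conv(\cA)$ so the hypothesis applies at every round. Two further changes are then forced: the update must take the form $a^{(t+1)'} = \exp\left(\log a^{(t)} - \eta f(b^{(t)})\right)$ rather than the product form (the two differ because matrices do not commute), and the regret bound must be re-proved for the projected iterates, using the generalized Pythagorean inequality $\KL{X}{Y^*} \leq \KL{X}{Y}$ for KL projections (Lemma~\ref{lemma:kl-proj}, applied in Lemma~\ref{lemma:mmwu-kl-proof}). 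Your argument as written proves the theorem only in the special case $\Conv(\cA) = \density{\bbC^d}$; to prove it as stated, and in particular to support the applications to pseudoentropy, you must add the projection step and the accompanying KL-divergence analysis.
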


Notice that we only assume $\Conv(A)$ is contained in $\density{\bbC^d}$ but not equal.
Indeed, for the application in Section~\ref{ssec:comp-min-entropy}, $\Conv(A)$ will be the set of all high entropy quantum states.
As a result, directly applying the MMWU algorithm shown Theorem~\ref{thm:mmwu} by letting $\rho^{(t)} = a^{(t)}$ does not suffice for proving this theorem, because there is no guarantee that the (normalized) weight matrix still belongs $\Conv(A)$ after the update.
To fix that, we have to ``project'' the weight matrix back to the set $\Conv(A)$.

\begin{definition}[KL Divergence of Quantum States]
  Let $\rho$ and $\sigma$ be two density matrices, the KL divergence (relative entropy) between them is defined as
  \[ \KL{\rho}{\sigma} = \tr{\rho(\log \rho - \log \sigma)}.\]
\end{definition}

\begin{definition}[KL Projection]\label{def:KL-projection}
  Let $\sigma$ be quantum states on $\density{\bbC^d}$, which contains a convex set $\cA$.
  $\rho^*$ is a KL projection of $\sigma$ on $\cA$ if
  \[\rho^* = \arg\min_{\rho\in\cA} \KL{\rho}{\sigma}\]
\end{definition}

\begin{proof}
  We consider the following MMWU procedure (modified from ones in~\cite{TRW05,WK12}) to obtain the multi-set $S$.

  \begin{center}
    \begin{pseudocode}[shadowbox]{Procedure}{} \label{pro:mmwu-minmax}
      \begin{enumerate*}
        \item Initially, $W^{(1)} = \I_d$, $a^{(1)} = \frac{1}{d}\I_d$.
        \item For $t = 1, \dots, T$,
        \begin{enumerate*}
          \item Obtain the best strategy of Player~2:
            \[b^{(t)} = \argmax_{b\in\cB} g(a, b) = \argmax_{b\in\cB} \ip{a}{f(b)}.\]
          \item Update the weight matrix: $a^{(t+1)'} = \exp\left(\log a^{(t)} -\eta f(b^{t})\right)$ where $\eta = \sqrt{\ln d/T} < 1$.
          \item Do a normalization to get a density matrix: $a^{(t+1)''}=a^{(t+1)'}/\tr{a^{(t+1)'}}$.
          \item Let $a^{(t+1)}$ be a KL projection of $a^{(t+1)''}$ on $\Conv(\cA)$.
        \end{enumerate*}
        \item Output $S = \{b^{(1), \dots, b^{(T)}}\}$.
      \end{enumerate*}
    \end{pseudocode}
  \end{center}

  We emphasize that the algorithm differs from the one in Theorem~\ref{thm:mmwu} in two places.
  First, as mentioned before, we have the step (d): project the weight matrix back to the convex set.
  Second, in the step (b), instead of multiply the update matrix to $a^{(t)}$, we put $a^{(t)}$ in the exponent.
  Note that they are not equivalent because matrix does not commute in multiplication generally.
  This modification is needed if we want to prove the same property using KL-divergence.

  Considering the above procedure, similar to the Theorem~\ref{thm:mmwu}, we have
  \begin{align*}
    \frac{1}{T}\sum_{t = 1}^{T}\ip{a}{f(b^{(t)})}
    & \geq \frac{1}{T}\sum_{t = 1}^{T}\ip{a^{(i)}}{f(b^{(t)})} - O\left(\sqrt{\log d/T}\right)
  \end{align*}

  Without the projection step, the above inequality was shown
  In~\cite{TRW05,WK12}, they showed the above inequality when the procedure without step (d).
  Even though, when there is the projection step, the proof much follows the ones in~\cite{TRW05,WK12}.
  For the completeness, we prove it as Lemma~\ref{lemma:mmwu-kl-proof}.

  Additional to the inequality, by the fact that in the step 2(b), $b^{(t)}$ is the best strategy against $a^{(t)}$, we have for all $a\in\cA$,
  \begin{align*}
    \frac{1}{T}\sum_{t = 1}^{T}\ip{a}{f(b^{(t)})}
    & \geq \frac{1}{T}\sum_{t = 1}^{T}\ip{a^{(i)}}{f(b^{(t)})} - O\left(\sqrt{\log d/T}\right)\\
    & \geq p - O\left(\sqrt{\log d/T}\right)
  \end{align*}
  Set $T = O(\log d/\eps^2)$, we get
  \[\forall a,~\ex[b\from S]{g(a, b)} = \frac{1}{T}\sum_{t = 1}^{T}\ip{a}{f(b^{(t)})} \geq p - \eps.\]
\end{proof}

\subsection{Computational Min-entropy}\label{ssec:comp-min-entropy}

In the classical setting, a definition of computational entropy was given by H\aa stad \textit{et.~al.}~\cite{HastadILL99}.
It says that a random variable $X$ has HILL pseudo-min-entropy (HILL pseudoentropy for short) at least $k$ if it is indistinguishable from some random variable $Y$ with (true) min-entropy (with $B$ trivial in Definition~\ref{def:min-entropy2}) at least $k$.
Another natural definition of computational entropy is the \emph{Metric pseudo-(min)-entropy} which switches the quantifiers in the definition of HILL pseudoentropy.
That is, $X$ has metric pseudoentropy at least $k$ if, for every efficient distinguisher, there exists a random variable $Y$ with min entropy at least $k$ such that $X$ and $Y$ cannot be distinguished by the distinguisher.
One can also define computational entropies (when $X$ is classical) via guessing probabilities.
Recall that using guessing probability, one can equivalently define the conditional min-entropy (cf. Lemma~\ref{lemma:guess-def-equivalent}).
We can also get a relaxed notion by restricting the complexity of guessing algorithms, and we call it \emph{guessing pseudoentropy}.
Below, we formally define the quantum analogues of those relaxed notions.

\begin{definition}[Conditional (relaxed-)HILL pseudoentropy]
  Let $\rho = \rho_{XB}$ be a bipartite quantum state in $\density{\cX\otimes\cB}$.
  We say $X$ conditioned on $B$ has \emph{$(s, \eps)$-relaxed-HILL pseudoentropy} $\Hhillr_{s, \eps}(X|B)_{\rho} \geq k$ if there exists a bipartite quantum state $\sigma_{XB}$ on $\density{\cX\otimes\cB}$ such that
  (i) $\Hmin(X|B)_{\sigma} \geq k$, and
  (ii) $\rho_{XB}$ and $\sigma_{XB}$ are $(s, \eps)$-indistinguishable.
  In addition, if $\tr[\cX]{\rho_{XB}} = \tr[\cX]{\sigma_{XB}}$, we say $X$ conditioned on $B$ has (regular) \emph{HILL pseudoentropy} $\Hhill_{s, \eps}(X|B)_{\rho} \geq k$.
\end{definition}

As in the definition of conditional relaxed-HILL pseudoentropy~\cite{HsiaoLR07}, we do not require the reduced states $\rho_B$ and $\sigma_B$ being equal in relaxed-HILL pseudoentropy.
In the classical case, the relaxed HILL notion satisfies a chain rule even when a prior knowledge $Z$ is present, while for the regular HILL pseudoentropy, a counterexample exists (under a standard assumption)~\cite{KrennPW13}.
Another remark here is that when the length of $B$ is $O(\log n)$, it is not hard to see that the two definitions are equivalent in the classical case.
However, we do not know whether that is still the case if $B$ is a quantum state of $O(\log n)$ qubits.

\begin{definition}[Conditional (relaxed-)metric pseudoentropy]
  Let $\rho = \rho_{XB}$ be a bipartite quantum state in $\density{\cX\ot\cB}$.
  We say that $X$ conditioned on $B$ has \emph{$(s, \eps)$-relaxed-metric pseudoentropy} $\Hmetricr_{s, \eps}(X|B)_{\rho} \geq k$ if for all size-$s$ quantum distinguisher $D$, there exists a bipartite quantum state $\sigma_{YC}$ on $\density{\cX\ot\cB}$ such that
  (i) $\Hmin(X|B)_\sigma \geq k$ and
  (ii) $|\ex{D(\rho_{XB})} - \ex{D(\sigma_{XB})}| < \eps$.
  In addition, if $\tr[\cX]{\rho_{XB}} = \tr[\cX]{\sigma_{XB}}$, we say $X$ conditioned on $B$ has (regular) \emph{metric pseudoentropy} $\Hmetric_{s, \eps}(X|B)_{\rho} \geq k$.
\end{definition}

\begin{definition}[Guessing pseudoentropy]
  Let $\rho_{XB} = \sum_{x\in\bin^n}p_x\ketbra{x}\otimes\rho^x_B$ be a cq-state.
  We say that $X$ conditioned on $B$ has \emph{$(s, \eps)$-quantum guessing pseudoentropy} $\Hguess_{s, \eps}(X|B)_\rho \geq k$ if for every quantum circuit $A$ of size~$s$, the probability of guessing X correctly given B by the circuit $A$ is $\Pguess_A(X|B) \leq 2^{-k}+\eps$.
\end{definition}

\subsection{Connections between computational notions}\label{subsec:connection}

\paragraph{HILL pseudoentropy v.s. metric pseudoentropy}
In the classical case, it is known that the HILL and metric entropies are interchangeable~\cite{BarakSW03} up to some degradation in the size of distinguishers.
With the equivalence, metric pseudoentropy is a useful intermediate notion to obtain tighter security proof in a number of cases~(\eg~\cite{DziembowskiP08,fuller2015unified}).
Here we will show the analogue transformation in the quantum case.

\begin{theorem}[(relaxed-)HILL $\Leftrightarrow$ (relaxed-)metric]\label{thm:quantum-metric-hill}
  Let $\rho_{XB}$ be a bipartite quantum system in $\density{\cX\ot\cB}$ where $\dim(\cX) = N$ and $\dim(\cB) = L$.
  If $\Hmetric_{s, \eps}(X|B)_{\rho} \geq k$ (resp., $\Hmetricr_{s,\eps}(X|B)_{\rho} \geq k$), then $\Hhill_{s',\eps'}(X|B) \geq k$ (resp., $\Hhillr_{s',\eps'}(X|B) \geq k$), where $\eps' = \eps+\delta$ and $s = s'\cdot O\left((\log N + \log L)/\delta^2\right)$ for any $\delta>0$.
\end{theorem}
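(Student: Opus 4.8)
The plan is to prove the nontrivial direction (metric $\Rightarrow$ HILL) by contraposition, reducing it to a quantifier switch handled by the Quantum Non-uniform Min-Max Theorem (Theorem~\ref{thm:min-max}); the reverse direction HILL $\Rightarrow$ metric is immediate, since a single $\sigma$ that fools all size-$s$ distinguishers in particular fools each one separately. So I would suppose that $X$ conditioned on $B$ does \emph{not} have $(s',\eps')$-(relaxed-)HILL pseudoentropy $\geq k$. Unfolding the definition, for every state $\sigma_{XB}$ with $\Hmin(X|B)_\sigma\geq k$ (and, in the non-relaxed case, with $\sigma_B=\rho_B$) there is a size-$s'$ distinguisher $D$ with $|\ex{D(\rho)}-\ex{D(\sigma)}|>\eps'$; replacing $D$ by its complement $\I-\Pi_D$ when needed, I may assume $\ex{D(\rho)}-\ex{D(\sigma)}\geq\eps'$.

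First I would cast this into the min-max framework with $d=NL$. Let Player~1's strategy set $\cA$ be the set of admissible high-entropy states $\sigma_{XB}\in\density{\cX\otimes\cB}$, and let Player~2's strategy set $\cB$ be all size-$s'$ distinguishers together with their complements. Writing $\Pi_D$ for the BPOVM of $D$, I define the payoff $g(\sigma,D)=\tfrac12\left(1+\ex{D(\rho)}-\ex{D(\sigma)}\right)$, which lands in $[0,1]$ and has the required bilinear form $g(\sigma,D)=\ip{\sigma}{f(D)}$ with $f(D)=\tfrac12\bigl(1+\ip{\Pi_D}{\rho}\bigr)\I-\tfrac12\Pi_D$; a one-line eigenvalue check (using $0\leq\Pi_D\leq\I$ and $\ip{\Pi_D}{\rho}\in[0,1]$) shows $0\leq f(D)\leq\I$. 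The failure of HILL then says exactly that for every \emph{pure} Player~1 strategy $\sigma\in\cA$ there is a Player~2 response with $g(\sigma,D)\geq p:=\tfrac12+\tfrac12\eps'$.

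The key structural step I would verify is that $\cA$ is convex, so that $\Conv(\cA)=\cA$ and the min-max hypothesis (phrased for all mixed strategies in $\Conv(\cA)$) follows for free from the per-state guarantee above. This uses the characterization, coming from Definition~\ref{def:min-entropy2}, that $\Hmin(X|B)_\sigma\geq k$ iff there is $\sigma_B\in\density{\cB}$ with $\sigma_{XB}\leq 2^{-k}\,\I_X\otimes\sigma_B$: if $\sigma^{1},\sigma^{2}$ satisfy this with witnesses $\sigma_B^{1},\sigma_B^{2}$, then any convex combination satisfies it with the corresponding combination of witnesses. In the non-relaxed case the extra constraint $\sigma_B=\rho_B$ is affine, so the admissible set remains convex. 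Applying Theorem~\ref{thm:min-max} with error $\delta/2$ then yields a distribution $\hat D$ supported on a multiset of $T=O(\log d/\delta^2)=O((\log N+\log L)/\delta^2)$ distinguishers with $\ex[D\from\hat D]{g(\sigma,D)}\geq p-\delta/2$ for every $\sigma\in\cA$, i.e.\ $\ex[D\from\hat D]{\ex{D(\rho)}}-\ex[D\from\hat D]{\ex{D(\sigma)}}\geq\eps'-\delta=\eps$.

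Finally I would collapse $\hat D$ into a single distinguisher $D^\ast$: the circuit that samples an index $t$ uniformly (via $O(\log T)$ ancilla qubits) and runs the corresponding $D^{(t)}$ has acceptance probability exactly $\ex[D\from\hat D]{\ex{D(\cdot)}}$ on any state, so $\ex{D^\ast(\rho)}-\ex{D^\ast(\sigma)}\geq\eps$ simultaneously for \emph{all} high-entropy $\sigma$, while its size is $O(T s')=s'\cdot O((\log N+\log L)/\delta^2)=s$. This is precisely the statement that $X$ conditioned on $B$ lacks $(s,\eps)$-(relaxed-)metric pseudoentropy $\geq k$, completing the contrapositive. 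The main obstacle is the quantifier switch itself: unlike the classical proof of~\cite{BarakSW03}, we cannot union-bound over an $\eps$-net of pure states (which is doubly exponentially large), so all of the work is pushed into the MMWU-based quantum min-max theorem. The two facts that make it go through are the convexity of the conditional-min-entropy set and the reduction of the distinguishing advantage to the bilinear payoff $\ip{\sigma}{f(D)}$; and, in contrast to the classical argument, no Chernoff/sampling step is needed, because averaging quantum distinguishers averages their acceptance probabilities exactly.
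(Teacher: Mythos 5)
Your proposal is correct and follows essentially the same route as the paper's proof: contraposition, the identical zero-sum game with bilinear payoff $g(\sigma,D)=\ip{\sigma}{f(D)}$ where $f(D)=\tfrac{1}{2}\left(\ex{D(\rho_{XB})}\I - \Pi_D + \I\right)$, and an application of the Quantum Non-uniform Min-Max Theorem (Theorem~\ref{thm:min-max}) with $d=NL$ and error $\delta/2$. The points you spell out --- the explicit convexity check for the set $\{\sigma:\Hmin(X|B)_\sigma\geq k\}$ via the witness characterization $\sigma_{XB}\leq 2^{-k}\,\I_X\otimes\sigma_B$, the complement trick to one-side the distinguishing advantage, and the collapse of the mixed strategy into a single size-$O(Ts')$ circuit --- are all details the paper leaves implicit rather than a different argument.
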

\begin{proof}
  For the sake of contradiction, let $\Hhill_{s', \eps'}(X|B)_{\rho} < k$.
  Then for all $\sigma_{XB}\in\density{\cX\ot\cB}$ with $\Hmin(X|B)_{\sigma} \geq k$ and $\rho_{B} = \sigma_{B}$, there exists a quantum distinguisher $D:\density{\cX\ot\cB}\to{\zo}$ of size $s'$ such that
  \[\ex{D(\rho_{XB})} - \ex{D(\sigma_{XB})} > \eps'.\]
  We define the following zero-sum game:
  \begin{itemize}
    \item The strategy space of Player~1 is $\cA = \{\sigma_{XB}\in\density{\cX\ot\cB}:\Hmin(X|B)_{\sigma}\geq k\}$.
    \item The strategy space of Player~2 $\cB$ is a set of all distinguishers $D:\density{\cX\ot\cB}\to{\zo}$ of size $s'$.
    \item For the payoff function $g:\cA\times\cB\to{[0, 1]}$, we first define the auxiliary mapping $f$.
      For an input distinguisher $D\in\cB$, let $\Pi$ be its BPOVM, we let
      \[f(D) = \frac{1}{2}\left(\ex{D(\rho_{XB})}\cdot\I_{\dim(N+L)} - \Pi + \I_d\right).\]
      Then
      \[g(\sigma_{XB}, D) = \ip{\sigma_{XB}}{f(D)} = \left(\ex{D(\rho_{XB})} - \ex{D(\sigma_{XB})}+1\right)/2.\]
  \end{itemize}
  Note that the strategy space $\cA$ is convex, and by the assumption, for all $\sigma_{XB}\in\cA=\Conv(\cA)$, there exists a distinguisher $D\in\cB$ such that $g(\sigma_{XB}, D) > (1+\eps')/2$.
  By Theorem~\ref{thm:min-max}, there exists a circuit $\tilde{D}$ of size $s'\cdot O\left((\log N+\log L)/\delta^2\right)$ such that for all $\sigma_{XB}$ with $\Hmin(X|B)_{\sigma}\geq k$
  \begin{align*}
    \left(\ex{\tilde{D}(\rho_{XB})} - \ex{\tilde{D}(\sigma_{XB})}+1\right)/2  & > (1+\eps')/2-\delta/2\\
    \ex{\tilde{D}(\rho_{XB})} - \ex{\tilde{D}(\sigma_{XB})} & > \delta
  \end{align*}

  which contradict the assumption $\Hmetric_{s, \eps}(X|B)_{\rho} \geq k$.
\end{proof}

\begin{remark}
  In the above discussion, we define the computational entropies and state the theorems only respect to quantum distinguishers with classical advice.
  One can also consider HILL/metric pseudoentropy respect to quantum distinguishers with quantum advice.
  The transformation between metric and HILL pseudoentropy still works in this model, since in the proof, we view distinguishers as BPOVM matrices without utilizing the fact that the advice is classical.
\end{remark}

\paragraph{Guessing pseudoentropy v.s. HILL pseudoentropy}

Vadhan and Zheng showed that in the classical case, the HILL pseudoentropy and the guessing pseudoentropy are equivalent when $n$ is logarithmic in the security parameter~\cite{VZ12}.
In fact when $n=1$, the equivalence between the HILL pseudoentropy and the guessing pseudoentropy implies Impagliazzo's Hardcore Theorem \cite{Impagliazzo95} and \textit{vice versa}.
However, in the quantum case, we do not know whether these two definitions are equivalent.
All the proofs suffer the same barrier, and we will mention it in Section~\ref{sec:barrier}.
Briefly speaking, a proof cannot be extended to the quantum case if it relies on estimating the acceptance probability of a given quantum state.
Therefore, connections between guessing pseudoentropy and other pseudoentropy notions remain as interesting open problems.

\section{Quantum Relative Min-Entropy}\label{sec:relative-min-entropy}

In this section, we consider relative min-entropy in the quantum setting.
By definition, it can be seen as a generalization of min-entropy.
That is, the relative min-entropy between a quantum state $\rho\in\density{\cC^{2^n}}$ and a $2^n$-dimension maximally mixed state is exactly $n$ minus the min-entropy of $\rho$.
As for min-entropy notions, we can also consider the computational relaxations of relative min-entropy.
Since relative min-entropy defines a ``distance'' between two states, there are more possible ways to define its relaxed notions.
Classically, some relations between different relaxations are described by the Dense Model Theorem~\cite{ReingoldTTV08}.
In Section~\ref{ssec:classical-dmt}, we review the theorem and consider a variation to show more connections among the various notions.
For the quantum case, we show in Section~\ref{ssec:quantum-dmt} that some notions are not equivalent, which can be interpreted as that a ``Quantum Dense Model Theorem'' does not hold.

\subsection{Definition}
Following the HILL-type generalization, there are already two ways to generalize the relative min-entropy (Definition~\ref{defn:relative-min-entropy-psd}) to computational notions.
First, we can say if $\rho$ has computational relative min-entropy with respect to $\sigma$, then there exists $\rho'$ that is indistinguishable from $\rho$, but has bounded relative min-entropy respective to $\sigma$ entropy. Or, we can have $\sigma'$ that is indistinguishable from $\sigma$ as the bridge.

\begin{definition}[HILL-1 relative min-entropy]\label{def:hill-rho}
  Let $\rho$ and $\sigma$ be density operators of the same system. $\Dhillone_{s, \eps}(\rho\|\sigma) \leq \lambda$ if there exists $\rho'$ that is $(s, \eps)$-indistinguishable from $\rho$ and $D_{\infty}(\rho'\|\sigma) \leq \lambda$.
\end{definition}
\begin{definition}[HILL-2 relative min-entropy]\label{def:hill-sigma}
  Let $\rho$ and $\sigma$ be density operators of the same system. $\Dhilltwo_{s, \eps}(\rho\|\sigma) \leq \lambda$ if there exists $\sigma'$ that is $(s, \eps)$-indistinguishable from $\rho$ and $D_{\infty}(\rho\|\sigma') \leq \lambda$.
\end{definition}

By switching the quantifiers, we can also have two metric-type generalizations.

\begin{definition}[metric-1 relative min-entropy]\label{def:metric-rho}
  Let $\rho$ and $\sigma$ be density operators of the same system. $\Dmetricone_{s, \eps}(\rho\|\sigma) \leq \lambda$ if for all $s$-size quantum distinguisher $A$, there exists $\rho'$ such that (i) $D_{\infty}(\rho'\|\sigma) \leq \lambda$. (ii) $\ex{A(\rho)} - \ex{A(\rho')} < \eps$.
\end{definition}
\begin{definition}[metric-2 relative min-entropy]\label{def:metric-sigma}
  Let $\rho$ and $\sigma$ be density operators of the same system. $\Dmetrictwo_{s, \eps}(\rho\|\sigma) \leq \lambda$ if for all $s$-size quantum distinguisher $A$, there exists $\sigma'$ such that (i) $D_{\infty}(\rho\|\sigma') \leq \lambda$. (ii) $\ex{A(\sigma)} - \ex{A(\sigma')} < \eps$.
\end{definition}

Another approach is to follow the ``guessing-type'' generalization.
As in min-entropy, there is an equivalent way to define relative min-entropy, using unbounded circuits.
Naturally, we can also relax the definition by restricting the size of the circuits.

\begin{definition}[Pseudo relative min-entropy]\label{def:pseudo-relative-min}
  Let $\rho$ and $\sigma$ be density operators of the same system.
  Then $\Dpseudo_{s, \eps}(\rho\|\sigma) \leq \lambda$ if for all $s$-size quantum distinguisher $A$, we have $\ex{A(\rho)} \leq 2^{\lambda}\ex{A(\sigma)} + \eps$.
\end{definition}

By the definitions, we immediately have the following relations.
\begin{proposition}
  Let $\sigma_{\rm mix}$ be the maximally mixed state in $\density{\cC^M}$
  For any $k, s\in\bbN$, $\eps, \lambda > 0$ and quantum states $\rho, \sigma\in \density{\cC^M}$, we have
  \begin{enumerate}
    \item $\Dhilltwo_{s, \eps}(\rho\| \sigma_{\rm mix}) \leq \log M - k$ if and only if $\Hhill_{s, \eps}(\rho) \geq k$.
    \item $\Dmetrictwo_{s, \eps}(\rho\| \sigma_{\rm mix}) \leq \log M - k$ if and only if $\Hmetric_{s, \eps}(\rho) \geq k$.
    \item If $\Dhillone_{s, \eps}(\rho\| \sigma) \leq \lambda$, then $\Dmetricone_{s, \eps}(\rho\| \sigma) \leq \lambda$ and $\Dpseudo_{s, \eps}(\rho\| \sigma) \leq \lambda$.
    \item If $\Dhilltwo_{s, \eps}(\rho\| \sigma) \leq \lambda$, then $\Dmetrictwo_{s, \eps}(\rho\| \sigma) \leq \lambda$ and $\Dpseudo_{s, \eps}(\rho\| \sigma) \leq \lambda$.
  \end{enumerate}
\end{proposition}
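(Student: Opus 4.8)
The starting point is the identity, immediate from Definition~\ref{def:min-entropy2} with trivial conditioning: for any state $\tau$ of dimension $M$, $\Hmin(\tau)\ge k \iff D_{\infty}(\tau\|\sigma_{\rm mix})\le\log M-k \iff \tau\le2^{-k}\I$. Applying this to the \emph{witness} state shows that the first-argument relaxation coincides verbatim with the min-entropy notion: $\Dhillone_{s,\eps}(\rho\|\sigma_{\rm mix})\le\log M-k$ unfolds to ``$\exists\,\rho'\approx_{s,\eps}\rho$ with $\rho'\le2^{-k}\I$,'' i.e.\ $\Hhill_{s,\eps}(\rho)\ge k$, and $\Dmetricone\!\leftrightarrow\!\Hmetric$ the same way (the one-sided versus two-sided gap being closed by passing to the complementary BPOVM $\I-\Pi$). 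So the genuine content of items~1--2 \emph{as worded} is that, for the maximally mixed reference, the second-argument relaxations $\Dhilltwo,\Dmetrictwo$ collapse onto the first-argument ones---a Dense-Model-Theorem-type equivalence. Throughout we may assume $0\le k\le\log M$, else the statements are vacuous.

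I would first verify the ``$\Leftarrow$'' (reverse Dense-Model) directions, which hold cleanly. Given a HILL witness $\rho'\le2^{-k}\I$ with $\rho'\approx_{s,\eps}\rho$, put $\sigma':=\sigma_{\rm mix}+2^{k-\log M}(\rho-\rho')$. Since $2^{k-\log M}\le1$, this equals the convex combination $2^{k-\log M}\rho+(1-2^{k-\log M})\tau$ with $\tau:=(\sigma_{\rm mix}-2^{k-\log M}\rho')/(1-2^{k-\log M})\ge0$ (positivity is exactly $\sigma_{\rm mix}\ge2^{k-\log M}\rho'$, equivalent to $\rho'\le2^{-k}\I$), so $\sigma'$ is a density operator with $\rho\le2^{\log M-k}\sigma'$; and $\sigma'-\sigma_{\rm mix}=2^{k-\log M}(\rho-\rho')$ makes every distinguisher's advantage against $(\sigma',\sigma_{\rm mix})$ a factor $2^{k-\log M}\le1$ of its advantage against $(\rho,\rho')$, so $\sigma'\approx_{s,\eps}\sigma_{\rm mix}$ and $\Dhilltwo_{s,\eps}(\rho\|\sigma_{\rm mix})\le\log M-k$. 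The metric case is identical with a per-distinguisher witness.

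The ``$\Rightarrow$'' directions are the crux, and they are where the second-argument formulation breaks down. One would like to prove them as a Quantum Dense Model Theorem via the metric form, extracting a single violating BPOVM and spectrally thresholding it, then lifting through Theorem~\ref{thm:quantum-metric-hill}. But the constraint $\rho\le2^{\log M-k}\sigma'$ only bounds $\sigma'$ from \emph{below} by $2^{k-\log M}\rho$, leaving $1-2^{k-\log M}$ of its mass free to sit on the top eigenvector of any test, so $\max_{\sigma'}\tr{\Pi\sigma'}=2^{k-\log M}\tr{\Pi\rho}+(1-2^{k-\log M})\opnorm{\Pi}$; this decouples $\Dmetrictwo$ from a genuine min-entropy bound. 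Concretely, for $M=4$, $k=1$, $\eps=0.15$ and $\rho=\mathrm{diag}(0.7,0.1,0.1,0.1)$, the worst test for $\Dmetrictwo$ is $\Pi=\I-\ketbra{0}$, giving gap $\frac14\tr{\Pi}-\frac12\tr{\Pi\rho}-\frac12\opnorm{\Pi}=0.1<\eps$, so $\Dmetrictwo_{s,\eps}(\rho\|\sigma_{\rm mix})\le1$; yet the test $\ketbra{0}$ has acceptance $0.7>\frac12+\eps$ on $\rho$, so $\Hmetric_{s,\eps}(\rho)<1$. The same $\rho$ gives $\Dhilltwo_{s,\eps}(\rho\|\sigma_{\rm mix})\le1$ (the nearest admissible $\sigma'$ is trace-distance $0.1$ from $\sigma_{\rm mix}$) while $\Hhill_{s,\eps}(\rho)<1$ (the nearest $\rho'\le\frac12\I$ is trace-distance $0.2$ from $\rho$). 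Hence the forward implication---and so the \emph{iff} of items~1--2---holds for the first-argument notions $\Dhillone,\Dmetricone$, not the literal second-argument ones; this is the Dense-Model obstruction underlying the paper's separation, now seen to persist even for the full-rank reference once a constant slack $\eps$ is present. I would therefore state items~1--2 with $\Dhillone,\Dmetricone$, where both directions are immediate.

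Items~3--4 follow from the operator inequality together with indistinguishability. From $\Dhillone_{s,\eps}(\rho\|\sigma)\le\lambda$ via $\rho'\approx_{s,\eps}\rho$, $\rho'\le2^{\lambda}\sigma$, the same $\rho'$ witnesses $\Dmetricone$ for every test, and for any BPOVM $\Pi$, $\tr{\Pi\rho}\le\tr{\Pi\rho'}+\eps\le2^{\lambda}\tr{\Pi\sigma}+\eps$, so $\Dpseudo_{s,\eps}(\rho\|\sigma)\le\lambda$ with clean error. For item~4, a $\Dhilltwo$ witness $\sigma'\approx_{s,\eps}\sigma$, $\rho\le2^{\lambda}\sigma'$, again serves every test, giving $\Dmetrictwo_{s,\eps}(\rho\|\sigma)\le\lambda$, and $\tr{\Pi\rho}\le2^{\lambda}\tr{\Pi\sigma'}\le2^{\lambda}\tr{\Pi\sigma}+2^{\lambda}\eps$. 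The factor $2^{\lambda}$ on the error is intrinsic and cannot be removed: with $\sigma=\mathrm{diag}(\frac12+\eps,\frac12-\eps)$, $\sigma'=\sigma_{\rm mix}$, $\rho=\ketbra{1}$ and $\lambda=1$ one has $\Dhilltwo_{s,\eps}(\rho\|\sigma)\le1$, yet $\ex{A(\rho)}=1>2\ex{A(\sigma)}+\eps=1-\eps$ for $A=\ketbra{1}$, so $\Dpseudo_{s,\eps}(\rho\|\sigma)>1$. Thus item~4's pseudo-bound is correctly $\Dpseudo_{s,2^{\lambda}\eps}(\rho\|\sigma)\le\lambda$ (equivalently, demand the witness $(s,2^{-\lambda}\eps)$-indistinguishable from $\sigma$); the stated $\Dpseudo_{s,\eps}$ holds only after this rescaling, since single-copy $(s,\eps)$-indistinguishability of $\sigma,\sigma'$ cannot be sharpened to resolution $2^{-\lambda}\eps$.
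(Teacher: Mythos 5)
Your proposal is correct, and it is substantially more careful than the paper itself, whose entire ``proof'' of this proposition is the single preceding sentence ``By the definitions, we immediately have the following relations.'' That immediacy claim is accurate precisely for the reading you isolate: since $2^{\log M-k}\sigma_{\rm mix}=2^{-k}\I$, unfolding Definitions~\ref{def:hill-rho} and~\ref{def:metric-rho} pairs $\Hhill$ and $\Hmetric$ verbatim with the \emph{first}-argument notions $\Dhillone,\Dmetricone$ (your complementary-BPOVM remark for the one-sided metric condition additionally needs convexity of $\{\tau\le 2^{-k}\I\}$, an intermediate-value interpolation between the two per-test witnesses, and an $O(1)$ circuit-size slack for implementing $\I-\Pi$, but this is cosmetic). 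The printed items~1--2 with $\Dhilltwo,\Dmetrictwo$ are instead a dense-model-type equivalence: your direction $\Hhill\Rightarrow\Dhilltwo$ via $\sigma'=\sigma_{\rm mix}+2^{k-\log M}(\rho-\rho')$ is exactly the quantum instance of the paper's own Lemma~\ref{lemma:reverse-dmt}, while the converse genuinely costs parameters (classically, Theorem~\ref{thm:classical-dmt} degrades $\eps$ by a factor of order $2^{\log M-k}$), and your counterexample checks out: for $\rho=\mathrm{diag}(0.7,0.1,0.1,0.1)$, $M=4$, $k=1$, $\eps=0.15$, the nearest $\sigma'\ge\rho/2$ sits at trace distance $0.1$ from $\sigma_{\rm mix}$ and the metric-2 gap is likewise at most $0.1$, whereas the constant-size test $\ketbra{0}$ has advantage at least $0.2$ against every $\rho'\le\tfrac12\I$, so both forward implications fail at identical $(s,\eps)$ even for commuting (classical) states; your restatement of items~1--2 with $\Dhillone,\Dmetricone$ is the evidently intended, correct one. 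Your treatment of items~3--4 is also right: item~3 and the metric half of item~4 follow by reusing the single HILL witness for every test, whereas the pseudo half of item~4 transports the error through $\sigma'\approx_{s,\eps}\sigma$ only as $\ex{A(\rho)}\le 2^{\lambda}\ex{A(\sigma)}+2^{\lambda}\eps$, and your two-dimensional example $\sigma=\mathrm{diag}(\tfrac12+\eps,\tfrac12-\eps)$, $\sigma'=\sigma_{\rm mix}$, $\rho=\ketbra{1}$ correctly certifies that the $2^{\lambda}$ blow-up cannot be removed, so the conclusion in item~4 should read $\Dpseudo_{s,2^{\lambda}\eps}(\rho\|\sigma)\le\lambda$ rather than $\Dpseudo_{s,\eps}(\rho\|\sigma)\le\lambda$.
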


Since we can switch the quantifiers using the quantum min-max theorem (Theorem~\ref{thm:min-max}), the HILL-type and metric-type relative min-entropies are also interchangeable up to some parameter loss.

\begin{theorem}
  Let $\sigma$ and $\rho$ be quantum states in $\density{\cH}$ where $\dim(\cH) = N$.
  If $\Dmetricone_{s, \eps}(\rho\| \sigma)\leq \lda$ (resp., $\Dmetrictwo_{s, \eps}(\rho\| \sigma)\leq \lda$), then $\Dhillone_{s', \eps'}(\rho\| \sigma)\leq \lda$ (resp., $\Dhilltwo_{s', \eps'}(\rho\| \sigma)\leq \lda$, where $\eps' = 2\eps$ and $s = s'\cdot O(\log N/\eps^2)$.
\end{theorem}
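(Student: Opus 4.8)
The plan is to follow the proof of Theorem~\ref{thm:quantum-metric-hill} almost verbatim, using the Quantum Non-uniform Min-Max Theorem (Theorem~\ref{thm:min-max}) to switch the quantifiers ``for all distinguishers $A$, there exists $\rho'$'' into ``there exists $\rho'$, for all distinguishers''. I describe the metric-1 $\Rightarrow$ HILL-1 direction; the metric-2 $\Rightarrow$ HILL-2 case is symmetric, with the Player~1 strategy set built from $\{\sigma' : D_{\infty}(\rho\|\sigma') \leq \lambda\}$ in place of the one below and with $\sigma$ playing the role of $\rho$.

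First I would argue by contradiction, assuming $\Dmetricone_{s, \eps}(\rho\|\sigma) \leq \lambda$ but $\Dhillone_{s', \eps'}(\rho\|\sigma) > \lambda$ with $\eps' = 2\eps$. Set $\cA = \{\rho' \in \density{\cH} : D_{\infty}(\rho'\|\sigma) \leq \lambda\}$, which by Definition~\ref{defn:relative-min-entropy-psd} equals $\{\rho' \in \density{\cH} : \rho' \leq 2^{\lambda}\sigma\}$. Being the intersection of $\density{\cH}$ with a half-space in the positive-semidefinite order, $\cA$ is convex and contained in $\density{\bbC^N}$, so it is an admissible Player~1 strategy space for Theorem~\ref{thm:min-max} (and, crucially, this is what lets us avoid the doubly-exponential $\eps$-net blow-up discussed in the introduction). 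The negation of the HILL-1 bound says that for every $\rho' \in \cA$ there is a size-$s'$ distinguisher $D$ with $|\ex{D(\rho)} - \ex{D(\rho')}| > \eps'$; replacing $D$ by its complement (which costs only $O(1)$ extra gates) when the sign is wrong, I may take $\ex{D(\rho)} - \ex{D(\rho')} > \eps'$ for every $\rho' \in \cA$.

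Next I would set up the zero-sum game exactly as in Theorem~\ref{thm:quantum-metric-hill}: Player~1 plays $\cA$, Player~2 plays the set $\cB$ of size-$s'$ distinguishers, and for $D \in \cB$ with BPOVM $\Pi$ I define $f(D) = \frac{1}{2}\left((\ex{D(\rho)} + 1)\I_N - \Pi\right)$, so that $0 \leq f(D) \leq \I_N$ and $g(\rho', D) = \ip{\rho'}{f(D)} = (\ex{D(\rho)} - \ex{D(\rho')} + 1)/2 \in [0,1]$. By the previous step, every $\rho' \in \Conv(\cA) = \cA$ has a response with payoff exceeding $(1 + \eps')/2$, so Theorem~\ref{thm:min-max} with accuracy parameter $\eps/2$ yields a multiset $S$ of at most $O(\log N/\eps^2)$ size-$s'$ distinguishers whose uniform mixture $\hat D$ satisfies $\ex[D\from\hat D]{g(\rho', D)} \geq (1 + \eps')/2 - \eps/2$ for all $\rho' \in \cA$. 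Letting $\tilde D$ be the single distinguisher that runs a uniformly random member of $S$ (its effective BPOVM is $\frac{1}{|S|}\sum_{D\in S}\Pi_D$, and its size is $s' \cdot O(\log N/\eps^2) = s$) and using $\eps' = 2\eps$, this rearranges to $\ex{\tilde D(\rho)} - \ex{\tilde D(\rho')} \geq \eps$ for every $\rho' \in \cA$. This contradicts $\Dmetricone_{s, \eps}(\rho\|\sigma) \leq \lambda$ applied to $\tilde D$, which would instead produce some $\rho' \in \cA$ with $\ex{\tilde D(\rho)} - \ex{\tilde D(\rho')} < \eps$.

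Since Theorem~\ref{thm:min-max} does the heavy lifting, the argument is essentially bookkeeping and there is no serious obstacle. The points that need care are (i) checking that the constraint set $\cA$ (and its metric-2 analogue) is convex and sits inside $\density{\bbC^N}$, which is precisely what keeps the size blow-up polynomial rather than net-driven; (ii) the complementation step that converts the two-sided indistinguishability failure into the one-sided advantage demanded by the metric definition; and (iii) the constant factors, namely taking $\eps' = 2\eps$ and running the min-max at accuracy $\eps/2$ so that the min-max slack is absorbed and a clean advantage of $\eps$ survives.
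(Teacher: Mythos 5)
Your proof is correct and takes essentially the same route as the paper's: a proof by contradiction that turns the failure of the HILL-type bound into a game over the convex strategy set $\{\rho' \in \density{\cH} : D_{\infty}(\rho'\|\sigma) \leq \lambda\}$, applies the Quantum Non-uniform Min-Max Theorem (Theorem~\ref{thm:min-max}), and assembles the resulting multiset of size-$s'$ distinguishers into a single universal distinguisher of size $s' \cdot O(\log N/\eps^2)$ that contradicts the metric-type bound. If anything, your write-up is more careful than the paper's own: you normalize $f(D)$ so that $0 \leq f(D) \leq \I_N$ as Theorem~\ref{thm:min-max} requires, and you explicitly handle the conversion of the two-sided indistinguishability failure into the one-sided advantage, points the paper's proof glosses over (its displayed strategy space and payoff even contain sign and variable typos that your version silently corrects).
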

\begin{proof}
  Suppose for contradiction that $\Dhillone_{s', \eps'}(\rho\| \sigma) > \lda$, then for all $\rho'$ with $D(\rho'\|\sigma) > \lda$, there exists an $s'$-size distinguisher $A$ such that $\ex{A(\rho)} - \ex{A(\rho')} > \eps'$.
  We consider the following zero-sum game:
  \begin{enumerate}
    \item The strategy space of Player~1 is $\{\rho'\in\density{\cH}:D(\rho'\|\sigma) > \lda\}$.
    \item The strategy space of Player~2 is a set of all $s'$-size distinguisher $A:\density{\cH}\to\zo$.
    \item Let the BPOVM of a distinguisher $A$ be $\Pi$. Define the auxiliary mapping $f$, which maps $A$ to $\ex{A(\rho)}\I_{\dim(\cH)} - \Pi$.
      Then the payoff for the player strategies $\rho'$ and $A$ is
      \[g(\rho', A) = \ip{\rho'}{f(A)} = \ex{A(\rho)}-\ex{A(\sigma)}.\]
  \end{enumerate}
  By the nonuniform Quantum Min-max Theorem (Theorem~\ref{thm:min-max}), there exists a universal distinguisher of size $s = s'\cdot (\log(N)/\eps^2)$ such that for all $\rho'$ with $D(\rho'\|\sigma) > \lda$,
  \[\ex{A(\rho)}-\ex{A(\sigma)} > \eps' - \eps = \eps.\]
  By the definition of metric relative entropy, we get $\Dmetricone_{s, \eps}(\rho, \sigma) > \lda$, which yields a contradiction.
  Similarly, we can have the same argument for type-2 notions.
  Note that, the strategy space of Player~1 is convex in either case, so Theorem~\ref{thm:min-max} can be applied.
\end{proof}

In the rest of the section, we will only focus on the HILL-type and guessing notions.

\subsection{Classical Dense Model Theorem}\label{ssec:classical-dmt}
In the classical case, the relation between HILL-1, HILL-2 and pseudo relative min-entropies are partly captured by Dense Model Theorem.
Specifically, the Dense Model Theorem~\cite{ReingoldTTV08} indicates that HILL-2 relative min-entropy implies HILL-1 relative min-entropy.
Also the \emph{strong Dense Model Theorem}\footnote{Strong Dense Model Theorem refers the theorem as Theorem~\ref{thm:classical-dmt}. But instead of having $X$ such that $Z$ dense in $X$ and $Y$ computationally close to $X$, we only need the ``pseudo-dense'' condition between $Z$ and $Y$.} says that pseudo relative min-entropy implies HILL-1 relative min-entropy~\cite{MironovPRV09}.
Here we additionally show that HILL-1 relative min-entropy also implies HILL-2 relative min-entropy.
Therefore, all those three notions are equivalent in the classical setting. (See Figure~\ref{fig:classical-dmt} for their relationships)

\begin{figure}
  \begin{center}
    \centerline { \xymatrix {
      *+[F]{\mbox{HILL-2}}
        \ar[rrr]^{\mbox{\small By definition}}
        \ar@(u,u)[rrrrrrrr]^{\mbox{\small Dense Model Theorem}}
      & & & *+[F]{\mbox{pseudo}}
        \ar[rrrrr]^{\mbox{\small Strong Dense Model Theorem}}
      & & & & & *+[F]{\mbox{HILL-1}}
        \ar@(d, d)[llllllll]^{\mbox{\small Lemma~\ref{lemma:reverse-dmt}}}
    } }
  \end{center}
  \caption{Relationships between computational relative min-entropies in the classical setting}
  \label{fig:classical-dmt}
\end{figure}
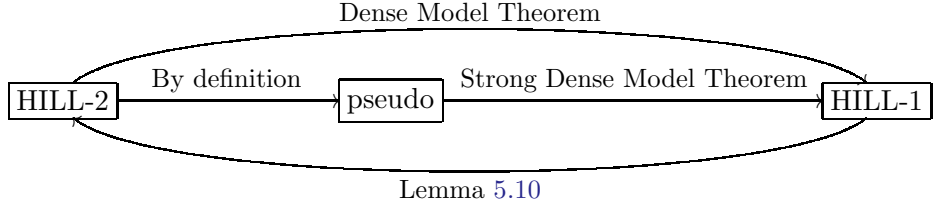

Recall the definition of ``density''.
Suppose $X, Y$ are distributions over $\bin^n$.
We say $X$ is $\delta$-dense in $Y$ if
\[\forall x\in\bin^n,~\Pr[X = x] \leq \frac{1}{\delta}\Pr[Y = x].\]
Then $X$ is $2^{-\lambda}$-dense in $Y$ if and only if $D_{\infty}(X\|Y) \leq \lambda$.

The statement of Dense Model Theorem is as follows.

\begin{theorem}[Dense Model Theorem~\cite{ReingoldTTV08}]\label{thm:classical-dmt}
  For any $s, n\in\bbN$ and $0 < \eps, \delta < 1$, let $X, Y, Z$ be three distributions over $\bin^n$ such that $X$ and $Y$ are $(s, \eps)$-indistinguishable and $Z$ is $\delta$-dense in $X$.
  Then there exists a distribution $M$ over $\bin^n$ such that $M$ is $\delta$-dense in $Y$ and $M$ is $(s', \eps')$-indistinguishable form $Z$, where $s' = \poly(s, 1/\eps, \log(1/\delta))$ and $\eps' = O(\eps/\delta)$.
\end{theorem}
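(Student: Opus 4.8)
The plan is to prove the Dense Model Theorem by contradiction, following the min-max/boosting strategy of \cite{ReingoldTTV08} (the classical analogue of the machinery behind Theorem~\ref{thm:min-max}). Suppose that \emph{no} distribution $M$ that is $\delta$-dense in $Y$ is $(s',\eps')$-indistinguishable from $Z$. Then for every such $M$ there is a size-$s'$ distinguisher $D$ with $\mathbb{E}[D(Z)] - \mathbb{E}[D(M)] > \eps'$ (closing the distinguisher class under complementation handles the sign). I would set up the zero-sum game whose Player~1 strategy space is the convex set $\{M : M \text{ is } \delta\text{-dense in } Y\}$, whose Player~2 strategies are size-$s'$ distinguishers, and whose payoff is the distinguishing advantage. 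By assumption Player~2 wins by more than $\eps'$ against every $M$, so the min-max theorem hands Player~2 a universal mixed strategy, i.e.\ a function $g:\{0,1\}^n\to[0,1]$ obtained by averaging $T$ size-$s'$ distinguishers such that $\mathbb{E}[g(Z)] - \mathbb{E}[g(M)] \ge \eps'/2$ for every $\delta$-dense $M$ in $Y$.

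The crux is a \emph{threshold-conversion lemma} that turns $g$ into a distinguisher for $X$ and $Y$. First I would instantiate the gap at the best model $M_{\max}$, the distribution putting mass $Y(x)/\delta$ on the top-$\delta$ fraction of $Y$-mass when points are ranked by $g(x)$; since $M_{\max}$ maximizes $\mathbb{E}[g(M)]$ over $\delta$-dense models, its $g$-average equals the top-$\delta$ average $\mathrm{top}^Y_\delta(g)$. Because $Z$ is itself $\delta$-dense in $X$, we also have $\mathbb{E}[g(Z)] \le \mathrm{top}^X_\delta(g)$. Combining these gives $\mathrm{top}^X_\delta(g) \ge \mathrm{top}^Y_\delta(g) + \eps'/2$. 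I would then invoke the variational (expected-shortfall) identity $\delta\cdot\mathrm{top}_\delta(g) = \min_{t}\left( \mathbb{E}[(g-t)_+] + \delta t \right)$, whose $Y$-side optimum is attained at the top-$\delta$ quantile $\tau$ of $g$ under $Y$. Evaluating the $Y$-identity at $\tau$ and upper-bounding the $X$-side minimum by its value at the same $\tau$ (so the $\delta\tau$ terms cancel) yields $\mathbb{E}_X[(g-\tau)_+] - \mathbb{E}_Y[(g-\tau)_+] \ge \delta\cdot\eps'/2$. The function $\phi(x) = (g(x)-\tau)_+ \in [0,1]$ is a legitimate $[0,1]$-valued distinguisher computed from $g$ by a shift-and-clamp, so it distinguishes $X$ from $Y$ with advantage at least $\delta\eps'/2$. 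Choosing $\eps' = O(\eps/\delta)$ makes this exceed $\eps$, contradicting $(s,\eps)$-indistinguishability of $X$ and $Y$, provided $\phi$ has size at most $s$. (If a Boolean output is required, randomizing a threshold above $\tau$ and hardwiring the best choice recovers a Boolean distinguisher with the same advantage.)

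For the parameter accounting I would track two quantities. The advantage analysis above fixes $\eps' = O(\eps/\delta)$. For the size, $\phi$ is a shift-and-clamp of $g$, itself an average of $T$ size-$s'$ distinguishers, so $\mathrm{size}(\phi) = O(s'T)$ and the contradiction requires $s = O(s'T)$. To obtain the claimed $s' = \poly(s, 1/\eps, \log(1/\delta))$ \emph{independent of $n$}, the round count $T$ must be $\poly(1/\eps',\log(1/\delta))$ rather than the generic $O(n/\eps^2)$. This is where I would use the boosting refinement of the min-max argument: the governing potential is the relative entropy from the current reweighting to the uniform measure on the $\delta$-dense set, which is bounded by $\log(1/\delta)$ instead of $\log(\dim)=n$, so $T = \poly(1/\eps',\log(1/\delta))$ suffices and the size relation $s = s'\cdot\poly(1/\eps,\log(1/\delta))$ is consistent with the stated bound.

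The step I expect to be the main obstacle is the threshold-conversion lemma—getting $M_{\max}$ and the variational identity to line up so that the uncontrolled ``off-$Z$'' part of $X$ (the measure $X-\delta Z$) and the ``off-$M$'' part of $Y$ do not swamp the $\delta\eps'/2$ signal. A naive single-threshold comparison of $\Pr_X[g\ge t]$ against $\Pr_Y[g\ge t]$ fails precisely because these remainder masses are unconstrained; it is the use of the soft function $(g-\tau)_+$ (equivalently, the expected-shortfall identity) rather than a hard indicator that makes the remainders cancel cleanly. The secondary obstacle is the joint size/round bookkeeping keeping $\phi$ inside the size-$s$ budget, which is exactly what forces the $\log(1/\delta)$-bounded (boosting) form of the min-max theorem rather than the generic $\log(\dim)$ bound of Theorem~\ref{thm:min-max}.
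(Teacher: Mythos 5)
The paper never actually proves Theorem~\ref{thm:classical-dmt}: it is imported verbatim from \cite{ReingoldTTV08}, and the only statement proved in Section~\ref{ssec:classical-dmt} is the (much easier) converse direction, Lemma~\ref{lemma:reverse-dmt}. So the comparison here is against the cited source rather than an in-paper argument, and your proposal is, in essence, a correct reconstruction of that source's proof. The three ingredients are the right ones and fit together as you describe. First, the min-max/boosting step over the convex set of $\delta$-dense models in $Y$ is set up correctly. Second, your threshold-conversion step is valid: from $\mathbb{E}[g(Z)]-\mathbb{E}[g(M)]\ge\eps'/2$ for all dense $M$ one gets $\mathrm{top}^X_\delta(g)\ge\mathrm{top}^Y_\delta(g)+\eps'/2$, and the expected-shortfall identity $\delta\cdot\mathrm{top}_\delta(g)=\min_t\left(\mathbb{E}\left[(g-t)_+\right]+\delta t\right)$, evaluated at the $Y$-quantile $\tau$ on both sides, yields $\mathbb{E}_X\left[(g-\tau)_+\right]-\mathbb{E}_Y\left[(g-\tau)_+\right]\ge\delta\eps'/2$, hence a distinguisher of size $O(s'T)$ for $X$ versus $Y$ whose advantage exceeds $\eps$ once $\eps'=O(\eps/\delta)$ with a large enough constant; this is exactly the contradiction needed, and your remark that a hard threshold drawn from $[\tau,1]$ recovers a Boolean distinguisher is also correct since $(g-\tau)_+=\int_\tau^1 \mathbf{1}[g>t]\,dt$. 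Third, your diagnosis of the size bookkeeping is the right one: the generic round bound $O(\log(\dim)/\eps^2)=O(n/\eps^2)$ --- which is what the paper's own Theorem~\ref{thm:min-max} would give, since it initializes at the maximally mixed state --- must be replaced by a $\log(1/\delta)$-type bound to make $s'$ independent of $n$. One small correction on that last point: the relevant potential is the KL divergence from the comparator model $M$ to the \emph{initial strategy} $Y$ (note $Y$ is itself $\delta$-dense in $Y$, and $\KL{M}{Y}\le\log(1/\delta)$ for every $\delta$-dense $M$), not to ``the uniform measure on the $\delta$-dense set''; running multiplicative weights from the initial point $Y$ with KL (Bregman) projections onto the dense set --- the projection step already appears in the paper's proof of Theorem~\ref{thm:min-max} --- gives $T=O\left(\log(1/\delta)/(\eps')^2\right)$ rounds, exactly what your parameter accounting requires.
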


Figure~\ref{fig:QDMT} is the visualization of the relationships between the distributions in Dense Model Theorem.
The theorem gives the positive answer of the existence of the distribution $M$.

\begin{figure*}[ht]
  \[\xymatrix{
      X \ar[d]_{\delta\mbox{-dense}} \ar@{~}[rr]_{\text{comp. indist.}} & & Y\ar[d]^{\delta\mbox{-dense}}\\
      Z\ar@{~}[rr]_{\text{comp. indist.}} & & *+[F]{ M ?}
    }\]
  \caption{Relation diagram of the Dense Model theorem.}
  \label{fig:QDMT}
\end{figure*}
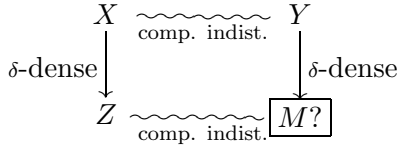

\begin{corollary}
  For any $s, n\in\bbN$, $0 < \eps < 1$ and $\lambda > 0$, let $X, Y$ be two distributions over $\bin^n$ such that $\Dhilltwo_{s, \eps}(X\|Y) \leq \lambda$, then $\Dhillone_{s', \eps'} \leq \lambda$ where $s' = \poly(s, 1/\eps, \lambda)$ and $\eps' = O(\eps \cdot 2^{\lambda})$.
\end{corollary}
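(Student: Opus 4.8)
The plan is to obtain this corollary as a direct instantiation of the Dense Model Theorem (Theorem~\ref{thm:classical-dmt}), with care taken to match its asymmetric hypotheses to the correct roles of our distributions. First I would unfold the hypothesis: $\Dhilltwo_{s,\eps}(X\|Y) \le \lambda$ means there is a distribution $Y'$ that is $(s,\eps)$-indistinguishable from $Y$ and satisfies $D_{\infty}(X\|Y') \le \lambda$; equivalently, $X$ is $2^{-\lambda}$-dense in $Y'$. The conclusion $\Dhillone_{s',\eps'}(X\|Y) \le \lambda$ asks instead for a distribution $X'$ that is $(s',\eps')$-indistinguishable from $X$ with $D_{\infty}(X'\|Y)\le\lambda$, i.e.~$X'$ is $2^{-\lambda}$-dense in $Y$.

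The crucial step is the role assignment when invoking Theorem~\ref{thm:classical-dmt}. I would set the theorem's three distributions to $Y'$, $Y$, and $X$ in the roles of its $X$, $Y$, and $Z$, respectively, with density parameter $\delta = 2^{-\lambda}$. Both hypotheses of the Dense Model Theorem then hold immediately: $Y'$ and $Y$ are $(s,\eps)$-indistinguishable (exactly what HILL-2 supplies), and $X$ is $\delta$-dense in $Y'$ (exactly the bound $D_{\infty}(X\|Y')\le\lambda$). The theorem therefore produces a distribution $M$ that is $\delta$-dense in $Y$ and $(s',\eps')$-indistinguishable from $X$.

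Finally I would take $X' = M$. Its $\delta$-density in $Y$ gives $D_{\infty}(X'\|Y) \le \lambda$, and its indistinguishability from $X$ is precisely the remaining requirement of HILL-1, so $\Dhillone_{s',\eps'}(X\|Y) \le \lambda$ as claimed. The parameters follow by substituting $\log(1/\delta) = \lambda$ into the theorem's bounds: $s' = \poly(s, 1/\eps, \log(1/\delta)) = \poly(s, 1/\eps, \lambda)$ and $\eps' = O(\eps/\delta) = O(\eps \cdot 2^{\lambda})$.

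There is essentially no analytic work here; the only subtlety---and the step a careless reading is most likely to get wrong---is the identification of roles. Because the Dense Model Theorem is asymmetric (density flows from the inner distribution into the outer one, while the indistinguishability is between the two ``outer'' distributions), one must recognize that the intermediate state $Y'$ furnished by HILL-2 should play the role of the outer distribution in which the source $X$ is dense, rather than letting $X$ itself occupy that role. Once this dictionary is fixed, the corollary is immediate, and it establishes the arrow labelled ``Lemma~\ref{lemma:reverse-dmt}'' in Figure~\ref{fig:classical-dmt}, completing the equivalence of the three classical notions.
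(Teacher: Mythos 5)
Your proof is correct and is exactly the derivation the paper intends: the corollary is stated immediately after Theorem~\ref{thm:classical-dmt} precisely because it follows by the role assignment you describe (the HILL-2 witness $Y'$ plays the theorem's ``$X$'', $Y$ plays its ``$Y$'', the source $X$ plays its ``$Z$'', with $\delta = 2^{-\lambda}$), and the parameters transfer verbatim. One correction to your closing remark, however: this corollary establishes the arrow labelled ``Dense Model Theorem'' in Figure~\ref{fig:classical-dmt} (the implication $\Dhilltwo \Rightarrow \Dhillone$), \emph{not} the arrow labelled Lemma~\ref{lemma:reverse-dmt} --- that lemma is the converse implication ($\Dhillone \Rightarrow \Dhilltwo$), which the paper proves by a separate and much more elementary argument (writing $Y = \delta M + (1-\delta)Q$ and setting $X = \delta Z + (1-\delta)Q$), with no parameter loss and no invocation of the Dense Model Theorem.
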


Following a similar proof technique, Mironov Pandey Reingold and Vadhan also showed that pseudo relative min-entropy also imply HILL-1 relative min-entropy~\cite{MironovPRV09} up to some parameter loss (we call it  ``Strong Dense Model Theorem'').
However, to the best of our knowledge, it has not been shown that whether pseudo relative min-entropy is a weaker notion than HILL-2 relative min-entropy.
In fact, we can show that HILL-1 relative min-entropy also implies HILL-2 relative min-entropy.
We state the lemma in the density language.
\begin{lemma}\label{lemma:reverse-dmt}
  Let $Y, Z, M$ be three distributions over $\bin^n$ such that $M$ is $\delta$-dense in $Y$ and $Z$ and $M$ are $(s, \eps)$-indistinguishable.
  Then there exists a distribution $X$ over $\bin^n$ such that $Z$ is $\delta$-dense in $X$ and $X$ is $(s, \eps)$-indistinguishable from $Y$.
\end{lemma}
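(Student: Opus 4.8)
The plan is to build $X$ explicitly by performing a ``swap'' inside a convex decomposition of $Y$, using that $\delta$-density is just a pointwise domination. First I would rewrite the hypothesis that $M$ is $\delta$-dense in $Y$ as $\delta\Pr[M=x]\le\Pr[Y=x]$ for every $x\in\bin^n$. Then the function $R$ on $\bin^n$ given by $\Pr[R=x]=(\Pr[Y=x]-\delta\Pr[M=x])/(1-\delta)$ has nonnegative values, and since $\sum_x(\Pr[Y=x]-\delta\Pr[M=x])=1-\delta$ it sums to $1$, so $R$ is a genuine distribution. This yields the decomposition $Y=\delta M+(1-\delta)R$.

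Next I would define the candidate by keeping the remainder $R$ and substituting $Z$ for the dense part $M$, that is, $X=\delta Z+(1-\delta)R$. This $X$ has nonnegative entries summing to $1$, hence is a valid distribution, and $\Pr[X=x]=\delta\Pr[Z=x]+(1-\delta)\Pr[R=x]\ge\delta\Pr[Z=x]$, which is exactly the assertion that $Z$ is $\delta$-dense in $X$. So the density half of the conclusion is immediate from the construction, with no parameter loss.

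For the indistinguishability half, subtracting the two decompositions gives $X-Y=\delta(Z-M)$ as functions on $\bin^n$. Hence for any size-$s$ distinguisher $D$, linearity of expectation yields $|\Pr[D(X)=1]-\Pr[D(Y)=1]|=\delta\cdot|\Pr[D(Z)=1]-\Pr[D(M)=1]|\le\delta\eps\le\eps$, where we use the $(s,\eps)$-indistinguishability of $Z$ and $M$. Therefore $X$ is $(s,\eps)$-indistinguishable from $Y$ (in fact $(s,\delta\eps)$-indistinguishable), which finishes the argument.

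I expect no genuinely hard step in this direction: unlike the forward Dense Model Theorem (Theorem~\ref{thm:classical-dmt}) it needs neither a min-max/boosting argument nor any blow-up in circuit size, and the distinguishing advantage is inherited verbatim (up to the factor $\delta$) from that of $Z$ and $M$. The only point requiring care is checking that $R$ is a legitimate distribution, which is precisely where the density hypothesis on $M$ is used, together with the observation that the \emph{same} $R$ serves both decompositions so that the ``noise'' term cancels in $X-Y$.
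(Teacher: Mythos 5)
Your proof is correct and follows exactly the paper's argument: decompose $Y=\delta M+(1-\delta)Q$ using the density hypothesis, substitute $Z$ for $M$ to get $X=\delta Z+(1-\delta)Q$, and inherit indistinguishability by linearity. Your write-up is in fact slightly more careful than the paper's (verifying that the remainder is a genuine distribution and observing the sharper bound $\delta\eps$ on the advantage), but it is the same proof.
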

\begin{proof}
  Since $M$ is $\delta$-dense in $Y$, there exists a distribution $Q$ such that $Y = {\delta}M + (1-\delta)Q$.
  Let $X = \delta Z + (1-\delta)Q$, and then clearly $Z$ is $\delta$-dense in $X$.
  Moreover, $X$ and $Y$ are $(s, \eps)$-indistinguishable due to the indistinguishability between $M$ and $Z$.
\end{proof}

Therefore, all the three notions, pseudo, HILL-1 and HILL-2 relative-min entropies are all equivalent up to some parameter losses in the classical case.

\subsection{Impossibility of Quantum Dense Model Theorem}\label{ssec:quantum-dmt}
As discussed previously, HILL-1, HILL-2 and pseudo relative min-entropies are equivalent for classical distributions and the relative entropy bound is logarithmic.
In this section we will show a separation between the $\Dhillone$ and $\Dhilltwo$ relative min-entropies for quantum states.
More specifically, we show that there exist quantum states $\rho$ and $\sigma$ such that $\Dhillone(\rho\|\sigma) < \log(1/\delta)$ but $\Dhilltwo(\rho\|\sigma) = \infty$.
To this end, we use the language of Dense Model theorem.

We first define the notion of density for quantum states.
\begin{definition}[$\delta$-dense] \label{def:d-dense}
  Suppose $0 < \delta \leq 1$.
  A quantum state $\sigma \in\density{\cH}$ is \em{$\delta$-dense} in another quantum state $\rho \in\density{\cH}$ if $\sigma \leq \frac{1}{\delta}\rho$.
\end{definition}

By Definitions \ref{def:d-dense} and \ref{defn:relative-min-entropy-psd}, saying $\sigma$ is $\delta$-dense in $\rho$ is equivalent to saying that $\sigma$ has relative min-entropy at most $\log(1/\delta)$ with respect to $\rho$.
Note that if $\sigma$ and $\rho$ are classical distributions, Definition~\ref{def:d-dense} matches the classical definition.

\newcommand{\rhoX}{\rho^{X}}
\newcommand{\rhoY}{\rho^{Y}}
\newcommand{\rhoZ}{\rho^{Z}}
\newcommand{\rhoM}{\rho^{M}}
To be consistent with the classical notation, $\rhoX, \rhoY, \rhoZ$ and $\rhoM$ are different quantum states in the same space $\density{\bbC^N}$.
Recall the Dense Model Theorem statement and what the counterexample should achieve to show the non-existence of Quantum Dense Model Theorem.
Suppose $\rhoX$ and $\rhoY$ are two computationally indistinguishable quantum states and $\rhoZ$ is a quantum state that is $\delta$-dense in $\rhoX$.
A Quantum Dense Model Theorem would imply that there exists $\rhoM$ that is $\delta$-dense in $\rhoY$ and indistinguishable from $\rhoZ$.
However, we show that this is false by constructing $\rhoX$, $\rhoY$, and $\rhoZ$ such that for every $\rhoM$ that is $\delta$-dense in $\rho_Y$, it can be distinguished from $\rhoZ$.

Our counterexample is based on the following two observations: 1) the only state that is dense in a pure state is the pure state itself; 2) there exists a pseudorandom pure state (Theorem~\ref{thm:pure-indist}).

The following theorem says that a Quantum Dense Model Theorem does not exist.

\begin{theorem} \label{thm:qdmt}
  Given $s, \eps > 0$,  for $0 < \delta < 1 - 4\eps$, integers $m_1, m_2 > O(\log(s/\eps)$, and $n = m_1+m_2$, there exist quantum states $\rhoX$, $\rhoY$, $\rhoZ\in \density{\bbC^{2^n}}$ such that $\Hmin(X)_{\rhoX} = m_1, \Hmin(Y)_{\rhoY} = m_2$, $\rhoZ$ is $\delta$-dense in $\rhoX$, and $\rhoX$, $\rhoY$ are $(s, \eps)$-indistinguishable,
  but for every quantum state $\rhoM$ which is $\delta$-dense in $\rhoY$, there exists a constant-size quantum distinguisher $A$ such that
  $\size{\ex{A(\rhoM)}-\ex{A(\rhoZ)}}>\eps$.
\end{theorem}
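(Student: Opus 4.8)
The plan is to realize $\rhoY$ as a pseudorandom \emph{pure} state tensored with a maximally mixed state, so that $\delta$-density completely pins down the structure of any $\rhoM$, and then to separate $\rhoZ$ from every such $\rhoM$ by a single computational-basis measurement of one qubit. Writing $\bbC^{2^n}=\bbC^{2^{m_1}}\ot\bbC^{2^{m_2}}$ as registers $1$ and $2$, I would use $m_1,m_2=\Omega(\log(s/\eps))$ to invoke Theorem~\ref{thm:pure-indist}, fixing an $(s',\eps/2)$-quantum-pseudorandom pure state $\ket{\psi}\in\bbC^{2^{m_1}}$ on register $1$ and an $(s',\eps/2)$-quantum-pseudorandom pure state $\ket{\phi}\in\bbC^{2^{m_2}}$ on register $2$, where $s'=s+O(n)$ absorbs state-preparation overhead. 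I would then set
\[ \rhoX=\sigma_{\mix}^{(m_1)}\ot\ketbra{\phi}, \qquad \rhoY=\ketbra{\psi}\ot\sigma_{\mix}^{(m_2)}. \]
Since $\lambda_{\max}(\rhoX)=2^{-m_1}$ and $\lambda_{\max}(\rhoY)=2^{-m_2}$, this gives $\Hmin(X)_{\rhoX}=m_1$ and $\Hmin(Y)_{\rhoY}=m_2$. Each of $\rhoX,\rhoY$ is $(s,\eps/2)$-indistinguishable from $\sigma_{\mix}^{(n)}=\sigma_{\mix}^{(m_1)}\ot\sigma_{\mix}^{(m_2)}$: a size-$s$ distinguisher for $\rhoX$ versus $\sigma_{\mix}^{(n)}$ reduces, by first preparing $\sigma_{\mix}^{(m_1)}$ on register $1$ with $O(n)$ gates, to a size-$s'$ distinguisher for $\ketbra{\phi}$ versus $\sigma_{\mix}^{(m_2)}$, and symmetrically for $\rhoY$; pseudorandomness bounds each advantage by $\eps/2$, so the triangle inequality makes $\rhoX$ and $\rhoY$ $(s,\eps)$-indistinguishable.

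For the dense state I would take $\rhoZ=\nu\ot\ketbra{\phi}$, where $\nu\in\density{\bbC^{2^{m_1}}}$ is chosen to maximize the probability of observing $\ket{1}$ on the first qubit subject to $\nu\le\frac1\delta\sigma_{\mix}^{(m_1)}$, i.e. subject to all eigenvalues being at most $1/(\delta 2^{m_1})$. Placing this maximal eigenvalue across the entire ``first qubit $=1$'' subspace yields observation probability $p_1:=\min\{1,1/(2\delta)\}$, and the remaining trace fits on the orthogonal subspace because $\delta\le1$. Since $\nu\le\frac1\delta\sigma_{\mix}^{(m_1)}$ and $\ketbra{\phi}\ge0$, tensoring preserves the Loewner order, so $\rhoZ\le\frac1\delta\rhoX$; that is, $\rhoZ$ is $\delta$-dense in $\rhoX$.

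The key step is that any $\rhoM$ which is $\delta$-dense in $\rhoY$ obeys $\rhoM\le\frac1{\delta 2^{m_2}}\ketbra{\psi}\ot\I$, whose support is $\mathrm{span}(\ket{\psi})\ot\bbC^{2^{m_2}}$; hence the rank-one factor forces $\rhoM=\ketbra{\psi}\ot\tau$ for some $\tau\in\density{\bbC^{2^{m_2}}}$. Let $A$ be the constant-size distinguisher that measures the first qubit in the computational basis and outputs the outcome. On $\rhoM$ the first register is the pure state $\ket{\psi}$, so $\ex{A(\rhoM)}$ is the probability that measuring the first qubit of $\ket{\psi}$ gives $1$, \emph{independent of $\tau$}; as this is a size-$O(1)\le s$ test and $\ket{\psi}$ is $(s,\eps/2)$-pseudorandom, this probability lies within $\eps/2$ of its maximally mixed value $\tfrac12$, so $\ex{A(\rhoM)}\le\tfrac12+\tfrac\eps2$ uniformly in $\tau$. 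On the other hand $\ex{A(\rhoZ)}=p_1$.

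Finally I would check that the gap exceeds $\eps$. If $\delta\le\tfrac12$ then $p_1=1$ and $\ex{A(\rhoZ)}-\ex{A(\rhoM)}\ge\tfrac12-\tfrac\eps2>\eps$. If $\tfrac12<\delta<1-4\eps$ then $p_1=\tfrac1{2\delta}$ and the gap is at least $\tfrac1{2\delta}-\tfrac12-\tfrac\eps2$, which is strictly greater than $\eps$ precisely because $(1-4\eps)(1+3\eps)\le1$ gives $\delta<1-4\eps\le\tfrac1{1+3\eps}$, hence $\tfrac1{2\delta}>\tfrac12+\tfrac{3\eps}2$. In either case $\size{\ex{A(\rhoM)}-\ex{A(\rhoZ)}}>\eps$ for every admissible $\rhoM$, which is the desired conclusion. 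I expect the delicate points to be the support argument pinning $\rhoM=\ketbra{\psi}\ot\tau$ and the quantitative bookkeeping that converts the hypothesis $\delta<1-4\eps$ into a gap strictly above $\eps$; pseudorandomness being oblivious to $\tau$ is exactly what lets one fixed constant-size distinguisher defeat all $\rhoM$ simultaneously.
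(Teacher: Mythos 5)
Your proposal is correct and follows essentially the same route as the paper's own proof: pseudorandom pure states tensored with maximally mixed states for $\rho^X$ and $\rho^Y$ (indistinguishable via the maximally mixed hybrid), a one-qubit computational-basis measurement as the distinguisher, and the observation that $\delta$-density in $\ketbra{\psi}\ot\sigma_{\mix}$ forces $\rho^M=\ketbra{\psi}\ot\tau$. The only differences are cosmetic --- you pin down $\rho^M$ by a support-containment argument rather than the paper's orthogonal-vector argument, and your parameter bookkeeping (using $\eps/2$-pseudorandom states and verifying $(1-4\eps)(1+3\eps)\le 1$) is, if anything, slightly more careful than the paper's.
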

\begin{proof}
  We first have the following two claims:

  \begin{claim}\label{claim:pure-dense}
    Suppose $\ket{\psi}\in \ball{\cH_1}$ and $\rho\in\density{\cH_2}$. Let $0 < \delta\leq 1$.
    Then a density operator in  $\density{\bbC^{M}\ot \bbC^N}$ that is $\delta$-dense in $\ketbra{\psi}\ot\rho$ must be of the form $\ketbra{\psi}\ot\sigma$, where $\sigma$ is
    $\delta$-dense in $\rho$.
  \end{claim}
  \begin{proof}[proof of Claim]
    Let $\rho = \sum_i {p_i}\ketbra{\psi_i}$ be the spectral decomposition of $\rho$.
    Then $\ketbra{\psi}\ot\rho=  \sum_i {p_i}\ketbra{\psi, \psi_i}$, where $\ket{\psi,\psi_i}=\ket{\psi}\ot\ket{\psi_i}$ for short.
    Suppose $\sigma' = \sum_{(i)}q_i\ketbra{\phi_i}\in \density{\cH_1\otimes\cH_2}$ is $\delta$-dense in $\ketbra{\psi}\ot\rho$.
    Assume that $q_j > 0$ and $\tr[\cH_2]{\ketbra{\phi_j}}\neq \ketbra{\psi}$.

    Let $\ket{v} = \frac{\ket{\phi_{j}}-\braket{\psi,\psi_i}{\phi_j}\ket{\psi,\psi_i}}{\|\ket{\phi_{j}}-\braket{\psi,\psi_i}{\phi_j}\ket{\psi,\psi_i}\|}$, which is a unit vector orthogonal to $\ket{\psi,\psi_l}$ for all $l$ but not $\ket{\phi_j}$.
    Then $\bra{v}\sigma'\ket{v}\geq q_j>0$ but $\bra{v}\left(\ketbra{\psi}\ot\rho \right)\ket{v}=0$, which contradicts the assumption that $\sigma'\leq \frac{1}{\delta} \ketbra{\psi}\ot\rho$ for $\delta> 0$.
  \end{proof}

  \begin{claim}\label{claim:ind_state}
    Let $\rho_1\in\density{\bbC^{2^{m_1}}}$ and  $\rho_2\in\density{\bbC^{2^{m_2}}}$ be two pure quantum states that are $(s, \eps)$-pseudorandom. Then $\rho_1\otimes \sigma_{\mix}^{(m_1)}$ and $\sigma_{\mix}^{(m_2)}\otimes \rho_2$ are $(s-O(\max\{m_1, m_2\}), 2\eps)$-indistinguishable.
  \end{claim}
  \begin{proof}[proof of Claim]
    Since it only takes $O(m_2)$ many ancilla qubits and $O(m_2)$ many Hadamard gates to prepare a maximally mixed state, $\rho_1\otimes \sigma_{\mix}^{(m_2)}$ and $\sigma_{\mix}^{(m_1)}\otimes \sigma_{\mix}^{(m_2)}$ are $(s-O(m_2), \eps)$-indistinguishable.
    Similarly, $\sigma_{\mix}^{(m_1)}\otimes \rho_2$ and $\sigma_{\mix}^{(m_1)}\otimes \sigma_{\mix}^{(m_2)}$ are $(s-O(m_1), \eps)$-indistinguishable from $\sigma_{\mix}^{(m_1)}\otimes \sigma_{\mix}^{(m_2)}$.
    Therefore, $\rho_1\otimes \sigma_{\mix}^{(m_2)}$ and $\sigma_{\mix}^{(m_1)}\otimes \rho_2$ are $(s-O(\max\{m_1, m_2\}), 2\eps)$-indistinguishable from each other.
  \end{proof}

  By Claim~\ref{claim:ind_state}, there exist $\rho_X= \sigma_{\mix}^{(m_1)}\otimes \rho_2$ and $\rho_Y=\rho_1\otimes \sigma_{\mix}^{(m_2)}$ that are $(s, \eps)$-indistinguishable where $\rho_1$ and $\rho_2$ are pure pseudorandom states.
  Thus the entropies of $\rhoX$ and $\rhoY$ are $m_1$ and $m_2$, respectively.

  Denote $\tau_0 = \frac{1}{2^{m_1-1}}\ketbra{0}\ot\I_{2^{m_1-1}}$ and $\tau_1 = \frac{1}{2^{m_1-1}}\ketbra{1}\ot\I_{2^{m_1-1}}$.
  Let
  \[\rhoZ= \left(\min\left\{1, \frac{1}{2\delta}\right\}\tau_0 + \max\left\{0, 1-\frac{1}{2\delta}\right\}\tau_1\right)\otimes \rho_2.\]
  Then $\rhoZ$ is $\delta$-dense in $\rhoX$.
  By Claim~\ref{claim:pure-dense}, for every $\rhoM$ that is $\delta$-dense in $\rhoY$, $\rhoM$ must be of the form $\rho_1\otimes \sigma_2$ for $\sigma_2\in\density{\bbC^{2^{m_2}}}$.

  Now we define a quantum distinguisher $A$ with BPOVM $\Pi = \ketbra{0}\ot\I_{2^{m_1-1}} \ot \I_{2^{m_2}}$.
  Thus
  \[\ex{A(\rhoZ)}= \ip{\Pi}{\rhoZ}=\min\left\{1, 1/2\delta \right\} > \frac{1}{2}+2\eps.\]
  On the other hand, since $\rho_1$ is $(s, \eps)$ pseudorandom,
  \[\ex{A(\rhoM)}= \ip{\Pi}{\rhoM} = \ip{\ketbra{0}\ot\I_{2^{m_1-1}}}{\rho_1} \leq \frac{1}{2} + \eps.\]
  Therefore, the quantum distinguisher $A$ is as desired.
\end{proof}

\begin{corollary}
  Given $s\in\cN$, $\eps > 0$, for $0 < \delta < 1\eps$ and $n > O(\log(s/\eps))$, there exist  quantum states $\rho, \sigma \in\density{\bbC^{2^n}}$ such that $\Dhilltwo_{s, \eps}(\rho\|\sigma) < \log(1/\delta)$ but $\Dhillone_{O(1), O(1)}(\rho\|\sigma) = \infty$.
\end{corollary}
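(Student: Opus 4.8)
The plan is to read the corollary off directly from Theorem~\ref{thm:qdmt} by translating its density-based conclusion into the HILL-1/HILL-2 relative-min-entropy language. Concretely, I would invoke Theorem~\ref{thm:qdmt} with the given $s$, $\eps$, $\delta$ and suitable $m_1, m_2 = O(\log(s/\eps))$ to obtain states $\rhoX, \rhoY, \rhoZ \in \density{\bbC^{2^n}}$, and then set $\rho := \rhoZ$ and $\sigma := \rhoY$. The two halves of the corollary then correspond exactly to the two guarantees supplied by the theorem.

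For the upper bound $\Dhilltwo_{s,\eps}(\rho\|\sigma) \le \log(1/\delta)$, I would exhibit $\sigma' := \rhoX$ as the witness required by Definition~\ref{def:hill-sigma}. By the theorem, $\rhoX$ is $(s,\eps)$-indistinguishable from $\sigma = \rhoY$, and since $\rhoZ$ is $\delta$-dense in $\rhoX$ we have $D_\infty(\rho\|\sigma') = D_\infty(\rhoZ\|\rhoX) \le \log(1/\delta)$, which is precisely the condition for $\Dhilltwo_{s,\eps}(\rho\|\sigma) \le \log(1/\delta)$. Strictness (as written in the statement) is then obtained by instantiating $\rhoZ$ with a marginally smaller density parameter, or simply by noting that the construction already gives $D_\infty(\rhoZ\|\rhoX) = 1 < \log(1/\delta)$ when $\delta < 1/2$.

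For the lower bound $\Dhillone_{O(1),O(1)}(\rho\|\sigma) = \infty$, I would argue by contradiction. Suppose some $\rho'$ were $(O(1),\eps)$-indistinguishable from $\rho = \rhoZ$ with $D_\infty(\rho'\|\sigma) \le \lambda < \infty$. Finiteness of $D_\infty(\rho'\|\rhoY)$ means $\rho'$ is $2^{-\lambda}$-dense in $\rhoY = \rho_1 \otimes \sigma_{\mix}^{(m_2)}$, and since $\rho_1$ is a pure state, Claim~\ref{claim:pure-dense} forces $\rho'$ to have the product form $\rho_1 \otimes \sigma_2$. Then the fixed constant-size distinguisher $A$ with $\Pi = \ketbra{0}\otimes\I_{2^{m_1-1}}\otimes\I_{2^{m_2}}$ from the proof of Theorem~\ref{thm:qdmt} yields $\ex{A(\rho')} \le 1/2 + \eps$ by pseudorandomness of $\rho_1$, while $\ex{A(\rhoZ)} > 1/2 + 2\eps$, so $A$ distinguishes $\rho'$ from $\rhoZ = \rho$ with advantage exceeding $\eps$. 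This contradicts the assumed indistinguishability, so no finite $\lambda$ is admissible and $\Dhillone_{O(1),\eps}(\rho\|\sigma) = \infty$.

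The main obstacle is a quantifier mismatch: Theorem~\ref{thm:qdmt} literally asserts its distinguishing conclusion only for states that are $\delta$-dense in $\rhoY$, whereas the definition of $\Dhillone$ quantifies over \emph{every} finite relative-min-entropy bound $\lambda$, i.e.\ over every $2^{-\lambda}$-dense $\rho'$. The key observation that closes this gap is that the obstruction does not depend on the density parameter at all: the single distinguisher $A$ succeeds against any $\rho'$ of product form $\rho_1 \otimes \sigma_2$, and by Claim~\ref{claim:pure-dense} this form is forced merely by having \emph{some} finite relative min-entropy with respect to $\rhoY$. I would therefore re-invoke Claim~\ref{claim:pure-dense} for an arbitrary finite $\lambda$ rather than only for $\delta$, which upgrades the conclusion from ``$\delta$-dense'' to ``every finite relative min-entropy'' and delivers $\Dhillone = \infty$ as required.
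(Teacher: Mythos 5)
Your proposal is correct and is essentially the paper's own (implicit) derivation: the paper states this corollary as an immediate consequence of Theorem~\ref{thm:qdmt} with $\rho = \rhoZ$, $\sigma = \rhoY$, using $\rhoX$ as the HILL-2 witness and the constant-size distinguisher $A$ together with Claim~\ref{claim:pure-dense} (which, as you correctly observe, holds for every positive density parameter, hence for every finite $\lambda$) to rule out any HILL-1 witness. The only nitpick is the phrase ``marginally smaller density parameter'' for forcing strictness --- one wants a \emph{larger} density parameter (smaller relative min-entropy), though your alternative observation that $D_{\infty}(\rhoZ\|\rhoX)=1<\log(1/\delta)$ for $\delta<1/2$ already settles this.
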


\section{Computational Leakage Chain Rule}\label{sec:leakage-chain-rule}

In this section, we will prove the leakage chain rule for quantum relaxed-HILL pseudoentropy with quantum leakage.
First, we recall the statement of the classical leakage chain rule~\cite{DziembowskiP08}.
Suppose there is a joint distribution $(X, Z, B)$ (which will be a quantum state $\rho_{XZB}$ in the quantum setting) where $X$, $Z$ and $B$ are viewed as a source, prior knowledge, and leakage, respectively.
The leakage chain rule says that, if the pseudoentropy of $X$ conditioned on $Z$ is at least $k$, then the pseudoentropy of $X$ conditioned on both $Z$ and $B$ retains at least $k-\ell$, where $\ell$ is the length of $B$.
In the asymptotic setting, we will focus on the case that the leakage $B$ is small ($O(\log \secp)$ qubits) and the length of $X$ and $Z$ could be $\poly(\secp)$ bits, where $\secp$ is the security parameter.
The leakage chain rule cannot hold when $B$ is too long.
For example, assume the existence of one-way function. Let $X = G(B)$ where $B$ is a uniform distribution and $G$ is a pseudorandom generator, and $Z$ is empty.
Then $X$ has a full HILL entropy, while $X|B$ has zero.

If there is no prior knowledge $Z$ and $|B| = O(\log n)$, then HILL pseudoentropy and relaxed-HILL pseudoentropy are equivalent.
If prior information $Z$ is allowed,  Krenn \emph{et~al.}~\cite{KrennPW13} showed that the leakage lemma is unlikely to hold for standard HILL pseudoentropy.
Specifically, assuming the existence of a perfect binding commitment scheme, they constructed random variables $X, Z,$ and $B$, where $B$ is a single random bit, such that $\Hhill(X|Z)\geq n$, but $\Hhill(X|Z, B) \leq 1$.
On the other hand, we know that the leakage chain rule holds for classical relaxed-HILL pseudoentropy~\cite{DziembowskiP08, ReingoldTTV08, GentryW11}.
Therefore, as a first-step study, we aim for proving a quantum leakage chain rule for relaxed-HILL pseudoentropy.

One main obstacle of proving the leakage chain rule for quantum entropies is that, contrary to the classical case, we cannot consider a probability conditioned on a fixed leakage $b$ (\ie~$\Pr[X|B = b]$) which blocks the possibility of handling cases with different $b$ separately.
In fact, various proofs of the chain rule for entropies (\eg~\cite{ChungKLR11,JP14}) rely on this property, including both HILL and relaxed-HILL entropies.
Eventually, we adopt the proof via the \emph{Leakage Simulation Lemma}.

\paragraph{Leakage Simulation Lemma.}

The leakage simulation lemma says that given a joint distribution $(X, B)$, there is a simulator circuit $C$ of small complexity such that $(X, B)$ and $(X, C(X))$ are indistinguishable.
As we will show in Section~\ref{ssec:leakage-chain-rule}, this lemma immediately implies Gentry and Wichs' simulation lemma~\cite{GentryW11} and hence the chain rule.
Those lemmas also have other interesting applications (\eg~\cite{GentryW11, ChungLP15, TrevisanTV09}).

\begin{reptheorem}{thm:leak-sim}
  Let $\rho_{XB} = \sum_{x\in\bin^n}p_x\ketbra{x}\ot\rho^x_B\in\zo^n\times\density{\bbC^{2^\ell}}$ be a cq-state with $n$ classical bits and $\ell$ qubits.
  For any $s\in\bbN$ and $\eps > 0$, there exists a quantum circuit $C:\bin^n\to\density{\bbC^{2^\ell}}$ of size $s' = \poly(s, n, 2^\ell, 1/\eps)$ such that the cq-state $\sum_{x\in\bin^n}p_x \ketbra{x}\otimes C(x)$ and $\rho_{XB}$ are $(s, \eps)$-indistinguishable.
\end{reptheorem}

There are two ways to prove the quantum simulation leakage lemma: one is based on the boosting (multiplicative weight update) technique, and the other is via the (nonuniform) Min-max Theorem.
We illustrate the former in the main body (Section~\ref{subsec:boosting-proof}) and leave the latter in Appendix~\ref{app:min-max-proof}.
In both proofs, the techniques that we need to convert a given circuit into its corresponding BPOVM, called quantum tomography, are detailed in Appendix~\ref{app:tomography}.

\subsection{Leakage Chain Rule}\label{ssec:leakage-chain-rule}

\begin{theorem}\label{thm:chain_rlt_hill}
  For any $n, m, \ell, s'\in\bbN$ and $\eps > 0$, the following holds for $s = \poly(s', n, 2^\ell, 1/\eps)$ and $\eps' = O(\eps)$.
  Let $\rho_{XZB} = \sum_{(x,z)\in\bin^{n+m}}p_{xz} \ketbra{x}\ot \ketbra{z}\otimes\rho^{xz}_B$ be a ccq-state with $n+m$ classical bits and ${\ell}$ qubits.
  If $\Hhillr_{s, \eps}(X|Z) \geq k$,
  then we have $\Hhillr_{s', \eps'}(X|Z, B)\geq k-\ell$.
\end{theorem}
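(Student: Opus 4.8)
The plan is to reduce the quantum leakage chain rule to the \emph{information-theoretic} chain rule for separable states (Theorem~\ref{thm:quantum-chain-rule-sep}), using the Quantum Leakage Simulation Lemma (Theorem~\ref{thm:leak-sim}) to replace the genuinely quantum leakage $B$ by an efficiently computable classical-input circuit. First I would regard $\rho_{XZB}$ as a cq-state whose classical register is $(X,Z)\in\bin^{n+m}$ and whose $\ell$-qubit quantum register is $B$, and apply Theorem~\ref{thm:leak-sim}. This produces a circuit $C\colon\bin^{n+m}\to\density{\bbC^{2^\ell}}$ of size $\poly(s',n,m,2^\ell,1/\eps)$ such that the ``simulated'' state $\tilde\rho_{XZB}=\sum_{x,z}p_{xz}\,\ketbra{x}\otimes\ketbra{z}\otimes C(x,z)$ is $(s',\eps)$-indistinguishable from $\rho_{XZB}$.

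Next I would unpack the hypothesis $\Hhillr_{s,\eps}(X|Z)_\rho\ge k$: there is a witness $\sigma_{XZ}\in\density{\cX\otimes\cZ}$ with $\Hmin(X|Z)_\sigma\ge k$ that is $(s,\eps)$-indistinguishable from $\rho_{XZ}$. I would argue we may take $\sigma_{XZ}$ to be a classical cc-state. Indeed, dephasing the $X$ and $Z$ registers in the computational basis is an efficient channel $\mathcal M$ that fixes the maximally mixed state, so by data processing for $D_{\infty}$ it does not decrease $\Hmin(X|Z)$; and since $\rho_{XZ}$ is already classical, precomposing any distinguisher with $\mathcal M$ shows $\mathcal M(\sigma_{XZ})$ remains $(s-O(n+m),\eps)$-indistinguishable from $\rho_{XZ}$. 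Write the resulting classical witness as $\sigma_{XZ}=\sum_{x,z}q_{xz}\,\ketbra{x}\otimes\ketbra{z}$.

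Now I would define the candidate witness $\sigma'_{XZB}:=\sum_{x,z}q_{xz}\,\ketbra{x}\otimes\ketbra{z}\otimes C(x,z)$, i.e.\ apply the same simulator $C$ to the classical witness, and verify the two defining conditions of $\Hhillr_{s',\eps'}(X|Z,B)_\rho\ge k-\ell$. For the min-entropy, $\sigma'_{XZB}$ is separable with $B$ merely classically correlated with $(X,Z)$, so Theorem~\ref{thm:quantum-chain-rule-sep} yields $\Hmin(X|ZB)_{\sigma'}\ge\Hmin(X|Z)_{\sigma'}-\ell=\Hmin(X|Z)_{\sigma}-\ell\ge k-\ell$, where the marginal on $XZ$ is unchanged since $\tr{C(x,z)}=1$; this step loses only $\ell$ (not $2\ell$) bits precisely because the simulated leakage is unentangled with $(X,Z)$. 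For indistinguishability I would use the triangle inequality: $\rho_{XZB}\approx\tilde\rho_{XZB}$ from the first step, while $\tilde\rho_{XZB}\approx\sigma'_{XZB}$ because any size-$s'$ distinguisher between them, precomposed with $C$, becomes a distinguisher of size $s'+\mathrm{size}(C)\le s$ between the classical states $\rho_{XZ}$ and $\sigma_{XZ}$ (on which $C$ produces exactly $\tilde\rho$ and $\sigma'$), hence has advantage at most $\eps$. Thus $\rho_{XZB}$ and $\sigma'_{XZB}$ are $(s',2\eps)$-indistinguishable, giving $\eps'=O(\eps)$, and the required $s=\poly(s',n,m,2^\ell,1/\eps)$ is dictated by $\mathrm{size}(C)$ from Theorem~\ref{thm:leak-sim}.

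The genuinely hard part is entirely absorbed into the Leakage Simulation Lemma (Theorem~\ref{thm:leak-sim}): replacing the quantum leakage by an efficient classical-input circuit \emph{without} ever conditioning on a fixed value $b$ of the leakage --- the very manipulation that obstructs the naive classical proofs in the quantum setting. Given that lemma, the main points requiring care in the present argument are the bookkeeping around the witness, namely (i) reducing to a classical HILL witness so that $C$ can be applied to produce a \emph{separable} state, and (ii) invoking the separable-state chain rule (Theorem~\ref{thm:quantum-chain-rule-sep}) rather than the general rule of Theorem~\ref{thm:quantum-chain-rule}, which is what delivers the tight $\ell$-bit entropy loss claimed in the statement.
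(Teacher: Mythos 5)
Your proof is correct and follows essentially the same route as the paper's: the paper likewise applies the Quantum Leakage Simulation Lemma (Theorem~\ref{thm:leak-sim}) to simulate $B$ from $(X,Z)$, transfers the simulator onto the HILL witness for $(X,Z)$ (packaged there as Lemma~\ref{lemma:gw-sim}, a generalization of Gentry--Wichs), and then invokes the separable-state chain rule (Theorem~\ref{thm:quantum-chain-rule-sep}) to lose only $\ell$ rather than $2\ell$ bits. The one refinement you add is the explicit dephasing argument showing the HILL witness may be taken classical --- a point the paper's proof silently assumes when it writes the witness as a joint distribution $(Y,Z')$.
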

\begin{proof}
  We use the following lemma as an intermediate step to derive the Leakage Chain Rule.

  \begin{lemma}[Generalization of \cite{GentryW11} Lemma~3.2]\label{lemma:gw-sim}
    For any $n, \ell, s'\in\bbN$ and $\eps > 0$, the following holds for $s = \poly(s', n, 2^\ell, 1/\eps)$ and $\eps' = O(\eps)$.
    Let $\rho_{XB} = \sum_{x\in\bin^n}p_x \ketbra{x}\otimes\rho^x_B$ be a cq-state with $n$ classical bits and ${\ell}$ qubits.
    For every $Y$ that is $(s, \eps)$-indistinguishable from $X$, there exists a (possibly inefficient) quantum circuit $C$ such that the cq-state
    \[\rho_{YC} = \sum_{y\in\bin^n}q_y \ketbra{y}\otimes C(y)\]
    and $\rho_{XB}$ are $(s', \eps')$-quantum-indistinguishable where $q_y = \pr{Y = y}$.
  \end{lemma}

  \begin{proof}[proof of Lemma~\ref{lemma:gw-sim}]
    By Theorem~\ref{thm:leak-sim}, there exists a circuit $C:\bin^n\to\density{\bbC^{2^{\ell}}}$ with size $t = s' + \poly(s', n, 2^{\ell}, 1/\eps)$ such that $\rho_{XB}$ and $\rho_{XC(X)}$ are $(s', \eps)$-indistinguishable.
    Set $s = 2t$.
    Since $X$ and $Y$ are $(s, \eps)$-indistinguishable, $\sum_{x\in\zo^n}p_x\ketbra{x}\ot C(x)$ and $\sum_{y\in\zo^n}q_y\ketbra{y}\ot C(y)$ are $(s-t = t, \eps)$-indistinguishable.
    By the transitivity of indistinguishability, $\rho_{XB}$ and $\rho_{YC}$ are $(\min\{s', t\}=s', 2\eps)$-indistinguishable.

  \end{proof}
  Once we have Lemma~\ref{lemma:gw-sim}, we can derive the chain rule for quantum relaxed HILL entropy from the chain rule of quantum min-entropy.
  Since $\rho_{XZB}$ s a ccq-state, which is separable in on the space $(\cX\ot\cZ)\ot\cB$, losing only $\ell$ bits HILL entropy instead of $2\ell$ as in Theorem~\ref{thm:quantum-chain-rule-sep} is possible.

  $\Hhillr(X|Z) \geq k$ implies there exists a joint distribution $(Y, Z')$ such that $(Y, Z')$ and $(X, Z)$ are $(s, \eps)$-indistinguishable and $\Hmin(Y|Z')\geq k$.
  By Lemma~\ref{lemma:gw-sim}, there exists an $\ell$-qubit quantum system $C$ such that $(X, Z, B)$ and $(Y, Z', C)$ are $(s', \eps')$-indistinguishable where $s = \poly(s', n, 2^{\ell}, 1/\eps)$ and $\eps' = O(\eps)$.
  By the chain rule of quantum min-entropy (Theorem~\ref{thm:quantum-chain-rule-sep}), $\Hmin(Y|Z', C)\geq k-\ell$, which implies $\Hhillr(X|Z, B)_{s', \eps'}\geq k-\ell$.
\end{proof}

We have proved the quantum leakage chain rule for ccq-states.
However, due to some barriers that we will mention in Section~\ref{sec:barrier} and other obstacles, it is still open for the cqq-state case (the prior knowledge $Z$ is quantum).
The quantum leakage chain rule for cqq-states is desired because it might help one to improve the leakage-resilient protocol~\cite{DziembowskiP08} to be secure against a quantum leakage.

\begin{openprob}
  Let $\rho_{XBZ} = \sum_{(x)\in\bin^{n}}p_{x} \ketbra{x} \ot\rho^{x}_{BZ}$ be a cqq-state with $n$ classical bits and ${\ell+m}$ qubits.
  If $\Hhillr_{s, \eps}(X|Z) \geq k$, can we show that $\Hhillr_{s', \eps'}(X|Z, B)\geq k-\ell$ for some $s' = \poly(s, n, 2^\ell, 1/\eps))$ and $\eps' = O(\eps)$?
\end{openprob}

On the other hand, it is not known whether quantum HILL and relaxed-HILL entropies are equivalent, even for the case that $B$ is a single qubit.
Thus our result does not imply a chain rule for quantum HILL entropies without prior knowledge.
\begin{openprob}
  Let $\rho_{XZ}$ be a cq-state with ${n+1}$ qubits.
  If $\Hhillr_{s, \eps}(X|Z) \geq k$, can we show that $\Hhill_{s', \eps'}(X|Z)\geq k$ for some $s' = \poly(s, n, 1/\eps))$ and $\eps' = O(\eps)$?
\end{openprob}

\subsection{Leakage Simulation Lemma by MMWU}\label{subsec:boosting-proof}
This section dedicates to a proof of the Leakage Simulation Lemma.
In this proof, we generalize the idea in \cite{VZ13} to the quantum setting.
The overview of the algorithm is as follows.
We will construct a simulator via MMWU method.
Initially, the simulator outputs a maximally mixed state.
If there exists a distinguisher can distinguish the output from $\rho_{XB}$, we use the best (or almost) distinguisher to update the simulator.
Guaranteed by the MMWU method, we will get the desired simulator within polynomially many rounds.

However, due to the fact that quantum circuits are random, we cannot only consider deterministic distinguishers that output $\{0, 1\}$.
Therefore, we adopt other techniques and have the simulator circuit much more complicated than in the classical setting.
Roughly speaking, even though we cannot hope for the quantum circuits being deterministic, we can make it output the same result with high probability if we repeat more times and randomly shift the estimations.

In the following proof, we will picture the idea of building the simulator via Procedure~\ref{pro:mmwu-sim} and its approximation version, Procedure~\ref{pro:mmwu-sim-approx}.
Those procedures are merely the abstracts for the simulator and do not handle the circuit complexity issue.
Then we will show how to construct the simulator circuit in Procedure~\ref{pro:mmwu-sim-ckt}.

\begin{proof}
  Let $d = 2^{\ell}$ be the dimension of the quantum space.
  Recall that we can use a $d$ by $d$ BPOVM matrix to describe a quantum distinguisher on $\density{\bbC^d}$.
  Follow the same idea, we can also use a set of $d$ by $d$ BPOVM matrices $\{\Pi_x\}_{x\in\zo^n}$ where $\Pi_x$ is the BPOVM matrix of the quantum circuit $A(x, \cdot)$ to characterize a quantum distinguisher of domain $\zo^n\ot\density{\bbC^d}$.
  As a result, for an cq-state $\rho_{XB} = \sum_{x\in\bin^n}p_x\ketbra{x}\ot\rho^x_B\in\zo^n\ot\density{\bbC^{d}}$, we have
  \[\ex{A(\rho_{XB})} = \sum_{x\in\zo^n}p_x \ex{A(\ketbra{x}\ot\rho^x_B)} = \sum_{x\in\zo^n}p_x\ip{\Pi_x}{\rho^x_B} = \ex[x\sim X]{\ip{\Pi_x}{\rho^x_B}}.\]

  First, we consider the following MMWU procedure for a given $x$.

  \begin{center}
  \begin{pseudocode}[shadowbox]{Procedure}{} \label{pro:mmwu-sim}
    \textbf{Input:} $x\in\zo^n$ and an error parameter $\eps > 0$\\
    \textbf{Output:} A quantum state $\sigma\in\density{\bbC^d}$
    \begin{enumerate*}
      \item Choosing $T = O(\log d/\eps^2)$ and $\eta = \sqrt{\ln d/T}$.
      \item Initially, let $W^{(1)}_x = \I_d$.
      \item For $t = 1, \dots, T$,
      \begin{enumerate*}
        \item Let $\sigma^{(t)}_x = W^{(t)}_x/\tr{W^{(t)}_x}$.
        \item Let $A^{(t)}$ be the best $s$-size distinguisher in distinguishing $\rho_{XB}$ and $\sum_{x\in\zo^n}p_x\ketbra{x}\ot\sigma^{(t)}_x$.
          Namely,
          \[A^{(t)} = \argmax_{A \mbox{ of size } s} \ex[x\sim X]{A(\ketbra{x}\ot\rho^x_B)} - \ex[x\sim X]{A(\ketbra{x}\ot\sigma^{(t)}_x)}.\]
        \item Denote the corresponding BPOVM matrices of $A^{(t)}$ to be $\{\Pi^{(t)}_x\}_{x\in\zo^n}$, then define the loss matrix
          \[L^{(t)}_x = \ex[x\sim X]{\ip{\Pi^{(t)}_x}{\rho^x_B}}\I_d - \Pi^{(t)}_x.\]
        \item Let $W^{(t+1)}_x = W^{(t)}_x\cdot\exp\left(-\eta L^{(t)}_x\right)$.
      \end{enumerate*}
      \item Output $\sigma_x = \frac{1}{T}\sum_{t = 1}^T\sigma_x^{(t)}$.
    \end{enumerate*}
  \end{pseudocode}
  \end{center}

  \begin{claim}\label{claim:mmwu-sim}
    Let $\{\sigma_x\}_{x\in\zo^n}$ be the states obtained from Procedure~\ref{pro:mmwu-sim}.
    Then for every quantum distinguisher $A$ of size $s$, we have
    \begin{align*}
      \ex{A\left(\rho_{XB}\right)} - \ex{A\left(\sum_{x\in\zo^n}p_x\ketbra{x}\ot\sigma_x\right)} \leq \eps
    \end{align*}
  \end{claim}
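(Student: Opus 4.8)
The plan is to observe that, for each fixed $x \in \zo^n$, the inner loop of Procedure~\ref{pro:mmwu-sim} is exactly an instance of the Matrix Multiplicative Weights Update method (Theorem~\ref{thm:mmwu}) run on the loss sequence $L^{(1)}_x, \dots, L^{(T)}_x$, and then to average the resulting per-$x$ regret bounds against the source distribution $X$. First I would check that the losses are bounded as Theorem~\ref{thm:mmwu} requires: since each $\Pi^{(t)}_x$ is a BPOVM we have $0 \leq \Pi^{(t)}_x \leq \I_d$, and the scalar $v^{(t)} \eqdef \ex{A^{(t)}(\rho_{XB})} = \ex[x\sim X]{\ip{\Pi^{(t)}_x}{\rho^x_B}}$ lies in $[0,1]$; hence $-\I_d \leq L^{(t)}_x = v^{(t)}\I_d - \Pi^{(t)}_x \leq \I_d$, so the theorem applies with $c_1 = c_2 = 1$.

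The key choice is to instantiate the comparator state $\sigma$ of Theorem~\ref{thm:mmwu} as the \emph{true} conditional state $\rho^x_B$ for each $x$. Expanding $\ip{\sigma^{(t)}_x}{L^{(t)}_x} = v^{(t)} - \ip{\Pi^{(t)}_x}{\sigma^{(t)}_x}$ and $\ip{\rho^x_B}{L^{(t)}_x} = v^{(t)} - \ip{\Pi^{(t)}_x}{\rho^x_B}$ (using $\tr{\sigma^{(t)}_x} = \tr{\rho^x_B} = 1$), the regret bound becomes, after the $v^{(t)}$ terms cancel,
\[ \frac{1}{T}\sum_{t=1}^T \ip{\Pi^{(t)}_x}{\rho^x_B} - \frac{1}{T}\sum_{t=1}^T \ip{\Pi^{(t)}_x}{\sigma^{(t)}_x} \leq O\!\left(\sqrt{\frac{\log d}{T}}\right). \]
Taking the expectation $\ex[x\sim X]{\cdot}$ of this inequality and recognizing $\ex[x\sim X]{\ip{\Pi^{(t)}_x}{\rho^x_B}} = \ex{A^{(t)}(\rho_{XB})}$ together with $\ex[x\sim X]{\ip{\Pi^{(t)}_x}{\sigma^{(t)}_x}} = \ex{A^{(t)}(\sum_x p_x\ketbra{x}\ot\sigma^{(t)}_x)}$ yields a bound of $O(\sqrt{\log d/T})$ on the average over $t$ of the distinguishing advantage achieved by the round-$t$ distinguisher $A^{(t)}$.

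Finally I would convert this bound on the \emph{adaptively chosen} $A^{(t)}$ into the claimed bound for an \emph{arbitrary} size-$s$ distinguisher $A$; this best-response-to-universal step is where I expect the one real subtlety to lie. By the rule defining $A^{(t)}$ as the best size-$s$ distinguisher against $\sum_x p_x\ketbra{x}\ot\sigma^{(t)}_x$, for every fixed $A$ and every $t$,
\[ \ex{A(\rho_{XB})} - \ex{A\Big(\sum_x p_x\ketbra{x}\ot\sigma^{(t)}_x\Big)} \leq \ex{A^{(t)}(\rho_{XB})} - \ex{A^{(t)}\Big(\sum_x p_x\ketbra{x}\ot\sigma^{(t)}_x\Big)}. \]
Averaging over $t$, the right-hand side is at most $O(\sqrt{\log d/T})$, while on the left $\ex{A(\rho_{XB})}$ is independent of $t$ and, by linearity of the acceptance probability in the input state, $\frac{1}{T}\sum_t \ex{A(\sum_x p_x\ketbra{x}\ot\sigma^{(t)}_x)} = \ex{A(\sum_x p_x\ketbra{x}\ot\sigma_x)}$ with $\sigma_x = \frac{1}{T}\sum_t \sigma^{(t)}_x$ the output of the procedure. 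Choosing $T = O(\log d/\eps^2)$ drives $O(\sqrt{\log d/T}) \leq \eps$, which establishes the claim.
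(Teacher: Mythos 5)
Your proof is correct and follows essentially the same route as the paper's: apply Theorem~\ref{thm:mmwu} per~$x$ with the comparator $\tau = \rho^x_B$, use the best-response property of $A^{(t)}$ to pass from the adaptively chosen distinguishers to an arbitrary size-$s$ distinguisher $A$, and average over $x\sim X$ and over $t$ (using linearity to replace $\frac{1}{T}\sum_t \sigma^{(t)}_x$ by $\sigma_x$) with $T = O(\log d/\eps^2)$. Your write-up is in fact slightly more careful than the paper's, since you explicitly verify the loss-matrix bounds $-\I_d \leq L^{(t)}_x \leq \I_d$ and keep the order of the expectations straight, but the underlying argument is identical.
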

  \begin{proof}[proof of Claim~\ref{claim:mmwu-sim}]
    By Theorem~\ref{thm:mmwu}, for any $\tau\in\density{\bbC^d}$
    \[\frac{1}{T}\sum_{t = 1}^T\ip{\sigma^{(t)}_x}{L^{(t)}_x} \leq \frac{1}{T}\sum_{t = 1}^T\ip{\tau}{L^{(t)}_x} + \eps\]
    when $T$ is properly chosen.
    Particularly, if we take $\tau = \rho^x_B$, we get
    \begin{align}\label{eq:boosting-mmwu}
      \frac{1}{T}\sum_{t = 1}^T\ip{\sigma^{(t)}_x}{L^{(t)}_x} \leq \ex[x\sim X]{\ip{\Pi^{(t)}_x}{\rho^x_B}} - \ip{\Pi^{(t)}_x}{\rho^x_B} + \eps.
    \end{align}
    By the definitions of $A^{(t)}$ and $L^{(t)}_x$, for any algorithm distinguisher $A$,
    \begin{align}\begin{split}\label{eq:boosting-def-a-l}
      \ex[x\sim X]{A(\ketbra{x}\ot\rho^x_B)} - A(\ketbra{x}\ot\sigma^{(t)}_x)
      & \leq \ex[x\sim X]{A^{(t)}(\ketbra{x}\ot\rho^x_B)} - A^{(t)}(\ketbra{x}\ot\sigma^{(t)}_x)\\
      & \leq \ip{\sigma^{(t)}_x}{L^{(t)}_x}.
    \end{split}\end{align}
    Let $\sigma_x = \frac{1}{T}\sum_{t=1}^T\sigma^{(t)}_x$.
    Combine Equation~(\ref{eq:boosting-mmwu}) and (\ref{eq:boosting-def-a-l}), and take the expectation over $x$ from $X$, we get
    \begin{align*}
      & \ex{A\left(\rho_{XB}\right)} - \ex{A\left(\sum_{x\in\zo^n}p_x\ketbra{x}\ot\sigma_x\right)}\\
      =& \sum_{x\in\zo^n}p_x\cdot \frac{1}{T}\sum_{t = 1}^{T}\left(\ex[x\sim X]{A(\ketbra{x}\ot\rho^x_B)} - A(\ketbra{x}\ot\sigma^{(t)}_x)\right) \leq \eps.
    \end{align*}
  \end{proof}

  In Protocol~\ref{pro:mmwu-sim}, one cannot easily obtain the precise BPOVM matrices $\Pi^{t}_x$ from $A^{(t)}$.
  First, it needs a quantum tomography technique to get BPOVM form of a distinguisher given the distinguisher (as a circuit or an oracle).
  Usually, the technique evolves sampling, so one cannot get a deterministic result.
  Second, due to the precision issue, one cannot hope to store or have a circuit outputs the precise BPOVM matrices.
  In the following procedure, we address the second concern.
  We will show that, approximations of the BPOVM matrices (even themselves are not valid BPOVM matrices) suffice for building a simulator.

  \begin{center}
  \begin{pseudocode}[shadowbox]{Procedure}{} \label{pro:mmwu-sim-approx}
    \textbf{Input:} $x\in\zo^n$ and an error parameter $\eps > 0$\\
    \textbf{Output:} A quantum state $\sigma\in\density{\bbC^d}$
    \begin{enumerate*}
      \item Choosing $T = O(\log d/\eps^2)$ and $\eta = \sqrt{\ln d/T}$.
      \item Initially, let $W^{(1)}_x = \I_d$.
      \item For $t = 1, \dots, T$,
      \begin{enumerate*}
        \item Let $\sigma^{(t)}_x = W^{(t)}_x/\tr{W^{(t)}_x}$.
        \item Let $A^{(t)}$ be the best $s$-size distinguisher in distinguishing $\rho_{XB}$ and $\sum_{x\in\zo^n}p_x\ketbra{x}\ot\sigma^{(t)}(x)$.
          Namely,
          \[A^{(t)} = \argmax_{A \mbox{ of size } s} \ex[x\sim X]{A(\ketbra{x}\ot\rho^x_B)} - \ex[x\sim X]{A(\ketbra{x}\ot\sigma^{(t)}_x)}.\]
        \item Denote the corresponding BPOVM matrices of $A^{(t)}$ to be $\left\{\Pi^{(t)}_x\right\}_{x\in\zo^n}$.
          Let $\tilde{\Pi}^{(t)}_x$ be a matrix such that $\opnorm{\Pi^{(t)}_x-\tilde{\Pi}^{(t)}_x} \leq \eps/4$, then define the loss matrix
          \[\tilde{L}^{(t)}_x = \ex[x\sim X]{\ip{\tilde{\Pi}^{(t)}_x}{\rho^x_B}}\I_d - \tilde{\Pi}^{(t)}_x.\]
        \item Let $W^{(t+1)}_x = W^{(t)}_x\cdot\exp\left(-\eta \tilde{L}^{(t)}_x\right)$.
      \end{enumerate*}
      \item Output $\sigma_x = \frac{1}{T}\sum_{t = 1}^T\sigma_x^{(t)}$.
    \end{enumerate*}
  \end{pseudocode}
  \end{center}
  \begin{claim}\label{claim:mmwu-sim-approx}
    Let $\{\sigma_x\}_{x\in\zo^n}$ be the states obtained from Procedure~\ref{pro:mmwu-sim-approx}.
    Then for every quantum distinguisher $A$ of size $s$, we have
    \begin{align*}
      \ex{A\left(\rho_{XB}\right)} - \ex{A\left(\sum_{x\in\zo^n}p_x\ketbra{x}\ot\sigma_x\right)} \leq \eps.
    \end{align*}
  \end{claim}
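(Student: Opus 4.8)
The plan is to replay the proof of Claim~\ref{claim:mmwu-sim} verbatim, feeding the perturbed loss matrices $\tilde L^{(t)}_x$ into the MMWU guarantee and deferring the substitution of $\tilde\Pi^{(t)}_x$ for the true BPOVM $\Pi^{(t)}_x$ to the very last step, where it costs only a controlled additive slack. First I would fix an arbitrary $x\in\zo^n$ and observe that the update in Procedure~\ref{pro:mmwu-sim-approx} is exactly the MMWU update of Theorem~\ref{thm:mmwu} applied to the loss sequence $\tilde L^{(1)}_x,\dots,\tilde L^{(T)}_x$ with normalized iterates $\sigma^{(t)}_x$. I must first check the boundedness hypothesis: since each genuine BPOVM satisfies $0\le\Pi^{(t)}_x\le\I_d$ and $\opnorm{\Pi^{(t)}_x-\tilde\Pi^{(t)}_x}\le\eps/4$, the perturbed operator has $\opnorm{\tilde\Pi^{(t)}_x}\le 1+\eps/4$ and the scalar coefficient $\ex[x\sim X]{\ip{\tilde\Pi^{(t)}_x}{\rho^x_B}}$ lies in $[-\eps/4,\,1+\eps/4]$, so $\tilde L^{(t)}_x$ is sandwiched between constant multiples of $\I_d$. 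Theorem~\ref{thm:mmwu} with $T=O(\log d/\eps^2)$, taking the comparison state $\tau=\rho^x_B$, then gives
\[
\frac1T\sum_{t=1}^T\ip{\sigma^{(t)}_x}{\tilde L^{(t)}_x}\ \le\ \frac1T\sum_{t=1}^T\ip{\rho^x_B}{\tilde L^{(t)}_x}+\frac{\eps}{2}.
\]

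Expanding $\tilde L^{(t)}_x=\ex[x\sim X]{\ip{\tilde\Pi^{(t)}_x}{\rho^x_B}}\I_d-\tilde\Pi^{(t)}_x$ and using $\tr{\sigma^{(t)}_x}=\tr{\rho^x_B}=1$, the scalar term contributes identically to both sides and cancels, leaving
\[
\frac1T\sum_{t=1}^T\ip{\tilde\Pi^{(t)}_x}{\rho^x_B}\ \le\ \frac1T\sum_{t=1}^T\ip{\tilde\Pi^{(t)}_x}{\sigma^{(t)}_x}+\frac{\eps}{2}.
\]
Next I would undo the approximation: by~\eqref{eqn:opnorm}, for any density operator $\omega$ we have $\size{\ip{\Pi^{(t)}_x-\tilde\Pi^{(t)}_x}{\omega}}\le\opnorm{\Pi^{(t)}_x-\tilde\Pi^{(t)}_x}\le\eps/4$, so replacing $\tilde\Pi^{(t)}_x$ by $\Pi^{(t)}_x$ against both $\rho^x_B$ and each $\sigma^{(t)}_x$ costs at most $\eps/4$ per side, yielding
\[
\frac1T\sum_{t=1}^T\ip{\Pi^{(t)}_x}{\rho^x_B}\ \le\ \frac1T\sum_{t=1}^T\ip{\Pi^{(t)}_x}{\sigma^{(t)}_x}+\eps.
\]

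Finally I would take the $p_x$-weighted average over $x$ and recombine exactly as in Claim~\ref{claim:mmwu-sim}. Using $\ex{A^{(t)}(\rho_{XB})}=\sum_x p_x\ip{\Pi^{(t)}_x}{\rho^x_B}$ and the analogous identity against $\sigma^{(t)}_x$, the inequality becomes
\[
\frac1T\sum_{t=1}^T\Big(\ex{A^{(t)}(\rho_{XB})}-\ex{A^{(t)}\big(\textstyle\sum_x p_x\ketbra{x}\ot\sigma^{(t)}_x\big)}\Big)\ \le\ \eps.
\]
Because $A^{(t)}$ is, by step~(b), the best $s$-size distinguisher against $\sum_x p_x\ketbra{x}\ot\sigma^{(t)}_x$, each summand dominates the corresponding advantage of an arbitrary fixed $s$-size $A$; averaging that domination over $t$ and using $\sigma_x=\frac1T\sum_t\sigma^{(t)}_x$ together with linearity of $\ex{A(\cdot)}$ turns the average over $t$ of $A$'s advantage against $\sigma^{(t)}_x$ into its advantage against $\sigma_x$, and the claim follows.

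The main obstacle I anticipate is bookkeeping the error budget so that the two $\eps/4$ approximation terms and the MMWU regret (tuned to $\eps/2$ via the choice $T=O(\log d/\eps^2)$) sum to exactly $\eps$, and, more conceptually, justifying that the perturbed matrices $\tilde\Pi^{(t)}_x$—which need be neither positive semidefinite nor genuine BPOVMs—still satisfy the operator-norm bound required by Theorem~\ref{thm:mmwu} and that operator-norm closeness is the correct notion for controlling \emph{every} inner product against a density operator. This is precisely where the characterization~\eqref{eqn:opnorm} of $\opnorm{\cdot}$ for Hermitian operators does the work, and why the tomography guarantee is phrased in operator norm.
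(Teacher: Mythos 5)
Your proof is correct and follows essentially the same route as the paper's: the same boundedness check on $\tilde{L}^{(t)}_x$, the same invocation of Theorem~\ref{thm:mmwu} with comparison state $\rho^x_B$ and regret $\eps/2$, the same $2\cdot\eps/4$ operator-norm conversion between $\tilde{\Pi}^{(t)}_x$ and $\Pi^{(t)}_x$, and the same combination of the optimality of $A^{(t)}$ with linearity over $\sigma_x=\frac{1}{T}\sum_t\sigma^{(t)}_x$. The only differences are in bookkeeping order — you cancel the scalar multiple of $\I_d$ per fixed $x$ and defer the optimality of $A^{(t)}$ to the last step, whereas the paper applies optimality early and lets the comparison term vanish after averaging over $x$ — so the two arguments coincide up to reordering.
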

  \begin{proof}[proof of Claim~\ref{claim:mmwu-sim-approx}]
    Since $\opnorm{\Pi^{(t)}_x-\tilde{\Pi}^{(t)}_x} \leq \eps/4$, we have $-\frac{\eps}{4}\I_d\leq \tilde{\Pi}^{(t)}_x \leq \left(1+\frac{\eps}{4}\right)\I_d$ and so $-\frac{\eps}{4}\I_d\leq \tilde{L}^{(t)}_x \leq \left(2+\frac{\eps}{4}\right)\I_d$ for all $x\in\zo^n, t\in[T]$.
    By Theorem~\ref{thm:mmwu}, for any $\tau\in\density{\bbC^d}$
    \[\frac{1}{T}\sum_{t = 1}^T\ip{\sigma^{(t)}_x}{\tilde{L}^{(t)}_x} \leq \frac{1}{T}\sum_{t = 1}^T\ip{\tau}{\tilde{L}^{(t)}_x} + \frac{\eps}{2}\]
    when $T$ is properly chosen.
    Particularly, if we take $\tau = \rho^x_B$, we get
    \begin{align}\label{eq:boosting-mmwu-approx}
      \frac{1}{T}\sum_{t = 1}^T\ip{\sigma^{(t)}_x}{\tilde{L}^{(t)}_x} \leq \ex[x\sim X]{\ip{\Pi^{(t)}_x}{\rho^x_B}} - \ip{\Pi^{(t)}_x}{\rho^x_B} + \frac{\eps}{2}.
    \end{align}
    By the definitions of $A^{(t)}$ and $\tilde{L}^{(t)}_x$, for any algorithm distinguisher $A$,
    \begin{align}\begin{split}\label{eq:boosting-def-a-l-approx}
      \ex[x\sim X]{A(\ketbra{x}\ot\rho^x_B)} - A(\ketbra{x}\ot\sigma^{(t)}_x)
      & \leq \ex[x\sim X]{A^{(t)}(\ketbra{x}\ot\rho^x_B)} - A^{(t)}(\ketbra{x}\ot\sigma^{(t)}_x)\\
      & = \ex[x\sim X]{\ip{\Pi^{(t)}_x}{\rho^x_B}} - \ip{\Pi^{(t)}_x}{\sigma^{(t)}_x}\\
      & \leq \ex[x\sim X]{\ip{\tilde{\Pi}^{(t)}_x}{\rho^x_B}} - \ip{\tilde{\Pi}^{(t)}_x}{\sigma^{(t)}_x} + 2\cdot\frac{\eps}{4}\\
      & = \ip{\sigma^{(t)}_x}{L^{(t)}_x} + \frac{\eps}{2}.
    \end{split}\end{align}
    Let $\sigma_x = \frac{1}{T}\sum_{t=1}^T\sigma^{(t)}_x$.
    Combining Equation~(\ref{eq:boosting-mmwu-approx}) and (\ref{eq:boosting-def-a-l-approx}), and taking the expectation over $x$ from $X$, we get
    \begin{align*}
      & \ex{A\left(\rho_{XB}\right)} - \ex{A\left(\sum_{x\in\zo^n}p_x\ketbra{x}\ot\sigma_x\right)}\\
      =& \sum_{x\in\zo^n}p_x\cdot \frac{1}{T}\sum_{t = 1}^{T}\left(\ex[x\sim X]{A(\ketbra{x}\ot\rho^x_B)} - A(\ketbra{x}\ot\sigma^{(t)}_x)\right) \\
      \leq& \frac{\eps}{2} + \frac{\eps}{2} = \eps.
    \end{align*}
  \end{proof}

  Based on Procedure~\ref{pro:mmwu-sim-approx}, we are going to build a small circuit $C:\zo^n\to\density{\bbC^d}$ such that $C(x)$ will output approximates $\sigma_x$ with high probability.
  Intuitively, to calculate $\tilde{\Pi}^{(t)}_x$, one can simply use a tomography algorithm (see Appendix~\ref{app:tomography}) to obtain an approximation matrix $\tilde{\Pi}_x$.
  However, as mentioned before, the circuit is not deterministic, that means $A^{(t)}$ depends on $\tilde{\Pi}^{(t-1)}_x$ and we cannot afford to hardwire all possibilities of $A^{(t)}$.
  To deal with that, after finishing an tomography algorithm, we shift each entry (both real and imaginary parts) a small amount, and round up to certain precision.
  If there exists a shift such that with high probability, each number uniquely round to a number, then the tomography algorithm will output the same result with high probability.
  Concretely, we consider the following circuit.

  \begin{center}
  \begin{pseudocode}[shadowbox]{Procedure}{} \label{pro:mmwu-sim-ckt}
    \textbf{Input:} $x\in\zo^n$ and an error parameter $\eps > 0$\\
    \textbf{Output:} A quantum state $\sigma\in\density{\bbC^d}$
    \begin{enumerate*}
      \item Choosing $T = O(\log d/\eps^2)$ and let $\eta = \sqrt{\ln d/T}$, $\Delta = 24d^2T/\eps$.
      \item Let $\delta\in[\Delta]$ be a nonuniform advice for the circuit.
      \item Initially, let $C^{(1)}(x) = \I_d$.
      \item For $t = 1, \dots, T$,
      \begin{enumerate*}
        \item Let $\sigma^{(t)}_x = C^{(t)}(x)/\tr{C^{(t)}(x)}$.
        \item Let $A^{(t)}$ be the best $s$-size distinguisher in distinguishing $\rho_{XB}$ and $\sum_{x\in\zo^n}p_x\ketbra{x}\ot\sigma^{(t)}(x)$.
          Namely,
          \[A^{(t)} = \argmax_{A \mbox{ of size } s} \ex[x\sim X]{A(\ketbra{x}\ot\rho^x_B)} - \ex[x\sim X]{A(\ketbra{x}\ot\sigma^{(t)}_x)}.\]
        \item Let the circuit $C^{A^{(t)}(x, \cdot)}_{\rm tom}$ be the circuit do the following.
        \begin{enumerate}
          \item Solve the quantum tomography problem (Definition~\ref{def:qckt-tomography}) $\QCktTom(s, d, \eps/16d, \eps/4T)$ for the distinguisher $A^{(t)}(x, \cdot)$ and get the approximation matrix $M^{(t)}_x$ (The problem and the algorithms are defined in Appendix~\ref{app:tomography}.
          \item Add $\frac{\delta}{\Delta}\cdot\frac{\eps}{16d}(1+i)$ to the entries in upper triangle and $\frac{\delta}{\Delta}\cdot\frac{\eps}{16d}(1-i)$ to ones in lower triangle in $M$ and round each number to closest multiple of $\eps/16d$.
          \item Output the result matrix as $\tilde{\Pi}^{(t)}_x$.
        \end{enumerate}

        \item Calculate the matrix
          \[\tilde{L}^{(t)}_x = \ex[x\sim X]{\ip{\tilde{\Pi}^{(t)}_x}{\rho^x_B}}\I_d - \tilde{\Pi}^{(t)}_x.\]
        \item Let $C^{(t+1)}(x) = C^{(t)}(x)\cdot\exp\left(-\eta \tilde{L}^{(t)}_x\right)$.
      \end{enumerate*}
      \item Based on the descriptions of quantum states $C^{(t)}(x)$, output an $\eps/4$-approximation of $\sigma_x = \frac{1}{T}\sum_{t = 1}^T\sigma_x^{(t)}$.
    \end{enumerate*}
  \end{pseudocode}
  \end{center}

  \begin{claim}\label{claim:crazy}
    For a fixed $x$, if we randomly sample $\delta$ from $1, \dots, \Delta$, then with probability at least $1-\eps/2$ over the choice of $\delta$ and the executions of the circuits $C_{\rm tom}^{A^{(t)}(x, \cdot)}$ for $t = 1, \dots, T$, for each $t$, the output matrices $\Pi_{x}^{(t)}$ is fixed and $\opnorm{\tilde{\Pi}^{(t)}_x - \Pi^{(t)}_x} \leq \eps/4$ where $\Pi_x^{(t)}$ is the BPOVM matrix of the distinguisher $A^{(t)}(x, \cdot)$.
  \end{claim}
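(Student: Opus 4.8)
The plan is to prove the two assertions of the claim---that the rounded matrices $\tilde{\Pi}^{(t)}_x$ are deterministic (``fixed'') and that each is $\eps/4$-close to the true BPOVM $\Pi^{(t)}_x$---by a single random-offset (dithering) argument combined with an induction on the round index $t$. The closeness is the easy part: by Lemma~\ref{lemma:qkct-tomography} the tomography output $M^{(t)}_x$ satisfies $\opnorm{M^{(t)}_x - \Pi^{(t)}_x} \le \eps/16d$ except with probability $\eps/4T$, hence $|(M^{(t)}_x)_{jk} - (\Pi^{(t)}_x)_{jk}| \le \eps/16d$ for every entry; the deterministic shift has magnitude at most $\eps/16d$ per real coordinate, and rounding to the grid of spacing $\eps/16d$ adds at most half a grid step. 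Summing these contributions bounds $\norm{\tilde{\Pi}^{(t)}_x - \Pi^{(t)}_x}_{\max}$, and the inequality $\opnorm{X} \le d\cdot\norm{X}_{\max}$ from~\eqref{eqn:max_norm} then yields $\opnorm{\tilde{\Pi}^{(t)}_x - \Pi^{(t)}_x} \le \eps/4$.

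For determinism, the idea is standard dithering: rounding a noisy estimate to a grid gives a value that is stable (independent of the estimation noise across runs) provided the true value, after the random shift, is bounded away from every rounding threshold by more than the estimation error. Concretely, fix a round $t$ and suppose the true coordinate of $\Pi^{(t)}_x$ plus the shift $\tfrac{\delta}{\Delta}\cdot\tfrac{\eps}{16d}$ lies farther than the tomography error from the nearest grid midpoint; then both the true value and the estimate $M^{(t)}_x$ round to the same multiple, so $\tilde{\Pi}^{(t)}_x$ equals the canonical value obtained by rounding the shifted true matrix, independent of the particular execution. As $\delta$ ranges over $[\Delta]$ the shift sweeps a full grid period, so the fraction of $\delta$ for which a given coordinate falls within the tomography error of a threshold is proportional to the ratio of the tomography error to the grid spacing; the choice $\Delta = 24 d^2 T/\eps$ is calibrated to make this fraction $O(1/(d^2T))$ per coordinate.

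I would then close the argument with a union bound over the $O(d^2)$ real coordinates of each of the $T$ rounds and over the $T$ tomography calls. Budgeting $\eps/4$ for the tomography failures (via the per-call bound $\eps/4T$) and $\eps/4$ for the boundary-proximity events gives total failure probability at most $\eps/2$, as required. Determinism then propagates by induction on $t$: the distinguisher $A^{(t)}$, and hence the true BPOVM $\Pi^{(t)}_x$ and the state $\sigma^{(t)}_x$, are functions of the previously output matrices $\tilde{\Pi}^{(1)}_x,\dots,\tilde{\Pi}^{(t-1)}_x$, so once the earlier outputs are pinned to their canonical values the round-$t$ computation is deterministic and the dithering bound of the previous paragraph applies to $\Pi^{(t)}_x$.

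The main obstacle is exactly this adaptivity: the advice $\delta$ is sampled once and shared across all $T$ rounds, while the true matrix $\Pi^{(t)}_x$ whose coordinates must avoid the grid thresholds itself depends on $\delta$ through the canonical history. This circularity must be untangled carefully---conditioning on the canonical execution through round $t-1$ fixes $\Pi^{(t)}_x$ as a piecewise-constant function of $\delta$, and one must verify that the boundary-avoidance estimate survives this conditioning, i.e.\ that the bad set of $\delta$ at each round is genuinely of the claimed measure rather than being reshaped by the dependence on earlier rounds. I expect handling this dependence, together with matching the effective tomography precision to the offset resolution $\tfrac{\eps}{16d}/\Delta$ so that the estimation noise is indeed smaller than the boundary margins, to be the delicate points; the remaining steps are the routine triangle-inequality and counting arguments sketched above.
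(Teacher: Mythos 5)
Your plan coincides with the paper's proof of Claim~\ref{claim:crazy}: the closeness half is the same triangle-inequality accounting (per-entry tomography error $\eps/16d$, shift of at most $\eps/16d$ per real coordinate, rounding by at most half a grid step, then $\opnorm{X}\leq d\cdot\norm{X}_{\max}$ via Equation~(\ref{eqn:max_norm})), and the determinism half is the same shift-and-round dithering with the same budget: $\eps/4$ for the $T$ tomography failures (at $\eps/4T$ each) and $\eps/4$ for the boundary events over the $2d^2T$ real coordinates, using $\Delta = 24d^2T/\eps$ to make each boundary event have probability at most $3/\Delta \leq \eps/8d^2T$.

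That said, the two points you defer as ``delicate'' are genuine gaps rather than routine bookkeeping --- and, notably, the paper's own proof does not close them either. First, the precision mismatch you flag is real: the procedure invokes $\QCktTom(s,d,\eps/16d,\eps/4T)$, so the estimate's entries can fluctuate across executions by up to a full grid step $\eps/16d$, while the dithering margin is only $\frac{1}{\Delta}\cdot\frac{\eps}{16d}$. An estimate varying over an interval of width two grid cells straddles a rounding threshold for \emph{every} choice of $\delta$, so at the stated precision no shift makes the rounded output stable; the repair is to run the tomography at precision below the margin, e.g.\ $\eps/(16d\Delta)$, which keeps the cost $\poly(s,d,1/\eps)$ since $\Delta = \poly(d,1/\eps)$. (Relatedly, the unstable region sits around the rounding thresholds, the odd multiples of $\eps/32d$, not around the multiples of $\eps/16d$ as the paper's proof writes --- your formulation in terms of thresholds is the correct one.) Second, the adaptivity: the ``at most $3$ bad values of $\delta$ per coordinate'' count is valid only for a number fixed independently of $\delta$, whereas for $t\geq 2$ the matrix $\Pi^{(t)}_x$ depends on $\delta$ through the canonical history; your induction conditions on that history, but then the round-$t$ bad set is carved out by a $\delta$-dependent quantity, the distinct histories can multiply across rounds, and the $3/\Delta$ counting no longer controls its measure --- exactly the circularity you name, which a single shared $\delta$ does not obviously survive. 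The clean fix is to use independent offsets $\delta_1,\dots,\delta_T$, one per round, as the nonuniform advice: conditioned on the canonical execution through round $t-1$, the matrix $\Pi^{(t)}_x$ is fixed and a fresh $\delta_t$ yields the per-coordinate bound honestly. In short, your reconstruction faithfully matches the paper's argument and correctly diagnoses where it is loose, but a complete proof requires carrying out these two repairs rather than noting them.
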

  \begin{proof}[proof of Claim~\ref{claim:crazy}]
    Assume that $\norm{M^{(t)}_x - \Pi^{(t)}_x}_{\max} \leq \eps/16d$.
    After shifting every number of $M^{(t)}_x$ with the amount at most $\eps/16d$ and rounding every number to a closest multiple of $\eps/16d$ to get $\tilde{\Pi}^{(t)}_x$, we have $\norm{M^{(t)}_x-\tilde{\Pi}^{(t)}_x}_{\max} \leq 2\sqrt{2}\eps/16d$, so $\norm{\tilde{\Pi}^{(t)}_x - \Pi^{(t)}_x}_{\max} \leq \eps/4d$.
    Then by Equation~(\ref{eqn:max_norm}), $\opnorm{\tilde{\Pi}^{(t)}_x - \Pi^{(t)}_x} \leq \eps/4$.

    Now we show that, conditioned on $M^{(t)}_x$ and $\Pi^{(t)}_x$ are entry-wise close, with high probability over the the shifting, the output of $C_{\rm tom}^{A^{(t)}(x, \cdot)}$ is fixed.
    Observe that, for a given number, if we shift the amount $\frac{\delta}{\Delta}\frac{\eps}{16d}$ where $\delta$ is uniformly and randomly chosen from $1, \dots, \Delta$.
    Then the probability that it locates less than $\frac{1}{\Delta}\frac{\eps}{16d}$ distance to some multiple of $\frac{\eps}{16d}$ is at most
    $3/\Delta \leq \eps/8d^2T$.

    In the procedure, we do the shifting and rounding for $2\cdot d^2 T$ real numbers.
    By the union bound, with probability at least $1 - \frac{\eps}{8d^2T}\cdot 2d^2T = 1 - \eps/4$ over choosing the shifting, every number will end up to a fixed value (since they all are far from multiples of $\frac{\eps}{16d}$).

    Recall that the tomography algorithm also only guarantee that with probability $1 - \eps/4T$, we have $\norm{M^{(t)}_x - \Pi^{(t)}_x}_{\max} \leq \eps/16d$.
    Again, union bound this two events, we have the claim.
  \end{proof}
  From the above claim, we know that a random shifting works for a fixed $x$ with probability at least $1-\eps/2$.
  Simply by an averaging argument, we know there exists a shift $\delta$ such that it works for $x$ with probability at least $1-\eps/2$ when $x$ is chosen from the distribution $X$.
  Let $W$ be the set of all $x$ that the shift $\delta$ works.
  Now we can apply Claim~\ref{claim:mmwu-sim-approx} with parameter $\eps/4$,
  \begin{align*}
    & \ex{A\left(\rho_{XB}\right)} - \ex{A\left(\sum_{x\in\zo^n}p_x\ketbra{x}\ot \sigma_x\right)}\\
    \leq& \pr{A\left(\rho_{XB}\right) = 1 \wedge x\in W}  - \pr{A\left(\sum_{x\in\zo^n}p_x\ketbra{x}\ot \sigma_x\right)=1 \wedge x\in W} + \pr{x\notin W}\\
    \leq& \eps/2 + \eps/4 = 3\eps/4.\\
  \end{align*}
  Eventually, the trace distance between the quantum states outputted by the circuit $C(x)$ and $\sigma_x$ is at most $\eps/4$.
  Then,
  \begin{align*}
    & \ex{A\left(\rho_{XB}\right)} - \ex{A\left(\sum_{x\in\zo^n}p_x\ketbra{x}\ot C(x)\right)} \\
    \leq & \ex{A\left(\rho_{XB}\right)} - \ex{A\left(\sum_{x\in\zo^n}p_x\ketbra{x}\ot \sigma_x)\right)} + \frac{\eps}{4}\\
    \leq & 3\eps/4 + \eps/4 = \eps.\\
  \end{align*}

  By Lemma~\ref{lemma:qkct-tomography}, the size of the circuit $C^{A^{(t)}(x, \cdot)}_{\rm tom}$ is at most $\poly(s, d, \eps/16d, \log(\eps/4T)) = \poly(s, d, 1/\eps)$.
  The circuit $C^{A^{(t)}(x, \cdot)}_{\rm tom}$ is executed $T = O(\log d / \eps^2)$ times.
  In the final step, the quantum state can be constructed from its description by a circuit of size polynomial in its description length and dimension~\cite{SBM05}.
  Summarily, the complexity of the circuit $C$ is $\poly(s, d, 1/\eps) = \poly(s, 2^\ell, 1/\eps)$.
\end{proof}

\begin{remark}
  We can also consider the leakage chain rule in the model that quantum distinguishers are given quantum advice.
  The same proof is still applicable as long as the simulator is allowed to have a quantum advice, since the only place that we need a quantum advice is that we use distinguishers $A^{(t)}$ as an advice when constructing the simulator.
\end{remark}

\section{Application to Quantum Leakage-Resilient Cryptography}\label{sec:app-resilient}

Classically, important applications of the Leakage Simulation Lemma and the Leakage Chain Rule are to Leakage-Resilient Cryptography, which aims to construct secure cryptographic protocols even if side information about the honest parties' secrets leak to an adversary.
For instance, the security of leakage-resilient stream cipher based on any weak pseudorandom function (weak PRF) was proved using the classical Leakage Simulation Lemma~\cite{Pie09,JP14}, and the security of the construction based on a pseudorandom generator (PRG) was proved by the classical Leakage Chain Rule~\cite{DziembowskiP08}.

Here, we apply our Quantum Leakage Simulation Lemma to obtain a stream-cipher that is secure against quantum adversaries that can get quantum leakage as well as classical leakage, provided that the adversary has bounded quantum storage. (The classical storage of the adversary is unbounded.)
The construction is the same as in~\cite{DziembowskiP08} but instantiated with a PRG secure against quantum adversaries with quantum advice.
Several issues arise when we generalize the classical proofs to handle quantum leakage.
In particular, we do not know how to prove the security of the weak PRF-based construction of~\cite{Pie09} or the security against general quantum adversaries (with unbounded quantum storage).
Furthermore, our proof generalizes the classical proof of~\cite{JP14}, but with certain necessary modifications to make the proof go through in the quantum setting.
We discuss the issues we encounter after describing the construction of~\cite{DziembowskiP08}.

\subsection{Quantum Leakage-resilient Stream Cipher}\label{subsec:stream-cipher}

In this section, we generalize the leakage-resilient stream-cipher defined in~\cite{DziembowskiP08} to capture quantum leakage in the bounded-quantum-storage model.
We first review the classical model informally.
A \emph{stream-cipher} is given by a function $\SC:\zo^m\to\zo^m\times\zo^n$.
Suppose the initial internal state is $S^{(0)}\in\zo^m$.
In the $i$-th round, $(S^{(i)}, X^{(i)}) = \SC(S^{(i-1)})$ is computed.
When we recursively apply the function $\SC$, the internal state evolves and generates the output $X^{(1)}, X^{(2)}, \dots$.
Informally, a stream cipher $\SC$ is secure if for all $i$, $X^{(i)}$ is pseudorandom given $X^{(1)}, \dots, X^{(i-1)}$.

\paragraph{Classical Leakage-resilient Stream Cipher}

Recall that for classical \emph{leakage-resilient stream cipher}, at round $i$, the adversary learns not only the output $X^{(i)}$ but also some bounded length leakage about the internal state $S^{(i-1)}$ that used for generating $X^{(i)}$.
More precisely, the security is captured by the following game.
Suppose the adversary's memory right before round $i$ is $V^{(i-1)}$, which could contain all the leakages and outputs ahead this round and any information that can be inferred from them.
Also a leakage function $f^{(i)}$ \footnote{Here we use $f^{(i)}$ to denote a leakage function used in the $i$-th round, but not applying the function $i$ times.}
 with range at most $\ell$ bits is already chosen by the adversary.
During round $i$, when $(S^{(i)}, X^{(i)}) = \SC(S^{(i-1)})$ is computed, the adversary $A$ can learn some leakage information $\Lda^{(i)} = f^{(i)}(\hat{S}^{(i-1)})$ where $\hat{S}^{(i-1)}$ denotes the part of $S^{(i-1)}$ that is used for evaluating $\SC(S^{(i-1)})$ (namely, following the ``\emph{only computation leak}'' model~\cite{MicaliR04}).
We say the stream cipher is $(s, \eps, q, \ell)$-secure if for all $1\leq i\leq q$, no distinguisher of size $s$ can distinguish $X^{(i)}$ from a uniformly random string with advantage more than $\eps$ before round $i$ (before seeing $X^{(i)}$ and the leakage $\Lda^{(i)}$ of length at most $\ell$) with advantage more than $\eps$.

\paragraph{Bounded-quantum-storage Model}
In the classical case, there is no restriction on the size of adversary's memory, while in the bounded quantum storage model, we assume the adversary has an $\ell$-qubit quantum memory.
Here we choose $\ell$ to be same as the maximal length of a leakage in each round for convenience.
We use $(V^{(i-1)}, \tau^{(i-1)})\in\zo^*\ot\density{\bbC^{2^\ell}}$ to represent the memory state after round $(i-1)$.

The game in the quantum setting is different from the classical setting in the following way.
At round $i$, the adversary receives an $\ell_i$-qubit leakage according to the leakage function $f^{(i)}$ where $\ell_i \leq \ell$.
(Recall that $f^{(i)}$ can be chosen adaptively, so it can be correlated to $(V^{(i-1)}, \tau^{(i-1)})$.
Also, ($V^{(i-1)}, \tau^{(i-1)})$ is the memory state after the $f^{(i)}$ is decided.)
We use $\tau_{[j]}$ to denote the quantum state in the first $j$ bits system.
Namely, the last $\ell-j$ qubits of $\tau$ are traced out.
If we write a state $\tau$ as
\[\tau = \tau_{YZ} \in \density{\cY\ot\cZ} = \density{\bbC^{2^{\ell-\ell'}}\ot\bbC^{2^{\ell'}}},\]
then $\tau_{[\ell']} = \tr[\cZ]{\tau}$.
Suppose $\lda^{(i)} = f^{(i)}(\hat{S}^{(i-1)})$ is the (quantum) leakage in the $i$-th round.
After receiving the leakage, the adversary replaces the last $\ell_i$ qubits in its quantum memory by $\lda^{(i)}$.
The memory state then becomes $((V^{(i-1)}, X^{(i)}), \tau^{(i-1)}_{[\ell-\ell_i]}\ot\lda^{(i)})$.
Next, the adversary prepare the leakage function $f^{(i+1)}$ for the next round, and the memory state becomes $(V^{(i)}, \tau^{(i)})$.

Now we define a security game $G^q_0$ between a stream cipher $\SC$ and an adversary $A$.

\begin{center}
  \begin{Sbox}\begin{minipage}{\columnwidth-40pt}
    \begin{enumerate}
      \item Initially, the cipher randomly generates a secret state $S^{(0)} \in\zo^m$.
        Adversary $A$ generates the leakage function $f^{(1)}$.
        Let adversary's memory state be $(V^{(0)}, \tau^{(0)})$.
      \item For $i = 1, \dots, q-1$
      \begin{enumerate}
        \item $(S^{(i)}, X^{(i)}) = \SC(S^{(i-1)})$, where $S^{(i)}$ is the new secret state, and $X^{(i)}$ is the output at round $i$.
        \item An $\ell_i$-qubit quantum leakage $\lda^{(i)} = f_i(\hat{S}^{(i-1)})$ where $\ell_i\leq\ell$ is given to the adversary.
          $\hat{S}^{(i-1)}$ denotes the part of $S^{(i-1)}$ that is used for evaluating $\SC(S^{(i-1)})$.
        \item After seeing $X^{(i)}$ and placing $\lda^{(i)}$ in the last $\ell_i$ qubits of quantum memory, the adversary's memory state becomes $((V^{(i-1)}, X^{(i)}), \tau^{(i-1)}_{[\ell-\ell']}\ot\lda^{(i)})$.
        \item Adversary produces the leakage function $f^{(i+1)}$ for the next round.
          The memory state becomes $(V^{(i)}, \tau^{(i)})$.
      \end{enumerate}
      \item $(S^{(q)}, X^{(q)}) = \SC(S^{(q-1)})$.
      \item The adversary $A$ is given $X^{(q)}$ and outputs a bit.
        \end{enumerate}
  \end{minipage}\end{Sbox}
  \fbox{\TheSbox}
\end{center}

We use $A(G)$ to denote the output of the adversary $A$ in a game $G$.
The game $G$ implicitly depends on the stream cipher $\SC$.
Also, we say two games $G_1$ and $G_2$ are $\eps$-indistinguishable by $A$ if
\[\left|\pr{A(G_1) = 1} - \pr{A(G_2) = 1}\right| \leq \eps.\]
To define the security of a stream cipher, we also consider the game $\tG^q_0$, which is identical to the game $G^q_0$, except $X^{(q)}$ is replaced by a uniform random string with a same length at the end.
Now we are ready to define the security of a quantum leakage-resilient stream cipher:

\begin{definition}
  A quantum leakage-resilient stream cipher $\SC:\zo^m\to\zo^m\times\zo^n$ is $(s, \eps, q, \ell)$-secure in the bounded quantum storage model if for every quantum adversary $A$ of size $s$ with an $\ell$-qubit memory and every $q'\in[q]$, $G^{q'}_0$ and $\tG^{q'}_0$ are $\eps$-indistinguishable by $A$.
  Namely
  \[\left|\pr{A(G^{q'}_0) = 1} - \pr{A(\tG^{q'}_0) = 1}\right| \leq \eps.\]
\end{definition}

\subsection{Construction}
The construction follows the one in \cite{DziembowskiP08}.
First, we define a function $F:\zo^{k+n}\to\zo^{k+n}$, which serves as a building block of the construction.
\[F(K, X) = \Prg(\Ext(K, X), X)\]
where $\Ext:\zo^{k+n}\to\zo^m$ is a quantum-proof strong randomness extractor (\eg~Trevisan's extractor~\cite{Tre01, DePVR12}) and $\Prg:\zo^m\to\zo^{k+n}$ is a pseudorandom generator secure against quantum adversary.
The existence of quantum-secure PRGs can base on the quantum security of primitives implying PRG (\eg~lattice assumptions or quantum-secure one-way functions)~\cite{Song14}.
More specifically,
\begin{itemize}
  \item Quantum-proof strong randomness extractor~\cite{DePVR12}:

    We say $\Ext:\zo^{k+n}\to\zo^m$ is an \emph{$(k_{\Ext}, \eps_{\Ext})$-quantum-proof extractor} if for all cq-state $\rho_{KV}$ with $\Hmin(K|V)_{\rho} \geq k_{\Ext}$, the trace distance between two ccq-state
    \[(\Ext(K, U_n), U_n, V) \mbox{ and } (U_m, U_n, V)\]
    is at most $\eps_{\Ext}$ where $m = k_{\Ext} - 4\log(1/\eps_{\Ext}) - O(1)$. $U_m$ and $U_n$ are uniform distribution over $m$ and $n$ bits, respectively. (Two $U_n$s in the first state are the same sample from a uniform distribution.)
  \item Quantum-secure pseudorandom generator:

    We say $\Prg:\zo^m\to\zo^{n}$ is an $(s_{\Prg}, \eps_{\Prg})^*$-quantum-secure if for all quantum distinguisher $A$ of size $s_{\Prg}$ with quantum advice,
    \[\left|\pr{A(\Prg(U_m)) = 1} - \pr{A(U_{n}) = 1}\right| \leq \eps_{\Prg}.\]
\end{itemize}

Combine the properties of the extractor and the pseudorandom generator, we have the following claim:
\begin{claim}\label{claim:f}
  Let $\Ext:\zo^{k+n}\to\zo^m$ be a $(k_{\Ext}, \eps_{\Ext})$-quantum-proof extractor, $\Prg:\zo^{m+n}\to\zo^{k+n}$ be an $(s_{\Prg}, \eps_{\Prg})$-quantum-secure and define $F:\zo^{k+n}\to \zo^{k+n}$ to be $F(K, X) = \Prg(\Ext(K, X), X)$.
  If a cq-state $\rho_{KV}\in\zo^k\ot\density{\cH}$ satisfies $\Hmin(K|V)_{\rho} \geq k_{\Ext}$, then for all $(s_{\Prg})$-size quantum distinguisher $A$, we have
  \[\left|\pr[X\sim U_n]{A(F(K, X), V) = 1} - \pr{A(U_{k+n}, V) = 1}\right| \leq \eps_{\Ext}+\eps_{\Prg}.\]
\end{claim}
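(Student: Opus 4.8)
The plan is a two-step hybrid argument that peels off the extractor error and the PRG error in turn, connecting $(F(K,X), V)$ to $(U_{k+n}, V)$ through the intermediate distribution $(\Prg(U_{m+n}), V)$.

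First I would invoke the extractor guarantee. Since the seed $X$ is drawn from $U_n$ independently of $(K,V)$ and $\Hmin(K|V)_\rho \geq k_{\Ext}$, the definition of a $(k_{\Ext}, \eps_{\Ext})$-quantum-proof extractor gives that the ccq-states $(\Ext(K, X), X, V)$ and $(U_m, X, V)$ are within trace distance $\eps_{\Ext}$, where $U_m$ is a fresh uniform string independent of $(X, V)$. Now apply the deterministic map $(a, x)\mapsto \Prg(a, x)$ to the first two (classical) registers and discard the seed register; since trace distance does not increase under any quantum operation (a fortiori under applying a classical function and tracing out a register), the resulting states $(F(K,X), V)$ and $(\Prg(U_m, X), V)$ remain within trace distance $\eps_{\Ext}$. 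Because $(U_m, X)$ is uniform on $\zo^{m+n}$, we have $\Prg(U_m, X) = \Prg(U_{m+n})$, so by the bound (\ref{eq:trace-distance-bound}) every distinguisher $A$ (in particular every $s_{\Prg}$-size one) satisfies
\[\left|\pr[X\sim U_n]{A(F(K,X), V) = 1} - \pr{A(\Prg(U_{m+n}), V) = 1}\right| \leq \eps_{\Ext}.\]

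The second step uses PRG security. The key observation is that in the state $(\Prg(U_{m+n}), V)$ the PRG input is now independent of $V$, so the joint state is the product of the classical distribution $\Prg(U_{m+n})$ with the fixed marginal $\rho_V$; likewise $(U_{k+n}, V)$ is the product of $U_{k+n}$ with the same $\rho_V$. Hence a size-$s_{\Prg}$ distinguisher $A$ for these two states can be converted into a size-$s_{\Prg}$ distinguisher $A'$ for $\Prg(U_{m+n})$ versus $U_{k+n}$ that hardwires $\rho_V$ as quantum advice and then runs $A$. By the $(s_{\Prg}, \eps_{\Prg})^*$-quantum-security of $\Prg$ against distinguishers with quantum advice,
\[\left|\pr{A(\Prg(U_{m+n}), V) = 1} - \pr{A(U_{k+n}, V) = 1}\right| \leq \eps_{\Prg}.\]
Combining the two bounds by the triangle inequality yields the claimed bound $\eps_{\Ext} + \eps_{\Prg}$.

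The only real subtlety, and the step I would be most careful about, is the passage from PRG security (a statement about a classical input with no side information) to the presence of the quantum register $V$. This is exactly why the claim requires the PRG to be secure against quantum distinguishers holding \emph{quantum} advice rather than merely classical advice: although $V$ is correlated with the secret $K$ in $\rho_{KV}$, after the extractor step the PRG seed is independent of $V$, so $V$ collapses to a fixed advice state $\rho_V$ that can be baked into the distinguisher. I would also double-check the bookkeeping in the first step that the public seed $X$ is correctly reused as the same sample feeding both $\Ext$ and the second argument of $\Prg$, and then discarded, matching the form of the extractor guarantee stated in the construction.
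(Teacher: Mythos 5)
Your proof is correct and follows essentially the same route as the paper's: the same hybrid through $(\Prg(U_m, X), V)$, the extractor guarantee combined with the data-processing (monotonicity) of trace distance for the first step, and the $(s_{\Prg},\eps_{\Prg})^*$-security of the PRG with $V$ hardwired as quantum advice for the second. You spell out the product structure of $(\Prg(U_{m+n}),V)$ and the advice-hardwiring more explicitly than the paper does, but the argument is the same.
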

\begin{proof}
  \begin{align*}
    & \size{\pr[X\sim U_n]{A(F(K, X), V) = 1} - \pr{A(U_{k+n}, V) = 1}}\\
    = & \size{\pr[X\sim U_n]{A(\Prg(\Ext(K, X), X), V) = 1} - \pr{A(U_{k+n}, V) = 1}}\\
    \leq & \size{\pr[X\sim U_n]{A(\Prg(U_m, X), V) = 1} - \pr{A(U_{k+n}, V) = 1}} + \eps_{\Ext}\\
    \leq & \eps_{\Prg} + \eps_{\Ext}.
  \end{align*}
  The first inequality is because the trace distance between $(\Ext(K, X), X, V)$ and $(U_m, X, V)$ is at most $\eps_{\Ext}$.
  Applying a same function on two quantum states can only decrease the trace distance.
  The second inequality is due to the property of the quantum-secure pseudorandom generator defined above.
  Here $V$ can be seen as a quantum advice.
\end{proof}

Based on the function $F$, we define the $q$-round stream cipher $\SC$ as follows.
Let $S^{(i)} = (K^{(i)}, K^{(i+1)}, X^{(i)})$ where $K^{(i)}\in\zo^k, X^{(i)}\in\zo^n$.
Define
\[ \SC(S^{(i-1)}) = (S_i, X_i) = ((K^{(i)}, K^{(i+1)}, X^{(i)}), X^{(i)}) \mbox{ where } (K^{(i+1)}, X^{(i)}) = F(K^{(i-1)}, X^{(i-1)}). \]
Note that we repeat $X^{(i)}$ in the internal state $S^{(i)}$ just to make the definition consistent with the definition of stream cipher previously.
Clearly, $K_i$ is intact when calculating $(S^{(i)}, X^{(i)})$ from $S^{(i-1)}$, so $\hat{S^{(i-1)}} = K^{(i-1)}$.

Now we discuss the issues we encounter when we generalize the classical proofs of~\cite{DziembowskiP08,Pie09,JP14}.
The main problem is that many steps of the proofs require conditioning on the value of the leakage or the view of the adversary.
Again, we cannot do such conditioning for a quantum state, and this is reminiscent to the difficulty of proving that a randomness extractor is secure with quantum side information.
For example, the proofs of~\cite{DziembowskiP08,Pie09} based on the Leakage Chain Rule proceeds by conditioning on the value of the leakage in the previous rounds.
Also, the construction of~\cite{Pie09} relies on leakage-resilient weak PRFs, whose security proofs~\cite{Pie09,BDKPPSY11} require to condition on the value of the leakage.\footnote{More precisely, they show that the output of the weak PRF remains pseudorandom when the key has sufficiently high entropy, and use the fact that with high probability over the leakage value, the key conditioned on the leakage value has high entropy.}
Also, we mention that the security proof of~\cite{Pie09} for leakage-resilient weak PRF applies a  gap application procedure to, which suffers our barrier result in Section~\ref{sec:barrier}.

Fortunately, when we apply the proof of~\cite{JP14} based on the classical Leakage Simulation Lemma to the construction of~\cite{DziembowskiP08}, and replace the classical lemma with our quantum Leakage Simulation Lemma, we avoid most of the issues of such conditioning. Roughly speaking, the Leakage Simulation Lemma allows us to simulate the leakage as an efficient function of the view of the adversary without any conditioning. However, in one step of the proof we need to argue that the extractor $\Ext(K^{(i)}, X^{(i)})$ can extract (pseudo)entropy from $K^{(i)}$ using seed $X^{(i)}$, which requires to argue the independence of $K^{(i)}$ and $X^{(i)}$. The classical proof establishes independence by conditioning on the view of the adversary~\cite{DP07}. However, such conditional independence does not hold when the leakage becomes quantum. We resolve this issue by observing that the independence can be established in the hybrids, which is sufficient to carry the proof.

\begin{theorem}\label{thm:resilient}
  Let $\eps_{\Ext}=\eps_{\Prg}=\eps/8q$.
  There exists $s_{\Prg} = \poly(s, 2^{\ell}, 1/\eps, q, n, k)$ such that if $\Prg:\zo^m\to\zo^{k+n}$ is an $(s_{\Prg}, \eps_{\Prg})$-quantum-secure pseudorandom generator and $\Ext:\zo^{k+n}\to\zo^m$ is an $(\eps_{\Ext}, k-\ell)^*$-quantum-proof extractor, then the above construction for $\SC$ is a $(s, \eps, q, \ell)$-secure leakage-resilient stream cipher.
\end{theorem}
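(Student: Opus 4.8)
The plan is to establish security by a hybrid argument that sweeps through the $q'$ rounds, replacing the output of the building block $F$ with a fresh uniform string one round at a time. For $0\le j\le q'$ I would let $\mathsf{Game}_j$ be identical to $G^{q'}_0$ except that in each of the first $j$ rounds the pair $(K^{(i+1)},X^{(i)})=F(K^{(i-1)},X^{(i-1)})$ is replaced by a uniformly random element of $\zo^{k+n}$. Since $G^{q'}_0$ and $\tilde G^{q'}_0$ agree on rounds $1,\dots,q'-1$ and differ only in whether $X^{(q')}$ is real or uniform, it suffices to transport both games to the common game in which rounds $1,\dots,q'-1$ are idealized and $X^{(q')}$ is uniform; by the triangle inequality the total advantage is then at most $O(q)$ times the advantage between consecutive hybrids. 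With each step costing $\eps_{\Ext}+\eps_{\Prg}$ plus a simulation error and with $\eps_{\Ext}=\eps_{\Prg}=\eps/8q$, summing the $O(q)$ terms stays within the budget $\eps$.

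The core of a single step is Claim~\ref{claim:f} applied to round $j+1$, where $F(K^{(j)},X^{(j)})$ is computed. To invoke it I must certify that $K^{(j)}$ has conditional min-entropy at least $k_{\Ext}=k-\ell$ given the adversary's (quantum) view, and that the extractor seed $X^{(j)}$ is uniform and independent of $K^{(j)}$. Both are arranged by the hybrid: in $\mathsf{Game}_j$ the key $K^{(j)}$ (generated at round $j-1$) and the seed $X^{(j)}$ (generated at round $j$) are two independent fresh uniform strings, so $X^{(j)}$ is an admissible strong-extractor seed even though it is public to the adversary. This is exactly the point where I avoid the classical argument that derives independence by conditioning on the view, a maneuver with no quantum analogue; here independence holds in the hybrid by construction. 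The only leakage touching $K^{(j)}$ is the $\ell_{j+1}\le\ell$ qubits revealed at round $j+1$, and because $K^{(j)}$ is classical the joint state of key, prior view, and this leakage is separable, so Theorem~\ref{thm:quantum-chain-rule-sep} yields min-entropy at least $k-\ell$ rather than the $k-2\ell$ of the general bound. Claim~\ref{claim:f} then shows $(K^{(j+2)},X^{(j+1)})$ is $(\eps_{\Ext}+\eps_{\Prg})$-indistinguishable from uniform, which is precisely the passage from $\mathsf{Game}_j$ to $\mathsf{Game}_{j+1}$.

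The main obstacle is that the reduction certifying each step must be an efficient quantum $\Prg$-distinguisher, yet it has to reconstruct the adversary's $\ell$-qubit quantum memory, which in the real execution is an adaptively chosen quantum function of the secret keys. I would resolve this with the Quantum Leakage Simulation Lemma (Theorem~\ref{thm:leak-sim}): taking the classical transcript as the classical register and the adversary's memory as the $\ell$-qubit register, the lemma provides a circuit $C$ of size $\poly(s,2^\ell,1/\eps)$ whose output $C(\mbox{transcript})$ is indistinguishable from the true memory. This decouples the quantum leakage from the secret and lets the reduction manufacture everything it needs from public classical data, at which point the $\Prg$ challenge can be planted and Claim~\ref{claim:f} driven to a contradiction. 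It is this simulation that dictates the bounded-quantum-storage restriction, since the simulator's size scales with $2^\ell$ and efficiency demands $\ell=O(\log\secp)$. Accumulating the simulator over the $O(q)$ rounds together with the extractor and distinguisher overhead gives the stated $s_{\Prg}=\poly(s,2^\ell,1/\eps,q,n,k)$, while bounding each per-round simulation error by $O(\eps/q)$ keeps the overall advantage below $\eps$.
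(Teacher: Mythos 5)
Your proposal is correct and follows essentially the same route as the paper's proof: round-by-round hybrids that replace $F$'s output with uniform, the Quantum Leakage Simulation Lemma to replace the adversary's quantum memory by an efficient function of the classical transcript (your per-step ``simulation error'' is exactly the paper's intermediate hybrid $G^q_{i\to i+1}$), independence of key and seed established by construction in the hybrid rather than by conditioning, the separable chain rule (Theorem~\ref{thm:quantum-chain-rule-sep}) for the $k-\ell$ entropy bound, and the bounded-quantum-storage restriction arising from the simulator's $\poly(2^\ell)$ size. The error and circuit-size accounting also matches the paper's.
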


As one will see, the main reason that we can only handle the bounded-quantum-storage adversary is that we do not have a simulation leakage lemma for a quantum source.
We have to treat the whole quantum state maintained by the adversary as a leakage.

\begin{proof}
  Let $s_{\Prg} = s + s_{\SC} + s_g$ where $s_{\SC}$ is the circuit size of the $q$-round stream cipher and $s_g = \poly(s, s_{\SC}, n, k, 2^\ell, q, 1/\eps)$ is the circuit size of a "leakage simulator", which will be defined later.
  In this proof, for all $0\leq i \leq q$, $\tX^{(i)}$ and $\tK^{(i)}$ represent independent uniform distributions over $\zo^n$ and $\zo^k$, respectively.

  We will define the hybrid games $G^q_{0}, G^q_{0\to1}, G^q_{1}, \dots, G^q_{q-1}$.
  Then show that a bounded adversary cannot distinguish neighboring two games.

  \begin{itemize}
    \item {\bf From $G^q_{i}$ to $G^q_{i\to{i+1}}$}

      $G^q_{i}$ and $G^q_{i\to{i+1}}$ are identical until round $(i-1)$.
      At round $i$ of $G^q_{i\to{i+1}}$, after the adversary queries $f^{(i)}(\tK^{(i-1)})$, its quantum memory becomes $g^{(i)}(V^{(i-1)}, X^{(i)}, K^{(i+1)})$ instead of $\tau^{(i-1)}_{[\ell-\ell_i]}\ot f^{(i)}(\tK^{(i-1)})$.
      Here $g^{(i)}$ is a quantum simulator of size $s_{g} = \poly(s, s_{\SC}, n, k, q, 2^{\ell}, 1/\eps)$ such that
      \[\left(V^{(i-1)}, X^{(i)}, K^{(i+1)}, \tau^{(i-1)}_{[\ell-\ell_i]}\ot f^{(i)}(\tK^{(i-1)})\right) \mbox{ and } \left(V^{(i-1)}, X^{(i)}, K^{(i+1)}, g^{(i)}(V^{(i-1)}, X^{(i)}, K^{(i+1)})\right)\]
      are $(s_{\SC}+s, \eps/4q)$-indistinguishable.
      The existence and the property of the simulator $g^{(i)}$ is by the Leakage Simulation Lemma.
      Specifically, treating the $(V^{(i-1)}, X^{(i)}, K^{(i+1)})$ as $X$ and $\tau_{[\ell-\ell_i]}^{(i-1)}\ot f^{(i)(\tK^{(i-1)})}$ as $B$ in Theorem~\ref{thm:leak-sim}, then we show the existence of $g^{(i)}$.
      Now we prove that the games $G^q_i$ and $G^q_{i\to i+1}$ are $\eps/4q$-indistinguishable by any $s$-size quantum distinguisher $A$.
      \begin{claim}
        For all $s$-size quantum distinguisher $A$,
        \[\size{ \pr{A(G^q_i) = 1} - \pr{A(G^q_{i\to i+1}) = 1} } \leq \frac{\eps}{4q}.\]
      \end{claim}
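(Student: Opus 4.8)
The plan is to prove the claim by a routine reduction that converts an $s$-size game-distinguisher into a distinguisher for the two cq-states appearing in the Leakage Simulation Lemma, and then to invoke the $(s_{\SC}+s,\eps/4q)$-indistinguishability already guaranteed for the simulator $g^{(i)}$. The starting observation is that $G^q_i$ and $G^q_{i\to i+1}$ are produced by the \emph{same} experiment and differ in exactly one place: the content of the adversary's $\ell$-qubit register immediately after round~$i$. In $G^q_i$ this register holds the real state $\tau^{(i-1)}_{[\ell-\ell_i]}\ot f^{(i)}(\tK^{(i-1)})$, while in $G^q_{i\to i+1}$ it holds the simulated state $g^{(i)}(V^{(i-1)}, X^{(i)}, K^{(i+1)})$; the classical view $(V^{(i-1)},X^{(i)})$ and every subsequent step of the game are identical functions of this register together with the classical data.

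Concretely, I would fix an arbitrary $s$-size distinguisher $A$ for the two games and build a distinguisher $D$ for the pair of states
\[\Theta_0 = \left(V^{(i-1)}, X^{(i)}, K^{(i+1)}, \tau^{(i-1)}_{[\ell-\ell_i]}\ot f^{(i)}(\tK^{(i-1)})\right),\quad \Theta_1 = \left(V^{(i-1)}, X^{(i)}, K^{(i+1)}, g^{(i)}(V^{(i-1)}, X^{(i)}, K^{(i+1)})\right).\]
On input a challenge $\Theta$ (either $\Theta_0$ or $\Theta_1$), $D$ sets the adversary's memory to $((V^{(i-1)}, X^{(i)}), \xi)$, where $\xi$ is the quantum register of $\Theta$, exactly as in step~2(c) of the game, thereby placing $A$ in the post-round-$i$ configuration. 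Then $D$ runs rounds $i+1,\dots,q$ internally: it evaluates $\SC$ forward (using $K^{(i+1)}$ and $X^{(i)}$ from the challenge, together with whatever carried-over key material rounds $i+1,\dots,q$ require, which in the hybrid $G^q_i$ is fresh independent randomness that $D$ samples itself), feeds $A$ the outputs $X^{(i+1)},\dots,X^{(q)}$, answers $A$'s adaptively chosen leakage queries $f^{(j)}$ by computing $f^{(j)}(\hat S^{(j-1)})$ and writing the result into the last qubits of the register, and finally outputs $A$'s guess. Since this continuation is a fixed procedure applied to $\Theta$ and to $D$'s own randomness, input $\Theta_0$ reproduces the distribution of $A(G^q_i)$ and input $\Theta_1$ reproduces $A(G^q_{i\to i+1})$, so $D$'s advantage on $(\Theta_0,\Theta_1)$ equals $A$'s advantage on the two games.

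It then remains to bound the size of $D$ and conclude. Running $\SC$ forward costs at most $s_{\SC}$ gates and running $A$ to completion costs at most $s$ gates, so $D$ has size at most $s_{\SC}+s$. By the defining property of $g^{(i)}$, obtained from Theorem~\ref{thm:leak-sim} with $X=(V^{(i-1)},X^{(i)},K^{(i+1)})$ and $B=\tau^{(i-1)}_{[\ell-\ell_i]}\ot f^{(i)}(\tK^{(i-1)})$, the states $\Theta_0$ and $\Theta_1$ are $(s_{\SC}+s,\eps/4q)$-indistinguishable. Hence $\size{\pr{D(\Theta_0)=1}-\pr{D(\Theta_1)=1}}\le \eps/4q$, and therefore $\size{\pr{A(G^q_i)=1}-\pr{A(G^q_{i\to i+1})=1}}\le \eps/4q$, which is the claim.

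The main obstacle I anticipate is the bookkeeping in the continuation step rather than any nontrivial inequality: I must verify that $D$ genuinely possesses, or can freshly generate with the correct distribution, every piece of secret state needed to drive rounds $i+1,\dots,q$ (in particular the key that is used going forward but is not handed to $D$ in the challenge), and that $D$ never needs to \emph{condition} on the quantum register, which for a truly quantum memory is impossible. This is precisely why the entire $\ell$-qubit register is packaged as the ``leakage'' $B$ fed to Theorem~\ref{thm:leak-sim}, and why we are pushed into the bounded-quantum-storage regime $\ell=O(\log\secp)$: the simulator size in Theorem~\ref{thm:leak-sim} scales with $2^{\ell}$, so only a logarithmic register keeps $s_g$, and hence $s_{\Prg}$, polynomial. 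The separate and more delicate independence issue between $K^{(i)}$ and $X^{(i)}$ does not intrude here, as it arises only in the companion transition $G^q_{i\to i+1}\to G^q_{i+1}$ via Claim~\ref{claim:f}.
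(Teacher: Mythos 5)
Your proposal is correct and is essentially the paper's own argument: the paper likewise defines the two hybrid states $L^q_i$ and $L^q_{i\to i+1}$ (your $\Theta_0,\Theta_1$), builds a distinguisher that emulates the game from the $i$-th round leakage onward using the challenge's $(V^{(i-1)},X^{(i)},K^{(i+1)})$ plus a freshly sampled uniform $\tK^{(i)}$, bounds its size by $s_{\SC}+s$, and concludes from the $(s_{\SC}+s,\eps/4q)$-indistinguishability guaranteed by the Leakage Simulation Lemma. The only cosmetic difference is that the paper phrases the reduction contrapositively, while you argue directly.
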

      \begin{proof}
        Let \begin{align*}
          & L^q_i = \left(V^{(i-1)}, X^{(i)}, K^{(i+1)}, \tau^{(i-1)}_{[\ell-\ell_i]}\ot f^{(i)}(\tK^{(i-1)})\right) \\
          \mbox{ and } & L^q_{i\to i+1} = \left(V^{(i-1)}, X^{(i)}, K^{(i+1)}, g^{(i)}(V^{(i-1)}, X^{(i)}, K^{(i+1)})\right).
        \end{align*}
        Assuming there exists an $s$-size quantum distinguisher $A$ for games $G^q_i$ and $G^q_{i\to i+1}$, we define the distinguisher $A'$ for $L^q_i$ and $L^q_{i\to i+1}$.
        When the input is $L^q_i$ (resp., $G^q_{i\to i+1}$), the distinguisher $A'$ emulates the game $G^q_i$ (resp., $G^q_{i\to i+1}$) between $\SC$ and the adversary $A$ starting from the leakage in the $i$-th round, then output the decision of $A$.
        The state of the stream cipher after round $i$ is $(X^{(i)}, \tK^{(i)}, K^{(i+1)})$.
        First two terms $X^{(i)}$ and $K^{(i+1)}$ are provided by both $L^q_i$ and $L^q_{i\to i+1}$, while $\tK^{(i)}$ is merely an uniform string.
        For the adversary's memory in both games, it can also be obtained from $L^q_i$ and $L^q_{i\to i+1}$, respectively.
        Therefore, the size of the distinguisher $A'$ is $s_{\SC} + s$.
        For such $A'$, we have
        \[\size{\pr{A'(L^q_i) = 1} - \pr{A'(L^q_{i\to i+1}) = 1} } = \size{ \pr{A(G^q_i) = 1} - \pr{A(G^q_{i\to i+1}) = 1} } > \frac{\eps}{4q},\]
        which contradicts the indistinguishability.
      \end{proof}

    \item {\bf From $G^q_{i\to{i+1}}$ to $G^q_{i+1}$}

      We define the game $G_{i+1}$ from $G_{i\to{i+1}}$ as follows.
      After step~(b) at round $i$, the output $X^{(i)}$ and part of the internal state $K^{(i+1)}$ are replaced by $\tX^{(i)}$ and $\tK^{(i+1)}$, respectively.
      Note that the replacements lead to the following changes.
      \begin{enumerate}
        \item In the step (c) of round $i$, the adversary sees $\tX^{(i)}$ instead, so the classical part of the memory becomes $(V^{(i-1)}, \tX^{(i)})$ after the $i$-th leakage in Game $G_{i+1}$.
        \item $\tlda_i$ in the above is also changed implicitly, since it depends on $X^{(i)}$ and $K^{(i+1)}$.
          Namely, $\tlda^{(i)} = g_i(V^{(i-1)}, \tX^{(i)}, \tK^{(i+1)})$.
        \item At round $(i+2)$ (if exists), $\lda^{(i+2)}$ is also changed to $f^{(i+2)}(\tK^{(i+1)})$.
      \end{enumerate}
      We prove the following claim.
      \begin{claim}
        For all $s$-size quantum distinguisher $A$,
        \[\size{ \ex{A(G^q_{i\to i+1})} - \ex{A(G^q_{i+1})} } \leq \frac{\eps}{4q}.\]
      \end{claim}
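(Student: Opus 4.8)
The plan is to prove this claim by a reduction to Claim~\ref{claim:f}, exploiting the fact that the \emph{only} difference between $G^q_{i\to i+1}$ and $G^q_{i+1}$ is that the pair $(K^{(i+1)}, X^{(i)}) = F(\tK^{(i-1)}, \tX^{(i-1)})$ produced at round $i$ is replaced by a freshly sampled uniform string $(\tK^{(i+1)}, \tX^{(i)})$. Everything after round $i$ in both games---the simulated quantum memory $g^{(i)}(V^{(i-1)}, X^{(i)}, K^{(i+1)})$, the classical view, and the internal state $S^{(i)} = (\tK^{(i)}, K^{(i+1)}, X^{(i)})$ that drives rounds $i+1, \dots, q$---is a single efficient (classical-input) function of this pair together with values the reduction can sample on its own (in particular the uniform key $\tK^{(i)}$ introduced in hybrid $G^q_i$). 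So given an $s$-size distinguisher $A$ for the two games, I would build a distinguisher $A'$ that receives a challenge $W$, equal to either $F(\tK^{(i-1)}, \tX^{(i-1)})$ or a uniform $(k+n)$-bit string, parses $W$ as $(K^{(i+1)}, X^{(i)})$, and then runs the game from round $i$ onward with $A$ embedded, outputting $A$'s final bit. By construction $A'$'s distinguishing advantage equals $A$'s, and $A'$ has size at most $s + s_{\SC} + s_g = s_{\Prg}$.

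Next I would supply the view $V$ that $A'$ conditions on and verify the hypothesis of Claim~\ref{claim:f}, namely $\Hmin(\tK^{(i-1)} \mid V) \geq k_{\Ext} = k-\ell$. The reduction needs only the adversary's memory $(V^{(i-1)}, \tau^{(i-1)})$ immediately before round $i$, which it generates itself by simulating rounds $1, \dots, i-1$ in their hybrid form using fresh uniform outputs and keys and the previously-constructed simulators. The essential point is that in this hybrid the pre-round-$i$ quantum memory is already a simulated function $\tau^{(i-1)} = g^{(i-1)}(V^{(i-2)}, \tX^{(i-1)}, \tK^{(i)})$ of \emph{classical} data that does \emph{not} involve $\tK^{(i-1)}$; since $\tK^{(i-1)}$ is a fresh uniform key that is only ever touched at round $i$, it is independent of $(V^{(i-1)}, \tau^{(i-1)})$ and hence has full conditional min-entropy $k \geq k_{\Ext}$. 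This is also where bounded quantum storage enters: the memory $\tau^{(i-1)}$ is only $\ell$ qubits, which is exactly what allowed the earlier hybrid step to replace it by the efficient simulator $g^{(i-1)}$ via Theorem~\ref{thm:leak-sim}.

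The main obstacle, and the step I expect to require the most care, is the interaction between the extractor seed $\tX^{(i-1)}$ and the view. Unlike $\tK^{(i-1)}$, the seed $\tX^{(i-1)}$ is visible to the adversary (it was the round-$(i-1)$ output) and, worse, it feeds into $\tau^{(i-1)} = g^{(i-1)}(V^{(i-2)}, \tX^{(i-1)}, \tK^{(i)})$, so it is \emph{correlated} with the quantum memory. The classical proof of~\cite{DP07} resolves this by conditioning on the adversary's view to fix the seed, which is impossible once the memory is quantum. Instead I would establish the needed independence \emph{within the hybrid}: writing $\tau^{(i-1)}$ as a post-processing $g^{(i-1)}(W, \tX^{(i-1)})$ of the seed together with auxiliary randomness $W = (V^{(i-2)}, \tK^{(i)})$ that is independent of both $\tK^{(i-1)}$ and $\tX^{(i-1)}$, I can invoke the \emph{strong} quantum-proof extractor with source $\tK^{(i-1)}$, independent uniform seed $\tX^{(i-1)}$, and side information $W$ to conclude that $\Ext(\tK^{(i-1)}, \tX^{(i-1)})$ is $\eps_{\Ext}$-close to uniform given $(\tX^{(i-1)}, W)$, and then apply $g^{(i-1)}$ to both states (a trace-nonincreasing operation) to recover closeness given the actual memory $\tau^{(i-1)}$. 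This justified extractor step replaces the corresponding step in the proof of Claim~\ref{claim:f}, while the PRG step (with $V$ as quantum advice) is unchanged, so the same bound goes through: $(F(\tK^{(i-1)}, \tX^{(i-1)}), V)$ and $(U_{k+n}, V)$ are $(\eps_{\Ext} + \eps_{\Prg})$-indistinguishable to any $s_{\Prg}$-size quantum distinguisher. Plugging in $\eps_{\Ext} = \eps_{\Prg} = \eps/8q$ gives the bound $\eps/4q$, and the contradiction with $A'$'s advantage completes the proof.
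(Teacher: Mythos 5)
Your overall reduction strategy (embed $A$ from round $i$ onward, feed it a challenge that is either $F(\tK^{(i-1)},\tX^{(i-1)})$ or uniform, total size $s+s_{\SC}+s_g=s_{\Prg}$) matches the paper's, and your idea of handling the seed--memory correlation by viewing the memory as a post-processing of the strong-extractor guarantee is in the same spirit as the paper's ``establish independence within the hybrid'' step. However, there is a genuine error in how you verify the extractor's entropy hypothesis. You claim that $\tK^{(i-1)}$ ``is only ever touched at round $i$'' and is therefore independent of $(V^{(i-1)},\tau^{(i-1)})$, with full conditional min-entropy $k$. This is false in the hybrid games: the round-$(i-2)$ leakage simulator is $g^{(i-2)}(V^{(i-3)},\tX^{(i-2)},\tK^{(i-1)})$, i.e.\ it takes the \emph{future} key as input. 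It must, because the true round-$(i-2)$ leakage is correlated with the round-$(i-2)$ products $(X^{(i-2)},K^{(i-1)})$, and Theorem~\ref{thm:leak-sim} is invoked with $(V^{(i-3)},X^{(i-2)},K^{(i-1)})$ playing the role of the classical source; in game $G^q_{i\to i+1}$ this input has become $\tK^{(i-1)}$. So the adversary receives, two rounds before round $i$, an $\ell$-qubit state that depends on $\tK^{(i-1)}$; it can measure it and store the outcome in its \emph{unbounded} classical memory. Hence $V^{(i-2)}$ --- and therefore $(V^{(i-1)},\tau^{(i-1)})$, and also your auxiliary randomness $W=(V^{(i-2)},\tK^{(i)})$ --- is correlated with $\tK^{(i-1)}$, and independence fails. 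The correct statement, which is what the paper proves, is the weaker bound $\Hmin(\tK^{(i-1)}\mid V^{(i-1)},\tau^{(i-1)})\geq k-\ell$, obtained by data processing together with the information-theoretic leakage chain rule for separable states (Theorem~\ref{thm:quantum-chain-rule-sep}) applied to the $\ell$-qubit simulated leakage $g^{(i-2)}(\,\cdot\,,\,\cdot\,,\tK^{(i-1)})$. This is exactly why Theorem~\ref{thm:resilient} requires an $(\eps_{\Ext},k-\ell)$-quantum-proof extractor rather than one for uniform sources; your proof never uses that slack, which is the symptom of the missed dependency.

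The same oversight breaks your reduction as literally described: you have $A'$ generate $(V^{(i-1)},\tau^{(i-1)})$ ``itself by simulating rounds $1,\dots,i-1$.'' Since that view is correlated with the hidden key $\tK^{(i-1)}$ inside the challenge $F(\tK^{(i-1)},\tX^{(i-1)})$, the reduction cannot sample it independently of the challenge; the view must be supplied jointly with the challenge, which is how Claim~\ref{claim:f} is phrased (the distinguisher receives $(F(K,X),V)$ versus $(U_{k+n},V)$ with $V$ correlated to $K$) and how the paper's reduction uses it. With these two corrections --- entropy $k-\ell$ via the separable-state chain rule, and the view treated as part of the challenge distribution rather than self-generated --- your argument becomes the paper's proof.
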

      \begin{proof}
        First, we argue that $(K^{(i+1)}, X^{(i)})$ is pseudorandom in $G^q_{i\to i+1}$ against the adversary right before the $i$-th leakage.
        The memory state of the adversary is $(V^{(i-1)}, \tau^{(i-1)})$ at that moment.
        Recall that $(K^{(i+1)}, X^{(i)}) = F(\tK^{(i-1)}, \tX^{(i-1)})$ in Game $G^q_{i\to i+1}$ .
        As long as we can argue that
        \begin{enumerate}
          \item $(\tK^{(i-1)}, \tX^{(i-1)})$ are independent given $V^{(i-1)}, \tau^{(i-1)}$ and
          \item $\Hmin(\tK^{(i-1)}|V^{(i-1)}, \tau^{(i-1)}) \geq k-\ell$,
        \end{enumerate}
        then we can apply Claim~\ref{claim:f} to prove that $(K^{(i+1)}, X^{(i)})$ is $(s_{\Prg}, \eps/4q)$-pseudorandom given $(V^{(i-1)}, \tau^{(i-1)})$.

        For the first condition, we notice that before round $i$, $\tK^{(i-1)}$ was only used by the simulator $g^{(i-2)}$, so it is correlated to $(V^{(i-2)}, \tau^{(i-2)})$.
        On the other hand, $(V^{(i-1)}, \tau^{(i-1)})$ is decided from both $(V^{(i-2)}, \tau^{(i-2)})$ and $\tX^{(i-1)}$, which is generated after the adversary's memory state became $(V^{(i-2)}, \tau^{(i-2)})$.
        Therefore, conditioned on $(V^{(i-1)}, \tau^{(i-1)})$, $\tX_{i-1}$ and $\tK_{i-1}$ are independent.

        To show the second condition, clearly $\tK_{i-1}$ is uniform and has entropy $k$ before seeing any leakage related to $\tK_{i-1}$.
        Before the $i$-th round, the only information about $\tK_{i-1}$ learned by an adversary is $g^{(i-2)}(\tX^{(i-3)}, \tX^{(i-2)}, \tK^{i-1})$.
        After some operations on the leakage, the conditional entropy of $\tK^{(i-1)}$ can only increase.
        Therefore,
        \[\Hmin(\tK^{(i-1)}|V^{(i-1)}, \tau^{(i-1)}) \geq \Hmin(\tK_{i-1}|g^{(i-2)}(\tX^{(i-3)}, \tX^{(i-2)}, \tK^{i-1})) \geq k - \ell.\]
        The last inequality is from Lemma~\ref{thm:quantum-chain-rule-sep} and the fact that the length of $g^{(i-2)}(\tX^{(i-3)}, \tX^{(i-2)}, \tK^{i-1})$ is at most $\ell$.

        So far we have shown that $(K^{(i+1)}, X^{(i)})$ and $(\tK^{(i+1)}, \tX^{(i)})$ are $(s_{\Prg}, \eps/4q)$-indistinguishable conditioned on $(V^{(i-1)}, \tau^{(i-1)})$.
        Assume there exists an $s$-size quantum distinguisher $A$ to distinguish the games $G^q_{i\to i+1}$ and $G^q_{i+1}$ with advantage more than $\eps/4q$.
        Then we can construct an distinguisher $A'$ of size $s_{\Prg}$ to distinguish $(K^{(i+1)}, X^{(i)}, V^{(i-1)}, \tau^{(i-1)})$ from $(\tK^{(i+1)}, \tX^{(i)}, V^{(i-1)}, \tau^{(i-1)})$ with the same advantage.
        Namely,
        \begin{align*}
          &\size{\pr{A'(K^{(i+1)}, X^{(i)}) = 1} - \pr{A'(\tK^{(i+1)}, \tX^{(i)}) = 1}}\\
           = & \size{\pr{A(G^q_{i\to i+1})=1} - \pr{A(G^q_{i+1}) = 1}} > \eps/4q.
        \end{align*}
        Now we calculate the circuit size of $A'$.
        The distinguisher $A'$ emulates the game $G$ between $\SC$ and the adversary $A$ starting from the leakage in the $i$-th round, and output $A$'s decision.
        The state of the stream cipher after the $i$-th leakage is $(X^{(i)}, \tK^{(i)}, K^{(i+1)})$.
        The terms $X^{(i)}$ and $K^{(i+1)}$ (or $\tX^{(i)}$ and $\tK^{(i+1)}$) are given as inputs of $A'$, and $\tK^{(i)}$ is merely an uniform string.
        For the adversary's memory in both games, it is also provided by the input.
        In order to provide the simulated leakage in the $i$-th round, (since the leakage function $f^{(i)}$ is replaced by $g^{(i)}$), we have to run the simulation circuit $g^{(i)}$.
        To sum up, the size of $A'$ is at most $s + s_{\SC} + s_g = s_{\Prg}$, which contradicts the pseudorandomness of $(K^{(i+1)}, X^{(i)})$.
      \end{proof}
  \end{itemize}

  We have shown that both $G^q_{i}, G^q_{i\to i+1}$ and $G^q_{i\to i+1}, G^q_{i+1}$ are $\eps/4q$-indistinguishable by an $s$-size quantum adversary with $\ell$-bounded quantum storage.
  By a hybrid argument, we have, for all $s$-size quantum adversary with $\ell$-bounded quantum storage,
  \begin{align}\label{eq:game-dist}
    \left|\pr{A(G^q_0) = 1}-\pr{A(G^q_q) = 1}\right| \leq q(\eps_{\Prg}+\eps_{\Ext}) \leq \frac{\eps}{2}.
  \end{align}
  Similarly, we can define games $\tG^q_i$ (resp., $\tG^q_{i\to i+1}$) with replacing $X_q$ by $\tX_q$ at the end of the games $G^q_i$ (resp., $G^q_{i\to i+1}$) for all $i$.
  By the same argument, we also have
  \[ \left|\pr{A(\tG^q_0) = 1}-\pr{A(\tG^q_q) = 1}\right| \leq \frac{\eps}{2}. \]
  Note that $G^q_q = \tG^q_q$.
  Combining both inequalities, we have that for all $s$-size quantum adversary with $\ell$-bounded quantum storage,
  \begin{align*}
  &\left|\pr{A(G^q_0) = 1}-\pr{A(\tG^q_0)}\right|\\
  \leq & \left|\pr{A(G^q_0) = 1}-\pr{A(G^q_q) = 1}\right| + \left|\pr{A(\tG^q_0) = 1}-\pr{A(\tG^q_q) = 1}\right|\leq \eps,
  \end{align*}
  which concludes the proof.
\end{proof}

\section{Barrier - Gap Amplification Problem} \label{sec:barrier}

In Section~\ref{sec:leakage-chain-rule}, we have seen the leakage chain rule for quantum relaxed-HILL pseudoentropy for ccq-states.
However, the chain rule for cqq-states, where the source and prior knowledge are both quantum, is still unknown, and nor is the connection between the guessing pseudoentropy and HILL pseudoentropy.

One of the main challenge in extending classical proofs from classical to quantum cases is due to the celebrated Wootters-Zurek no-cloning theorem~\cite{WZ82}.
Here we exhibit another barrier ---  the \emph{gap amplification problem} defined as follows.
Given a quantum distinguisher $A$ (whose input is a quantum state $\rho$), where the acceptance probability is greater than $p$ for YES instances and less than $q$ for NO instances, can we have another quantum distinguisher $A'$ where the gap $p'-q'$ is larger than that in $A$?
If we were able to clone an arbitrary quantum state, then the gap amplification would be easy (as discussed below).
Thus, we can view the gap amplification problem as a special case of the no-cloning theorem.
Moreover, we will show that the impossibility of amplifying the gap implies that imperfect cloning of a single qubit to within a constant in trace distance is impossible.

In the classical case, the gap amplification result provides the robustness of BPP definition in the way that no matter what the acceptance probabilities for both YES and NO instances are, the definitions for BPP are equivalent as long as the gap is non-negligible.
Similarly, in the quantum setting, the gap amplification problem is connected to the amplification of the acceptance probability of quantum proofs in QMA.
The gap amplification problem is trivial in the classical case, as there is no cloning restriction in the classical world.
For a given input, we can make copies of the input, run the original algorithm multiple times, and then use a majority or threshold rule to reduce the error probability via a concentration bound (\eg~Chernoff bound).
However, in the quantum case, due to the no-cloning theorem, it is not obvious that we can do it.
Note that the no-cloning theorem does not directly imply the impossibility of amplification, but we can use the similar concept in proving no-cloning theorem to show the impossibility of amplification.
On the other hand, the impossibility of amplification implies not only no-cloning theorem, but also the imperfect cloning~\cite{GM97} for arbitrary states.

First, we define the gap amplification problem as follows.

\begin{definition}[$\GapAmp$ Problem] \label{def:gap-amplify}
  Let $A:\density{\bbC^M}\to\bin$, $0 < q < p < 1$. 
  We say that a quantum distinguisher $A':\density{\bbC^M}\to\bin$ is a $(p, q)$-amplified version of $A$ if for every input $\ket{\psi}\in \ball{\bbC^M}$,
    \[\Pr[A'(\ketbra{\psi}) = 1]
      \begin{cases}
        > p & \mbox{ if } \Pr[A(\ketbra{\psi}) = 1] \geq p\\
        < q & \mbox{ if } \Pr[A(\ketbra{\psi}) = 1] \leq q\\
      \end{cases}.
    \]
\end{definition}

Then we show that except for trivial cases, one cannot push the error rate ($1-p$ and $q$ for two cases respectively) arbitrarily close to~1.
\begin{theorem} \label{thm:barrier}
  For every real numbers $0 < q < p < 1$, there exists a quantum distinguisher $A:\density{\bbC^2}\to\bin$  such that no $(p, q)$-amplified version of $A$ (even of unbounded size) exists.
\end{theorem}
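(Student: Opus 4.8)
The plan is to take $A$ to be the single-qubit computational-basis measurement, whose BPOVM is $\ketbra{1}$, so that $\Pr[A(\ketbra{\psi})=1]=|\braket{1}{\psi}|^2$. Writing a pure state as $\ket{\psi}=\cos\frac{\theta}{2}\ket{0}+e^{i\phi}\sin\frac{\theta}{2}\ket{1}$, the acceptance probability of $A$ is exactly $\sin^2\frac{\theta}{2}$, so the YES instances (acceptance $\ge p$) are those with $\sin^2\frac{\theta}{2}\ge p$ and the NO instances (acceptance $\le q$) are those with $\sin^2\frac{\theta}{2}\le q$, independent of the phase $\phi$. I stress at the outset that the $\sigma_{\mix}$ argument sketched in Section~\ref{ssec:intro-barrier} actually proves a \emph{stronger} statement (amplification to some $p'>p$ and $q'<q$); for the weaker notion of Definition~\ref{def:gap-amplify} a mixed-state averaging argument does not suffice, and one must instead exploit the operator structure of the hypothetical amplifier together with the constraint $\Pi'\le\I$.

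Assume for contradiction that a $(p,q)$-amplified version $A'$ exists, and let $\Pi'=\begin{pmatrix} a & c\\ \bar c & b\end{pmatrix}$ be its BPOVM. Validity $0\le\Pi'\le\I$ forces in particular $a=\bra{0}\Pi'\ket{0}\in[0,1]$ and $b=\bra{1}\Pi'\ket{1}\in[0,1]$. For a state with parameters $(\theta,\phi)$ a direct computation gives $\Pr[A'(\ketbra{\psi})=1]=a\cos^2\frac{\theta}{2}+b\sin^2\frac{\theta}{2}+2\Re(c\,e^{i\phi})\cos\frac{\theta}{2}\sin\frac{\theta}{2}$. The first step is to cancel the off-diagonal contribution: for a fixed $\theta$ the two real states obtained from $\phi=0$ and $\phi=\pi$ have the same $A$-acceptance but opposite $\Re(c\,e^{i\phi})$ terms, so their $A'$-acceptances average to $a\cos^2\frac{\theta}{2}+b\sin^2\frac{\theta}{2}$.

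Now I would apply this at the two boundaries. Choosing $\theta$ with $\sin^2\frac{\theta}{2}=p$, both of these states are YES instances, so $A'$ accepts each strictly above $p$, and averaging gives $a(1-p)+bp>p$. Choosing $\theta$ with $\sin^2\frac{\theta}{2}=q$, both states are NO instances, so averaging gives $a(1-q)+bq<q$. Define the affine function $L(t)=a(1-t)+bt-t=a+t(b-a-1)$, so that $L(p)>0>L(q)$. Since $L$ has slope $b-a-1$ and $p>q$, the inequality $L(p)>L(q)$ forces $b-a-1>0$, i.e. $b>1+a\ge 1$, contradicting $b\le 1$. This contradiction shows that no $(p,q)$-amplified version of $A$ exists.

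The main obstacle is conceptual rather than computational: one must recognize that, unlike the introduction's sketch, the weak amplification notion of Definition~\ref{def:gap-amplify} cannot be refuted by averaging over a single mixed state (any configuration with both members of a two-outcome decomposition being YES and both members of another being NO would already contradict the behaviour of $A$ itself). The genuine obstruction instead lives in the constraint $\Pi'\le\I$, which bounds $b=\bra{1}\Pi'\ket{1}$ by $1$; this is exactly the point where a good cloner would let one escape, since running $A$ on several copies and thresholding effectively builds an operator whose diagonal entry exceeds $1$. Once the right two boundary states are chosen, the remaining steps---cancelling the off-diagonal term by averaging the phases $\phi=0,\pi$ and reading off the sign of the slope of the affine function $L$---are immediate.
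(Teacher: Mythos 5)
Your proof is correct, and it follows the same high-level strategy as the paper---take $A$ to be the single-qubit computational-basis measurement, write the BPOVM $\Pi'$ of a hypothetical amplifier as an explicit $2\times 2$ Hermitian matrix, extract linear inequalities from its required behaviour on boundary states, and derive a contradiction from $0\le \Pi'\le \I$---but your execution differs in a way worth noting. The paper uses only the two \emph{real} states at acceptance levels exactly $p$ and $q$, so the real part of the off-diagonal entry of $\Pi'$ survives into its inequalities, and the contradiction is reached by dividing through by $\sin^2\alpha,\sin^2\beta$ and juggling cotangents (together with $d\le 1$ and, implicitly, $a\ge 0$ to pin down the sign of the off-diagonal coefficient). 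You instead use \emph{four} states, pairing the phases $\phi=0,\pi$ at each of the two latitudes so the off-diagonal contribution $2\Re(c\,e^{i\phi})\cos\frac{\theta}{2}\sin\frac{\theta}{2}$ cancels identically; this reduces everything to the diagonal entries, and the finish is an elementary slope argument on the affine function $L(t)=a+t(b-a-1)$ with $L(p)>0>L(q)$ forcing $b>1+a\ge 1$, contradicting $\bra{1}\Pi'\ket{1}\le 1$. The symmetrization buys you a cleaner argument with no sign analysis of the off-diagonal term and no trigonometric manipulation, at the mild cost of invoking two extra test states; both proofs ultimately hinge on the same obstruction, namely that $\Pi'\le\I$ caps the diagonal entry at $1$ (the paper's $d\le 1$ is exactly your $b\le 1$). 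Your side remark is also accurate: the introduction's $\sigma_{\mix}$-averaging sketch refutes only strong amplification (fixed targets $p',q'$ with $p'>(1+q')/2$-type slack), and for the weak notion of Definition~\ref{def:gap-amplify} one genuinely needs the operator constraint, which is precisely where the paper's formal proof, like yours, places the weight.
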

\begin{proof}
  Let $A$ be a single-qubit measurement in the computational basis $\{\ket{0},\ket{1}\}$.
  Consider the pure states $\ket{\psi} = (\cos\alpha)\ket{0}+ (\sin\alpha)\ket{1}$ and $\ket{\phi} = (\cos\beta)\ket{0}+ (\sin\beta)\ket{1}$, where $\alpha = \sin^{-1}(\sqrt{p})$ and $\beta = \sin^{-1}(\sqrt{q})$.
  Thus $\Pr[A(\ket{\psi}) = 1] = p$ and $\Pr[A(\ket{\phi}) = 1]=q$.

  Let the BPOVM of $A'$ be $\Pi = \begin{bmatrix}a & -b+ci\\-b-ci & d\end{bmatrix}$ for $0\leq a,d \leq 1$ and appropriate real numbers $b$ and $c$ such that that $\Pi\geq 0$.
  Assume that $A'$ is a $(p,q)$-amplified version of $A$ such that
  $\inprod{\Pi, \ketbra{\psi}} > \sin^2\alpha$ and $\inprod{\Pi, \ketbra{\phi}} < \sin^2\beta$.
  That is,
  \begin{align*}
    a\cos^2\alpha - 2b\sin\alpha\cos\alpha + d\sin^2\alpha >& \sin^2\alpha,\\
    a\cos^2\beta - 2b\sin\beta\cos\beta + d\sin^2\beta <& \sin^2\beta.
  \end{align*}
  After dividing the two inequalities by $\sin^2\alpha$ and $\sin^2\beta$, respectively, we obtain
  \begin{align}
    a\cot^2\alpha + d &> 1 + 2b\cot\alpha, \label{eq:a-big}\\
    a\cot^2\beta + d &< 1 + 2b\cot\beta. \label{eq:a-small}
  \end{align}
  Since  $d\leq 1$, we have $a > \frac{2b}{\cot\alpha}$ by Equation~(\ref{eq:a-big}).
  On the other hand, subtracting Equation~(\ref{eq:a-big}) from Equation~(\ref{eq:a-small}) and dividing it by $(\cot\beta-\cot\alpha)$, which is positive by the choices of $\alpha$ and $\beta$, we get $a < \frac{2b}{\cot\beta+\cot\alpha}<\frac{2b}{\cot\alpha}$.
  That gives a contradiction.

\end{proof}

Now we can prove the impossibility of imperfect cloning for a single qubit from Theorem~\ref{thm:barrier}.
\begin{theorem}
  Let $C$ be a quantum circuit with input space $\ball{\bbC^2}$ and output space $\ball{\bbC^{2^2}}$.
  For every $0<\eps< 0.002$, there exists  $\ket{\psi}\in \ball{\bbC^2}$ such that $T(C(\ket{\psi}), \ket{\psi}\otimes\ket{\psi}) > \eps$.
\end{theorem}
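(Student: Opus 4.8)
The plan is to prove the contrapositive via a reduction to the impossibility of gap amplification (Theorem~\ref{thm:barrier}). Suppose for contradiction that a cloner $C$ with input space $\ball{\bbC^2}$ and output space $\ball{\bbC^{2^2}}$ satisfied $T\big(C(\ketbra{\psi}),\ketbra{\psi}\ot\ketbra{\psi}\big)\le\eps$ for \emph{every} pure input $\ket{\psi}\in\ball{\bbC^2}$. I would then fix a convenient gap, say $p=2/3$ and $q=1/3$, take $A:\density{\bbC^2}\to\bin$ to be the computational-basis measurement furnished by the proof of Theorem~\ref{thm:barrier}, and build from $C$ a $(p,q)$-amplified version of $A$, contradicting that theorem. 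The high-level idea is the standard ``clone-and-vote'': produce many approximate copies of the input, measure each in the computational basis, and take a majority vote.

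The first step is to turn the $1\to 2$ cloner into an (approximate) $1\to N$ broadcaster $\Phi_N$ by applying $C$ sequentially $N-1$ times, each time cloning one qubit of the current register. Writing $\sigma_j$ for the actual $j$-qubit state produced after $j-1$ clonings and $\tau_j=\ketbra{\psi}^{\ot j}$ for the ideal state, a hybrid/telescoping argument bounds the accumulated error: by the triangle inequality each step contributes $T(\sigma_{j+1},\tau_{j+1})\le T(\sigma_j,\tau_j)+\eps$, where the first piece uses monotonicity of trace distance under the channel ``apply $C$ to one qubit'' (data processing, valid for arbitrary inputs) and the $+\eps$ comes from the single cloning guarantee $T\big(C(\ketbra{\psi})\ot\ketbra{\psi}^{\ot(j-1)},\ketbra{\psi}^{\ot 2}\ot\ketbra{\psi}^{\ot(j-1)}\big)=T(C(\ketbra{\psi}),\ketbra{\psi}^{\ot 2})\le\eps$, invoked only on the pure state $\ket{\psi}$. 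Hence
\[
  T\big(\Phi_N(\ketbra{\psi}),\ketbra{\psi}^{\ot N}\big)\le (N-1)\,\eps .
\]
Now define $A'$ to apply $\Phi_N$, measure all $N$ qubits in the computational basis, and accept iff a majority are $1$. On the ideal state $\ketbra{\psi}^{\ot N}$ the measured bits are i.i.d.\ Bernoulli with mean $r=\Pr[A(\ketbra{\psi})=1]$, so by a Chernoff bound the acceptance probability is at least $1-e^{-cN}$ when $r\ge p=2/3$ and at most $e^{-cN}$ when $r\le q=1/3$. Choosing $N$ a large enough constant makes these bounds exceed $p$ and fall below $q$ by a fixed margin $\mu>0$. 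Since replacing $\ketbra{\psi}^{\ot N}$ by $\Phi_N(\ketbra{\psi})$ perturbs any measurement-outcome probability by at most $T\big(\Phi_N(\ketbra{\psi}),\ketbra{\psi}^{\ot N}\big)\le (N-1)\eps$ (by \eqref{eq:trace-distance-bound}), whenever $(N-1)\eps<\mu$ the distinguisher $A'$ accepts every YES input ($r\ge p$) with probability $>p$ and every NO input ($r\le q$) with probability $<q$. That is precisely a $(p,q)$-amplified version of $A$, contradicting Theorem~\ref{thm:barrier}; tracking the constants $N$, $c$, $\mu$ is what pins the threshold down to $\eps<0.002$.

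The main obstacle to anticipate is that $C$ is only guaranteed to behave well on \emph{pure} inputs, while the recursive construction necessarily feeds it the (mixed) output qubits of earlier clonings, where no guarantee is available. I would stress that this is exactly why the clean $2$-copy reduction fails: an explicit calculation shows that no single measurement on two copies can simultaneously push the YES probability above $p$ and the NO probability below $q$ (for $p=2/3,q=1/3$ the two required inequalities force $c-a>1$ with $0\le a$, $c\le 1$), so genuinely many copies are unavoidable. The resolution is that the telescoping error bound above never uses $C$ on a mixed state: each hybrid step pays $\eps$ only through the pure-state cloning guarantee and otherwise relies on data processing, which holds for all inputs. This keeps the total error linear in $N$ and decoupled from the uncontrolled mixed-state behavior, which is the crux that makes the reduction go through.
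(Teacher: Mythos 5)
Your proposal is correct and takes essentially the same approach as the paper: assuming the cloner errs by at most $\eps$ on every pure input, both proofs manufacture approximate copies, measure each in the computational basis, take a majority vote, and thereby obtain a $(2/3,1/3)$-amplified version of the basis measurement, contradicting Theorem~\ref{thm:barrier}. The differences are cosmetic: the paper doubles to exactly four copies (apply $C$, then $C\otimes C$) with accumulated trace-distance error $3\eps$ and an explicit probability calculation, while you clone sequentially to $N$ copies with telescoping error $(N-1)\eps$ and a Chernoff bound, and both analyses close comfortably for $\eps<0.002$.
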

\begin{proof}
  Suppose the statement is false. That is, for all $\ket{\psi}\in \ball{\bbC^2}$, we have $T(C(\ket{\psi}), \ket{\psi}\otimes\ket{\psi}) < \eps$.
  Then given a quantum distinguisher $A$ with input space $\ball{\bbC^{2}}$, we define a quantum distinguisher $A'$ with the same input space as follows.
  \begin{enumerate}
    \item Run $C$ on input $\ket{\psi}$.
    \item Run $C\otimes C$ on $C(\ket{\psi})$.
    \item Run the algorithm $A\otimes A\otimes A\otimes A$ on  $C\otimes C(C(\ket{\psi}))$ and obtain four binary outcomes.
    \item Output the majority of the four outcomes. Randomly output $0$ or $1$ if we have a tie vote.
  \end{enumerate}
  We claim that $A'$ is a $(2/3, 1/3)$-amplified version of $A$.
  \begin{align*}
    &T\left(C\ot C\left(C(\ket{\psi})\right), \ket{\psi}^{\ot 4}\right)\\
    \leq &T\left(C\ot C\left(C(\ket{\psi})\right), C\ot C(\ket{\psi}^{\ot 2})\right)+T\left(\left(C(\ket{\psi})\right)^{\ot 2}, \ket{\psi}^{\ot 4}\right).
  \end{align*}
  Note that $T\left(C\ot C\left(C(\ket{\psi})\right), C\ot C(\ket{\psi}^{\ot 2})\right)\leq  T\left(C(\ket{\psi}), \ket{\psi}^{\ot 2}\right)<\eps$
  since a trace-preserving quantum operation does not increase the trace distance~\cite{NC00}.
  On the other hand,
  \begin{align*}
    T(\rho\ot\rho, \sigma\ot\sigma)\leq & \frac{1}{2}T(\rho\ot(\rho+\sigma), \sigma\ot (\rho+\sigma)) +\frac{1}{2}T( (\rho+\sigma)\ot\rho, (\rho+\sigma)\ot\sigma)\\
    =& \frac{1}{4}\Tr\left|(\rho-\sigma)\ot(\rho+\sigma)\right| +\frac{1}{4}\Tr\left| (\rho+\sigma)\ot(\rho-\sigma)\right|\\
    =& \frac{1}{2}\Tr\left|\rho-\sigma\right|\Tr\left|\rho+\sigma\right|= 2  T(\rho, \sigma).
  \end{align*}
  Thus \[T\left(\left(C(\ket{\psi})\right)^{\ot 2}, \ket{\psi}^{\ot 4}\right)\leq 2 T\left(C(\ket{\psi}), \ket{\psi}^{\ot 2}\right)< 2\eps.\]
  Therefore,
  \[T\left(C\ot C\left(C(\ket{\psi})\right), \ket{\psi}^{\ot 4}\right)<3\eps.\]

  Suppose $A(\ket{\psi}) \geq 2/3 \triangleq p$ and $A(\ket{\phi}) \leq 1/3= 1-p$.
  Then
  \begin{align*}
    \Pr[A'(\ket{\psi}) = 1] &> (p^4-3\eps) + 4(p^3(1-p)-3\eps)+3(p^2(1-p)^2-3\eps),\\
    \Pr[A'(\ket{\phi}) = 1] &> ((1-p)^4+3\eps) + 4((1-p)^3p+3\eps)+3(p^2(1-p)^2+3\eps).
  \end{align*}
  If we set $\eps = 0.002$, then $\Pr[A'(\ket{\psi}) = 1] > 0.6927$ and $\Pr[A'(\ket{\phi}) = 1] < 0.3073$. This means that $A'$ is a $(2/3, 1/3)$-amplified version of $A$, which contradicts Theorem~\ref{thm:barrier}.

\end{proof}

\section*{Acknowledgments}
We are grateful to anonymous reviewers for pointing out that some results in an earlier version of this paper were already known.
YHC and SV thank Mark Bun, Aram Harrow and Mehdi Soleimanifar for useful discussion.
KMC is grateful to Krzysztof Pietrzak for an inspiring discussion that led to this research.
CYL acknowledges useful discussions with Todd~A.~Brun and Nengkun~Yu.

\newcommand{\etalchar}[1]{$^{#1}$}

\appendix

\section{Tomography}\label{app:tomography}
In a quantum tomography problem, one wants to learn the behavior or even a description of a quantum circuit or quantum state.
In our applications to the MMWU proof of quantum simulation leakage lemma, one tasks is that, given a quantum distinguisher, find a corresponding BPOVM matrix.
The task is precisely formulated in Definition~\ref{def:qckt-tomography-app}.
Also in the min-max proof, which will be illustrated in Appendix~\ref{app:min-max-proof}, we would like to find a quantum state that maximizes the acceptance probability of a given quantum distinguisher.
This task is formulated in Definition~\ref{def:qckt-max-sat}.
In the most general parameters in the tomography problems, let the dimension of input space be $d$ and the circuit size is $s$.
Then the desired algorithm complexity is $\poly(s, \log d)$, which is too demanding to achieve.
Fortunately, in our application, the $\poly(s, d)$ time complexity is already satisfying.
Our tomography algorithm also uses a solution to the $\QCktValue$ Problem (Definition~\ref{def:qckt-value}), described as follows.

\begin{definition}[$\QCktValue$ Problem]\label{def:qckt-value}
  The $\QCktValue(s, \eps, \gamma)$ problem is a computational problem defined as follows:
  \begin{itemize}
    \item Input: a description of a quantum circuit $C$ of size $s$ with binary output $\{0,1\}$, and an error parameter $0<\eps<1$.
    \item Task: with probability at least $1-\gamma$, output an estimate $\tilde{p}$ of the probability $p = \pr{C = 1}$ such that $|\tilde{p}-p|\leq \eps$.
  \end{itemize}
\end{definition}

\begin{lemma}\label{lemma:qkct-value}
  There exists a uniform quantum algorithm $A$ that solves $\QCktValue(s, \eps, \gamma)$ in time $O(s \log(1/\gamma)/\eps^2))$.
\end{lemma}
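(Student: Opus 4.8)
The plan is to estimate $p = \pr{C = 1}$ by straightforward empirical sampling, relying on a Chernoff/Hoeffding concentration bound to control the number of repetitions. The key observation is that a single execution of $C$ on its (fixed) input, including initialization of fresh ancillas and the final measurement, produces one $\zo$-valued random variable $Y$ with $\ex{Y} = \pr{C=1} = p$, i.e.~a Bernoulli$(p)$ sample. Moreover, such an execution can be carried out by a uniform quantum algorithm in time $O(s)$, since the algorithm is given a description of $C$ and $C$ consists of $s$ elementary gates followed by a measurement.

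First I would run $C$ independently $N$ times, using freshly initialized ancillas for each run, to obtain i.i.d.~samples $Y_1, \dots, Y_N$, and set the estimate $\tilde p = \frac{1}{N}\sum_{i=1}^N Y_i$. Since $\ex{\tilde p} = p$ and the $Y_i$ take values in $\{0,1\}$, Hoeffding's inequality gives
\[
  \pr{|\tilde p - p| > \eps} \;\leq\; 2\exp\!\left(-2 N \eps^2\right).
\]
Choosing $N = \lceil \ln(2/\gamma)/(2\eps^2)\rceil = O(\log(1/\gamma)/\eps^2)$ makes the right-hand side at most $\gamma$, so that $|\tilde p - p| \le \eps$ holds except with probability $\gamma$, as required.

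Finally, I would bound the running time: the algorithm performs $N$ executions of $C$, each costing $O(s)$ time, plus $O(N)$ additional time to tally the outcomes and compute the average, for a total of $O(s N) = O(s \log(1/\gamma)/\eps^2)$, matching the claimed bound.

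Since the argument is routine statistical estimation, there is no serious mathematical obstacle. The only points requiring minor care are (i) verifying that repeated executions with fresh ancillas yield genuinely independent Bernoulli$(p)$ outcomes, so that the concentration bound applies, and (ii) confirming that the whole sampling-and-counting procedure is implementable by a uniform algorithm given the description of $C$. I note in passing that one could instead invoke quantum amplitude estimation (e.g.~\cite{Montanaro20150301}) to obtain a quadratically better dependence on $\eps$, but this is unnecessary here since the stated bound already permits the $1/\eps^2$ sampling cost.
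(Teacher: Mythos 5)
Your proposal is correct and takes essentially the same approach as the paper's proof: independent repeated executions of $C$, a Chernoff/Hoeffding bound to choose $t = O(\log(1/\gamma)/\eps^2)$ repetitions, and an $O(s)$ cost per run. Even your closing remark about quantum amplitude estimation matches the paper's Remark~\ref{remark:qckt-value-speedup}.
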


\begin{proof}
  The algorithm independently run the circuit $C$ $t$ times and let $\tilde{p}$ be the number of times getting 1, divided by $t$.
  By Chernoff's bound, we have
  \[\pr{|p-\tilde{p}| > \eps} < 2^{-\Omega(t\eps^2)}.\]
  By choosing $t = O(\log(1/\gamma)/\eps^2)$, $2^{-t\eps^2}\leq\gamma$.
  Each trial takes $O(s)$ time.
  Therefore, the total running time is $O(s \log(1/\gamma)/\eps^2)$.
\end{proof}

\begin{remark}\label{remark:qckt-value-speedup}
  It is worth mentioning that by using a quantum speed-up (\eg~\cite{Montanaro20150301}), one can improve the dependence on $1/\eps$ quadratically, although this improvement is not crucial for our purposes.
\end{remark}

\begin{definition}[$\QCktTom$ Problem]\label{def:qckt-tomography-app}
  The $\QCktTom(s, d, \eps, \gamma)$ problem is a computational problem defined as follows:
  \begin{itemize}
    \item Input: a description of a quantum circuit $C:\density{\bbC^{d}}\to\zo$ of size-$s$, and an error parameter $0< \eps <1$.
    \item Task: let $\Pi$ be the corresponding BPOVM of $C$.
      Output an explicit description (as matrices) of BPOVM $\tilde{\Pi}$ such that $\opnorm{\Pi-\tilde{\Pi}}\leq \eps$ with probability $1-\gamma$.
  \end{itemize}
\end{definition}

\begin{lemma}\label{lemma:qkct-tomography-app}
  There exists a (uniform) quantum algorithm that solves the $\QCktTom(s, d, \eps, \gamma)$ Problem in time
  $\poly(s, d, 1/\eps, \log(1/\gamma))$.
\end{lemma}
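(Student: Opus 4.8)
The plan is to reduce the tomography task to estimating acceptance probabilities, which is already handled by the $\QCktValue$ algorithm of Lemma~\ref{lemma:qkct-value}, following the standard recipe for quantum detector tomography. Recall that $\Pi$ is the $d\times d$ Hermitian matrix with $0\leq\Pi\leq\I$ satisfying $\pr{C(\rho)=1}=\ip{\Pi}{\rho}$ for every $\rho\in\density{\bbC^d}$. Feeding $C$ a pure state $\ketbra{\phi}$ therefore reveals the quadratic form $\bra{\phi}\Pi\ket{\phi}$, and a tomographically complete family of such $\ket{\phi}$ pins down every entry of $\Pi$. I would use the family indexed by pairs $j<k$: the basis states $\ket{j}$, together with the superpositions $\ket{u_{jk}}=(\ket{j}+\ket{k})/\sqrt{2}$ and $\ket{v_{jk}}=(\ket{j}+\mathrm{i}\,\ket{k})/\sqrt{2}$, where $\mathrm{i}$ is the imaginary unit. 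Each such state is supported on two computational-basis coordinates and can be prepared by a circuit of size $\poly(d)$ \cite{SBM05}, so running $C$ on it is a circuit of size $s+\poly(d)$.

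The reconstruction is then immediate from the identities
\[\bra{j}\Pi\ket{j}=\Pi_{jj},\quad \bra{u_{jk}}\Pi\ket{u_{jk}}=\frac{1}{2}\left(\Pi_{jj}+\Pi_{kk}\right)+\mathrm{Re}(\Pi_{jk}),\quad \bra{v_{jk}}\Pi\ket{v_{jk}}=\frac{1}{2}\left(\Pi_{jj}+\Pi_{kk}\right)-\mathrm{Im}(\Pi_{jk}),\]
so that each diagonal entry and the real and imaginary parts of each off-diagonal entry are fixed constant-coefficient linear combinations of the measured probabilities. I would estimate every one of these $\Theta(d^2)$ probabilities with Lemma~\ref{lemma:qkct-value} using error parameter $\eps'=\Theta(\eps/d)$ and failure probability $\gamma'=\gamma/\Theta(d^2)$, substitute the estimates into the formulas above, and set $\tilde{\Pi}_{kj}=\tilde{\Pi}_{jk}^{*}$ so that $\tilde{\Pi}$ is Hermitian by construction. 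A union bound over the $\Theta(d^2)$ estimates gives overall success probability at least $1-\gamma$.

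For correctness, propagating the per-probability error $\eps'$ through the constant-coefficient reconstruction formulas yields $\norm{\Pi-\tilde{\Pi}}_{\max}=O(\eps')$; choosing the hidden constant so that $\norm{\Pi-\tilde{\Pi}}_{\max}\leq\eps/d$, the bound $\opnorm{X}\leq\dim(\cX)\cdot\norm{X}_{\max}$ of Equation~(\ref{eqn:max_norm}) gives $\opnorm{\Pi-\tilde{\Pi}}\leq\eps$, as required. To make the output a genuine BPOVM I would finally clamp the eigenvalues of $\tilde{\Pi}$ into $[0,1]$ in its own eigenbasis; this is the Frobenius-norm metric projection onto the convex set $\{X:0\leq X\leq\I\}$, and since the true $\Pi$ already lies in this set the projection does not increase the Frobenius distance to $\Pi$, and hence (using $\opnorm{\cdot}\leq\norm{\cdot}_F$ and the already-established $\norm{\Pi-\tilde\Pi}_F\leq\eps$) it preserves the operator-norm guarantee. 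For the running time, each of the $\Theta(d^2)$ calls to Lemma~\ref{lemma:qkct-value} costs $O\!\left((s+\poly(d))\cdot\log(1/\gamma')/\eps'^2\right)=\poly(s,d,1/\eps,\log(1/\gamma))$, and the reconstruction and clamping are classical $\poly(d)$ post-processing, so the total is $\poly(s,d,1/\eps,\log(1/\gamma))$.

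The main obstacle is the error analysis rather than the algorithmic idea. The conversion from the entrywise ($\max$-norm) error we can directly control to the operator-norm error demanded by the problem loses a factor of $d$ through Equation~(\ref{eqn:max_norm}), which forces the per-probability precision down to $\eps'=\Theta(\eps/d)$ and hence $\Theta((d/\eps)^2)$ repetitions inside each $\QCktValue$ call. This is precisely why the algorithm achieves only $\poly(s,d,1/\eps,\dots)$ rather than the $\poly(s,\log d,\dots)$ one might hope for; obtaining a tighter operator-norm guarantee from fewer samples (for example, via a more economical family of input states or a direct operator-norm concentration bound) is where an improvement would have to come, though $\poly(s,d,\dots)$ already suffices for all applications in this paper.
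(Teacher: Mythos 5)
Your proposal is correct, and its measurement/estimation core coincides with the paper's proof: the same tomographically complete family (computational basis states together with the two-coordinate superpositions $(\ket{j}+\ket{k})/\sqrt{2}$ and $(\ket{j}+i\ket{k})/\sqrt{2}$), the same use of Lemma~\ref{lemma:qkct-value} at per-statistic accuracy $\Theta(\eps/d)$ with a union bound over the $\Theta(d^2)$ statistics, and the same factor-$d$ loss when passing from the max norm to the operator norm via Equation~(\ref{eqn:max_norm}). Where you genuinely differ is the recovery step. The paper never inverts the linear relations algorithmically: it feeds the estimated statistics into an SDP feasibility problem --- find $\tilde{\Pi}$ with $0\leq\tilde{\Pi}\leq\I$ whose statistics match the estimates within the tolerance --- and argues by the triangle inequality that any feasible point is entrywise $O(\eta)$-close to $\Pi$, so the BPOVM constraint holds by construction. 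You instead reconstruct the entries in closed form from your (correct) identities and repair the BPOVM constraint afterwards by eigenvalue clamping, which you rightly identify as the Frobenius-norm projection onto $\{X: 0\leq X\leq\I\}$; non-expansiveness of that projection toward the true $\Pi$, together with $\opnorm{\cdot}\leq\norm{\cdot}_F$ and $\norm{\Pi-\tilde{\Pi}}_F\leq d\cdot\norm{\Pi-\tilde{\Pi}}_{\max}\leq\eps$, preserves the guarantee. Both routes are sound: yours is more elementary and fully explicit (an eigendecomposition replaces an SDP solver), while the paper's formulation buys a slightly shorter analysis, since imposing $0\leq\tilde{\Pi}\leq\I$ inside the program removes any need for a projection argument. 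If you ever want to avoid the projection lemma altogether, note that $\opnorm{\Pi-\tilde{\Pi}}\leq\eps$ already forces the eigenvalues of $\tilde{\Pi}$ into $[-\eps,1+\eps]$, so clamping moves $\tilde{\Pi}$ by at most $\eps$ in operator norm and the triangle inequality yields $2\eps$, which suffices after halving the target accuracy.
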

As discussed in the preliminary, one can mathematically derive $\Pi$ from the description of any quantum circuit $C$ (Equation~(\ref{eqn:circuit-povm})).
However, this calculation is computationally heavy, \eg~involving matrix operations over $d\cdot 2^{m}$ dimensions, where $m$ is the number of ancilla qubits used by $C$, which can be as large as $s$, and thus fails to provide desired efficiency.
\begin{proof}
  The strategy we use is to estimate each entry of $\Pi$, which is a $d$ by $d$ positive semidefinite matrix, by feeding special input states to circuit $C$ and observing the statistics of the output bit (\ie~a tomography process for the POVM $\Pi$ (\eg~\cite{POVM-tomo})).

  Since we exploit a quantum machine to perform the circuit $C$ and the measurement, it only costs $O(s)$ for a quantum machine to execute $C$ once.
  The total running time then depends on the number of executions of $C$ for the desired efficiency.

  To that end, we will leverage the following set of special input states, which form an basis for positive semidefinite operators over the input space.
  Let $\{\ket{1}, \cdots, \ket{d}\}$ be any orthonormal basis in $\bbC^d$. Define the following set of density operators:

  \begin{eqnarray}
    \forall n =1, \cdots, d, & A_{n,n}= \ketbra{n}, \\
    \forall 1\leq n< m \leq d, & A_{n,m}^\re=\ketbra{\psi_{n,m}}, \ket{\psi_{n,m}}=\frac{1}{\sqrt{2}}(\ket{n}+\ket{m}), \\
    \forall 1\leq n< m \leq d, & A_{n,m}^\im=\ketbra{\phi_{n,m}}, \ket{\phi_{n,m}}=\frac{1}{\sqrt{2}}(\ket{n}+i \ket{m}).
  \end{eqnarray}
  Also let
  \begin{align*}
    \alpha_{n,n}(\Pi)&=\tr{A_{n,n} \Pi} \\
    \alpha^{\re}_{n,m}(\Pi)&=\tr{A_{n,m}^{\re} \Pi} \\
    \alpha^{\im}_{n,m}(\Pi)&=\tr{A_{n,m}^{\im} \Pi}
  \end{align*}
  The collection of values $\alpha_{n,n}(\Pi)$ for $n=1, \cdots, d$ and $\alpha^{\re}_{n,m}(\Pi)$ and  $\alpha^{\im}_{n,m}(\Pi)$ for $1\leq n< m \leq d$ uniquely determines any positive semidefinite operator $\Pi$.\footnote{It is not hard to see that $\alpha_{n,n}(\Pi)$ determines all the diagonal entrees. Every off-diagonal entree $(n,m)$ (or its conjugate at $(m,n)$) is then determined by $\alpha^{\re/\im}_{n,m}(\Pi)$ together with the information about the diagonal entree $(n,n)$ and $(m,m)$.}
  It suffices to collect these $\alpha$ values to within small error to approximate $\Pi$.
  We will invoke Lemma~\ref{lemma:qkct-value} for each $\alpha$ value for that purpose.
  Overall, by a union bound, with probability $1-\gamma$, we can collect a set of $\tilde{\alpha}$ values that approximate the original $\alpha$ values with an additive error $\eta$ in time $d^2\cdot O\left( s\log(d/\gamma)/\eta^2\right)=\poly(s, d, \log(1/\gamma), 1/\eta)$.
  Namely, for all $n,m$, we have
  \[
    |\tilde{\alpha}_{n,n}- \alpha_{n,n}(\Pi) |\leq \eta,
    |\tilde{\alpha}^{\re}_{n,m}- \alpha^{\re}_{n,m}(\Pi) |\leq \eta,
    \text{ and } |\tilde{\alpha}^{\im}_{n,m}- \alpha^{\im}_{n,m}(\Pi) |\leq \eta.
  \]
  We can thus solve the following semidefinite program (SDP) to recover an approximate $\tilde{\Pi}$:
  \begin{eqnarray*}
   \text{Goal:} & \text{find a } \tilde{\Pi} \\
   \text{Subject to:} &  |\tilde{\alpha}_{n,n}- \alpha_{n,n}(\tilde{\Pi}) |\leq \eta, \\
   & |\tilde{\alpha}^{\re/\im}_{n,m}- \alpha^{\re/\im}_{n,m}(\tilde{\Pi}) |\leq \eta,\\
   & 0\leq \tilde{\Pi}\leq \I.
  \end{eqnarray*}
  We claim that any feasible solution $\tilde{\Pi}$ is a good approximate of $\Pi$. This is because by definition and the triangle inequality, all the $\alpha$ values of $\Pi$ and  $\tilde{\Pi}$ are close.
  Namely,
  \[
    |\tilde{\alpha}_{n,n}(\tilde{\Pi})- \alpha_{n,n}(\Pi) |\leq 2\eta,  |\tilde{\alpha}^{\re}_{n,m}(\tilde{\Pi})- \alpha^{\re}_{n,m}(\Pi_1) |\leq 2\eta,
  |\tilde{\alpha}^{\im}_{n,m}(\tilde{\Pi})- \alpha^{\im}_{n,m}(\Pi) |\leq 2\eta.
  \]
  This implies the max norm of $\tilde{\Pi}-\Pi$ is small, \ie~$\norm{\tilde{\Pi}-\Pi}_{\max} \leq O(\eta)$. By Equation (\ref{eqn:max_norm}), we have
  \[
  \opnorm{\tilde{\Pi}-\Pi} \leq d \norm{\tilde{\Pi}-\Pi}_{\max} = O(d \eta).
  \]
  It then suffices to choose $\eta=O(\eps/d)$.
  Overall, the above circuit succeeds with probability at least $1-\gamma$ and runs in
  $\poly(s, d, 1/\eps, \log(1/\gamma))$ time.
\end{proof}

\begin{definition}[$\QCktMaxSat$ Problem]\label{def:qckt-max-sat}
  The $\QCktMaxSat(s, d, \eps, \gamma)$ problem is a computational problem defined as follows:
  \begin{itemize}
    \item Input: a description of a quantum circuit $C:\density{\bbC^{d}}\to\zo$ of size-$s$, and an error parameter $0< \eps <1$.
    \item Task: Output an explicit description (as a density matrix) of a quantum state $\rho\in\density{\bbC^{d}}$ such that $C(\rho) > \max_{\sigma} C(\sigma) - \eps$ with probability $1-\gamma$.
  \end{itemize}
\end{definition}

\begin{theorem}\label{thm:qckt-max-sat}
  There exists a (uniform) quantum algorithm $A$ that solves $\QCktMaxSat(s, d, \eps, \gamma)$ problem in time $\poly(s, d, 1/\eps, \log(1/\gamma))$.
\end{theorem}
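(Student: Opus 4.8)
The plan is to reduce the optimization problem to computing the top eigenvector of the circuit's effective BPOVM. Recall from Equation~(\ref{eqn:opnorm}) that if $\Pi$ is the BPOVM corresponding to $C$, then the acceptance probability is $C(\sigma) = \ip{\Pi}{\sigma}$, and
\[
  \max_{\sigma\in\density{\bbC^d}} C(\sigma) \;=\; \max_{\sigma\in\density{\bbC^d}} \ip{\Pi}{\sigma} \;=\; \opnorm{\Pi} \;=\; \lambda_{\max}(\Pi),
\]
with the maximum attained by the rank-one projector onto any top eigenvector of $\Pi$. Hence it suffices to produce (an approximation of) such a top eigenvector, and output the corresponding pure-state density matrix.

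First I would invoke the tomography algorithm of Lemma~\ref{lemma:qkct-tomography-app}, solving $\QCktTom(s, d, \eps/4, \gamma)$ to obtain, with probability at least $1-\gamma$, an explicit Hermitian matrix $\tilde{\Pi}$ with $\opnorm{\Pi - \tilde{\Pi}} \le \eps/4$ in time $\poly(s, d, 1/\eps, \log(1/\gamma))$. Then, purely classically, I would diagonalize the $d\times d$ matrix $\tilde{\Pi}$ and compute a unit vector $\ket{\psi}$ that is a top eigenvector (to inverse-polynomial precision), finally outputting the density matrix $\rho = \ketbra{\psi}$. Diagonalizing a $d \times d$ Hermitian matrix to such accuracy takes $\poly(d)$ time, so the total running time remains $\poly(s, d, 1/\eps, \log(1/\gamma))$.

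For correctness I would combine the variational characterization with the operator-norm guarantee. Since $\ket{\psi}$ is a top eigenvector of $\tilde{\Pi}$, we have $\bra{\psi}\tilde{\Pi}\ket{\psi} = \lambda_{\max}(\tilde{\Pi})$, and by Weyl's eigenvalue-perturbation inequality $\lambda_{\max}(\tilde{\Pi}) \ge \lambda_{\max}(\Pi) - \opnorm{\Pi-\tilde{\Pi}}$. Therefore
\[
  C(\rho) = \bra{\psi}\Pi\ket{\psi} = \bra{\psi}\tilde{\Pi}\ket{\psi} + \bra{\psi}(\Pi - \tilde{\Pi})\ket{\psi} \ge \lambda_{\max}(\tilde{\Pi}) - \opnorm{\Pi-\tilde{\Pi}} \ge \lambda_{\max}(\Pi) - \tfrac{\eps}{2}.
\]
Since $\max_\sigma C(\sigma) = \lambda_{\max}(\Pi)$, this yields $C(\rho) \ge \max_\sigma C(\sigma) - \eps/2 > \max_\sigma C(\sigma) - \eps$, as required; the remaining $\eps/2$ slack comfortably absorbs the inverse-polynomial error from the approximate eigenvector computation.

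The heavy lifting here is delegated to $\QCktTom$, so the only genuinely new ingredient is the perturbation argument. Consequently the main subtlety to get right is that an (approximate) top eigenvector of the \emph{approximate} matrix $\tilde{\Pi}$ must still certify near-optimality for the \emph{true} $\Pi$. This is precisely why the operator-norm (rather than, say, entrywise) guarantee from Lemma~\ref{lemma:qkct-tomography-app} is needed: it lets both the value estimate $\bra{\psi}\tilde\Pi\ket{\psi}$ and the optimum $\lambda_{\max}(\tilde\Pi)$ transfer from $\tilde{\Pi}$ to $\Pi$ with only additive $\opnorm{\Pi-\tilde\Pi}$ loss, and the two perturbation bounds then close the gap.
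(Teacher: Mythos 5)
Your proposal is correct and follows essentially the same route as the paper's proof: apply the $\QCktTom$ tomography lemma to obtain an operator-norm-close $\tilde{\Pi}$, take the top eigenvector of $\tilde{\Pi}$ via spectral decomposition, and transfer optimality back to the true $\Pi$ using the operator-norm bound (the paper's chain of inequalities $\ip{\Pi}{\rho} \ge \ip{\tilde{\Pi}}{\rho} - \eps/2 \ge \max_{\sigma}\ip{\tilde{\Pi}}{\sigma} - \eps/2 \ge \max_{\sigma}\ip{\Pi}{\sigma} - \eps$ is exactly your Weyl-type perturbation argument, just written without naming it). The only differences are cosmetic: you budget the tomography error as $\eps/4$ rather than $\eps/2$ and explicitly track the approximate-eigenvector precision, both of which are fine.
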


\begin{proof}
  This theorem follows from Lemma~\ref{lemma:qkct-tomography-app} and an application of a spectrum decomposition.
  Let $\Pi$ be the corresponding BPOVM of $C$.
  By Lemma~\ref{lemma:qkct-tomography-app}, there exists an circuit $A$ that runs in time $\poly(s, d, 1/\eps, \log(1/\gamma))$ and outputs a description of $\tilde{\Pi}$ such that $\opnorm{\tilde{\Pi}-\Pi}\leq \eps/2$ with probability $1-\gamma$.
  That means for all $\tau \in \density{\bbC^{d}}$,
  \begin{equation} \label{eqn:opclose}
     \left | \ip{\tilde{\Pi}}{\tau} -\ip{\Pi}{\tau} \right |\leq \eps/2.
  \end{equation}
  We then run a spectrum decomposition on $\tilde{\Pi}$ and choose $\rho=\ketbra{\psi}$ to be the density operator corresponding to the eigenvector $\ket{\psi}$ with the largest eigenvalue of $\tilde{\Pi}$.
  This step can be done in $\poly(d)$ given that dimension of $\tilde{\Pi}$ is $d$ (\eg~by SDP).
  Thus, we have
  \begin{equation} \label{eqn:max}
    \ip{\tilde{\Pi}}{\rho} \geq \max_{\sigma} \ip{\tilde{\Pi}}{\sigma}.
  \end{equation}
  By Equation~(\ref{eqn:opclose}), we have
  \begin{eqnarray*}
    \ip{\Pi}{\rho} & \geq &  \ip{\tilde{\Pi}}{\rho} -\eps/2 \\
    & \geq & \max_{\sigma} \ip{\tilde{\Pi}}{\sigma}-\eps/2 \\
    & \geq & \max_{\sigma} \ip{\Pi}{\sigma} -\eps/2-\eps/2\\
    &=&\max_{\sigma} \ip{\Pi}{\sigma} -\eps.
  \end{eqnarray*}
  The overall complexity is $\poly(s, d, 1/\eps, \log(1/\gamma))$, which completes the proof.

\end{proof}

\section{Leakage Simulation Lemma by Quantum Min-Max Theorem}\label{app:min-max-proof}

In this section, we provide another proof of the Leakage Simulation Lemma via the nonuniform quantum min-max theorem.
First, we introduce two techniques required in this proof -- epsilon-net and quantum sampling lemma.
Then we prove the leakage simulation lemma, which can lead to the Leakage Chain Rule as shown in Section~\ref{ssec:leakage-chain-rule}.

\subsection{Epsilon-nets for Quantum States and BPOVM.}
An epsilon-net (or $\eps$-net) of a certain set $X$ is meant to be an approximation of every point in $X$ with ``error'' at most $\eps$ by a collection of discrete points.
Suppose $X$ is a set with a metric $\Delta(\cdot, \cdot)$.
A subset $\cN(X, \eps)$ is an $\eps$-net of $X$ if for each $x \in X$, there exists $y\in \net{X, \eps}$ such that $\Delta(x,y)\leq \eps$.
We are interested in $\eps$-nets of (1) the set of pure $d$-dimensional quantum states with respect to the trace distance and (2) the set of BPOVMs with respect to the operator norm.

\begin{proposition}[\eg~\cite{Ver11}] \label{prop:enet-vec}
  For every $0<\eps<0.01$ and $d\in\bbN$, there is an $\eps$-net $\net{\ball{\bbC^d}, \eps}$ of the complex $d$-dimensional unit sphere (which we view as the set of pure quantum states) with respect to the Euclidean norm such that
  the size of $\net{\ball{\mathbb{C}^d}, \eps}$ is bounded by $(1/\eps)^{O(d)}$.
  Moreover, for any pure state $\ket{\psi}\in \ball{\bbC^d}$, there exists  $\ket{\phi}\in\net{\ball{\bbC^d, \eps}}$ with $\eps<0.01$ such that for every  quantum circuit $C$ with binary output, we have
  \[\left|\ex{C(\ketbra{\psi})} - \ex{C(\ketbra{\phi})}\right| \leq 2 \eps.\]
\end{proposition}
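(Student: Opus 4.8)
The plan is to prove the two assertions separately: first the purely volumetric existence of a net of the stated cardinality, and then the operational ``moreover'' claim that Euclidean closeness forces closeness in distinguishing advantage. The first part is standard (this is essentially the covering bound cited from~\cite{Ver11}), so I would only sketch it. Identify $\bbC^d$ with $\bbR^{2d}$ as a real inner-product space, so that the set of unit vectors $\ball{\bbC^d}$ becomes the Euclidean unit sphere $S^{2d-1}$. Take a maximal $\eps$-separated subset $\cN$ of $S^{2d-1}$; by maximality no point of the sphere can be at distance $\geq\eps$ from all of $\cN$, so $\cN$ is automatically an $\eps$-net. To bound $|\cN|$, observe that the open balls of radius $\eps/2$ centered at the points of $\cN$ are pairwise disjoint and all contained in the ball of radius $1+\eps/2$ about the origin; comparing volumes gives $|\cN|\leq (1+\eps/2)^{2d}/(\eps/2)^{2d} = (1+2/\eps)^{2d} \leq (3/\eps)^{2d} = (1/\eps)^{O(d)}$, which is the claimed size.

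For the ``moreover'' part, the point is to translate $\|\ket{\psi}-\ket{\phi}\|_2\leq\eps$ into a bound on the trace distance of the rank-one projectors, and then invoke two facts already in the preliminaries: that any binary-output circuit $C$ acts on its input as a fixed BPOVM $\Pi$ with $0\leq\Pi\leq\I$ (see Equation~(\ref{eqn:circuit-povm})), so that $\ex{C(\ketbra{\psi})}=\bra{\psi}\Pi\ket{\psi}=\ip{\Pi}{\ketbra{\psi}}$, and the operational characterization of trace distance in Equation~(\ref{eq:trace-distance-bound}). Expanding $\|\ket{\psi}-\ket{\phi}\|_2^2 = 2-2\,\mathrm{Re}\braket{\psi}{\phi}$ yields $\mathrm{Re}\braket{\psi}{\phi}\geq 1-\eps^2/2$, hence $|\braket{\psi}{\phi}|\geq 1-\eps^2/2$ and therefore $1-|\braket{\psi}{\phi}|^2\leq\eps^2$. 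Using the closed form $T(\ketbra{\psi},\ketbra{\phi})=\sqrt{1-|\braket{\psi}{\phi}|^2}$ for pure states, this gives $T(\ketbra{\psi},\ketbra{\phi})\leq\eps$. Finally, since $0\leq\Pi\leq\I$, Equation~(\ref{eq:trace-distance-bound}) gives $|\ex{C(\ketbra{\psi})}-\ex{C(\ketbra{\phi})}| = |\ip{\Pi}{\ketbra{\psi}-\ketbra{\phi}}| \leq T(\ketbra{\psi},\ketbra{\phi}) \leq \eps \leq 2\eps$, as required.

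The only step needing any care --- and what I would flag as the (mild) main obstacle --- is the passage from $\ell_2$-distance of unit vectors to trace distance of their projectors, since this relationship is quadratic and one must check that a net built purely as a Euclidean cover of the sphere controls $|\braket{\psi}{\phi}|$ rather than the phase-sensitive quantity $\braket{\psi}{\phi}$ itself. The chain $1-|\braket{\psi}{\phi}|^2\leq\|\ket{\psi}-\ket{\phi}\|_2^2$ handles exactly this global-phase insensitivity, and the stated constant $2\eps$ is comfortably loose (the argument in fact delivers $\eps$). Everything else is routine bookkeeping, and no appeal to the circuit size of $C$ is needed since the bound holds for \emph{every} BPOVM $\Pi$ with $0\leq\Pi\leq\I$.
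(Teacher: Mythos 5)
Your proof is correct. The paper supplies no proof of its own for this proposition (it is quoted from Vershynin's covering-number bound), and your argument is exactly the standard route that citation points to: a maximal $\eps$-separated set plus volume comparison for the net size, and then the chain $1-|\braket{\psi}{\phi}|^2 \leq \norm{\ket{\psi}-\ket{\phi}}_2^2$ combined with the effective-BPOVM view of circuits and the operational characterization of trace distance (Equation~(\ref{eq:trace-distance-bound})) for the ``moreover'' clause --- which, as you note, in fact delivers advantage at most $\eps$, so the stated constant $2\eps$ is simply loose.
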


The set of all BPOVMs on a $d$-dimension system is denoted by $\bpovm{\mathbb{C}^d}$.
Namely, $\bpovm{\mathbb{C}^d}$ consists of all Hermitian operator $\Pi$ such that $0\leq \Pi \leq \I_d$ and $0\leq \I_d-\Pi \leq \I_d$.

\begin{proposition}[\eg~\cite{SW12}]\label{prop:enet-bpovm}
  For every $0<\eps<\frac{1}{2}$ and $d\in\mathbb{N}$, there is an $\eps$-net $\net{\bpovm{\mathbb{C}^d}, \eps}$ of $d$-dimensional BPOVMs with respect to the operator norm such that the size of $\net{\bpovm{\mathbb{C}^d}, \eps}$ is bounded by $(1/\eps)^{O(d^2)}$.
  Moreover, for every quantum circuit $C$ with $d$-dimension input and binary output, there exists a quantum circuit $C'$ with BPOVM in $\net{\bpovm{\bbC^d}, \eps}$ such that for all quantum states $\rho\in\density{\bbC^d}$, we have
  \[\left|\ex{C(\rho)} - \ex{C'(\rho)}\right| < \eps.\]
\end{proposition}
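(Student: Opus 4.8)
The plan is to exploit the fact that the set $\bpovm{\mathbb{C}^d}$ of $d$-dimensional BPOVMs sits inside the real vector space $\herm{\mathbb{C}^d}$ of Hermitian operators, which has real dimension exactly $d^2$ (a $d\times d$ Hermitian matrix is specified by $d$ real diagonal entries together with $\binom{d}{2}$ off-diagonal complex entries, i.e. $d + 2\binom{d}{2} = d^2$ real parameters). Within this space the constraint $0 \le \Pi \le \I_d$ cuts out a bounded convex body, so the existence and the size bound reduce to a standard covering-number statement for a bounded set in $\mathbb{R}^{d^2}$. First I would record that $\bpovm{\mathbb{C}^d}$ has diameter at most $1$ in the operator norm: if $0 \le \Pi, \Pi' \le \I_d$ then $-\I_d \le \Pi - \Pi' \le \I_d$, whence $\opnorm{\Pi - \Pi'} \le 1$; in particular $\bpovm{\mathbb{C}^d}$ is contained in the operator-norm ball of radius $1/2$ about $\I_d/2$.

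To construct the net I would run the usual volumetric packing argument directly in the operator norm (as in \cite{SW12,Ver11}), avoiding any lossy conversion between norms. Take $\net{\bpovm{\mathbb{C}^d}, \eps}$ to be a maximal $\eps$-separated subset of $\bpovm{\mathbb{C}^d}$; by maximality it is automatically an $\eps$-net, and its points are genuine BPOVMs. Open balls of operator-norm radius $\eps/2$ centered at these points are pairwise disjoint and all contained in the ball of radius $(1+\eps)/2$ about $\I_d/2$, so comparing Lebesgue volumes in $\mathbb{R}^{d^2}$ (a ratio that is independent of the chosen norm) bounds the number of points by $((1+\eps)/\eps)^{d^2} \le (3/\eps)^{d^2} = (1/\eps)^{O(d^2)}$, as claimed.

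For the ``moreover'' part, recall from Equation~(\ref{eqn:circuit-povm}) that every binary-output quantum circuit $C$ on $\density{\mathbb{C}^d}$ acts as a BPOVM $\Pi$ with $\ex{C(\rho)} = \ip{\Pi}{\rho}$. Applying the net to $\Pi$ yields a net point $\Pi' \in \net{\bpovm{\mathbb{C}^d}, \eps}$ with $\opnorm{\Pi - \Pi'} < \eps$ (building the net at slightly finer resolution to secure the strict inequality). Since $\Pi - \Pi'$ is Hermitian and $\rho$ is a density operator, Equation~(\ref{eqn:opnorm}) gives $|\ex{C(\rho)} - \ip{\Pi'}{\rho}| = |\ip{\Pi - \Pi'}{\rho}| \le \opnorm{\Pi - \Pi'} < \eps$ uniformly over all $\rho \in \density{\mathbb{C}^d}$. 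It then remains to exhibit an actual circuit $C'$ realizing the two-outcome measurement $\{\Pi', \I_d - \Pi'\}$, which I would do by a Naimark dilation: adjoin one ancilla qubit, apply a unitary determined by $\Pi'$, and measure the ancilla in the computational basis so that the probability of outcome $1$ equals $\ip{\Pi'}{\rho}$. This is exactly the circuit-to-BPOVM correspondence of the preliminaries run in reverse, with the dilation unitary implemented via the fixed universal gate set by quantum circuit synthesis \cite{SBM05,NC00}.

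The main obstacle I anticipate is precisely this last realization step: the covering argument produces abstract matrices, whereas the statement demands a bona fide quantum circuit whose BPOVM is the net point. The conceptual content is that any binary POVM is physically implementable (Naimark), so each net matrix does correspond to a distinguisher; the only delicacy is that a universal gate set implements the dilation unitary to arbitrary precision rather than exactly. I would absorb this implementation error into the strict inequality by synthesizing the net at resolution, say, $\eps/2$ and allowing each $C'$ an additional $\eps/2$ of gate-approximation slack, which keeps $|\ex{C(\rho)} - \ex{C'(\rho)}| < \eps$ while leaving the size bound $(1/\eps)^{O(d^2)}$ unchanged.
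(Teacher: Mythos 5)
The paper does not actually prove this proposition: it is imported by citation from the epsilon-net literature (\cite{SW12}, with \cite{Ver11} cited for the analogous net of pure states), so there is no internal proof to compare yours against. On its own merits, your argument is a sound, self-contained derivation of the standard kind: identify $\bpovm{\mathbb{C}^d}$ as a convex body of diameter at most $1$ inside the $d^2$-dimensional real vector space $\herm{\mathbb{C}^d}$, take a maximal $\eps$-separated subset (which by maximality is automatically an $\eps$-net, and whose points are genuine BPOVMs), and bound its cardinality by the volume-packing ratio $\left((1+\eps)/\eps\right)^{d^2} = (1/\eps)^{O(d^2)}$. Your reduction of the ``moreover'' clause to $\opnorm{\Pi-\Pi'}$ via Equation~(\ref{eqn:opnorm}) is exactly the property the paper relies on when it invokes this proposition inside Lemma~\ref{lemma:sample-approx}, so the proof supplies precisely what the application needs.

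One wrinkle deserves care. The statement asks for a circuit $C'$ whose BPOVM \emph{is} an element of the net, but a finite universal gate set only approximates the Naimark dilation unitary, so the exact BPOVM of your synthesized circuit is a perturbation $\tilde{\Pi}'$ of the ideal net matrix $\Pi'$, not $\Pi'$ itself. As you phrase the fix (give $C'$ an extra $\eps/2$ of gate-approximation slack), the inequality $\size{\ex{C(\rho)}-\ex{C'(\rho)}}<\eps$ survives, but the literal membership requirement fails. The clean repair, implicit in your final paragraph, is to let $\net{\bpovm{\bbC^d},\eps}$ consist of the \emph{realized} BPOVMs: build an ideal $(\eps/2)$-net $\{\Pi_i\}$ by your packing argument, synthesize for each $i$ a circuit $C_i$ whose exact BPOVM $\tilde{\Pi}_i$ satisfies $\opnorm{\tilde{\Pi}_i-\Pi_i}\leq \eps/2$, and declare $\{\tilde{\Pi}_i\}$ to be the net. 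Then every BPOVM is within $\eps$ of some $\tilde{\Pi}_i$, each net element is by construction the BPOVM of an explicit circuit, and the cardinality bound $(1/\eps)^{O(d^2)}$ is unchanged. With that relabeling your proof is complete.
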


\subsection{Derandomization via Sampling}
The derandomization lemma says that for every distribution of circuits, there is a ``small'' circuit to approximate the distribution such a given set of functions cannot tell the difference.
In the classical case, it has been used to prove many results, such as Impagliazzo's Hardcore Lemma~\cite{Impagliazzo95}, the nonuniform min-max theorem, Regularity Lemma~\cite{TrevisanTV09} and Dense Model Theorem~\cite{ReingoldTTV08}.
This technique was formally defined as Lemma~3 of~\cite{ChungLP15}.
Herein we generalize it to allow circuits with quantum inputs and outputs, and use the epsilon-net method to handle the fact that there are infinitely many quantum states in a state space.

\begin{lemma}\label{lemma:sample-approx}
  Let $X$ be a finite space with $\size{X} = d_X$ and $\cY$ be a Hilbert space with dimension $d_Y$.
  Let $\overline{\cC}$ be a distribution over $\cC$, a class of quantum circuits with input space $X$ and output space $\density{\cY}$.
  Then for every $\eps \in (0, 1/2)$, there exists a quantum distinguisher $\widehat{C}$ with complexity\footnote{A circuit $\widehat{C}$ has complexity $O(t)$ with respect to $\cC$ if $\widehat{C}$ is composed of $O(t)$ circuits in $\cC$ and $O(t)$ universal gates.} $O(t)$ with respect to $\cC$ such that

  1) For all $x\in X$ and a distinguisher $D: X \times \density{\cY} \to\zo$,
    \[\left| \ex[C\from\overline{\cC}]{D(x, C(x))} - \ex{D(x, \widehat{C}(x))} \right| \leq \eps,\]
    and
  2)
  $\displaystyle t = O\left(\frac{1}{\eps^2}(\log d_X + d_Y^2 \log(1/\eps))\right).$
\end{lemma}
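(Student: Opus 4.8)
The plan is to take $\widehat{C}$ to be an empirical average of independent samples from $\overline{\cC}$, exactly as in the classical derandomization lemma, and to handle the continuum of quantum distinguishers by combining a Chernoff bound with the operator-norm $\eps$-net of BPOVMs from Proposition~\ref{prop:enet-bpovm}. Concretely, I would sample circuits $C_1,\dots,C_t$ independently from $\overline{\cC}$ and let $\widehat{C}$ be the circuit that on input $x$ flips internal coins to choose $i\in[t]$ uniformly and then runs $C_i(x)$; viewed as a super-operator this outputs the mixed state $\widehat{C}(x)=\frac1t\sum_{i=1}^{t}C_i(x)$. Writing $\bar\rho_x=\ex[C\from\overline{\cC}]{C(x)}$ for the average output state and $\hat\rho_x=\widehat{C}(x)$, the first step is to observe that for each fixed $x$ the distinguisher $D(x,\cdot)$ is a quantum distinguisher on $\density{\cY}$ and hence corresponds to a BPOVM $\Pi$ with $0\leq\Pi\leq\I$, so that $\ex{D(x,\rho)}=\ip{\Pi}{\rho}$ is linear in $\rho$. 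By linearity of the trace and of expectation, the quantity to be bounded is exactly
\[
\bigl|\ex[C\from\overline{\cC}]{D(x,C(x))}-\ex{D(x,\widehat{C}(x))}\bigr|=\bigl|\ip{\Pi}{\bar\rho_x-\hat\rho_x}\bigr|,
\]
and by Equation~(\ref{eq:trace-distance-bound}) it suffices to find samples for which $\ip{\Pi}{\bar\rho_x-\hat\rho_x}$ is at most $\eps$ for every $x\in X$ and every BPOVM $\Pi$.

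For a fixed $x$ and a fixed BPOVM $\Pi$, the random variable $\ip{\Pi}{\hat\rho_x}=\frac1t\sum_i\ip{\Pi}{C_i(x)}$ is an average of $t$ i.i.d.\ quantities taking values in $[0,1]$ (since $0\leq\Pi\leq\I$ and each $C_i(x)$ is a density operator) with mean $\ip{\Pi}{\bar\rho_x}$, so a Chernoff/Hoeffding bound gives $\Pr[\,|\ip{\Pi}{\hat\rho_x-\bar\rho_x}|>\eps/2\,]\leq 2^{-\Omega(t\eps^2)}$. I would then fix an $(\eps/4)$-net $\cN=\net{\bpovm{\bbC^{d_Y}},\eps/4}$ of BPOVMs in operator norm, which by Proposition~\ref{prop:enet-bpovm} has size $(1/\eps)^{O(d_Y^2)}$, and union bound over all $x\in X$ and all $\Pi'\in\cN$: the probability that some pair fails is at most $d_X\cdot(1/\eps)^{O(d_Y^2)}\cdot 2^{-\Omega(t\eps^2)}$, which falls below $1$ once $t=O\!\left(\frac1{\eps^2}(\log d_X+d_Y^2\log(1/\eps))\right)$, so a good choice of samples exists. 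For such samples and an arbitrary BPOVM $\Pi$, choosing $\Pi'\in\cN$ with $\opnorm{\Pi-\Pi'}\leq\eps/4$ and using $\trdist{\hat\rho_x-\bar\rho_x}\leq 2$ gives
\[
\bigl|\ip{\Pi}{\hat\rho_x-\bar\rho_x}\bigr|\leq\bigl|\ip{\Pi'}{\hat\rho_x-\bar\rho_x}\bigr|+\opnorm{\Pi-\Pi'}\cdot\trdist{\hat\rho_x-\bar\rho_x}\leq\frac\eps2+\frac\eps4\cdot 2=\eps,
\]
which establishes part (1) with the claimed value of $t$ in part (2).

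Finally, the complexity of $\widehat{C}$ is $O(t)$ with respect to $\cC$: it is built from the $t$ subcircuits $C_1,\dots,C_t$ together with $O(t)$ universal gates implementing the uniform random selection and routing of the chosen $C_i$, matching the notion of complexity in the footnote. The step I expect to be the crux is the reduction from ``fooling every (possibly unbounded) distinguisher $D$'' to ``fooling finitely many BPOVMs'': the set of distinguishers is a continuum, so a naive union bound is impossible, and the whole point is that the functional $\rho\mapsto\ip{\Pi}{\rho}$ is $1$-Lipschitz in trace norm, so operator-norm closeness of BPOVMs transfers, while the BPOVM $\eps$-net of a $d_Y$-dimensional system has size only $(1/\eps)^{O(d_Y^2)}$ rather than being doubly exponential in the number of qubits. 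This is precisely what produces the additive $d_Y^2\log(1/\eps)$ term in $t$, and it is the genuinely quantum feature absent from the classical derandomization argument.
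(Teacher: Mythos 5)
Your proposal is correct and follows essentially the same route as the paper's proof: sample $t$ circuits i.i.d.\ from $\overline{\cC}$, let $\widehat{C}$ run a uniformly random one, apply a Chernoff bound for each fixed $x$ and each BPOVM in an operator-norm $\eps$-net from Proposition~\ref{prop:enet-bpovm}, union bound over $X\times\net{\bpovm{\cY},\cdot}$, and extend to arbitrary distinguishers via operator-norm closeness. The only cosmetic difference is that you verify the net-to-arbitrary-BPOVM step directly with the H\"older-type bound $\left|\ip{\Pi-\Pi'}{\hat\rho_x-\bar\rho_x}\right|\leq\opnorm{\Pi-\Pi'}\cdot\trdist{\hat\rho_x-\bar\rho_x}$ (with slightly different constants, $\eps/4$-net versus $\eps/2$-net), whereas the paper invokes the ``Moreover'' clause of Proposition~\ref{prop:enet-bpovm} for the same purpose.
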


\begin{proof}
  We sample a set of $t$ circuits $C_1, \dots, C_t$ from the distribution $\overline{\cC}$ and let $\widehat{C}$ be a circuit that randomly chooses one of $\{C_1, \dots, C_t\}$ to run.
  Then for every $x\in X$,
  \[\ex[C\from\overline{\cC}]{C(x, \rho)} = \ex[\widehat{C}\from\overline{\cC}]{\widehat{C}(x, \rho)} = \frac{1}{t}\sum_{i = 1}^t \ex{C_i(x, \rho)}.\]
  For every $x\in X$ and a distinguisher $D$, by Chernoff bound, we have
  \begin{align}\label{eq:concentrate}
    \pr[\widehat{C}\from\overline{\cC}]{\left|\ex[C\from\overline{\cC}]{D(x, C(x))} - \ex[\widehat{C}]{D(x, \widehat{C}(x))} \right|>\frac{\eps}{2}} \leq 2^{-t\eps^2/16}.
  \end{align}

  Recall that a quantum distinguisher with input in $\density{\bC^{N}}$ can be represented by a $N$ by $N$ BPOVM matrix.
  If we denote the BPOVM of the distinguisher $D(x, \cdot)$ by $\Pi_x$, then the distinguisher $D$ by a set of BPOVM $\{\Pi_x\}_{x\in X}$.
  Particularly, $D(x, \rho) = \ip{\Pi_x}{\rho}$.
  Thus, Equation~(\ref{eq:concentrate}) can be written as
  \begin{align}\label{eq:concentrate-2}
    \pr[\widehat{C}\from\overline{\cC}]{\left|\ex[C\from\overline{\cC}]{\ip{\Pi_x}{C(x)}} - \ex[\widehat{C}]{\ip{\Pi_x}{\widehat{C}(x)}} \right|>\frac{\eps}{2}} \leq 2^{-t\eps^2/16}.
  \end{align}

  Apply the union bound to Equation~(\ref{eq:concentrate}) over every $x\in X$ and $\Pi_x \in\net{\bpovm{\cY}, \eps/2}$, then
  \begin{align*}
  &\pr[\widehat{C}\from\overline{\cC}]{\forall x\in X, \Pi_x\in\net{\bpovm{\cY}, \eps/2}\;,\;\;\left|\ex[C\from\overline{\cC}]{\ip{x}{C(x)}} - \ex[\widehat{C}]{\ip{x}{\widehat{C}(x)}} \right|>\frac{\eps}{2}} \\
  \leq &|X|\cdot|\net{\bpovm{\cY}, \eps/2}|\cdot 2^{-t\eps^2/16}.
  \end{align*}
  Since $\size{\net{\bpovm{\cY}, \eps/2}} = (1/\eps)^{O(d_Y^2)}$, we can choose $t = O\left(\frac{1}{\eps^2}(\log d_X + d_Y^2 \log(1/\eps))\right)$ such that the quantity on the right-hand side of the inequality is less than one.
  That implies there exists a choice of $t$ circuits $C_1, \dots, C_t$ to form the circuit $\widehat{C}$ such that for all $x\in X$ and $\Pi_x\in\net{\bpovm{\cY}, \eps/2}$, $\widehat{C}$ and the expectation of $\cC$ are $\eps/2$-close measured by $\Pi_x$.
  Apply Proposition~\ref{prop:enet-bpovm}, we can extend the above statement from a BPOVM in the net to any BPOVM (or equivalently, any quantum distinguisher) by losing another $\eps/2$.
  Namely, we have that there exists a choice of $t$ circuits $\{C_1, \dots, C_t\}$ to form the circuit $\widehat{C}$ such that
  \[\forall x\in X, D(x, \cdot)\;,\;\;\left|\ex[C\from\overline{\cC}]{D(x, C(x))} - \ex[\widehat{C}]{D(x, \widehat{C}(x))} \right|>\frac{\eps}{2} + \frac{\eps}{2} = \eps.\]
\end{proof}

\subsection{Leakage Simulation Lemma}

\repeattheorem{thm:leak-sim}

\begin{proof}

  Suppose for contradiction that for all size-$s'$ quantum circuit $C:\bin^n\to\density{\bbC^{2^\ell}}$, there exists a quantum distinguisher $D:\bin^n\times \density{\bbC^{2^\ell}}\to\bin$ of size $s$ such that
  \[\ex{D(\rho_{XB})} - \ex{D(X, C(X))} \geq \eps.\]
  First we transform circuits of bounded size to distributions of circuits of bounded size via the following claim.
  \begin{claim}\label{claim:circuit-to-dist}
    For every distribution $\overline{C}$ over size-$s''$ quantum circuit with $s' = s''\cdot O((n+2^{2\ell})/\eps^2)$, there exists a distinguisher $D$ of size $s$ such that $\ex{D(\rho_{XB})} - \ex[C\sim\overline{C}]{D(X, C(X))} < \eps/2$.
  \end{claim}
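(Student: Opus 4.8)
The plan is to prove this ``quantifier-switching'' step by combining the contradiction hypothesis (which asserts that every \emph{single} size-$s'$ simulator is distinguishable from $\rho_{XB}$ with advantage at least $\eps$) with the quantum sampling lemma, Lemma~\ref{lemma:sample-approx}, which lets me collapse a \emph{distribution} over small circuits into one slightly larger circuit that no distinguisher can separate from the distribution. The target is to exhibit a size-$s$ distinguisher achieving advantage at least $\eps/2$ against the mixed simulator $\overline{C}$; this is precisely the payoff hypothesis needed to invoke the quantum non-uniform Min-Max Theorem (Theorem~\ref{thm:min-max}), treating the simulators as Player~1's strategies and the size-$s$ distinguishers as Player~2's.

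First I would instantiate Lemma~\ref{lemma:sample-approx} with the classical index set $X=\bin^n$ (so $d_X = 2^n$ and $\log d_X = n$) and output space $\cY = \bbC^{2^\ell}$ (so $d_Y = 2^\ell$ and $d_Y^2 = 2^{2\ell}$), applied to the given distribution $\overline{C}$ over size-$s''$ circuits. This yields a single circuit $\widehat{C}$ of complexity $O(t)$ with $t = O\!\left(\eps^{-2}\,(n + 2^{2\ell}\log(1/\eps))\right)$, hence of total size $s'' \cdot O\!\left((n+2^{2\ell})/\eps^2\right) = s'$ (the $\log(1/\eps)$ factor is absorbed into the $O(\cdot)$), such that for \emph{every} $x\in\bin^n$ and \emph{every} distinguisher $D$,
\[
  \left| \ex[C\from\overline{C}]{D(x, C(x))} - \ex{D(x, \widehat{C}(x))} \right| \leq \frac{\eps}{2}.
\]
Averaging this per-$x$ estimate against the classical marginal $\{p_x\}$ of $\rho_{XB}$ then gives $\bigl|\ex[C\from\overline{C}]{D(X,C(X))} - \ex{D(X,\widehat{C}(X))}\bigr| \leq \eps/2$ for every $D$.

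Next, since $\widehat{C}$ is a legitimate size-$s'$ simulator, the contradiction hypothesis supplies a size-$s$ distinguisher $D$ with $\ex{D(\rho_{XB})} - \ex{D(X,\widehat{C}(X))} \geq \eps$. Chaining this with the sampling estimate via the triangle inequality,
\[
  \ex{D(\rho_{XB})} - \ex[C\from\overline{C}]{D(X, C(X))} \;\geq\; \eps - \frac{\eps}{2} \;=\; \frac{\eps}{2},
\]
for this same $D$, exhibiting a size-$s$ distinguisher with advantage at least $\eps/2$ against the mixture $\overline{C}$, which is exactly what is needed to verify the Min-Max hypothesis.

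I expect the main obstacle to lie entirely inside the sampling step rather than in this routine chaining. The delicate point is that one fixed sampled circuit $\widehat{C}$ must fool \emph{all} distinguishers simultaneously, not merely a prescribed one; unlike the classical setting there are infinitely many quantum distinguishers, so the union bound underlying Lemma~\ref{lemma:sample-approx} must range over an $\eps$-net of BPOVMs (Proposition~\ref{prop:enet-bpovm}) of size $(1/\eps)^{O(d_Y^2)} = (1/\eps)^{O(2^{2\ell})}$, followed by a transfer from the net to arbitrary distinguishers at the cost of an additional $\eps/2$. This net size is precisely what forces the $2^{2\ell}$ dependence in $s'$ and is the reason the leakage register must remain small ($\ell = O(\log\secp)$) for the blow-up to stay polynomial. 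By contrast, the averaging over $x\sim X$ and the final triangle inequality are routine.
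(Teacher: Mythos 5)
Your proposal is correct and is essentially the paper's own argument: the paper proves the claim contrapositively (supposing some distribution $\overline{C}$ fools all size-$s$ distinguishers with advantage below $\eps/2$, applying Lemma~\ref{lemma:sample-approx} with $d_X = 2^n$, $d_Y = 2^\ell$ to collapse it into a single size-$s'$ circuit $\widehat{C}$, and contradicting the outer hypothesis of Theorem~\ref{thm:leak-sim}), while you run the identical instantiation and triangle-inequality chain in the direct direction, with the same parameter accounting. You also correctly read the conclusion as advantage at least $\eps/2$ (the ``$<\eps/2$'' in the claim statement is a typo), which is the form actually needed to invoke Theorem~\ref{thm:min-max}.
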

  \begin{proof}[proof of Claim~\ref{claim:circuit-to-dist}]
    Suppose that there exists a distribution $\overline{C}$ over size-$s''$ circuit such that for all size-$s$ distinguisher $D$,
    \[\ex{D(\rho_{XB})} - \ex[C\sim\overline{C}]{D(X, C(X))} < \eps/2.\]
    Apply Lemma~\ref{lemma:sample-approx} with $d_X = 2^n$ and $d_Y = 2^\ell$, then there exists a circuit $\widehat{C}$ of size $s' = s''\cdot O((n+2^{2\ell})/\eps^2)$ such that $\ex{D(\rho_{XB})} - \ex{D(X, \widehat{C}(X))} < \eps/2+\eps/2 = \eps$ which contradict the assumption.
  \end{proof}
  Once we have Claim~\ref{claim:circuit-to-dist}, we apply the nonuniform Quantum Min-Max Theorem (Theorem~\ref{thm:min-max}) using the following parameters.
  The strategy space of Player~1 is
  \[\cA = \left\{\mbox{cq-state }\sum_x p_x\ketbra{x}\ot C(x) \middle| C:\zo^n\to\density{\bbC^{2^\ell}} \mbox{ is a quantum circuit of size } s''\right\},\] and the strategy space of Player~2 $\cB$ is the set of all distinguishers with size at most $s$.
  The mapping $f$ is defined to be
  \[f(D) = E[D(\rho_{XB})]\I_{n+\ell} + \Pi_D.\]
  where $\Pi_D$ is the BPOVM of $D$.
  By the Quantum Min-Max Theorem, we know there exists a quantum distinguisher $\tilde{D}$ of size $s\cdot O\left(\frac{1}{\eps^2}(n+\ell)\right)$ such that for all for all $s''$-size circuit $\tilde{C}$,
  \begin{align}\label{eq:quantum-gw-mm}
    \ex{\tilde{D}(\rho_{XB})} - \ex{\tilde{D}(X, \tilde{C}(X))} > \eps/4.
  \end{align}
  Express the BPOVM of $\tilde{D}(x, \cdot)$ to be $\Pi_{x}$ for $x\in\zo^n$, then we have
  \[\ex{\tilde{D}(\rho_{XB})} = \sum_{x\in\bin^n}p_x\ip{\Pi_x}{\rho^x_B}.\]
  Now, define the quantum circuit $C$ is as follows:
  \begin{enumerate}
    \item For input $x\in\bin^n$, apply Lemma~\ref{thm:qckt-max-sat} to solve the $(s, \ell, \eps/8, \eps/8)$-$\QCktMaxSat$ Problem with the quantum circuit $\tilde{D}(x, \cdot)$ to get a description of the quantum state $\sigma_x$ such that with probability $1-\eps/8$,
    \[\inprod{\Pi_x, \sigma_x} \geq \max_{\rho}\inprod{\Pi_x, \rho} - \eps/8.\]
    \item Construct the quantum state $\sigma_x$ and output it.
  \end{enumerate}
  The state can be constructed from its description by a circuit of size polynomial in its description length and dimension~\cite{SBM05}.
  So the running time of $C$ is $\poly(s, n, 2^{\ell}, 1/\eps)$.
  Eventually, we have
  \begin{align*}
    \ex{\tilde{D}(X, C(X))}
    = \sum_{x\in\bin^n}p_x\inprod{\Pi_x, \sigma_x}
    \geq& \left(1-\frac{\eps}{8}\right) \left(\sum_{x\in\bin^n}p_x\max_{\rho}\inprod{\Pi_x, \rho} - \frac{\eps}{8}\right)\\
    \geq& \left(1-\frac{\eps}{8}\right)\left(\max_{C'}\ex{\tilde{D}(X, C'(X))} - \frac{\eps}{8}\right)\\
    \geq& \max_{C'}\ex{\tilde{D}(X, C'(X))} - \frac{\eps}{4}.
  \end{align*}
  which contradicts Equation~(\ref{eq:quantum-gw-mm}).

\end{proof}

\section{Proof of MMWU Theorem via KL-projection}

In this section, we proof the MMWU Theorem considering the setting in Section~\ref{subsec:min-max} and Procedure~\ref{pro:mmwu-minmax}.

First, we have the following facts and lemmas.

\begin{fact}\label{fact:log-approx}
  For $x < 1$, $-\ln (1-x) \geq x + x^2/2 + x^3/3 $; for $X$
\end{fact}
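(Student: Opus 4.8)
The plan is to derive the scalar inequality and its operator companion from a single derivative computation, then lift to the matrix setting by functional calculus. First I would set $g(x) = -\ln(1-x) - (x + x^2/2 + x^3/3)$ on the domain $x < 1$ and compute its derivative. Using the factorization $1 + x + x^2 = (1-x^3)/(1-x)$, one finds
\[
  g'(x) = \frac{1}{1-x} - (1 + x + x^2) = \frac{x^3}{1-x}.
\]
Since $g(0) = 0$ and $1 - x > 0$ throughout the domain, the sign of $g'(x)$ agrees with the sign of $x^3$. Hence $g$ is decreasing on $(-\infty, 0)$ and increasing on $(0,1)$, so $x = 0$ is a global minimum of $g$ with value $g(0) = 0$. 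This establishes $g(x) \ge 0$, i.e.\ $-\ln(1-x) \ge x + x^2/2 + x^3/3$, for every $x < 1$; notably, the sign restriction $x \ge 0$ is not actually required once one argues via the minimum rather than via nonnegativity of the tail of the power series.

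To obtain the operator version (the second clause of the Fact), I would invoke the spectral theorem. For a Hermitian matrix $X$ all of whose eigenvalues are less than $1$, the operators $-\ln(\I - X)$, $X$, $X^2$, and $X^3$ are all functions of the single matrix $X$, and hence are simultaneously diagonalized in the eigenbasis of $X$. Applying the scalar inequality eigenvalue-by-eigenvalue — each eigenvalue $\lambda$ of $X$ satisfies $\lambda < 1$, so $-\ln(1-\lambda) \ge \lambda + \lambda^2/2 + \lambda^3/3$ — yields the L\"owner-order inequality $-\ln(\I - X) \succeq X + X^2/2 + X^3/3$ directly, since a Hermitian operator is positive semidefinite exactly when all of its eigenvalues are nonnegative.

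There is no deep obstacle here; the only points requiring care are the sign analysis of $g'$ across $x = 0$ (handled above by observing that $g$ attains its \emph{minimum}, not a maximum, at the origin) and the justification that the matrix statement genuinely reduces to the scalar one. The latter is immediate precisely because every operator appearing is a function of the \emph{same} $X$, so no noncommutativity issues arise and no Golden--Thompson-type machinery is needed; such tools become relevant only later, when this Fact is applied to a product of non-commuting update matrices $W^{(t)}\exp(-\eta L^{(t)})$ in the MMWU regret analysis. Finally, I would note the boundary behavior as $x \to 1^-$, where $-\ln(1-x) \to +\infty$ makes the inequality trivially valid, so no separate treatment of the endpoint is necessary.
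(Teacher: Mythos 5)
Your proof is correct, and in fact there is nothing in the paper to compare it against: the paper states this as a bare Fact with no proof at all, and the statement itself is truncated mid-sentence --- the clause ``; for $X$'' was evidently meant to introduce a matrix analogue that was never written down, so you had to supply your own completion. Your scalar argument is sound: $g'(x)=\frac{1}{1-x}-(1+x+x^2)=\frac{x^3}{1-x}$, so $g$ decreases on $(-\infty,0)$, increases on $(0,1)$, and attains its global minimum $g(0)=0$; as you note, this covers \emph{all} $x<1$, including $x\le -1$, where the power series for $-\ln(1-x)$ diverges and a term-by-term nonnegativity argument is unavailable. Your spectral lifting to $-\ln(\I-X)\ge X+X^2/2+X^3/3$ for Hermitian $X<\I$ is likewise correct, and your remark that no Golden--Thompson machinery is needed here (all operators are functions of the same $X$; Golden--Thompson enters only in step (a) of the proof of Lemma~\ref{lemma:mmwu-kl-proof}) is accurate.

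One caveat about your chosen completion of the truncated clause. Where the paper actually invokes this Fact, inequality (b) in the proof of Lemma~\ref{lemma:mmwu-kl-proof} needs only the scalar bound $-\ln(1-x)\ge x$ applied to a trace, while inequality (c) needs the operator bound $\I_d-\exp(-\eta L^{(t)})\ge \eta L^{(t)}-(\eta L^{(t)})^2$, i.e.\ the matrix lift of $1-e^{-y}\ge y-y^2$ for $y\ge 0$, which is \emph{not} a formal consequence of the logarithm inequality you proved. Your eigenvalue-by-eigenvalue technique establishes that exponential bound by exactly the same commuting-functional-calculus argument, so your method suffices for everything the paper needs; but if the Fact is meant to support the paper's application verbatim, the missing matrix clause should state the exponential bound rather than (or in addition to) the operator logarithm inequality.
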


\begin{fact}\label{fact:tr-ineq}
  If $B > C$, then $\tr{AB} > \tr{AC}$.
\end{fact}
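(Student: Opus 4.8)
The plan is to reduce the claim to the positivity of the Hilbert--Schmidt inner product between positive operators, which was already invoked in the proof of Proposition~\ref{prop:relative-min-operational} (``$\ip{A}{B}\geq 0$ for $A,B\geq 0$''). Throughout the MMWU analysis in Procedure~\ref{pro:mmwu-minmax} the operator $A$ plays the role of one of the density matrices $a^{(t)}$ (or a comparison state $\sigma$), so I would take $A$ to be positive semidefinite with $\tr{A}=1$, and read ``$B>C$'' as the statement that $M := B-C$ is positive definite. Recording this standing hypothesis is in fact the first and most important step: without $A\geq 0$ the statement is false, since an $A$ with a negative eigenvalue aligned with $M$ reverses the inequality.

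With these conventions fixed, I would first use linearity of the trace to rewrite the goal as $\tr{A(B-C)}=\tr{AM}>0$. Since both $A\geq 0$ and $M\geq 0$, we immediately get $\tr{AM}=\ip{A}{M}\geq 0$ by the same base fact cited above. To upgrade this to a strict inequality I would pass to the positive square root $A^{1/2}\geq 0$ and use cyclicity of the trace to write $\tr{AM}=\tr{A^{1/2}MA^{1/2}}$. Because $M$ is positive definite we have $M\geq \lambda_{\min}(M)\,\I$ with $\lambda_{\min}(M)>0$; conjugating this operator inequality by $A^{1/2}$ preserves the positive semidefinite order, so $A^{1/2}MA^{1/2}\geq \lambda_{\min}(M)\,A$. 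Taking traces and using $\tr{A}=1>0$ then yields $\tr{AM}\geq \lambda_{\min}(M)>0$, which is exactly what is needed.

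The only delicate point --- and hence what to get right rather than a genuine obstacle --- is the interpretation of the two hypotheses: the statement requires the ambient assumption that $A$ is a nonzero positive operator, and ``$>$'' must denote strict positivity so that $\lambda_{\min}(M)>0$. Once these are pinned down, the argument is a two-line consequence of the order-preservation of conjugation $X\mapsto A^{1/2}XA^{1/2}$ together with nonnegativity of $\tr{AM}$ for $A,M\geq 0$. I would therefore state the Fact with the positivity of $A$ made explicit and cite Proposition~\ref{prop:relative-min-operational} for the base inequality, rather than grind through the (routine) spectral details.
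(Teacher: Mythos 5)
Your proof is correct; note, though, that the paper offers no proof of Fact~\ref{fact:tr-ineq} at all --- it is stated bare in the appendix as a routine fact --- so there is no argument of the authors' to compare yours against. Your reconstruction is the standard one and is sound: with $M=B-C>0$ and $A\geq 0$ nonzero, cyclicity gives $\tr{AM}=\tr{A^{1/2}MA^{1/2}}$, the operator inequality $M\geq \lambda_{\min}(M)\,\I$ survives conjugation by $A^{1/2}$, and taking traces yields $\tr{AM}\geq \lambda_{\min}(M)\,\tr{A}>0$. Your most valuable point is the one about hypotheses: as literally stated the Fact is false (take $A=-\I$, $B=\I$, $C=0$, so that $B>C$ but $\tr{AB}<\tr{AC}$), and strictness also fails for $A=0$; the implicit assumptions $A\geq 0$, $A\neq 0$ are exactly what the paper's single use of the Fact supplies, since in inequality (b) of the proof of Lemma~\ref{lemma:mmwu-kl-proof} it is applied with $A=a^{(t)}$ a density matrix and $B-C=\exp(-\eta L^{(t)})>0$. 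Restating the Fact with the positivity of $A$ made explicit, as you propose, is the right fix, and citing the base inequality $\ip{A}{B}\geq 0$ for $A,B\geq 0$ from the proof of Proposition~\ref{prop:relative-min-operational} is consistent with the paper's own conventions.
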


\begin{lemma}\label{lemma:kl-proj}
  Let $Y\in\density{\cH}$ and $\cA$ be a convex set in $Y\in\density{\cH}$.
  Let $Y^*$ be a KL-projection of $Y$ on $\cA$, then
  \[\KL{X}{Y^*} + \KL{Y^*}{Y} \leq \KL{X}{Y}\]
  Particularly, due to the non-negativity of the KL-divergence,
  \[\KL{X}{Y^*} \leq \KL{X}{Y}\]
\end{lemma}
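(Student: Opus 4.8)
The plan is to recognize the quantum KL divergence as the Bregman divergence generated by the (negative) von Neumann entropy $\Phi(\rho)=\tr{\rho\log\rho}$, and then to prove the generalized Pythagorean inequality for Bregman divergences, whose proof rests on the first-order optimality of the projection $Y^*$. Concretely, I would first record that for any Hermitian direction $H$ the trace derivative satisfies $\frac{d}{dt}\tr{(\rho+tH)\log(\rho+tH)}\big|_{t=0}=\tr{(\log\rho+\I)H}$, so that $\Phi$ has matrix gradient $\nabla\Phi(\rho)=\log\rho+\I$. A direct expansion then confirms that $\KL{X}{Y}=\Phi(X)-\Phi(Y)-\ip{\nabla\Phi(Y)}{X-Y}$, where the stray identity terms cancel because $\tr{X}=\tr{Y}=1$; that is, the quantum relative entropy is exactly the Bregman divergence $D_\Phi$. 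Throughout I assume the comparison point $X$ lies in the convex set $\cA$, as is needed for the inequality (and as is the case in the MMWU analysis of Procedure~\ref{pro:mmwu-minmax}).

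The key algebraic step is the three-point identity
\[\KL{X}{Y}-\KL{X}{Y^*}-\KL{Y^*}{Y}=\ip{\nabla\Phi(Y^*)-\nabla\Phi(Y)}{X-Y^*},\]
which I would verify by substituting the Bregman expression for each of the three divergences and cancelling the $\Phi(X),\Phi(Y),\Phi(Y^*)$ contributions, leaving only inner-product terms that collapse to the right-hand side. Since this is a purely formal manipulation, it takes only a couple of lines.

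It then remains to show the right-hand side is nonnegative. Here I would use convexity of $\cA$ together with the defining property of the KL projection $Y^*=\arg\min_{\rho\in\cA}\KL{\rho}{Y}$. For $X\in\cA$ set $Y_t=(1-t)Y^*+tX\in\cA$ for $t\in[0,1]$ and let $g(t)=\KL{Y_t}{Y}$; then $g$ is minimized at $t=0$, so $g'(0)\ge 0$. Differentiating with the gradient formula above gives $g'(0)=\ip{\nabla\Phi(Y^*)-\nabla\Phi(Y)}{X-Y^*}$, which is exactly the quantity appearing in the three-point identity. Combining the two displays yields $\KL{X}{Y}\ge \KL{X}{Y^*}+\KL{Y^*}{Y}$, and the stated corollary follows from $\KL{Y^*}{Y}\ge 0$.

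The main obstacle I anticipate is justifying the gradient computation and the differentiation of $g$ in the noncommutative setting: $\rho$ and $H$ need not commute, so the Fréchet derivative of the matrix logarithm is genuinely complicated, and it is only after taking the trace that the answer simplifies to $\tr{(\log\rho+\I)H}$. I would invoke the standard fact that $\frac{d}{dt}\tr{f(\rho+tH)}=\tr{f'(\rho)H}$ for the trace of a matrix function applied to $f(x)=x\log x$. A secondary technical point is differentiability at the boundary: if $Y^*$ has a zero eigenvalue then $\log Y^*$ is singular, so I would either argue that the projected iterates remain full rank (as they do for the MMWU updates) or pass to the claim by continuity of the relative entropy.
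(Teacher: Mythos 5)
Your proposal is correct and is essentially the paper's own proof: the paper likewise takes the segment $Z_\lambda = \lambda X + (1-\lambda)Y^*$ inside $\cA$, uses first-order optimality of the KL projection to get $\frac{d}{d\lambda}\KL{Z_\lambda}{Y}\big|_{\lambda=0} \geq 0$, and computes (via a series expansion of the matrix logarithm and cyclicity of trace) that this derivative equals exactly $\KL{X}{Y} - \KL{X}{Y^*} - \KL{Y^*}{Y}$, which is the content of your gradient formula combined with the three-point identity. Your Bregman-divergence packaging is a cleaner organization of the same computation, and you are right to flag two points the paper glosses over: the lemma implicitly requires $X \in \cA$ (the paper's statement omits this hypothesis but its proof uses it when asserting $Z_\lambda \in \cA$), and differentiability at a rank-deficient $Y^*$ needs the full-rank or continuity argument you sketch.
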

\begin{proof}
  Let
  \[Z_\lda = \lda X + (1-\lda) Y^*.\]
  Since $\cA$ is convex, $Z_\lda\in\cA$ for all $0 \leq \lda \leq 1$.
  Because the minimum of $\KL{Z_\lda}{Y}$ happens at $\lda = 0$, the derivative of $\KL{Z_\lda}{Y}$ is nonnegative at $\lda = 0$.
  \begin{align*}
    \frac{d}{d\lda}\KL{Z_\lda}{Y} = & \frac{d}{d\lda} \tr{Z_\lda \log Z_\lda} - \tr{Z_\lda \log Y}\\
    = & \tr{\frac{d Z_\lda}{d\lda}\log Z_\lda} + \tr{Z_\lda \frac{d\log Z_\lda}{d\lda}} -  \tr{\frac{d Z_\lda}{d\lda}\cdot \log Y}
  \end{align*}
  It is straightforward to have $\frac{dZ_\lda}{d\lda} = X - Y^*$.
  For $\tr{Z_\lda \frac{d\log Z_\lda}{d\lda}}$,
  \begin{align*}
    \tr{Z_\lda \frac{d\log Z_\lda}{d\lda}} = & \tr{Z_\lda \frac{d}{d\lda}\left(\sum_{n = 1}^{\infty}-\frac{(-1)^n}{n}(Z_\lda-\I)^n\right)}\\
    = & \tr{
      Z_\lda \sum_{n = 1}^{\infty}-\frac{(-1)^n}{n} \sum_{i = 0}^{n-1} (Z_\lda-\I)^i\cdot \frac{dZ_\lda}{d\lda} \cdot (Z_\lda-\I)^{n-1-i} }\\
    = & \sum_{n = 1}^{\infty}-\frac{(-1)^n}{n} \sum_{i = 0}^{n-1} \tr{
      Z_\lda (Z_\lda - \I)^i \cdot (X-Y^*) \cdot (Z_\lda-\I)^{n-1-i}
    }\\
    \overset{(a)}{=} & \sum_{n = 1}^{\infty}-\frac{(-1)^n}{n} \sum_{i = 0}^{n-1} \tr{
      Z_\lda (Z_\lda - \I)^{n-1} \cdot (X-Y^*)
    }\\
    = & \tr{
      Z_\lda \sum_{n = 1}^{\infty}-\frac{(-1)^n}{n} \sum_{i = 0}^{n-1} (X-Y^*)
    }\\
    = & \tr{
      Z_\lda Z_\lda^{-1} (X-Y^*)
    } = \tr{X-Y^*} = 0,
  \end{align*}
  where the equality $(a)$ is due to the commutativity between $Z_\lda$ and $(Z_\lda-\I)$, and the invariance of trace under cyclic permutations.
  Therefore,
  \begin{align*}
    \left.\frac{d}{d\lda}\KL{Z_\lda}{Y}\right|_{\lda = 0} = & \left.\tr{(X-Y^*)\log Z_\lda}  - \tr{(X-Y^*)\cdot \log Y}\right|_{\lda = 0}\\
    = & \tr{(X-Y^*)\log Y^*}  - \tr{(X-Y^*)\cdot \log Y}\\
    = & \tr{X(\log X - \log Y)} - \tr{X(\log X - \log Y^*)} - \tr{Y^* (\log Y^* - \log Y)}\\
    = & \KL{X}{Y} - \KL{X}{Y^*} - \KL{Y^*}{Y} \geq 0,
  \end{align*}
  which yields the conclusion by an rearrangement.
\end{proof}

\begin{lemma}\label{lemma:mmwu-kl-proof}
  Consider the setting in Section~\ref{subsec:min-max} and Procedure~\ref{pro:mmwu-minmax}.
  We have for all $a\in\cA$,
  \[
    \frac{1}{T}\sum_{t=1}^T \ip{a^{(t)}}{f(b^{(t)})}
    \leq
    \frac{1}{T}\sum_{t=1}^T \ip{a}{f(b^{(t)})} + \left(\eta + \frac{\log d}{\eta T}\right).
  \]
\end{lemma}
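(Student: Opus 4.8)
The plan is to run the standard matrix exponentiated gradient (regret) analysis with the fixed comparator $a$, using the relative entropy $\KL{a}{a^{(t)}}$ as a potential function, and to account for the extra KL-projection step of Procedure~\ref{pro:mmwu-minmax} by invoking Lemma~\ref{lemma:kl-proj}. Write $L^{(t)} = f(b^{(t)})$, so that $0 \leq L^{(t)} \leq \I_d$. First I would fix any $a \in \cA \subseteq \Conv(\cA)$ and measure how much the potential $\KL{a}{a^{(t)}}$ decreases in each round.

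The heart of the argument is the one-step analysis of the unprojected, normalized iterate $a^{(t+1)''} = \exp(\log a^{(t)} - \eta L^{(t)})/Z_t$, where $Z_t = \tr{\exp(\log a^{(t)} - \eta L^{(t)})}$. Since $\log a^{(t+1)''} = \log a^{(t)} - \eta L^{(t)} - (\log Z_t)\I_d$ and $\tr{a} = 1$, a direct expansion collapses to $\KL{a}{a^{(t)}} - \KL{a}{a^{(t+1)''}} = -\eta\ip{a}{L^{(t)}} - \log Z_t$. It is precisely here that the procedure's choice to place $\log a^{(t)}$ \emph{inside} the exponent (rather than multiplying $a^{(t)}$ by $\exp(-\eta L^{(t)})$ as in Theorem~\ref{thm:mmwu}) pays off: it makes $\log a^{(t+1)''}$ an affine shift of $\log a^{(t)}$, so the cross terms cancel cleanly despite non-commutativity.

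Next I would upper bound $\log Z_t$. Applying the Golden--Thompson inequality $\tr{\exp(A+B)} \leq \tr{\exp(A)\exp(B)}$ with $A = \log a^{(t)}$ and $B = -\eta L^{(t)}$ gives $Z_t \leq \tr{a^{(t)}\exp(-\eta L^{(t)})}$. Using the scalar bound $e^{-\eta x} \leq 1 - \eta x + \eta^2 x^2$ for $0 \leq x \leq 1$ and $\eta < 1$, applied to the eigenvalues of $L^{(t)}$, together with $(L^{(t)})^2 \leq L^{(t)}$ and Fact~\ref{fact:tr-ineq}, I get $\exp(-\eta L^{(t)}) \leq \I_d - (\eta - \eta^2)L^{(t)}$ and hence $Z_t \leq 1 - (\eta-\eta^2)\ip{a^{(t)}}{L^{(t)}}$. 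Then Fact~\ref{fact:log-approx} (in the form $\log(1-u) \leq -u$) yields $\log Z_t \leq -(\eta-\eta^2)\ip{a^{(t)}}{L^{(t)}}$, so the per-round potential drop of the unprojected iterate is at least $(\eta-\eta^2)\ip{a^{(t)}}{L^{(t)}} - \eta\ip{a}{L^{(t)}}$.

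Finally I would fold in the projection and telescope. Because $a \in \Conv(\cA)$ and $a^{(t+1)}$ is the KL-projection of $a^{(t+1)''}$ onto $\Conv(\cA)$, Lemma~\ref{lemma:kl-proj} gives $\KL{a}{a^{(t+1)}} \leq \KL{a}{a^{(t+1)''}}$, so replacing $a^{(t+1)''}$ by $a^{(t+1)}$ can only increase the drop. Summing over $t = 1,\dots,T$ telescopes the left side to $\KL{a}{a^{(1)}} - \KL{a}{a^{(T+1)}}$; since $a^{(1)} = \I_d/d$ we have $\KL{a}{a^{(1)}} = \log d - S(a) \leq \log d$ (von Neumann entropy $S(a) \geq 0$), and $\KL{a}{a^{(T+1)}} \geq 0$, giving $(\eta-\eta^2)\sum_t \ip{a^{(t)}}{L^{(t)}} - \eta\sum_t\ip{a}{L^{(t)}} \leq \log d$. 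Dividing by $\eta T$ and absorbing the lower-order terms using $\eta < 1/2$ and $\ip{a}{L^{(t)}} \leq 1$ rearranges this into the claimed $\frac{1}{T}\sum_t\ip{a^{(t)}}{f(b^{(t)})} \leq \frac{1}{T}\sum_t\ip{a}{f(b^{(t)})} + (\eta + \frac{\log d}{\eta T})$, matching the constants of Theorem~\ref{thm:mmwu} with $c_1 = 0$, $c_2 = 1$. The main obstacle is the non-commutativity of $\log a^{(t)}$ and $L^{(t)}$, which is what forces the Golden--Thompson step and the exponent-based update in place of the naive multiplicative one; the other point needing care is checking the orientation of Lemma~\ref{lemma:kl-proj}, namely that projecting the current iterate onto $\Conv(\cA)$ never moves it farther from the comparator $a$.
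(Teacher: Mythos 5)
Your proof is correct and follows essentially the same route as the paper's: the exact expansion of $\KL{a}{a^{(t)}}-\KL{a}{a^{(t+1)''}}$ exploiting the exponent-based update, Golden--Thompson, a second-order bound on $\exp(-\eta L^{(t)})$, Lemma~\ref{lemma:kl-proj} to fold in the KL-projection, and telescoping with $\KL{a}{a^{(1)}}\leq \log d$. The only (harmless) deviation is bounding the quadratic term via $(L^{(t)})^2 \leq L^{(t)}$ and absorbing a $(1-\eta)$ factor at the end, where the paper uses $(L^{(t)})^2 \leq \I_d$ to reach the same constants directly.
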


\begin{proof}
  Let $L^{(t)} = f(b^{t})$.
  First, recall the definition of relative entropy (KL-divergence) of two quantum states $\rho, \sigma$:
  \[\KL{\rho}{\sigma} = \tr{\rho(\log \rho - \log \sigma)}.\]
  For any density matrix $a$,
  \begin{align*}
    & \KL{a}{a^{(t)}} - \KL{a}{a^{(t+1)''}}\\
    = & -\tr{a \log a^{(t)}} + \tr{a\log a^{(t+1)''})}\\
    = & -\tr{a\left(\log a^{(t)}} + \tr{a\log\left(\frac{\exp(\log a^{(t)} -\eta L^{(t)})}{\tr{\exp(\log a^{(t)} -\eta L^{(t)})}}\right)\right)}\\
    = & -\tr{a\log a^{(t)}} + \tr{a\log a^{(t)}} + \tr{-\eta a L^{(t)})} - \log\left(\tr{\exp(\log a^{(t)} -\eta L^{(t)})}\right)\\
    \overset{(a)}{\geq} & \tr{-\eta a L^{(t)})} - \log\left(\tr{a^{(t)}\exp(-\eta L^{(t)})}\right)\\
    = & -\eta \ip{a}{L^{t}} - \log\left(\tr{a^{(t)}(\I_d - (\I_d - \exp(-\eta L^{(t)})))}\right)\\
    = & -\eta \ip{a}{L^{t}} - \log\left(\tr{a^{(t)}} - \tr{a^{(t)}(\I_d - \exp(-\eta L^{(t)}) )} \right)\\
    = & -\eta \ip{a}{L^{t}} - \log\left(1 - \tr{a^{(t)}(\I_d - \exp(-\eta L^{(t)}) )} \right)\\
    \overset{(b)}{\geq} & -\eta \ip{a}{L^{t}} + \tr{a^{(t)}(\I_d - \exp(-\eta L^{(t)}) )}\\
    \overset{(c)}{\geq} & -\eta \ip{a}{L^{t}} + \tr{a^{(t)}(\I_d - \I_d + \eta L^{(t)} - (\eta L^{(t)})^2)}\\
    = & \eta \ip{a^{(t)}}{L^t} - \eta \ip{a}{L^{t}} -\eta^2.
  \end{align*}
  Inequality (a) is due to the Golden-Thompson inequality.
  For the inequality (b), notice that $0 < \eta L^{(t)} < \I_d$, then $0 < \I_d - \exp(-\eta L^{(t)}) < I_d$, so by Fact~\ref{fact:tr-ineq}, $\tr{a^{(t)}(\I_d - \exp(-\eta L^{(t)}) )} < \tr{a^{(t)}I_d} = 1$.
  Finally, applying Fact~\ref{fact:log-approx}, we get the inequality.
  For the inequality (c), it is due to the definition of an exponential of a matrix and Fact~\ref{fact:log-approx}.

  Since $a^{(t+1)}$ is the KL-projection of $a^{(t+1)''}$, by Lemma~\ref{lemma:kl-proj}, we have
  \[\KL{a}{a^{(t+1)}} \leq \KL{a}{a^{(t+1)''}},\]
  and so
  \begin{align*}
    \KL{a}{a^{(t)}} - \KL{a}{a^{(t+1)}} \geq & \KL{a}{a^{(t)}} - \KL{a}{a^{(t+1)''}}\\
    \geq & \eta \ip{a^{(t)}}{L^t} - \eta \ip{a}{L^{t}} -\eta^2
  \end{align*}
  Now we do the telescoping from $t = 1$ through $t = T$, we get
  \begin{align*}
    \KL{a}{a^{(1)}} - \KL{a}{a^{(T+1)}} \geq \eta\sum_{t = 1}^{T} \ip{a^{(t)}}{L^t} - \sum_{t = 1}^{T} \eta \ip{a}{L^{t}} - T\eta^2
  \end{align*}
  Since KL-divergence is always non-negative and when $a^{(1)} = \frac{1}{d}\I_d$, $\KL{a}{a^{(1)}} = \log d - H(a)$ where $H$ is the von Neumann entropy, we have
  \begin{align*}
    \log d  \geq \eta\sum_{t = 1}^{T} \ip{a^{(t)}}{L^t} - \sum_{t = 1}^{T} \eta \ip{a}{L^{t}} - T\eta^2\\
    \frac{1}{T}\sum_{t = 1}^{T} \ip{a^{(t)}}{L^t} \leq \frac{1}{T}\sum_{t = 1}^{T} \ip{a}{L^{t}} + \eta + \frac{\log d}{\eta T}.
  \end{align*}
\end{proof}

\end{document}